\newcommand{\CC}{\ensuremath{\mathscr{C}}}
\newcommand{\zo}{\{0,1\}}
\newcommand{\F}{\mathbb{F}}
\newcommand{\Fq}{{\mathbb{F}_q}}
\newcommand{\Fn}[2]{\F_{#1}^{#2}}
\newcommand{\Fqn}{\Fn{q}{n}}
\newcommand{\Iint}[2]{\llbracket #1 , #2 \rrbracket}
\newcommand{\N}{\mathbb{N}}
\newcommand{\Z}{\mathbb{Z}}
\newcommand{\Zq}{\Z_q}
\newcommand{\Comp}{\mathbb{C}}
\renewcommand{\D}{\mathcal{D}}
\newcommand{\DQ}{\mathcal{D_Q}}
\newcommand{\DPQ}{\QDP}
\newcommand{\zerov}{\vec{0}}
\DeclareMathOperator{\tr}{tr}
\DeclareMathOperator{\myspan}{span}
\newcommand{\trsp}[1]{{#1}^{\intercal}}
\newcommand{\lmax}{\lambda_{\text{max}}}
\newcommand{\eqdef}{\mathop{=}\limits^{\triangle}}
\newcommand{\Unif}{\leftarrow}
\newcommand{\ie}{\textit{i.e.}}
\newcommand{\eps}{\varepsilon}
\newcommand{\PPGM}{\textrm{P}_{\textrm{PGM}}}
\newcommand{\POPT}{\textrm{P}_{\textrm{OPT}}}
\newcommand{\pusd}{\textrm{p}_{\textrm{usd}}}
\newcommand{\Pusd}{\textrm{P}_{\textrm{usd}}}
\newcommand{\COMMENT}[1]{}
\newcommand{\anote}[1]{}
\newcommand{\jp}[1]{}
\newcommand{\ket}[1]{|#1\rangle}
\newcommand{\bra}[1]{\langle#1|}
\newcommand{\ketbra}[2]{|#1\rangle\langle#2|}
\newcommand{\braket}[2]{\langle #1 | #2 \rangle}
\newcommand{\altketbra}[1]{\ketbra{#1}{#1}}
\newcommand{\kb}[1]{\altketbra{#1}}
\newcommand{\norm}[1]{\ensuremath{\lvert  \kern-1pt\lvert #1 \rvert \kern-1pt \rvert}}
\newcommand{\drgv}{\delta_{\textup{min}}}
\newcommand{\drmax}{\delta_{\textup{max}}}
\renewcommand{\DP}{\mbox{DP}}
\newcommand{\SCP}{\mbox{SCP}}
\newcommand{\LWE}{\mbox{LWE}}
\newcommand{\SLWE}{\mbox{S-LWE}}
\newcommand{\CLWE}{\mbox{C-LWE}}
\newcommand{\SIS}{\mbox{SIS}}
\newcommand{\LPN}{\mbox{LPN}}
\newcommand{\QDP}{\ensuremath{\mathrm{QDP}}}
\newcommand{\BSC}{\mbox{BSC}}
\newcommand{\BSEEC}{\mbox{BSEEC}}
\newcommand{\hh}{\mathcal{H}}
\DeclareMathOperator{\QFTt}{QFT}
\newcommand{\QFT}[1]{\widehat{#1}}
\newcommand{\FT}[1]{\widehat{#1}}
\newcommand{\perpF}[1]{\left( #1 \right)^\perp}
\newcommand{\perpO}[1]{#1^\perp}
\newcommand{\Rbra}[1]{\left( #1 \right)} \newcommand{\Sbra}[1]{\left[ #1 \right]} \newcommand{\Cbra}[1]{\left\{ #1 \right\}}
\newcommand{\mat}[1]{\ensuremath{\boldsymbol{#1}}}
\newcommand{\av}{{\mat{a}}}
\newcommand{\bv}{{\mat{b}}}
\newcommand{\cv}{{\mat{c}}}
\newcommand{\dv}{\mat{d}}
\newcommand{\ev}{\mat{e}}
\newcommand{\mv}{\mat{m}}
\newcommand{\sv}{{\mat{s}}}
\newcommand{\uv}{\mat{u}}
\newcommand{\xv}{{\mat{x}}}
\newcommand{\yv}{{\mat{y}}}
\newcommand{\zv}{{\mat{z}}}
\newcommand{\Am}{\ensuremath{\mathbf{A}}}
\newcommand{\Gm}{\ensuremath{\mathbf{G}}}
\newcommand{\Hm}{\ensuremath{\mathbf{H}}}
\newcommand{\un}{\ensuremath{\mathbbm{1}}}
\newcommand{\zerom}{{\mat{0}}}
\DeclareMathOperator{\rank}{rank}
\newcommand{\tW}{\widetilde{W}}
\newcommand{\wcv}{\widetilde{\cv}}
\newcommand{\OO}[1]{O\Rbra{#1}}
\newcommand{\Th}[1]{\Theta\Rbra{#1}}
\newcommand{\Om}[1]{\Omega\Rbra{#1}}
\newcommand{\Csp}{\C_{\sv}^\perp}
\newcommand{\asCt}{a_{\sv,\C}(t)}
\newcommand{\Ec}{{\mathcal{E}}}
\newcommand{\Hc}{{\mathcal{H}}}
\newtheorem{theorem}{Theorem}
\newtheorem{fact}{Fact}
\newtheorem{definition}{Definition}
\newtheorem{lemma}{Lemma}
\newtheorem{proposition}{Proposition}
\newtheorem{corollary}{Corollary}
\newtheorem{remk}{Remark}
\newtheorem{problem}{Problem}
\newtheorem{notation}{Notation}
\renewcommand{\C}{\mathcal{C}}
\newcommand{\B}{\mathcal{B}}
\newcommand{\cadre}[1]
{
	\begin{tabular}{|p{0.95\textwidth}|}
		\hline
		\vspace*{-0.3cm}
		#1 \\
		\hline
	\end{tabular}
	
}
\newcommand{\return}{\mbox{ return }}
\renewcommand{\and}{\mbox{ and }}
\renewcommand{\wp}{\mbox{ wp. }}
\title{The Quantum Decoding Problem}
\author{Andr\'e Chailloux$^{1}$ and Jean-Pierre Tillich$^{1}$\\
Inria de Paris, {\sf $\{$andre.chailloux,jean-pierre.tillich$\}$@inria.fr}}
\date{}
\begin{document}
\maketitle

\begin{abstract}
	
One of the founding results of lattice based cryptography is a quantum reduction from the Short Integer Solution problem to the Learning with Errors problem introduced by Regev. It has recently been pointed out by Chen, Liu and Zhandry that this reduction can be made more powerful by replacing the learning with errors problem with a quantum equivalent, where the errors are given in quantum superposition.  In the context of codes, this can be adapted to a reduction from finding short codewords to a quantum decoding problem for random linear codes. 

We therefore consider in this paper the quantum decoding problem, where we are given a superposition of noisy versions of a codeword and we want to recover the corresponding codeword.
When we measure the superposition, we get back the usual classical decoding problem for which the best known algorithms are in the constant rate and 
error-rate regime exponential in the codelength. However, we will show here that when the noise rate is small enough, then the quantum decoding problem can be solved 
in quantum polynomial time. Moreover, we also show that the problem can in principle be solved quantumly (albeit not efficiently) for noise rates for which the associated classical decoding problem cannot
be solved at all for information theoretic reasons. 

We then revisit Regev's reduction in the context of codes. We show that using our algorithms for the quantum decoding problem in Regev's reduction matches the best known quantum algorithms for the short codeword problem. This shows in some sense the tightness of Regev's reduction when considering the quantum decoding problem and also paves the way for new quantum algorithms for the short codeword problem.  
\end{abstract}

\newpage
\tableofcontents
\newpage

\section{Introduction}
\subsection{General context}
Error correcting codes which appeared first as the fundamental tool to transmit information reliably through a noisy channel \cite{S48} have found their way outside this kind of applications, such as for instance in average case complexity \cite{L87}, or when locally testable codes were found to be the combinatorial core for probabilistically checkable proofs (PCP) \cite{D07b}. Another important application domain for error correction is cryptography with Shamir's secret sharing scheme \cite{S79}, authentication protocols \cite{S93}, pseudorandom generators \cite{FS96}, 
signature schemes \cite{S93},
or  public-key encryption schemes \cite{M78,A11,MTSB12}.
Contrarily to the applications in reliable communication, data storage, or application in complexity theory where finding suitable families of structured codes is the problem that has to be addressed,  many of these applications in cryptography deal with 
{\em random linear} codes, and more precisely take advantage of the hardness of decoding a generic linear code.

The decoding problem corresponds to decoding the $k$-dimensional vector space $\CC$ ($\ie$, the code) generated by the rows of a randomly generated $\Gm \in \F_q^{k \times n}$ (which is called a {\em generating matrix} of the code) :
\begin{equation}\label{eq:def_code}
\CC \eqdef \Cbra{\uv \Gm \colon \uv \in \F_q^k}.
\end{equation}

Here $\F_q$ denotes the finite field with $q$ elements. In the decoding problem,
we are given the noisy codeword $\cv + \ev$ where $\cv$ belongs to $\CC$ and we are asked to find the original codeword $\cv$.

\begin{problem}[$\DP(q,n,k,f)$] The decoding problem with  positive integer parameters  $q,n,k$ and a probability distribution $f$ on $\Fqn$ is defined as:
	\begin{itemize}\setlength\itemsep{-0.25em}
		\item Input: $(\Gm,\cv + \ev)$
		where $\Gm \in \F_q^{k \times n}$ and $\uv \in \F_q^k$ are sampled uniformly at random over their domain - which generates a random codeword $\cv = \uv \Gm$ -  and $\ev$ is sampled from the distribution $f$.
		\item Goal: from $(\Gm,\cv+\ev)$, find $\cv$.
	\end{itemize}	
\end{problem} 

This problem for random codes has been studied for a long time and despite many efforts on this issue, the best algorithms are exponential in the codelength $n$ for natural noise distributions $f$ in the regime where $k$ is linear in $n$ and the rate $R \eqdef \frac{k}{n}$ bounded away from 0 and 1
\cite{P62,S88,D89,MMT11,BJMM12,MO15,CDMT22}. 

The most common noise distribution studied in this context is the uniform distribution over the errors of fixed Hamming weight $t$, but there are also other distributions, like in the binary case ($q=2$) the i.i.d Bernoulli distribution model which is frequently found in the Learning Parity with Noise problem (\LPN) \cite{GGR98}. When the number of samples of the \LPN\  problem is fixed, this is exactly the 
decoding problem defined above where $n$ is equal to the number of available \LPN \ samples. When the number of samples in \LPN\  is unlimited, this can be viewed as a decoding problem where we might add on the fly as many columns in $\Gm$ as we need (and as many corresponding positions in $\uv \Gm +\ev$). The \LWE \  problem in its standard form \cite{R05} is a slight variation on the input alphabet, it is $\Zq$ rather than the finite field $\F_q$ and as in \LPN, the number of samples is often assumed to be unlimited. The noise distribution is frequently the discrete Gaussian distribution in this case. 

The fact that in \LPN, $n$ can grow unlimited with a fixed value of $k$ and a fixed noise distribution can only make the problem simpler than the decoding problem. 
Interestingly enough, there are now algorithms solving the \LPN \ problem like the Blum-Kalai-Wasserman algorithm \cite{BKW03} which solve the problem with only 
subexponential complexity of the form $2^{\OO{\frac{k}{\log k}}}$ whereas no algorithm with such a complexity is known for $n=\OO{k}$ (all known algorithms have exponential complexity in this case). Note that as soon as  $n=\Om{k^{1+\varepsilon}}$ for any absolute constant $\eps > 0$, the best known algorithm \cite{L05b} is somewhat in between, namely $2^{\OO{\frac{k}{\log \log k}}}$ and consists in building many new \LPN \ samples from the original pool of samples. In terms of the decoding problem given above, this consists in adding artificially new columns to the generator matrix $\Gm$ given above by summing a small number of columns of $\Gm$ (together with the relevant positions of $\uv \Gm + \ev$) to artificially enlarge the value of $n$ and then solve the new decoding problem for this larger matrix.
In our work, we will only be interested in the linear regime setting {\ie } $k = \Theta(n)$.

It should be added here that the \LWE \ problem has proved much more versatile than \LPN \ for building cryptographic primitives. Indeed, it does not only allow to build cryptosystems from it \cite{R05}, but also
allows to obtain advanced  cryptographic functionalities such as fully homomorphic encryption \cite{BV11} or attribute-based encryption \cite{GVW13} for instance. It should also be mentioned that three out of the four signature schemes, public key encryption schemes or key establishment protocols supposed to resist to a quantum computer which were selected by the NIST for standardization are based on the hardness of this problem (see \url{https://csrc.nist.gov/Projects/post-quantum-cryptography/selected-algorithms-2022}).

While the security of many code-based cryptosystems relies on the hardness of the decoding problem, it can also be based on finding a ``short'' codeword (as in \cite{MTSB12} or in \cite{AHIKV17,BLVW19,YZWGL19} to build collision resistant hash functions), a problem which is stated as follows. 

\begin{problem}[$\SCP(q,n,k,w)$] The short codeword problem with parameters $q,n,k,w \in \N$ is defined as:
	\begin{itemize}
		\item Given:  $\Hm \in\F_q^{(n-k)\times n}$ which is sampled uniformly at random,		
		\item Find: $\cv \in \F_q^n \setminus\{\zerom\}$ such that $\Hm \trsp{\cv} = \zerom$ and the weight $|\cv|$ of  $\cv$ satisfies $|\cv|\leq w$.
	\end{itemize}	
\end{problem} 

Here we are looking for a non-zero codeword $\cv$ of weight $ \leq w$ in the $k$-dimensional code $\CC$ defined by the so-called parity-check matrix $\Hm$, namely\footnote{The short codeword problem is usually defined by picking a random parity-check matrix $\Hm \in \F_q^{(n-k)\times n}$ and not a random generating matrix $\Gm \in \F_q^{k \times n}$ but the differences are minor (see for example~\cite{Deb23}) and one could also define this problem via the generating matrix of a code as we did for the decoding problem.}  :
\begin{equation*}
	\CC \eqdef \Cbra{\cv \in \F_q^n \colon \Hm \trsp{\cv} = \zerov}.
\end{equation*}
The weight function which is generally used here is the Hamming weight, \ie\ for a vector $\xv=(x_1,\cdots,x_n) \in \Fqn$, its Hamming weight is defined as
$$|\xv| \eqdef \#\{i \in \Iint{1}{n}: x_i \neq 0\}.
$$
We will only deal with this weight here. The lattice version of this problem is called the Short Integer Solution (\SIS) problem. It consists in replacing the finite field $\F_q$ by $\Zq$ and using as weight function
the euclidean weight $\sqrt{\sum_{i=1} x_i^2}$ (and by representing the elements in $\Zq$ as $\{-\lfloor (q-1)/2 \rfloor, \cdots,0,\cdots,\lceil (q-1)/2 \rceil\}$).
It was introduced in the seminal work \cite{A96} and used there to build a family of one-way functions based on the difficulty of this problem. What made this problem so attractive is that it was shown there
to be as hard on average as a worst case short lattice vector problem.

Decoding and looking for short codewords are problems that have been conjectured to be extremely close. They have been studied for a long time, and the best algorithms for solving these two problems 
are the same, namely Information Set Decoding algorithms \cite{P62,S88,D89,MMT11,BJMM12,MO15,BM17}. A reduction from decoding to the problem of finding short codewords is known but in an $\LPN$ context \cite{AHIKV17,BLVW19,YZWGL19,DR22}. However, until recently and even in an $\LPN$ context, no reduction was known in the other direction before \cite{DRT23} which gave a quantum reduction from \SCP\ to \DP \ which followed the path of the breakthrough result of \cite{R05} which reduced the problem of sampling short lattice vectors to \LWE. Note that the reduction \cite{R05} was not classical  but quantum. 
Later on, it was shown in \cite{SSTX09} that the quantum reduction technique of Regev allows to reduce quantumly \SIS\ to \LWE, and this kind of reduction also applies to structured versions of these problems, namely Ideal-\SIS \ can be reduced to Ideal-\LWE.

There is a fundamental difficulty of reducing the search of low weight codewords to decoding a linear code which is due to the fact that the nature of these two problems is very different. Decoding concentrates on a region of parameters where there is typically just one solution, whereas finding low weight codewords concentrates on a region of parameters where there are many solutions (and typically an exponential number of solutions). This makes these problems inherently very different. This was also the case for the reduction of \SIS\  to \LWE \ and the fact that we can have a reduction from one to another by looking for quantum reductions instead of classical reductions was really a breakthrough at that time.

It is also worthwhile to notice that all these problems, \DP, \LPN, \LWE, \SCP, \SIS \ are all widely believed to be hard also for a quantum computer. The best quantum algorithms for solving these problems have not changed much the picture, the complexity exponent gets essentially only reduced by a constant factor when compared to the best classical algorithms achieving this task, see for instance \cite{B10,BJLM13,KT17a,LMP15,L16,CL21}. Indeed, as explained above, most public-key cryptosystems and digital signature schemes that are being standardized right now by the NIST are based on the presumed hardness of \LWE, and there are also alternate fourth round finalists of the competition \cite{ABCCGLMMMNPPPSSSTW20,AABBBBDDGLPRVZ22,AABBBBDGGGGMPRSTVZ22} which are based on the hardness of binary \DP.

\subsection{Regev's quantum reduction and follow-up work}

Regev's quantum reduction\cite{R05} is at the core of complexity reductions for these problems, which with \cite{A96} essentially started lattice-based cryptography. His approach when rephrased in the coding context is based on the following observation. Suppose that we were able to construct a quantum superposition $\sqrt{\frac{1}{Z}}\sum_{\cv \in \C}\sum_{\ev \in \Fqn} \sqrt{f(e)} \ket{\cv+\ev}$ of noisy codewords of a code $\C$ of dimension $k$ over $\Fq$, for a normalization factor $Z$. If we would apply the quantum Fourier transform on such a state, then because of the periodicity property of such a state we would get a superposition concentrating solely on the codewords of the dual $\C^\perp$ of $\C$, that is $\frac{1}{\sqrt{Z}}\sum_{\cv^\perp \in \C^\perp} \sqrt{\QFT{f}(\cv^\perp)}\ket{\cv^\perp}$. Here $\QFT{f}$ is the (classical) Fourier transform of $f$ that we will properly define in the technical part of the paper. Recall that the dual code is defined as 
\begin{definition}[dual code]
	Let $\C$ be a linear code over $\Fq$, {\ie} a $k$-dimensional subspace of $\Fqn$ for some $k$ and $n$. The dual code $\C^\perp$ is an $(n-k)$ dimensional subspace of $\Fqn$ defined by
	$$
	\C^\perp \eqdef \{\dv \in \Fqn: \dv \cdot \cv =0,\; \forall \cv \in \C\},
	$$
	where $\xv \cdot \yv = \sum_i x_i y_i$ stands for the inner product between the vectors $\xv$ and $\yv$.
\end{definition}
Now, we can expect that if $f$ concentrates on fairly small weights, then $\QFT{f}$ would also concentrate on rather small weights and therefore we would have a way of sampling low weight 
(dual) codewords and solve \SCP \  for the dual code. The point is now that $\sqrt{\frac{1}{Z}}\sum_{\cv \in \C}\sum_{\ev \in \Fqn} \sqrt{f(e)} \ket{\cv+\ev}$ could be obtained 
by solving the \DP \ problem on states that are easy to construct. This is the main idea of Regev's reduction. More precisely, the whole algorithm works as follows

\paragraph{Step 1.} Creation of the tensor product of a uniform superposition of codewords and a quantum superposition of noise
$$
\ket{\phi_1} =  \sqrt{\frac{1}{q^k}} \sum_{\cv \in \C} \ket{\cv} \sum_{\ev \in \Fqn} \sqrt{f(e)} \ket{\ev}.
$$ 
\paragraph{Step 2.} Entangling the codeword with the noise by adding the first register to the second one and then swapping the two registers
$$
\ket{\phi_2} =  \sqrt{\frac{1}{q^k}} \sum_{\cv \in \C}  \sum_{\ev \in \Fqn} \sqrt{f(e)} \ket{\cv+\ev}\ket{\cv}.
$$
\paragraph{Step 3.} Disentangling the two registers by decoding $\cv+\ev$ and therefore finding $\cv$ which allows to erase the second register 
$$
\ket{\phi_2} =  \sqrt{\frac{1}{Z}} \sum_{\cv \in \C} \sum_{\ev \in \Fqn} \sqrt{f(e)} \ket{\cv+\ev}   \ket{\zerom}.
$$
(The different normalizing factor $Z$ arises when the above decoding procedure is imperfect and we condition on measuring $\mathbf{0}$ in the last register.) 
\paragraph{Step 4.} Applying the quantum Fourier transform on the first register and get
$$
\frac{1}{\sqrt{Z}}\sum_{\dv \in \C^\perp} \sqrt{\QFT{f}(\dv)}\ket{\dv}\ket{\zerom}
$$
\paragraph{Step 5.} Measure the first register and get some $\dv$ in $\C^\perp$.

This approach is at the heart of the quantum reductions obtained in \cite{R05,SSTX09,DRT23}. It is also a crucial ingredient in the paper \cite{YZ22} proving verifiable quantum advantage by constructing - among other things -  one-way functions that are even collision resistant against classical adversaries but are easily invertible quantumly.
In \cite{R05,SSTX09,DRT23}, the crucial erasing/disentangling step is performed with the help of a {\em classical} decoding algorithm. Indeed any (classical or quantum) algorithm that can recover $\cv$ from $\cv + \ev$ can be applied coherently to erase the last register in step $3$\footnote{Indeed, having such an algorithm means we can construct the unitary $U : \ket{\cv + \ev}\ket{0} \rightarrow \ket{\cv + \ev}\ket{\cv}$. Applying the inverse of this unitary will give the erasure operation.} .

A key insight observed in~\cite{CLZ22} is that it is actually enough to recover $\ket{\cv}$ from the state $\sum_{\ev \in \Fqn} \sqrt{f(\ev)} \ket{\cv + \ev}$ so we are given a superposition of all the noisy codewords $\cv + \ev$ and not a fixed one. This means we have to solve the following problem

\begin{problem}[$\QDP(q,n,k,f)$] The quantum decoding problem with  positive integer parameters  $q,n,k$ and a probability distribution $f$ on $\Fqn$ is defined as:
	\begin{itemize}
		\item Input: Take $\Gm \in \F_q^{k \times n}$ and $\uv \in \F_q^k$ sampled uniformly at random over their domain. Let $\cv = \uv \Gm$ and $\ket{\psi_\cv} \eqdef \sum_{\ev \in \Fqn} \sqrt{f(e)} \ket{\cv+\ev}$. The (quantum) input to this problem is $(\Gm,\ket{\psi_\cv})$.
		\item Goal: given $(\Gm,\ket{\psi_\cv})$, find $\cv$.
	\end{itemize}	
\end{problem} 

It's not clear a priori whether this is helpful or not. If one measures the state $\ket{\psi_\cv}$ then one recovers a noisy codeword and we are back to the classical decoding problem. 

However, having improvements by directly solving the ({\LWE} variant of the) above problem has been proposed in \cite{CLZ22} where a polynomial time quantum algorithm based on Regev's approach solving \SIS \ is proposed for the $l_\infty$ norm (and not the euclidean norm as is standard there) for extremely high rate codes. Here the decoding problem is obtained by measuring the qudits in an appropriate basis allowing to rule out certain values for the code-symbols, and then they use the Arora-Ge algorithm \cite{AG11} for recovering completely the codeword by solving an algebraic system which for the parameters that are considered there, is of polynomial complexity. Despite the fact that the parameters of the \SIS \ problem are highly degenerate, no efficient classical algorithm performing this task is known. This paper puts forward the \SLWE \  and the \CLWE\  problems. Informally the first problem is the one we solve in Step 3 above and the second one is just to create directly the uniform superposition of noisy codewords obtained at Step 3. 

\COMMENT{Again, if we forget about the fact that in \LWE \ the number of samples is unbounded and replace $\Zq$ with $\Fq$ to get the ``code'' version of this problem which is more in tune with what we are interested in here, the corresponding problem  that we call \QDP \ (as the ``quantum decoding'' problem) can be described as
	
	\begin{problem}[$\QDP(q,n,k,f)$] The quantum decoding problem with  positive integer parameters  $q,n,k$ and a probability distribution $f$ on $\Fqn$ is defined as:
		\begin{itemize}
			\item Input: Take $\Gm \in \F_q^{k \times n}$ and $\uv \in \F_q^k$ sampled uniformly at random over their domain. Let $\cv = \uv \Gm$ and $\ket{\psi_\cv} \eqdef \sum_{\ev \in \Fqn} \sqrt{f(e)} \ket{\cv+\ev}$. The (quantum) input to this problem is $(\Gm,\ket{\psi_\cv})$.
			\item Goal: given $(\Gm,\ket{\psi_\cv})$, find $\cv$.
		\end{itemize}	
	\end{problem}

	In other words, this is precisely the problem that we solve in Step 3. above. In \cite{R05,SSTX09,DRT23} a classical decoding algorithm is considered for this task. In \cite{CLZ22}, this problem is solved with a mixture of quantum measurements that take advantage of the quantum nature of the problem and the subsequent use of a classical algorithm (namely the Arora-Ge algorithm). Notice that if we perform a complete measurement of the state, then we are back to the classical decoding problem \DP. 
	
	The \CLWE \ problem is simply constructing directly the state we need at Step 3. without necessarily having to solve the problem \SLWE/\QDP \ before. In other words, the code version of \CLWE \ would be just to output directly for a given code $\C$ and a probability distribution $f$ the state
	$$
	\sqrt{\frac{1}{q^k}} \sum_{\cv \in \C} \sum_{\ev \in \Fqn} \sqrt{f(e)} \ket{\cv+\ev} 
	$$
	where $k$ is the dimension of $\C$. In \cite{R05,SSTX09,DRT23,CLZ22} the state needed in this problem is obtained through solving (essentially) the \SLWE/\QDP \ problem given above, however as noticed in \cite{CLZ22} an efficient algorithm for \SLWE \ does not necessarily imply an efficient algorithm for \CLWE, as the quantum algorithm solving \SLWE \ may for instance destroy the input state.
}

\subsection{Contributions}

Our work has $2$ starting points. First, the quantum reduction of~\cite{DRT23} between the short codeword problem and the decoding problem in the regime relevant for code-based cryptography {\ie} a constant code rate $\frac{k}{n}$ and constant error rate. Then, the key insight of~\cite{CLZ22} that one requires  to solve the quantum decoding problem in the above reduction which can make it more efficient. Instead on focusing too much on the reduction, our aim is first to study here the quantum decoding problem for its own sake. Indeed, the problem is already interesting as a quantum generalization of the decoding problem and the fact it is used in the above reduction creates strong motivation for studying it. 

In this work, we focus only on the Bernoulli noise of parameter $p$. This means we consider the error function 
$$
f(\ev) = (1-p)^{n-|\ev|}\left(\frac{p}{q-1}\right)^{|\ev|}.
$$
which in turn means that for any $\cv = (c_1,\dots,c_n) \in \F_q^n$, we can rewrite 
$$ \ket{\psi_\cv} \eqdef \sum_{\ev \in \F_q^n} \sqrt{f(\ev)} \ket{\cv + \ev} = \bigotimes_{i = 1}^n  \left(\sqrt{1 - p} \ket{c_i} + \sum_{\alpha \in \F_q^*} \sqrt{\frac{p}{q-1}} \ket{c_i+\alpha}\right).$$

For this Bernoulli noise with parameter $p$, the associated quantum decoding problem is written $\QDP(q,n,k,p)$. We show that indeed, the complexity of the quantum decoding problem significantly differs from its classical counterpart. Our contributions can be summarized as

\paragraph{A polynomial time algorithm for \QDP \ when the noise is low enough (but still of constant rate).} 
We will show that the quantum problem \QDP \ defined here is probably much easier than its classical counterpart \DP. Indeed, for fixed rate $R \eqdef \frac{k}{n}$ only exponential-time algorithms are known for \DP \ for natural noise models, for instance the Bernoulli i.i.d model where $q=2$, $\Pr(e_i=1)=p$ for which all algorithms solving it are 
exponential for $p$ in $(0,1)$. This is not the case for the associated \QDP \ problem, where we will show that by using Unambiguous State Discrimination (USD) together with linear algebra we can solve 
the problem in {\em polynomial time} up to some limiting value of $p$ which is strictly between $0$ and $1$ for a fixed rate $R$. We generalize this result for any $q$ by generalizing existing bounds on USD and also present an algorithm for partial binary unambiguous state discrimination which could be of independent interest.

\paragraph{A problem which can be solved above capacity.} There is an information theoretic  limit for {\em any algorithm} solving classically or quantumly the classical decoding problem 
\DP. When the rate exceeds the capacity of the noisy channel specified by $f$ (and if this is an i.i.d. noise) then above the capacity of the noisy channel it is just impossible to solve
with say polynomial error probability the decoding problem just because there are exponentially many candidates at least as likely as the right candidate. The problem becomes intractable just because of this reason. For instance in the Bernoulli model above, the rate $R=k/n$ has to be smaller than $1-h(p)$ where $h(x)$ is the binary entropy function, $h(x)=-x \log_2(x)-(1-x)\log_2(1-x)$.
Somewhat surprisingly, it turns out that we can go above the Shannon capacity for the \QDP \ problem. Moreover, with the help of the Pretty Good Measurement (PGM) we can fully characterize the noise range where the problem is tractable.

	\paragraph{Applying \QDP \ solvers in Regev's reduction.}
	{Both algorithms (the one using USD and the other one based on PGM) can be applied to sample small weight
	dual codewords and solve \SCP. By applying the quantum reduction steps above, together with our polynomial time solving \QDP \ we obtain non-zero codewords of relative weight $\omega \eqdef w/n$ satisfying $\omega \leq \frac{(q-1)(1-R)}{q}$. Interestingly enough, this is precisely the smallest weight that can be reached by the best known polynomial time algorithm, namely a minor variant of the Prange algorithm \cite{P62}.
	On the other hand, we will show that there is no hope to have a proper general reduction of \SCP \ to \QDP, by providing examples showing that we can solve \QDP \ in a certain noise regime and still get nothing useful for \SCP\ after using it in Regev's reduction. However, we can adapt the PGM to still have some small codewords up to the tractability bound.  Our examples really show that we have to analyze properly the state that we have at Step 3. of the reduction on a case by case basis. \\
}

We now perform a detailed description of our contributions.

\subsubsection{Using USD  as a means of improving quantum algorithms for \QDP}
\paragraph{The binary setting.}
 Our first idea, which extends naturally the work of \cite{CLZ22} is to apply USD for the quantum decoding problem. 
  \COMMENT{ Note that due to the fact that from the $l_\infty$ norm choice in the \SIS/\SCP\  problem considered there, the states that have to be discriminated can not be linearly independent and therefore USD can not work. Note also that the smallest alphabet size that can be handled by the \cite{CLZ22} approach is the ternary alphabet $\F_3$. We will show however that in our setting using USD is not only possible but can also be applied to a binary alphabet. }
 We first consider the binary setting, ${\ie }$  $q = 2$. This means the states $\ket{\Psi_\cv}=\sum_{\ev \in \F_2^n}\sqrt{f(\ev)}\ket{\cv+\ev}$ for which we want to recover $\cv$ are of the form 

$$ \ket{\Psi_\cv} = \bigotimes_{i = 1}^n \sqrt{1 - p} \ket{c_i} + \sqrt{p} \ket{1 - c_i}.$$

Consider a fixed coordinate $i$ for which we have the state $ \sqrt{1 - p} \ket{c_i} + \sqrt{p} \ket{1 - c_i}$ which we call $\ket{\psi^p_{c_i}}$. By measuring this state in the computational basis we get $c_i$ wp. $1 - p$ and $1 - c_i$ wp. $p$. This measurement is actually the measurement that distinguishes best $\ket{\psi_0}$ and $\ket{\psi_1}$.

Another measurement of interest is unambiguous state discrimination. 
Here, the goal is not to distinguish optimally between $\ket{\psi^p_0}$ and $\ket{\psi^p_1}$ but to make sure that our guess is always correct but allowing for some abort. In this setting, we have the following 

\begin{proposition}[Unambiguous state discrimination]
	For any $p \in [0,1]$, there exists a quantum measurement that on input $\ket{\psi^p_{c_i}}$ outputs $c_i$ wp. $1 - 2\sqrt{p(1-p)}$ and outputs $\bot$ otherwise.
\end{proposition}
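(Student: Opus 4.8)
The two one-qubit states to be told apart are $\ket{\psi^p_0} = \sqrt{1-p}\,\ket{0} + \sqrt{p}\,\ket{1}$ and $\ket{\psi^p_1} = \sqrt{p}\,\ket{0} + \sqrt{1-p}\,\ket{1}$ in $\Comp^2$, and a one-line computation gives their overlap
\[
s \eqdef \braket{\psi^p_0}{\psi^p_1} = 2\sqrt{p(1-p)} \in [0,1].
\]
This is exactly the Ivanovic--Dieks--Peres setting of unambiguous discrimination between two equiprobable pure states, so the plan is to write down the corresponding three-outcome POVM explicitly and check by hand that it has the announced behaviour. First I would fix, for $b \in \{0,1\}$ and for a suitable choice of global phases, the unit vector $\ket{\phi_b} \in \Comp^2$ orthogonal to $\ket{\psi^p_b}$, so that $\ket{\psi^p_b} = s\,\ket{\psi^p_{1-b}} + \sqrt{1-s^2}\,\ket{\phi_{1-b}}$; in particular $\lvert \braket{\phi_{1-b}}{\psi^p_b} \rvert^2 = 1 - s^2$.

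Then I would define the POVM $\{E_0, E_1, E_\bot\}$ on $\Comp^2$ by
\[
E_0 = \frac{1}{1+s}\,\ketbra{\phi_1}{\phi_1}, \qquad E_1 = \frac{1}{1+s}\,\ketbra{\phi_0}{\phi_0}, \qquad E_\bot = \Im - E_0 - E_1,
\]
reading the outcome $E_b$ as ``the state was $\ket{\psi^p_b}$'' and $E_\bot$ as the abort symbol $\bot$. The verification splits into four short steps: (i) $E_0$ and $E_1$ are positive semidefinite by construction; (ii) the measurement never errs, since $\bra{\psi^p_1} E_0 \ket{\psi^p_1} = \frac{1}{1+s}\lvert \braket{\phi_1}{\psi^p_1}\rvert^2 = 0$ by the defining orthogonality, and symmetrically $\bra{\psi^p_0} E_1 \ket{\psi^p_0} = 0$; (iii) the success probability is $\bra{\psi^p_0} E_0 \ket{\psi^p_0} = \frac{1}{1+s}\lvert \braket{\phi_1}{\psi^p_0}\rvert^2 = \frac{1-s^2}{1+s} = 1-s$, and likewise $\bra{\psi^p_1} E_1 \ket{\psi^p_1} = 1-s$, so that the remaining probability $s = 2\sqrt{p(1-p)}$ is assigned to $\bot$; (iv) $E_\bot \succeq 0$. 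For the last point I would expand $E_0 + E_1$ in the orthonormal basis $\{\ket{\psi^p_1}, \ket{\phi_1}\}$, using $\ket{\phi_0} = \sqrt{1-s^2}\,\ket{\psi^p_1} - s\,\ket{\phi_1}$ up to phase, which turns $E_\bot$ into an explicit $2\times 2$ matrix whose trace is $2 - \frac{2}{1+s} = \frac{2s}{1+s} \ge 0$ and whose determinant is $1 - \frac{2}{1+s} + \frac{1-s^2}{(1+s)^2} = 0$; hence $E_\bot$ has nonnegative eigenvalues. Putting (i)--(iv) together, on input $\ket{\psi^p_{c_i}}$ the POVM returns $c_i$ with probability $1 - s = 1 - 2\sqrt{p(1-p)}$ and $\bot$ otherwise, which is the claim.

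The only genuinely computational step is the $2\times 2$ determinant/trace check in (iv); everything else is bookkeeping, and that step is also where the coefficient $\frac{1}{1+s}$ gets pinned down — it is the root of the determinant equation for which $\tr E_\bot \ge 0$ (the other root $\frac{1}{1-s}$ violates positivity and would correspond to the impossible success probability $1+s$). I would also note the degenerate cases for completeness: $p \in \{0,1\}$ gives $s = 0$, i.e.\ orthogonal states and success probability $1$, while $p = \tfrac12$ gives $s = 1$, $\ket{\psi^p_0} = \ket{\psi^p_1}$ and success probability $0$, all consistent with the formula, so the statement needs no special-casing.
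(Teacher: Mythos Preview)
Your proof is correct and self-contained: this is exactly the classical Ivanovic--Dieks--Peres POVM, and your verification of steps (i)--(iv) is clean (the eigenvalue computation for $E_\bot$ can be shortened by noting that $\ketbra{\phi_0}{\phi_0}+\ketbra{\phi_1}{\phi_1}$ has eigenvalues $1\pm|\braket{\phi_0}{\phi_1}|=1\pm s$, so $E_\bot$ has eigenvalues $0$ and $\tfrac{2s}{1+s}$, but your trace/determinant argument is equally valid).

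The paper itself does not give a standalone proof of this proposition: it is stated as a known result in the preliminaries (Proposition~\ref{Proposition:USD}) and then recovered later as the special case $\omega'=0$ of the partial-USD unitary in Lemma~\ref{Lemma:UnitaryUSD}, and again as the $q=2$ case of the Fourier-based $q$-ary USD in Proposition~\ref{proposition:USD}. Those constructions realise the measurement via an isometric embedding into a larger space (adding an ancilla $\ket{2}$ and applying a unitary), which is operationally equivalent to your direct three-outcome POVM via Naimark dilation. Your approach has the virtue of being elementary and not requiring any of the paper's later machinery; the paper's approach has the virtue of slotting the binary case into the general framework it actually needs downstream.
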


Using this measurement, the probability of guessing correctly $c_i$ is always smaller than $1-p$ for $p \le \frac{1}{2}$. However, we know exactly when we succeed in guessing $c_i$. This will be extremely useful for decoding. Indeed, if we recover $k$ values of $c_i$ we recover the complete codeword $\cv$ with good probability by linear algebra by using the fact that
$\cv = \mv \Gm$ with $\Gm \in \F_2^{k \times n}$. This will lead to

\begin{theorem}\label{Theorem:1}
	Let $R \in [0,1]$. For any $p < \perpF{\frac{R}{2}}$, there exists a quantum algorithm that solves $\QDP(2,n,\lfloor Rn \rfloor,p)$ wp. $1 - 2^{-\Omega(n)}$ in time $\poly(n)$.
Here for a real number $x \in [0,1]$, $\perpO{x}$ stands for $\frac{1-2\sqrt{x(1-x)}}{2}$.
\end{theorem}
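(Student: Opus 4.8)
The plan is to apply the unambiguous state discrimination (USD) measurement of the preceding proposition to each of the $n$ qubits separately. This recovers a random constant fraction of the coordinates of $\cv$ \emph{with certainty} (USD never errs), and then we reconstruct the whole codeword by linear algebra, using the fact that $\Gm$ is a random generator matrix.

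\textbf{Step 1 (coordinate-wise USD).} Since $\ket{\psi^p_\cv} = \bigotimes_{i=1}^n \ket{\psi^p_{c_i}}$ is a tensor product, one can run the USD measurement of the proposition independently on qubit $i$, for each $i \in \Iint{1}{n}$. Let $S \subseteq \Iint{1}{n}$ be the (random) set of coordinates on which the measurement does not return $\bot$. By the proposition, conditioned on $\cv$ the outcomes on distinct coordinates are independent, the event $i \in S$ occurs with probability exactly $1 - 2\sqrt{p(1-p)}$ regardless of the value of $c_i$, and on $S$ the measurement returns the true symbols $(c_i)_{i \in S}$. Hence $S$ follows a product distribution that does not depend on $\cv$, and since the measurement never looks at $\Gm$, does not depend on $\Gm$ either. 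I would then use that the hypothesis $p < \perpF{R/2}$ is equivalent to $1 - 2\sqrt{p(1-p)} > R$: squaring, $1 - 2\sqrt{p(1-p)} = R$ rearranges to $p(1-p) = (1-R)^2/4$, whose smaller root is precisely $\perpF{R/2}=\frac{1-\sqrt{R(2-R)}}{2}$, and $p \mapsto 1-2\sqrt{p(1-p)}$ is decreasing on $[0,\tfrac12]$. Thus $\delta \eqdef (1-2\sqrt{p(1-p)}) - R$ is a positive constant, and a Chernoff bound gives $|S| \geq (R + \delta/2)\,n$ — in particular $|S| \geq k + \tfrac{\delta}{2}n$ for $n$ large — except with probability $2^{-\Omega(n)}$.

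\textbf{Step 2 (linear algebra).} Writing $\cv = \mv \Gm$ and letting $\Gm_S$ denote the $k \times |S|$ submatrix of columns of $\Gm$ indexed by $S$, Step 1 furnishes the exact value of $\cv$ restricted to $S$, i.e. the vector $\mv\,\Gm_S$. Since $S$ is independent of $\Gm$ and $\Gm$ is uniform, conditioned on $S$ the matrix $\Gm_S$ is uniform over $\F_2^{k \times |S|}$; a union bound over the $2^k-1$ nonzero candidate left-kernel vectors shows that $\Gm_S$ has rank $k$ except with probability at most $2^{\,k-|S|} \leq 2^{-\delta n/2}$. On that event $\mv$ is the unique solution of the linear system $\cv_{|S} = \mv\,\Gm_S$, so Gaussian elimination recovers $\mv$ and we output $\cv = \mv \Gm$. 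Every step — $n$ single-qubit three-outcome measurements, one Gaussian elimination, one vector–matrix product — costs $\polyo(n)$, and the total failure probability is $2^{-\Omega(n)} + 2^{-\delta n/2} = 2^{-\Omega(n)}$.

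\textbf{Main obstacle.} The Chernoff and rank estimates are routine; the delicate point is the independence bookkeeping in Step 1. One must argue cleanly that the abort pattern $S$ is statistically independent of both the secret $\mv$ and the matrix $\Gm$, since this is exactly what lets us treat $\Gm_S$ as a fresh uniformly random matrix in Step 2 — a naive union bound over all large subsets $S$ would be far too lossy because there are up to $2^n$ of them. This independence hinges on the USD measurement aborting with the same probability on $\ket{\psi^p_0}$ and on $\ket{\psi^p_1}$, which is precisely what the proposition guarantees. It is also worth noting that this USD POVM is implementable in $\polyo(n)$ time, but being a three-outcome measurement on a single qubit (one ancilla, a fixed unitary, a computational measurement) this is immediate.
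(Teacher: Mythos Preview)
Your proof is correct and follows essentially the same approach as the paper: coordinate-wise USD to learn $\cv$ on a random set $S$ (the paper's $J$), a concentration bound (Hoeffding there, Chernoff here) to ensure $|S|\gtrsim Rn$, then the rank argument on $\Gm_S$ and linear algebra to recover $\cv$. Your explicit treatment of the independence between the abort pattern $S$ and $(\Gm,\mv)$ is a point the paper leaves implicit but is indeed what makes the rank bound on $\Gm_S$ legitimate.
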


\paragraph{Interpretation as changing the noise channel and partial unambiguous state discrimination.}
A nice interpretation of the above algorithm is that when the error is in quantum superposition, one can use quantum measurements to change the noise model. For example in the above, if we are given $\ket{\psi_{c_i}} = \sqrt{1-p}\ket{c_i} + \sqrt{p} \ket{1 - c_i}$ then
\begin{itemize}\setlength\itemsep{-0.17em}
	\item One can measure in the computational basis to obtain $c_i$ that has been flipped wp. $p$. 
	\item One can use unambiguous state discrimination in which case $c_i$ has been erased wp. $2\sqrt{p(1-p)}$. 
\end{itemize}

What we show in Theorem~\ref{Theorem:1} is that the second strategy is actually much more powerful for recovering the codeword $\cv$. A natural question to ask is whether this can further be generalized to other measurements. 

In this work, we actually generalize Unambiguous State Discrimination as follows: given $\ket{\psi_{c_i}}$, the measurement will sometimes output $\bot$ but it can also fail with some small probability. We prove the following

\begin{proposition}[Partial Unambiguous State Discrimination]
	Let $p,s \in [0,\frac{1}{2})$ with $s \le p$ and let $u = \frac{p^\bot}{s^\bot}$. There exists a quantum measurement that when applied to $\ket{\psi_{c_i}} = \sqrt{1-p} \ket{c_i} + \sqrt{p}\ket{1-c_i}$ outputs $c_i$ wp. $u(1-s)$, $(1-c_i)$ wp. $us$ and $\bot$ wp. $1-u$. 
\end{proposition}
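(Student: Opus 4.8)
\textit{Proof proposal.} The plan is to realise the measurement as a \emph{denoising filter} followed by a computational-basis measurement. For $r\in[0,\tfrac12)$ and $c\in\{0,1\}$ write $\ket{\psi^r_c}=\sqrt{1-r}\ket{c}+\sqrt{r}\ket{1-c}$, so the states to be processed are $\ket{\psi^p_0},\ket{\psi^p_1}$ and the ``target'' (less noisy) states are $\ket{\psi^s_0},\ket{\psi^s_1}$. If $p=s$ then $u=1$ and the claimed measurement is simply the computational-basis measurement, so assume $s<p$ (hence $p>0$). Since $p<\tfrac12$, the vectors $\ket{\psi^p_0},\ket{\psi^p_1}$ form a basis of $\Comp^2$, so there is a unique linear operator $F$ on $\Comp^2$ with $F\ket{\psi^p_c}=\sqrt{u}\,\ket{\psi^s_c}$ for $c\in\{0,1\}$. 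I would then show $F^\dagger F\preceq I$, so that the operators $M_c=\ketbra{c}{c}F$ ($c\in\{0,1\}$) and $M_\bot=\sqrt{I-F^\dagger F}$ define a legitimate measurement; on input $\ket{\psi^p_c}$ it returns $b\in\{0,1\}$ with probability $|\bra{b}F\ket{\psi^p_c}|^2=u\,|\braket{b}{\psi^s_c}|^2$, which is $u(1-s)$ for $b=c$ and $us$ for $b=1-c$, and hence returns $\bot$ with the complementary probability $1-u$ --- exactly the claimed behaviour.

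The only real content is the norm bound $\norm{F}\le 1$. In the $\ket0,\ket1$ basis, $F=\sqrt{u}\,SP^{-1}$ where $P,S$ are the symmetric matrices with columns $\ket{\psi^p_c}$ and $\ket{\psi^s_c}$; both $S$ and $P^{-1}$ have the circulant form $\left(\begin{smallmatrix}a&b\\b&a\end{smallmatrix}\right)$, hence so does $F$, and its eigenvectors are $\ket{\pm}=\tfrac1{\sqrt2}(\ket0\pm\ket1)$. Since $\ket{\psi^p_0}+\ket{\psi^p_1}=\sqrt2\,(\sqrt{1-p}+\sqrt p)\ket{+}$ and $\ket{\psi^p_0}-\ket{\psi^p_1}=\sqrt2\,(\sqrt{1-p}-\sqrt p)\ket{-}$ (and likewise with $p$ replaced by $s$), applying $F$ yields the two (nonnegative) eigenvalues
$$
\sigma_+=\sqrt{u}\,\frac{\sqrt{1-s}+\sqrt s}{\sqrt{1-p}+\sqrt p},\qquad
\sigma_-=\sqrt{u}\,\frac{\sqrt{1-s}-\sqrt s}{\sqrt{1-p}-\sqrt p}.
$$
As $x\mapsto\sqrt{1-x}+\sqrt x$ is increasing and $x\mapsto\sqrt{1-x}-\sqrt x$ is decreasing on $[0,\tfrac12]$ and $s\le p$, we get $\sigma_+\le\sqrt u\le\sigma_-$, so $\norm{F}=\sigma_-$. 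Now invoke the identity $(\sqrt{1-r}-\sqrt r)^2=1-2\sqrt{r(1-r)}=2\,\perpO{r}$, valid for $r\in[0,\tfrac12]$: it gives $\bigl(\tfrac{\sqrt{1-s}-\sqrt s}{\sqrt{1-p}-\sqrt p}\bigr)^2=\tfrac{\perpO{s}}{\perpO{p}}$, so with $u=\perpO{p}/\perpO{s}$ we obtain $\sigma_-^2=u\cdot\tfrac{\perpO{s}}{\perpO{p}}=1$. Thus $F^\dagger F\preceq I$, and $I-F^\dagger F$ is positive semidefinite (in fact rank one, with $\ket{-}$ in its kernel), so $M_\bot$ is well defined and $M_0^\dagger M_0+M_1^\dagger M_1+M_\bot^\dagger M_\bot=F^\dagger F+(I-F^\dagger F)=I$.

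It then remains to assemble the pieces: confirm $\{M_0,M_1,M_\bot\}$ is a valid measurement (done above) and recompute the three outcome probabilities on $\ket{\psi^p_{c_i}}$ from $F\ket{\psi^p_{c_i}}=\sqrt{u}\ket{\psi^s_{c_i}}$ as indicated; one should also note that the degenerate cases are consistent with this construction (if $p=0$ then $s=0$, $u=1$; if $s=0<p$ then $u=1-2\sqrt{p(1-p)}$ and one recovers the unambiguous-state-discrimination measurement of the previous proposition). I expect the main obstacle to be purely bookkeeping: arranging the factorisations of $\sqrt{1-r}\pm\sqrt r$ so that the top eigenvalue of $F$ comes out exactly $1$ for the prescribed $u$ --- everything else is routine $2\times2$ linear algebra.
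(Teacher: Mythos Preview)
Your proof is correct and essentially parallel to the paper's, but packaged differently. The paper (Lemma~\ref{Lemma:UnitaryUSD} and Proposition~\ref{Proposition:PartialUSD}) realises the measurement via a unitary dilation: it embeds $\Comp^2$ into $\myspan\{\ket{0},\ket{1},\ket{2}\}$, exhibits a unitary $U$ sending $\ket{\psi^p_c}$ to $\alpha\ket{\psi^s_c}+\beta\ket{2}$ with $\alpha=\sqrt{u}$, checks that the inner product $\braket{\psi^p_0}{\psi^p_1}=1-2p^\perp$ is preserved precisely when $\alpha^2=p^\perp/s^\perp$, and then projects onto $\myspan\{\ket{0},\ket{1}\}$ versus $\ket{2}$ before measuring in the computational basis. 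You instead stay in $\Comp^2$ and build the Kraus operators directly from the filter $F$ with $F\ket{\psi^p_c}=\sqrt{u}\ket{\psi^s_c}$; your eigenvalue computation $\sigma_-^2=u\cdot s^\perp/p^\perp=1$ is exactly the same algebraic identity the paper uses to verify inner-product preservation, just read in the other direction. What your route buys is a transparent explanation of \emph{why} $u=p^\perp/s^\perp$ is the extremal value --- it is precisely the scaling that makes the larger singular value of $F$ hit $1$ --- whereas the paper's dilation picture makes the two-stage ``denoise, then measure'' interpretation more explicit. The two are related by Naimark's theorem and neither is materially harder than the other.
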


Notice that this generalizes both the standard measurement (by taking $s = p$) and unambiguous state discrimination (by taking $s = 0$ which gives $u = 2p^\bot$). This seems a very natural way of generalizing Unambiguous State Discrimination but is not something we have found in the literature and could be of independent interest. We can use this measurement not to provide new polynomial time algorithm but rather to give a reduction between different Quantum Decoding problems, which we detail in the full text.

\COMMENT{

\begin{theorem}
	Let $R \in (0,1)$. Let $p \in [0,\frac{1}{2})$ and $q \in [0,\frac{1}{2})$ satisfying: $q \le p$ and $\frac{p^\perp}{q^\perp}  > R$.
	Let any $u > \frac{p^\perp}{q^\perp}$. Let also $k = \lfloor Rn \rfloor$.
	Then $\DPQ(2,n,k,p) \preccurlyeq \DPQ(2,\lfloor un\rfloor,k,q)$ meaning that if we have an algorithm that solves $\DPQ(2,\lfloor un\rfloor,k,q)$, we can use it to solve $\DPQ(2,n,k,p)$.
\end{theorem}

}

\paragraph{The general setting.}
The unambiguous state discrimination approach works in the $q$-ary setting as well.
A difficulty here is that optimal unambiguous state discrimination is not known in general for more than $2$ states, but in certain situations where we have 
a symmetric set of states \cite{CB98} we know how to perform optimal USD. This would apply in our case case where $q$ is prime.  We have generalized sligthly the approach of \cite{CB98} to be able to apply it to any finite field size $q$. We get finally a result very similar to the binary case

\begin{theorem}\label{Theorem:USD}
	Let $R \in [0,1]$. For any $p < \perpF{\frac{(q-1)R}{q}}$, there exists a quantum algorithm that solves $\QDP(q,n,\lfloor Rn \rfloor,p)$ wp. $1 - 2^{-\Omega(n)}$ in time $\poly(n)$.
\end{theorem}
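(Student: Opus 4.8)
The plan is to mirror the proof of Theorem~\ref{Theorem:1} but replace binary unambiguous state discrimination with a $q$-ary version. The input consists of the $n$ tensor factors $\ket{\psi^p_{c_i}} = \sqrt{1-p}\ket{c_i} + \sum_{\alpha \in \F_q^*}\sqrt{\frac{p}{q-1}}\ket{c_i+\alpha}$, which for each fixed $i$ form a symmetric set of $q$ states obtained from $\ket{\psi^p_0}$ by the shift $\ket{x}\mapsto\ket{x+c_i}$. First I would establish the $q$-ary analogue of the Unambiguous State Discrimination proposition: there is a measurement that, on input $\ket{\psi^p_{c_i}}$, outputs the correct $c_i$ with some probability $P_{\mathrm{usd}}(p,q)$ and $\bot$ otherwise, never outputting a wrong symbol. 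For a symmetric state ensemble one can use the construction of~\cite{CB98}: diagonalize the Gram matrix of the $q$ states via the (classical) Fourier transform over $\F_q$, and the optimal unambiguous success probability is $q$ times the smallest eigenvalue of the Gram matrix. I would carry out this eigenvalue computation — the Gram matrix here is circulant with entries that are inner products of shifted states — and slightly generalize~\cite{CB98} so that $q$ need not be prime (the symmetry group is $\Z_q$ rather than $\F_q^\ast$, so the argument goes through unchanged as long as we Fourier-transform over the cyclic group $\Z_q$). I expect this to yield $P_{\mathrm{usd}}(p,q) = 1 - 2\sqrt{\tfrac{p}{q-1}\,(\text{something})}$; matching the claimed bound requires that this equals $1 - 2\sqrt{x(1-x)}$ with $x = \tfrac{(q-1)p}{q}$ after the appropriate normalization, i.e. $\perpO{\text{(rate quantity)}}$ inverted.

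Second, with the USD measurement in hand, I would apply it independently to each of the $n$ coordinates. Each coordinate independently yields the true symbol $c_i$ with probability $P_{\mathrm{usd}}(p,q)$ and $\bot$ otherwise. By a Chernoff bound, as long as $P_{\mathrm{usd}}(p,q) > R$ — equivalently $p < \perpF{\frac{(q-1)R}{q}}$ once one checks the two expressions agree — the number of recovered coordinates is at least $Rn(1+o(1)) > k$ except with probability $2^{-\Omega(n)}$. Conditioned on this, we know $c_i = (\uv\Gm)_i$ for a set $S$ of at least $k$ known positions, so $\uv$ satisfies the linear system $\uv\Gm_S = \cv_S$ where $\Gm_S$ is the submatrix of $\Gm$ on columns $S$. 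Since $\Gm$ is uniformly random and $|S|\ge k$, the matrix $\Gm_S$ has rank $k$ except with probability $2^{-\Omega(n)}$ (standard count of full-rank $k\times m$ matrices over $\F_q$, $m\ge k$), so we solve the system by Gaussian elimination in $\poly(n)$ time, recover $\uv$, hence $\cv = \uv\Gm$. A union bound over the two failure events gives overall success probability $1 - 2^{-\Omega(n)}$.

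The main obstacle is the first step: getting a clean closed form for the optimal unambiguous success probability of the symmetric $q$-ary ensemble and verifying it matches $\perpF{\frac{(q-1)R}{q}}$ at the threshold. One has to be careful that the relevant eigenvalue of the Gram matrix of $\{\ket{\psi^p_\alpha}\}_{\alpha\in\Z_q}$ is indeed nonnegative for all $p\in[0,1]$ (so that unambiguous discrimination is possible at all — the states must be linearly independent, which here holds because $\sqrt{1-p}\neq\sqrt{p/(q-1)}$ generically, and one should handle the degenerate point separately), and that the smallest eigenvalue is the one governing the construction. The bookkeeping that turns ``$q$ times the minimal Gram eigenvalue'' into ``$1 - 2\sqrt{x(1-x)}$ with $x=\frac{(q-1)R}{q}$'' — rather than into the rate $R$ directly — is exactly where the factor $\frac{q-1}{q}$ in the statement comes from, and pinning it down precisely is the crux; everything after that (independence across coordinates, Chernoff, linear algebra over $\F_q$) is routine and identical in spirit to the proof of Theorem~\ref{Theorem:1}.
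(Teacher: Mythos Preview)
Your overall strategy matches the paper's exactly: perform unambiguous state discrimination coordinate-wise, then recover $\cv$ by Gaussian elimination on the successful positions. The second half (Chernoff on the number of successes, full-rank of $\Gm_J$, linear algebra) is identical to what the paper does and is fine as you describe it.

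There is, however, a genuine gap in your handling of the USD step for non-prime $q$. You write that ``the symmetry group is $\Z_q$ rather than $\F_q^\ast$, so the argument goes through unchanged as long as we Fourier-transform over the cyclic group $\Z_q$''. This is incorrect: for $q=p^s$ with $s>1$ the additive group $(\F_q,+)$ is isomorphic to $\Z_p^s$, which is \emph{not} cyclic. The result of \cite{CB98} as stated assumes a single unitary $U$ of order $N$ generating the symmetry, i.e.\ cyclic symmetry, so it does not apply directly. The paper deals with this by proving a generalization (its Proposition~\ref{proposition:USD}) that works for the abelian group of all shifts $\{X_\bv:\bv\in\F_q^n\}$, using the character basis $\{\ket{\QFT{\xv}}\}$ of $\F_q^n$ (the trace-based characters) as the simultaneous eigenbasis. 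The averaging argument of \cite{CB98} still works, but one must average over the whole abelian group, and the ``diagonalize the Gram matrix'' step becomes ``express $\ket{\psi_0}$ in the $\F_q$-Fourier basis''. Once this is done, the optimal USD probability is $q\min_{x\in\F_q}|\QFT f(x)|^2$, and since $\QFT f(y)=\sqrt{p^\perp/(q-1)}$ for $y\neq 0$ (Lemma~\ref{lemma:QFTpsi}), one gets $P_{\mathrm{usd}}=\frac{q\,p^\perp}{q-1}$. The threshold condition $P_{\mathrm{usd}}>R$ is then exactly $p^\perp>\frac{(q-1)R}{q}$, i.e.\ $p<\bperpO{\frac{(q-1)R}{q}}$, using that $(\cdot)^\perp$ is an involution. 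Your expected form ``$1-2\sqrt{x(1-x)}$ with $x=\tfrac{(q-1)p}{q}$'' is not the right closed form; the correct expression involves the $q$-ary $(\cdot)^\perp$, not the binary one.

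A minor point: the optimal USD probability equals the \emph{smallest} Gram eigenvalue (which is $q\min_r|c_r|^2$), not $q$ times it.
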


Here we have used a notation which ``generalizes'' the $\perpO{p}$ notation used in the binary setting.
\begin{notation}
For a real number $x \in [0,1]$, $\perpO{x}$ stands for $\frac{\left(\sqrt{(1-x)(q-1)}-\sqrt{x}\right)^2}{q}$.
\end{notation}
This quantity depends on $q$ which will be clear from the context. Note that when $q=2$ we get $\frac{1-2\sqrt{x(1-x)}}{2}$ which coincides with the one given in the binary case.

\subsubsection{Determining exactly the tractability of the quantum decoding problem}
We are now interested in the tractability of $\QDP(q,n,k,p)$ meaning when is it possible from an information theoretic perspective to solve this problem. In order to study this problem, a fundamental quantity is $\drgv(R)$ defined below,  sometimes referred to as the Gilbert-Varshamov distance 

\begin{notation}
	Let $R \in [0,1]$. We define $\drgv(R) \eqdef h_q^{-1}(1 - R)$, where $h_q(x) \eqdef -(1-x) \log_q(1-x) - x \log_q\left(\frac{x}{q-1} \right)$. $h_q$ is a bijection from $x \in \Sbra{0,\frac{q-1}{q}}$ to $[0,1]$ and we define  $h_q^{-1} : [0,1] \rightarrow \Sbra{0,\frac{q-1}{q}}$ st. $h_q^{-1}(h_q(x)) = x$ for $x \in \Sbra{0,\frac{q-1}{q}}$.
\end{notation}

For the classical setting, it is well understood that $\DP(q,n,k,p)$ is not tractable when $p > \drgv(\frac{k}{n})$, meaning that even an unbounded algorithm will solve the problem wp. $o(1)$. 

We would like now to understand what happens in the quantum setting. 
Techniques based on (partial) unambiguous state discrimination will not work in the regime $p > \drgv(R)$. Since we are only interested in the tractability of the problem, we can consider optimal quantum algorithms for discriminating between the states $\ket{\Psi_\cv}=\sum_{\ev \in \Fqn} \sqrt{f(\ev)}\ket{\cv+\ev}$ where $f$ accounts for the Bernoulli noise of parameter $p$. This problem can be addressed by using the {\em Pretty Good Measurement} (PGM) which has turned out to be
a very useful tool in quantum information. If we define $\PPGM$ as the probability that the pretty good measurement succeeds in solving our problem and define $\POPT$ as the maximal probability that any measurement succeeds, we have~\cite{BK02,Mon06}
$$ \POPT^2 \le \PPGM \le \POPT.$$
This means that if the problem is tractable then $\POPT = \Omega(1)$ which implies $\PPGM = \Omega(1)$. On the other hand, if the problem is intractable then $\POPT = o(1)$ which implies $\PPGM = o(1)$. In conclusion, in order to study the tractability of the quantum decoding problem, it is enough to look at the PGM associated with the problem of distinguishing the states $\{\ket{\Psi_\cv}\}$. We show the following
\begin{theorem}\label{Theorem:PGM}
	Let $R \in (0,1)$.
\begin{itemize} \setlength\itemsep{-0.2em}
	\item For $p < \perpF{\drgv(1-R)}$, $\QDP(q,n,\lfloor Rn \rfloor,p)$ can be solved using the PGM wp. $\PPGM = \Omega(1)$  hence the problem is tractable. 
	\item For $p > \perpF{\drgv(1-R)}$, $\QDP(q,n,\lfloor Rn \rfloor,p)$, the probability that the PGM solves this problem is $P_{PGM} = o(1)$ hence the problem is intractable. 
\end{itemize}
\end{theorem}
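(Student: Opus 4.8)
The plan is to compute $\PPGM$ in closed form by exploiting that the ensemble $\{\ket{\Psi_\cv}\}_{\cv\in\C}$, with uniform priors, is covariant under the abelian group $\C$ acting by translation, and then to read off for which $p$ the resulting quantity is $\Omega(1)$. First apply the quantum Fourier transform over the additive group of $\Fqn$: translations become diagonal, so $\widehat{\ket{\Psi_\cv}}=\sum_{\yv\in\Fqn}g(\yv)\,\chi_\yv(\cv)\ket{\yv}$ with $g=\widehat{\sqrt f}$ and $\chi_\yv$ the character attached to $\yv$. Since $f$ is a product Bernoulli law, a one-coordinate computation gives $|g(\yv)|^2=(1-p')^{n-|\yv|}\bigl(\tfrac{p'}{q-1}\bigr)^{|\yv|}$ with $p':=\perpO p$, i.e.\ the Fourier-transformed noise is again Bernoulli, with the ``dual'' parameter $\perpO p$ (the same phenomenon underlying Step~4 of Regev's reduction). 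Grouping the $\ket{\yv}$ according to the coset of $\C^\perp=\{\dv:\Gm\trsp{\dv}=\zerov\}$ that $\yv$ lies in --- the only data on which $\chi_\yv|_\C$ depends --- the states become, in an orthonormal basis indexed by the $q^k$ cosets $\mathcal{A}\in\Fqn/\C^\perp$, the ``character-table'' vectors $\bigl(\sqrt{\gamma_\mathcal{A}}\,\chi_\mathcal{A}(\cv)\bigr)_\mathcal{A}$, where $\gamma_\mathcal{A}=\sum_{\yv\in\mathcal{A}}|g(\yv)|^2$ is exactly the mass the Bernoulli-$p'$ law puts on $\mathcal{A}$ and the $\chi_\mathcal{A}$ are the characters of $\C$. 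The standard square-root-measurement computation for a group-covariant ensemble then yields
\[
  \PPGM \;=\; \frac{1}{q^k}\Bigl(\sum_{\mathcal{A}\in\Fqn/\C^\perp}\sqrt{\gamma_\mathcal{A}}\Bigr)^{\!2}.
\]
By Cauchy--Schwarz $\PPGM\le 1$ with equality iff $(\gamma_\mathcal{A})$ is uniform, and the theorem reduces to deciding whether $\sum_\mathcal{A}\sqrt{\gamma_\mathcal{A}}$ is $\Theta(q^{k/2})$ or $o(q^{k/2})$. Since $\perpO\cdot$ is a decreasing involution of $[0,\tfrac{q-1}{q}]$ and $\drgv(1-R)=h_q^{-1}(R)$, the hypothesis $p<\perpF{\drgv(1-R)}$ rewrites as $h_q(p')>R$, i.e.\ the typical weight $\approx p'n$ of the dual noise lies strictly above the Gilbert--Varshamov weight of $\C^\perp$; likewise $p>\perpF{\drgv(1-R)}$ is $h_q(p')<R$.

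For the intractable regime $h_q(p')<R$ the argument is short and holds for \emph{every} code. The Bernoulli-$p'$ law is concentrated, up to mass $2^{-\Omega(n)}$, on a typical set $T$ of size $q^{(h_q(p')+o(1))n}=o(q^k)$; hence all but $|T|$ cosets together carry mass $2^{-\Omega(n)}$, and splitting $\sum_\mathcal{A}\sqrt{\gamma_\mathcal{A}}$ over these two classes of cosets and applying Cauchy--Schwarz to each gives $\sum_\mathcal{A}\sqrt{\gamma_\mathcal{A}}\le\sqrt{|T|}+\sqrt{q^k\cdot 2^{-\Omega(n)}}=o(q^{k/2})$, so $\PPGM=2^{-\Omega(n)}$. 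Since $\POPT\le\sqrt{\PPGM}$, the problem is in fact information-theoretically intractable, which is stronger than the stated conclusion.

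For the tractable regime $h_q(p')>R$ the typical set of the dual noise is now \emph{larger} than $q^k$, so a random $\C^\perp$ must spread it almost uniformly over its cosets. Concretely, slice $\Fqn$ into weight shells $S_w$ (on which $|g(\cdot)|^2$ is constant), observe that every $w$ with $|w/n-p'|$ small has $|S_w|=q^{k+\Omega(n)}$, and use the second-moment bound $\mathbb{E}_{\Gm}\bigl[\sum_{\mathcal{A}}|\mathcal{A}\cap S_w|^2\bigr]\le|S_w|+q^{-k}|S_w|^2$ (from $\Pr_{\Gm}[\Gm\trsp{\dv}=\zerov]=q^{-k}$ for $\dv\neq\zerov$) together with Chebyshev to conclude that, with probability $1-o(1)$ over $\Gm$, $|\mathcal{A}\cap S_w|\ge\tfrac12|S_w|/q^k$ simultaneously for all $w$ in the typical range and all but an $o(q^k)$ fraction of cosets $\mathcal{A}$. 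Such a ``good'' coset has $\gamma_\mathcal{A}\ge\tfrac12 q^{-k}\Pr_{\ev\sim\mathrm{Ber}(p')}[\,|\ev|\text{ typical}\,]\ge\tfrac14 q^{-k}$, hence $\sum_\mathcal{A}\sqrt{\gamma_\mathcal{A}}\ge(1-o(1))q^k\cdot\tfrac12 q^{-k/2}=\Omega(q^{k/2})$ and $\PPGM=\Omega(1)$ for all but a vanishing fraction of codes.

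The hard part is this last equidistribution step, and specifically getting $\PPGM=\Omega(1)$ rather than merely $1/\mathrm{poly}(n)$: a single weight shell carries only $\Theta(1/\sqrt n)$ of the Bernoulli mass, so one cannot work with one shell and must establish equidistribution simultaneously over all $\Theta(n)$ shells of the typical range --- which is exactly where a union bound, and hence the \emph{strict} inequality $h_q(p')>R$ (via $h_q(p'-\varepsilon)>R$ for small $\varepsilon$, so that $|S_w|\gg q^k$ over the whole range), enters. The cheaper substitutes fall short of the claimed threshold: lower-bounding $\gamma_\mathcal{A}$ by the mass of a single coset leader only reaches $p'<\tfrac12\drgv(1-R)$ (it needs $\C^\perp$ to separate low-weight vectors), while $\sum_\mathcal{A}\sqrt{\gamma_\mathcal{A}}\ge 1/\|(\gamma_\mathcal{A})\|_2$ via the syndrome collision probability only reaches the range where the \emph{collision} entropy of $\mathrm{Ber}(p')$ exceeds $R$ --- both replace the relevant Shannon entropy $h_q(p')$ of the noise by a strictly smaller quantity.
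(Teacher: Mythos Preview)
Your proof is correct and follows the same overall architecture as the paper: apply the QFT so that the ensemble $\{\ket{\Psi_\cv}\}_{\cv\in\C}$ becomes diagonal-times-character, group the Fourier basis by cosets of $\C^\perp$, and obtain the group-covariant PGM formula $\PPGM=q^{-k}\bigl(\sum_{\mathcal A}\sqrt{\gamma_{\mathcal A}}\bigr)^2$ --- this is exactly the paper's Proposition~\ref{Proposition:PGMValue} with $n_\sv^2=\gamma_{\mathcal A}$. The rewriting of the threshold $p\lessgtr\perpF{\drgv(1-R)}$ as $h_q(p^\perp)\gtrless R$ is also the same.

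For the \emph{tractable} side your argument is essentially the paper's: both use second-moment concentration of $|\mathcal A\cap S_w|$ around $|S_w|/q^k$ for weights $w$ in the typical range. The paper invokes this per syndrome (its Proposition~\ref{lem:concentration_as}, cited from~\cite{Deb23}) and then sums; you package the same variance computation as a bound on $\mathbb E_\Gm\bigl[\sum_{\mathcal A}|\mathcal A\cap S_w|^2\bigr]$ and pass to a count of bad cosets --- equivalent, once one observes that the latter bound minus the Cauchy--Schwarz lower bound $|S_w|^2/q^k$ controls $\mathbb E_\Gm\bigl[\sum_{\mathcal A}(|\mathcal A\cap S_w|-|S_w|/q^k)^2\bigr]$. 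One remark: the paper carries the constants further and actually obtains $\PPGM=1-o(1)$ (Proposition~\ref{Proposition:23}), whereas your version as written gives only $\PPGM\ge\tfrac14-o(1)$; replacing your cutoff $\tfrac12|S_w|/q^k$ by $(1-S(t_0)^{-1/3})|S_w|/q^k$, as the paper does, recovers the sharper statement.

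For the \emph{intractable} side your route is genuinely different and cleaner. The paper argues probabilistically: it uses Markov's inequality on each $a_{\sv,\C}(t)$ to show that for $\sv\neq\zerom$ one has $n_{\sv,\C}\le q^{-k/2-\varepsilon n}$ except with probability $q^{-\Omega(n)}$ over $\Gm$, then controls the (random) number of exceptional syndromes by a further averaging, obtaining $\mathbb E_\Gm[\PPGM]=o(1)$. Your typical-set argument bypasses all of this: the bound $\sum_{\mathcal A}\sqrt{\gamma_{\mathcal A}}\le \sqrt{|T|}+\sqrt{q^k}\cdot 2^{-\Omega(n)}$ with $|T|=q^{(h_q(p')+o(1))n}$ yields $\PPGM=2^{-\Omega(n)}$ for \emph{every} code of dimension $k$, not just on average over $\Gm$, which is strictly stronger and shorter. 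This buys you a worst-case intractability statement the paper does not claim; conversely, the paper's per-syndrome analysis of $n_{\sv,\C}$ is what it later reuses in Section~\ref{Section:PGMReduction} to analyse Regev's reduction with the PGM, so its extra machinery is not wasted.
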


The pretty good measurement associated to this distinguishing problem actually has a a lot of structure. It is actually a projective measurement on an orthonormal basis corresponding which can be seen as a Fourier basis involving the shifted dual codes of the code $\C$ we are working on.

\paragraph{Comparing the complexity of the decoding problem and the quantum decoding problem}

With this full characterization, we compare the hardness, and tractability of the classical and quantum decoding problems. For $p = 0$, we have of course a polynomial time algorithm to solve $\DP(q,n,\lfloor R n \rfloor,0)$. For $0 < p \le \drgv(R)$, the problem is tractable and the best known classical or quantum algorithms run in time $2^{\Omega(n)}$. For $p > \drgv(R)$, we know the problem is intractable. In the quantum setting, we obtain a very different picture. A comparison of these results is presented in Figures~\ref{fig:hardEasyDP} and~\ref{Fig:2} where we use the following terminology
\begin{itemize}\setlength\itemsep{-0.2em}
	\item Easy: there exists an algorithm that runs in time $\poly(n)$.
	\item Hard: the best known algorithm runs in time $2^{\Omega(n)}$, but there could potentially be more efficient algorithms.
	\item Intractable: we know that any (even unbounded) algorithm can solve the problem wp. at most $o(1)$. 
\end{itemize}

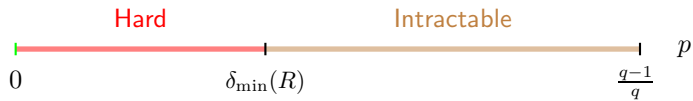
\begin{figure}[htb]
	\centering
	\caption{Hardness and tractability of the decoding problem $\DP(q,n,\lfloor Rn \rfloor,p)$, for any fixed $R \in [0,1]$, as a function of $p$. 
	\label{fig:hardEasyDP}
	}
	
	\begin{tikzpicture}[scale=0.83]
	\tikzstyle{valign}=[text height=1.5ex,text depth=.25ex]
		\draw (3,2.5) node[red]{{\sf Hard}};
	\draw (8,2.5) node[brown]{{\sf Intractable}};
			\draw[line width=2pt,red!50] (1,2) -- (5,2);
	\draw[line width=2pt,brown!50] (5,2) -- (11,2);
			\draw[-,>=latex,line width=2pt,gray] (11,2) -- (11,2)
	node[right,black] {$\displaystyle \quad p$};
	\tikzstyle{valign}=[text height=2ex]
	\draw[thick,green] (1,1.9) node[below,valign,black]{$0$} -- (1,2.1);
	\draw[thick] (5,1.9) node[below,valign]{\scalebox{0.9}{$\drgv(R)$}} -- (5,2.1);
	\draw[thick] (11,1.9) node[below,valign]{\scalebox{0.9}{$\frac{q-1}{q}$}~~} -- (11,2.1);
				\end{tikzpicture} \end{figure}

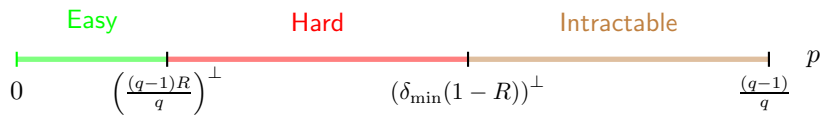
\begin{figure}[htb]
	\centering
	\caption{Hardness and tractability of the quantum decoding problem $\QDP(q,n,\lfloor Rn \rfloor,p)$, for any fixed $R \in [0,1]$, as a function of $p$.
	\label{Fig:2}}

\begin{tikzpicture}[scale=1]
	\tikzstyle{valign}=[text height=1.5ex,text depth=.25ex]
		\draw (2,2.5) node[green]{{\sf Easy}};
	\draw (5,2.5) node[red]{{\sf Hard}};
	\draw (9,2.5) node[brown]{{\sf Intractable}};
			\draw[line width=2pt,green!50] (1,2) -- (3,2);
	node[above,midway,green,valign]{{\sf Easy}} (3,2.5);
	\draw[line width=2pt,red!50] (3,2) -- (7,2);
	node[above,midway,red,valign]{{\sf Hard}} (7,2.5);
	\draw[line width=2pt,brown!50] (7,2) -- (11,2);
	node[above,midway,brown,valign]{{\sf Intractable}} (11,2.5);
		\draw[-,>=latex,line width=2pt,gray] (11,2) -- (11,2)
	node[right,black] {$\displaystyle \quad p$};
	\tikzstyle{valign}=[text height=2ex]
	\draw[thick,green] (1,1.9) node[below,valign,black]{$0$} -- (1,2.1);
	\draw[thick] (3,1.9) node[below,valign]{\scalebox{0.9}{$\perpF{\frac{(q-1)R}{q}}$}} -- (3,2.1);
	\draw[thick] (7,1.9) node[below,valign]{\scalebox{0.9}{$\perpF{\drgv(1-R)}$}} -- (7,2.1);
	\draw[thick] (11,1.9) node[below,valign]{\scalebox{0.9}{$\frac{(q-1)}{q}$}~~} -- (11,2.1);
				\end{tikzpicture}
\end{figure}

This gives a proper characterization of the difficulty of the Quantum Decoding Problem. In our next contribution, we will apply them in Regev's quantum reduction in order to derive some results for the short codeword problem. As we will show, the results from Figure~\ref{Fig:2} will match exactly our knowledge for the short codeword problem.
\subsubsection{Using our algorithms in Regev's reduction}

We are now interested in solving the short codeword problem using Regev's reduction and the algorithms we described in the previous section. The known hardness of the short codeword problem is summarized in the figure below

\begin{figure}[htb]
	\centering
	\caption{Hardness and tractability of the short codeword problem $\SCP(q,n,\lfloor Rn\rfloor, p)$ for a fixed $R \in (0,1)$, as a function of $p$.
	\label{fig:hardEasySC}
	}
		\begin{tikzpicture}[scale=0.83]
	\tikzstyle{valign}=[text height=1.5ex,text depth=.25ex]
		\draw (3,2.5) node[brown]{{\sf Intractable}};
	\draw (6.5,2.5) node[red]{{\sf Hard}};
	\draw (9.5,2.5) node[green]{{\sf Easy}};
			\draw[line width=2pt,brown!50] (1,2) -- (5,2);
		\draw[line width=2pt,red!50] (5,2) -- (8,2);
		\draw[line width=2pt,green!50] (8,2) -- (11,2);
			\draw[-,>=latex,line width=2pt,gray] (11,2) -- (11,2)
	node[right,black] {$\displaystyle \quad \omega$};
	\tikzstyle{valign}=[text height=2ex]
	\draw[thick,brown] (1,1.9) node[below,valign,black]{$0$} -- (1,2.1);
	\draw[thick] (5,1.9) node[below,valign]{\scalebox{0.9}{$\drgv(R)$}} -- (5,2.1);
	\draw[thick] (8,1.9) node[below,valign]{\scalebox{0.9}{$\frac{(q-1)(1-R)}{q}$}} -- (8,2.1);
	\draw[thick] (11,1.9) node[below,valign]{\scalebox{0.9}{$\frac{(q-1)}{q}$}~~} -- (11,2.1);
				\end{tikzpicture}
\end{figure}
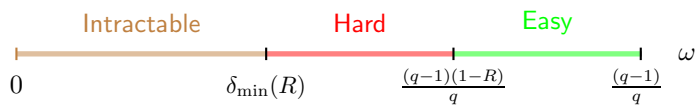

For our coding context, the only known reduction is the following
\begin{proposition}[\cite{DRT23}, informal]\label{Proposition:ReductionDRT}
	Fix integers $n,q \ge 2$ as well as parameters $R,p \in (0,1)$ st. $p \le \drgv(R)$. From any quantum algorithm that solves $\DP(q,n,\lceil (1-R)n \rceil,p)$ with high probability, there exists a quantum algorithm that solves $\SCP(q,n,\lfloor R n\rfloor,p^\perp)$ with high probability where recall that 
	$ p^\perp = \frac{\left(\sqrt{(1-p)(q-1)}-\sqrt{p}\right)^2}{q}.$
\end{proposition}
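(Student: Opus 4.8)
The plan is to run Regev's reduction (Steps~1--5 above) with the ``input code'' of the reduction taken to be the \emph{dual} of the code we must find a short vector in, and with the Bernoulli-$p$ noise $f_p$ as the error distribution. Let $\mathcal A$ be the assumed quantum decoder for $\DP(q,n,\lceil(1-R)n\rceil,p)$ and let $\Hm\in\F_q^{\lceil(1-R)n\rceil\times n}$ be the given $\SCP(q,n,\lfloor Rn\rfloor,p^\perp)$ instance. Since $\Hm$ is uniform it has full row rank with probability $1-2^{-\Omega(n)}$; conditioning on this, $\mathcal D\eqdef\{\uv\Hm:\uv\in\F_q^{\lceil(1-R)n\rceil}\}$ is a uniformly random code of dimension $\lceil(1-R)n\rceil$, its dual $\mathcal D^\perp=\CC$ is exactly the code of the $\SCP$ instance, and $\mathcal A$ decodes a Bernoulli-$p$ corrupted codeword of $\mathcal D$ with high probability (this is precisely the distribution on which $\DP(q,n,\lceil(1-R)n\rceil,p)$ is defined). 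I would run Steps~1--2 on $\mathcal D$ --- using $\Hm$ to prepare $\frac{1}{\sqrt{q^{\dim\mathcal D}}}\sum_{\dv\in\mathcal D}\ket{\dv}$, then an in-place addition of the first register into the noise register followed by a swap --- to obtain $\ket{\phi_2}=\frac{1}{\sqrt{q^{\dim\mathcal D}}}\sum_{\dv\in\mathcal D}\sum_{\ev}\sqrt{f_p(\ev)}\,\ket{\dv+\ev}\ket{\dv}$.

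For Step~3 I would use $\mathcal A$ \emph{coherently}: deferring its internal measurements turns it into a unitary that, on a fresh ancilla, writes a state supported mostly on the decoded codeword; running it to copy the codeword into an ancilla, subtracting that ancilla from the second register, and then running $\mathcal A$ backwards erases the second register. Averaging the per-instance success probability of $\mathcal A$ over the uniform $\dv$ and the noise $\ev$ that appear in superposition (and over the random $\Hm$) bounds the trace distance of the result from $\ket{\phi_3}=\frac{1}{\sqrt{q^{\dim\mathcal D}}}\sum_{\dv\in\mathcal D}\sum_{\ev}\sqrt{f_p(\ev)}\,\ket{\dv+\ev}\ket{\zerov}$ by (a constant times the square root of) the failure probability of $\mathcal A$, hence by $o(1)$; we discard the second register.

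The heart of the argument is Step~4. The first register of $\ket{\phi_3}$ carries the amplitude function $G(\zv)=\sum_{\dv\in\mathcal D}\sqrt{f_p}(\zv-\dv)$, the $\mathcal D$-periodization of the product function $\sqrt{f_p}(\ev)=\prod_{i=1}^n\sqrt{g_p(e_i)}$ with $g_p(0)=1-p$ and $g_p(\alpha)=\tfrac{p}{q-1}$ for $\alpha\neq0$. Since the Fourier transform of a function periodized over a subspace $V$ is supported on $V^\perp$ and agrees there, up to a global constant, with the Fourier transform of the function itself, $\QFTt^{\otimes n}(G)$ is supported on $\mathcal D^\perp=\CC$ and equals there the restriction of $\QFTt^{\otimes n}(\sqrt{f_p})$; as $\sqrt{f_p}$ factorizes it suffices to compute the one-coordinate transform. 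A direct computation gives that $\QFTt(\sqrt{g_p})$ has amplitude $\tfrac{1}{\sqrt q}\bigl(\sqrt{1-p}+\sqrt{p(q-1)}\bigr)$ on $0$ and amplitude $\tfrac{1}{\sqrt q}\bigl(\sqrt{1-p}-\sqrt{\tfrac{p}{q-1}}\bigr)$ on each nonzero field element, whose squares are exactly $1-p^\perp$ and $\tfrac{p^\perp}{q-1}$ --- that is, $\QFTt(\sqrt{g_p})=\sqrt{g_{p^\perp}}$ (for $p>\tfrac{q-1}{q}$ the nonzero amplitudes carry a sign, which is irrelevant to the weight statistics below). Hence after $\QFTt^{\otimes n}$ the state is, up to normalization, $\sum_{\cv\in\CC}\sqrt{f_{p^\perp}(\cv)}\,\ket{\cv}$, which is exactly the state of Step~4 with $\widehat{f}$ the Bernoulli-$p^\perp$ distribution.

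In Step~5 we measure in the computational basis and obtain $\cv\in\CC$ with probability proportional to $f_{p^\perp}(\cv)$ (up to the $o(1)$ error inherited from Step~3). A Chernoff bound for $\mathrm{Bern}(p^\perp)$ shows that a nonzero outcome has Hamming weight at most $(p^\perp+o(1))n$ with overwhelming probability, so --- up to a sublinear slack in $w$ that the informal statement suppresses --- it is a valid $\SCP$ solution whenever it is nonzero, and repeating the whole construction polynomially many times then succeeds with high probability. The one genuinely delicate point, which I expect to be the main obstacle, is bounding the probability that the measurement returns the useless all-zero codeword: this amounts to lower bounding $\sum_{\cv\in\CC}f_{p^\perp}(\cv)=\Pr_{\ev\sim\mathrm{Bern}(p^\perp)}[\ev\in\CC]$ against $f_{p^\perp}(\zerov)=(1-p^\perp)^n$, and more generally to controlling the weight enumerator of a random rate-$R$ code reweighted by $\mathrm{Bern}(p^\perp)$ so that the nonzero codewords carry the dominant mass and none of them is shorter than the target weight. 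This is exactly where the hypothesis $p\le\drgv(R)$ enters, constraining $p^\perp$ to a range where these coding-theoretic estimates can be pushed through; the paper's own remarks emphasize that for other noise distributions the state obtained at Step~3 has to be re-analyzed case by case. The remaining ingredients --- full-rankness of $\Hm$, the trace-distance bookkeeping for the imperfect coherent decoder, and the exactness of the Fourier identity --- are routine.
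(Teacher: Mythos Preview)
This proposition is quoted from \cite{DRT23} and the present paper gives no independent proof of it; what the paper does supply is a description of the reduction mechanism (Steps~1--5 in the introduction, restated in \S6.1). Your sketch follows that mechanism exactly: coherent use of the classical decoder to disentangle, the product-form Fourier computation turning Bernoulli-$p$ amplitudes into Bernoulli-$p^\perp$ amplitudes on the dual (this is Lemma~\ref{lemma:QFTpsi} combined with Proposition~\ref{proposition:periodic}), and a final measurement. You also correctly isolate the probability of measuring $\mathbf{0}$ as the delicate point requiring the hypothesis on $p$ and a weight-enumerator estimate for the random dual code --- which is precisely the obstruction the paper itself highlights in \S6.3. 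There is nothing to contrast: your route is the one the paper attributes to \cite{DRT23}.
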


How can we characterize the efficiency of this reduction?  Let us consider the best algorithms for $\DP(q,n,\lceil (1-R)n\rceil,p^\perp)$ and see what algorithms does it give for $\SCP(q,n,\lfloor Rn\rfloor,p)$. We obtain the following result, summarized in Figure~\ref{fig:dualDPSCP}. We can see that the obtained algorithm for the short decoding problem is significantly worse\footnote{One can check that we always have $\left(\delta{\min}(1-R)\right)^\perp \ge \delta_{\min}(R)$.} than best known algorithm for this problem (see Figure~\ref{fig:hardEasySC}). But in the light of our previous results, this is understandable, Regev's reduction actually requires to solve the quantum decoding problem and we just showed that it is much simpler than the decoding problem. If we could directly use the above proposition with our algorithms, we would obtain the following results, summarized in Figure~\ref{fig:DualQDP}.

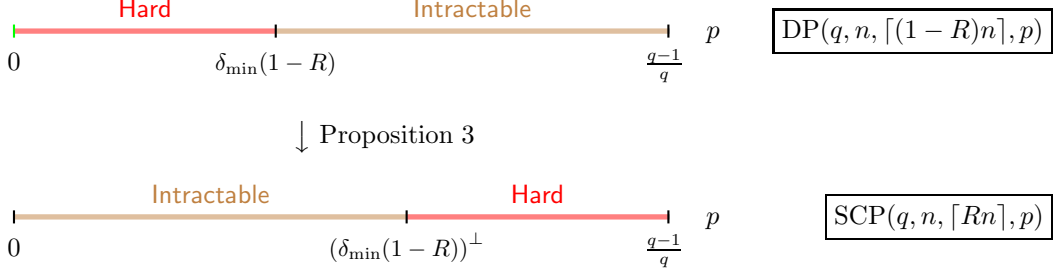
\begin{figure}[htb]
	\centering
	\caption{On the top, best known (classical or quantum) algorithms for $\DP(q,n,\lceil (1-R)n \rceil,p)$. On the bottom, complexity of a quantum algorithm for $\SCP(q,n,\lceil Rn \rceil,p)$ that uses the best algorithm for $\DP(q,n,\lceil (1-R) n\rceil,p)$ and then uses Proposition~\ref{Proposition:ReductionDRT} }
	\begin{tikzpicture}[scale=0.87]
	\tikzstyle{valign}=[text height=1.5ex,text depth=.25ex]
		\draw (3,2.35) node[red]{{\sf Hard}};
			\draw[line width=2pt,red!50] (1,2) -- (5,2);
	\draw[line width=2pt,brown!50] (5,2) --
	node[above,midway,brown,valign]{{\sf Intractable}} (11,2);
		\draw[-,>=latex,line width=2pt,gray] (11,2) -- (11,2)
	node[right,black] {$\displaystyle \quad p\qquad $\fbox{$\DP(q,n,\lceil (1-R) n\rceil,p)$}};
	\tikzstyle{valign}=[text height=2ex]
	\draw[thick,green] (1,1.9) node[below,valign,black]{$0$} -- (1,2.1);
	\draw[thick] (5,1.9) node[below,valign]{\scalebox{0.9}{$\drgv(1-R)$}} -- (5,2.1);
	\draw[thick] (11,1.9) node[below,valign]{\scalebox{0.9}{$\frac{q-1}{q}$}~~} -- (11,2.1);
				\end{tikzpicture}
	\begin{center}
	\hspace*{-4cm}	$\big\downarrow$ Proposition~\ref{Proposition:ReductionDRT}
	\end{center}
	\begin{tikzpicture}[scale=0.87]
	\tikzstyle{valign}=[text height=1.5ex,text depth=.25ex]
			\draw (4,2.35) node[brown]{{\sf Intractable}};
			\draw[line width=2pt,brown!50] (1,2) -- (7,2);
	node[above,midway,red,valign]{{\sf Hard}} (7,2);
	\draw[line width=2pt,red!50] (7,2) --
	node[above,midway,red,valign]{{\sf Hard}} (11,2);
		\draw[-,>=latex,line width=2pt,gray] (11,2) -- (11,2)
	node[right,black] {$ \quad p \qquad \qquad$\fbox{$\SCP(q,n,\lceil Rn \rceil,p)$}};
	\tikzstyle{valign}=[text height=2ex]
	\draw[thick] (1,1.9) node[below,valign,black]{$0$} -- (1,2.1);
			\draw[thick] (7,1.9) node[below,valign]{\scalebox{0.9}{$\perpF{\drgv(1-R)}$}} -- (7,2.1);
	\draw[thick] (11,1.9) node[below,valign]{\scalebox{0.9}{$\frac{q-1}{q}$}~~} -- (11,2.1);
				\end{tikzpicture}

		\label{fig:dualDPSCP}
\end{figure}

$ \ $ 

\begin{figure}[htb]
	\centering
	\caption{On the top, our quantum algorithms for $\QDP(q,n,\lceil (1-R) n\rceil,p)$. On the bottom, complexity of a quantum algorithm for $\SCP(q,n,\lceil Rn \rceil,p)$ that would use our algorithms $\QDP(q,n,\lceil (1-R) n \rceil,p)$ and then Proposition~\ref{Proposition:ReductionDRT} when applicable}
	\begin{tikzpicture}[scale=0.87]
	\tikzstyle{valign}=[text height=1.5ex,text depth=.25ex]
		\draw (2,2.35) node[green]{{\sf Easy}};
	\draw (5,2.35) node[red]{{\sf Hard}};
			\draw[line width=2pt,green!50] (1,2) -- (3,2);
	node[above,midway,green,valign]{{\sf Easy}} (3,2);
	\draw[line width=2pt,red!50] (3,2) -- (7,2);
	node[above,midway,red,valign]{{\sf Hard}} (7,2);
	\draw[line width=2pt,brown!50] (7,2) --
	node[above,midway,brown,valign]{{\sf Intractable}} (11,2);
		\draw[-,>=latex,line width=2pt,gray] (11,2) -- (11,2)
	node[right,black] {$\displaystyle \quad p\qquad $\fbox{$\QDP(q,n,\lceil (1-R)n \rceil,p)$}};
	\tikzstyle{valign}=[text height=2ex]
	\draw[thick,green] (1,1.9) node[below,valign,black]{$0$} -- (1,2.1);
	\draw[thick] (3,1.9) node[below,valign]{\scalebox{0.9}{$\perpF{\frac{(q-1)(1-R)}{q}}$}} -- (3,2.1);
	\draw[thick] (7,1.9) node[below,valign]{\scalebox{0.9}{$\perpF{\drgv(R)}$}} -- (7,2.1);
	\draw[thick] (11,1.9) node[below,valign]{\scalebox{0.9}{$\frac{(q-1)}{q}$}~~} -- (11,2.1);
				\end{tikzpicture}
	\begin{center}
	\hspace*{-2cm}	$\big\downarrow$ Proposition~\ref{Proposition:ReductionDRT} if applicable \\
	\end{center}
	\begin{tikzpicture}[scale=0.87]
	\tikzstyle{valign}=[text height=1.5ex,text depth=.25ex]
		\draw (3,2.35) node[brown]{{\sf Intractable}};
	\draw (6.5,2.35) node[red]{{\sf Hard}};
			\draw[line width=2pt,brown!50] (1,2) -- (5,2);
		\draw[line width=2pt,red!50] (5,2) -- (8,2);
		\draw[line width=2pt,green!50] (8,2) --
	node[above,midway,green,valign]{{\sf Easy}} (11,2);
		\draw[-,>=latex,line width=2pt,gray] (11,2) -- (11,2)
	node[right,black] {$ \quad p \qquad \qquad$\fbox{$\SCP(q,n,\lceil Rn \rceil,p)$}};
	\tikzstyle{valign}=[text height=2ex]
	\draw[thick,brown] (1,1.9) node[below,valign,black]{$0$} -- (1,2.1);
	\draw[thick] (5,1.9) node[below,valign]{\scalebox{0.9}{$\drgv(R)$}} -- (5,2.1);
	\draw[thick] (8,1.9) node[below,valign]{\scalebox{0.9}{$\frac{(q-1)(1-R)}{q}$}} -- (8,2.1);
	\draw[thick] (11,1.9) node[below,valign]{\scalebox{0.9}{$\frac{(q-1)}{q}$}~~} -- (11,2.1);
				\end{tikzpicture}

		\label{fig:DualQDP}
\end{figure}
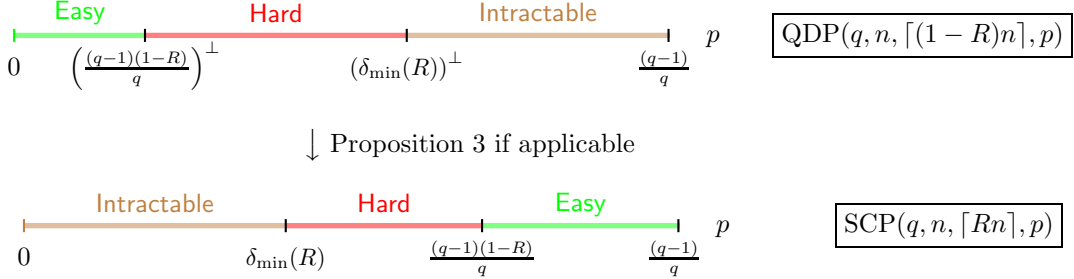

Here, if we could apply Proposition~\ref{Proposition:ReductionDRT} with our algorithms, we would recover exactly the same complexities as the best known algorithms for $\SCP$\footnote{We say we recover the same complexities only in the sense that we recover the areas which are easy,hard,intractable. We're not claiming that within these areas, the running times are exactly the same.}. However, it's not clear whether this is the case. What we do is that for each of our algorithms, we try to perform Regev's reduction and see what we obtain. We show the following:
\begin{itemize}
	\item If we take our polynomial time algorithms involving unambiguous state discrimination for the quantum decoding problem in Regev's reduction,  we can find in quantum polynomial time small codewords down to Prange's bound, {\ie} down to $\frac{(1-R)(q-1)}{q}$ (the Easy zone in Figure~\ref{fig:hardEasySC}). The bound $\perpF{\frac{(1-q)(1-R)}{q}}$ comes from bounds on quantum unambiguous state discrimination and it is quite remarkable that after the quantum reduction, it corresponds exactly to Prange's bound where the short codeword problem is easy.
	\item If we use our algorithm involving the Pretty Good Measurement in Regev's reduction, the following happens: 
		 \begin{enumerate}
		\item If we apply the PGM directly, we will most often be in regimes where we measure $\mathbf{0}$ in the final step so we will not be able to solve the Short Codeword problem. 
		\item We can slightly tweak the PGM so that we can solve the corresponding short codeword problem for all the regimes where it is tractable (the Hard zone in Figure~\ref{fig:hardEasySC}).
		\item We also show another example where we can slightly tweak the PGM but where the reduction utterly fails, meaning that the state we obtain after Step $4$ is $\ket{\bot}$, so measuring will give absolutely no information about a small dual codeword. This shows that there is no hope to perform a generic reduction ({\ie } a generalization of Proposition~\ref{Proposition:ReductionDRT}) between the quantum decoding problem and the short codeword problem with this method. 
	\end{enumerate}
\end{itemize}

These results show that, while it is impossible to have a generic reduction from $\QDP$ to $\SCP$ with this method, it is - at least for our examples - possible to find algorithms for $\QDP$ that will give results according to Figure~\ref{fig:DualQDP}, and recover the areas where the problem is easy and where it is tractable. This can be seen as quite a surprise since our bounds on $\QDP$ essentially come from information theory and best known bounds on $\SCP$ comes from classical coding theory and seem unrelated at first.

 \newcommand{\Mm}{\mathbf{M}}

\section{Preliminaries}
\subsection{Notations and basic probabilities}\label{ss:notation}
\paragraph{Sets, finite field.}
The finite field with $q$ elements is denoted by $\Fq$. 
$\Zq$ denotes the ring of integers modulo $q$.
The cardinality of a finite set $\Ec$ is denoted by $|\Ec|$.
The set of integers $\{a,a+1,\cdots,b\}$ between the integers $a$ and $b$ is denoted by $\Iint{a}{b}$. For a positive integer $n$, $[n]$ denotes 
$\Iint{1}{n}$. $x \Unif S$ means that $x$ is sampled uniformly from the set $S$.
\paragraph{Vector and matrices.}
For a Hermitian matrix $\Mm$ we write that $\Mm \succeq 0$ when $\Mm$ is positive semi-definite. Vectors are row vectors as is standard in the coding community and $\trsp{\xv}$ denotes the transpose of a vector or a matrix. In particular, vectors will always be denoted by bold small letters and matrices with bold capital letters. For a subset $J \subseteq [n]$ of positions of the vector $\xv = x_1,\dots,x_n$,  $\xv_J=(x_j)_{j \in J}$ denotes the vector formed by the entries indexed by $J$. For a matrix $\Gm \in \F_q^{k \times n}$ and a subset of columns $J \subseteq [n]$, $\Gm_J \in \F_q^{k \times |J|}$ denotes the submatrix formed by its columns indexed by $J$. 

\begin{lemma}[Hoeffding's inequality]\label{lem:Hoeffding}
	Let $X_1,\dots,X_n$ be independent random Bernoulli variables with parameter $p$. We have 
	$$ \Pr[\sum_{i = 1}^n X_i \le pn - \alpha \sqrt{n}] \le 2^{-2\alpha^2}.$$
\end{lemma}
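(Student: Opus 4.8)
The plan is to use the standard exponential-moment (Chernoff) method combined with Hoeffding's lemma bounding the moment generating function of a bounded random variable. We may assume $\alpha > 0$, the case $\alpha = 0$ being trivial since the right-hand side is then $1$. Write $S \eqdef \sum_{i=1}^n X_i$, so $\mathbb{E}[S] = pn$. For any $t > 0$, applying Markov's inequality to the nonnegative random variable $e^{-tS}$ gives
\[
\Pr\!\left[S \le pn - \alpha\sqrt n\right] = \Pr\!\left[e^{-tS} \ge e^{-t(pn - \alpha\sqrt n)}\right] \le e^{t(pn-\alpha\sqrt n)}\,\mathbb{E}\!\left[e^{-tS}\right],
\]
and by independence of the $X_i$ we have $\mathbb{E}[e^{-tS}] = \prod_{i=1}^n \mathbb{E}[e^{-tX_i}]$, so it suffices to control a single factor.

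For a single factor I would invoke Hoeffding's lemma: if $Y$ satisfies $\mathbb{E}[Y]=0$ and $Y \in [a,b]$ almost surely, then $\mathbb{E}[e^{sY}] \le e^{s^2(b-a)^2/8}$ for all $s$. If one wants this self-contained, it follows by setting $\psi(s) \eqdef \ln\mathbb{E}[e^{sY}]$, noting $\psi(0)=\psi'(0)=0$ and $\psi''(s) \le (b-a)^2/4$ — the latter because $\psi''(s)$ is the variance of $Y$ under the exponentially tilted law, which is supported in $[a,b]$ — and then Taylor-expanding to second order. Apply this with $Y_i \eqdef X_i - p \in [-p,\,1-p]$, so $b-a = 1$, and with exponent $s = -t$: we get $\mathbb{E}[e^{-t(X_i - p)}] \le e^{t^2/8}$, i.e. $\mathbb{E}[e^{-tX_i}] \le e^{-tp}e^{t^2/8}$. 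Multiplying the $n$ factors yields $\mathbb{E}[e^{-tS}] \le e^{-tpn + nt^2/8}$.

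Plugging this back in, the $e^{tpn}$ factors cancel and we obtain, for every $t>0$,
\[
\Pr\!\left[S \le pn - \alpha\sqrt n\right] \le \exp\!\left(-t\alpha\sqrt n + \tfrac{nt^2}{8}\right).
\]
The exponent is minimized at $t = 4\alpha/\sqrt n > 0$, where it equals $-4\alpha^2 + 2\alpha^2 = -2\alpha^2$, giving the bound $e^{-2\alpha^2}$; since $\ln 2 < 1$ we have $e^{-2\alpha^2} \le 2^{-2\alpha^2}$, which is the claim. There is no genuine obstacle here — this is a classical fact — the only points needing care are getting the constant $\tfrac18$ in Hoeffding's lemma right (equivalently the variance bound $(b-a)^2/4$), since that is exactly what produces the clean exponent $2\alpha^2$, and then observing that weakening the base from $e$ to $2$ is harmless.
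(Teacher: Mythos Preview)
Your proof is correct and is the standard Chernoff--Hoeffding argument. The paper itself does not prove this lemma: it is stated as a classical preliminary without proof, so there is nothing to compare against beyond noting that your derivation is exactly the textbook one.
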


\subsection{Random linear codes}
\subsubsection{Basic properties}
For a vector $\xv = x_1,\dots,x_n \in \F_q^n$, we define the Hamming weight $|\xv| = |\{i : x_i \neq 0\}|$.  For $q,n,w \in \mathbb{N^*}$ with $q \ge 2$, we define the (Hamming) sphere of radius $w$ as $S_w^{q,n} = \{\xv \in \F_q^n : |\xv| = w\}$. A code $\C$ can be specified by a generating matrix $\Gm \in \F_q^{k \times n}$, in which case  $\C = \{\uv \Gm: \uv \in \F_q^k\}$ or via a parity-check matrix $\Hm \in \F_q^{n \times (n-k)}$, in which case $\C = \{\yv : \Hm \trsp{\yv} = \mathbf{0}\}$.

\anote{Si ces quantités suivantes ne sont pas utilisées ici, peut-être les mettre directement dans la section correspondante aux propriétés avancées sur les codes aléatoires.}
\begin{definition}[$q$-ary entropy]
	We define the $q$-ary entropy $h_q : [0,1] \rightarrow [0,1]$ s.t. $h(x) = -x\log\Rbra{\frac{x}{q-1}} - (1-x)\log(1-x)$ if $x \in (0,1)$ and $h_q(0) = h_q(1) = 0$. 
\end{definition}
$h_q$ is increasing for $x \in [0,\frac{q-1}{q}]$ and deceasing for $x \in [\frac{q-1}{q},1]$. Moreover, $h_q(\frac{q-1}{q}) = 1$.
\begin{definition}[Inverse $q$-ary entropy]
	$h_q$ is a bijection from $\Sbra{0,\frac{q-1}{q}}$ to $[0,1]$ and we define  $h_q^{-1} : [0,1] \rightarrow \Sbra{0,\frac{q-1}{q}}$ s.t. $h_q^{-1}(h_q(x)) = x$ for $x \in \Sbra{0,\frac{q-1}{q}}$.
\end{definition}

\begin{definition}[Relative Gilbert-Varshamov distance]
	The (relative) Gilbert-Varshamov distance for $q$-ary codes $\drgv(R,q)$ corresponding to the rate $R$ is defined as
	$\drgv(R,q) = h_q^{-1}(1-R)$.
\end{definition}

\begin{definition}[Relative maximum weight]
	The (relative) maximum weight for $q$-ary code $\drmax(R,q)$ is defined as the unique solution $x$ in $[\frac{q-1}{q},1]$ of
	$h_q(x)=R$ if it exists. If such an $x$ does not exist, we just write $\drmax(R,q) = \bot$.
\end{definition}

$\drgv(R,q)$ corresponds to the {\em} typical asymptotic relative minimum distance of a random linear code over $\Fq$ of rate $R$, whereas
the second quantity (when it is not $\bot$) is equal  to the {\em} typical asymptotic relative maximum distance.
Generally $q$ will be clear from the context and we will drop the dependency in $q$ and simply write 
$\drgv(R)$ and $\drmax(R)$.

\begin{definition}[Inverse of a full rank matrix]
	Let $k < n$ and $\Gm \in \F_q^{k \times n}$ be a matrix of full rank $k$. We define the pseudo-inverse $\Gm^{-1} \in \F_q^{n \times k}$ as a matrix satisfying
	$ \forall \uv \in \F_q^k, \ (\uv \Gm)\cdot \Gm^{-1} = \uv.$ \jp{Cette matrice n'est pas unique, j'ai changé "inverse" en "pseudo-inverse" et "the" par "a".}
\end{definition}

\begin{proposition}\label{Proposition:RandomRank}
	Let $m \ge k$ and let $G \Unif \Fq^{k \times m}$. We have 
	$$ \Pr[\rank(G) = k] \ge 1 - q^{k-m}.$$
\end{proposition}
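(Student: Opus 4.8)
The plan is to prove Proposition~\ref{Proposition:RandomRank} by directly estimating the probability that a uniformly random matrix $G \in \F_q^{k \times m}$ with $m \ge k$ has rank $k$, i.e.\ that its $k$ rows are linearly independent. I would build the rows one at a time and lower-bound the probability that each new row lies outside the span of the previous ones.

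First I would fix an ordering of the rows $\gv_1,\dots,\gv_k$ of $G$, each sampled independently and uniformly from $\F_q^m$. Conditioned on $\gv_1,\dots,\gv_{i-1}$ being linearly independent, they span a subspace of $\F_q^m$ of dimension exactly $i-1$, which contains $q^{i-1}$ vectors. The next row $\gv_i$ is uniform over the $q^m$ vectors of $\F_q^m$, so the probability that $\gv_i$ avoids this span is $1 - q^{i-1-m}$. Multiplying over $i = 1,\dots,k$ gives
\begin{equation*}
\Pr[\rank(G) = k] = \prod_{i=1}^{k} \left(1 - q^{i-1-m}\right) \ge 1 - \sum_{i=1}^{k} q^{i-1-m},
\end{equation*}
where the inequality is the standard union-bound-style estimate $\prod_j (1-x_j) \ge 1 - \sum_j x_j$ for $x_j \in [0,1]$. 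The remaining sum is a geometric series: $\sum_{i=1}^{k} q^{i-1-m} = q^{-m}\frac{q^k - 1}{q-1} \le q^{-m}\cdot q^{k-1}\cdot\frac{q}{q-1}\cdot\frac{q-1}{q}$; more simply, since $q \ge 2$ we have $\sum_{i=1}^{k} q^{i-1-m} \le q^{k-1-m}\sum_{j\ge 0} q^{-j} = q^{k-1-m}\cdot\frac{q}{q-1} \le q^{k-1-m}\cdot 2 \le q^{k-m}$ when $q \ge 2$. This yields $\Pr[\rank(G)=k] \ge 1 - q^{k-m}$ as claimed.

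There is no real obstacle here; the only point requiring a little care is the arithmetic that bounds the geometric sum $\sum_{i=1}^{k} q^{i-1-m}$ by $q^{k-m}$, which relies on $q \ge 2$ so that $\frac{q}{q-1}\le 2$ and on $\frac{q-1}{q}\le 1$; one checks $\sum_{i=1}^k q^{i-1-m} = \frac{q^{k-m}-q^{-m}}{q-1}\cdot\frac{q}{q}$… concretely $\frac{q^{k-m}(1-q^{-k})}{1-q^{-1}} \le \frac{q^{k-m}}{1-q^{-1}} \le 2q^{k-m}$, which is slightly weaker than the stated bound, so I would instead keep the sharper form $\sum_{i=1}^{k}q^{i-1-m} = q^{-m}\,\frac{q^k-1}{q-1} \le q^{-m}\,\frac{q^k}{q-1} \le q^{k-m}$ (the last step using $q-1 \ge 1$). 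Everything else is the elementary row-by-row independence argument, and the bound is tight up to the constant for $k$ close to $m$.
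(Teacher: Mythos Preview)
Your proof is correct: the row-by-row counting gives the exact formula $\Pr[\rank(G)=k]=\prod_{i=1}^{k}(1-q^{i-1-m})$, the union-bound inequality $\prod(1-x_j)\ge 1-\sum x_j$ is valid, and the final clean bound you settle on,
\[
\sum_{i=1}^{k} q^{i-1-m} \;=\; q^{-m}\,\frac{q^{k}-1}{q-1} \;\le\; q^{-m}\,\frac{q^{k}}{q-1} \;\le\; q^{k-m}\quad(\text{since }q-1\ge 1),
\]
is correct and gives the stated result. The paper itself states this proposition without proof, treating it as a standard fact, so there is nothing to compare against; your argument is exactly the standard one. The only suggestion is stylistic: the second paragraph of your write-up meanders through several false starts before landing on the one-line bound above, and the first attempted inequality $\frac{q^k-1}{q-1}\le q^{k-1}\cdot\frac{q}{q-1}\cdot\frac{q-1}{q}=q^{k-1}$ is actually wrong (e.g.\ $q=2,k=2$ gives $3\not\le 2$). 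In a final version, keep only the clean geometric-series bound and drop the rest.
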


\begin{proposition}\label{Proposition:Recovery}
	Let $\cv = \sv \Gm $ for some $\sv \in \F_q^k$ and $G \in \F_q^{k \times n}$. Let $J \subseteq [n]$ s.t. $G_J$ is of rank $k$. Then we have $\cv = \cv_J \Gm_J^{-1}  \Gm$.
\end{proposition}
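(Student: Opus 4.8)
The statement is essentially a one-line consequence of the defining property of the pseudo-inverse, so the plan is to unwind the definitions in the right order. First I would observe that restricting the codeword to the coordinate set $J$ commutes with the matrix-vector product: since $\cv = \sv \Gm$, we have $\cv_J = (\sv \Gm)_J = \sv \Gm_J$, because the $j$-th entry of $\sv \Gm$ depends only on the $j$-th column of $\Gm$. This is the only ``coordinate bookkeeping'' step and it is routine.

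Next I would invoke the hypothesis that $\Gm_J \in \F_q^{k \times |J|}$ has rank $k$ (note this forces $|J| \ge k$, so the pseudo-inverse $\Gm_J^{-1} \in \F_q^{|J| \times k}$ is well-defined by the Definition of the inverse of a full-rank matrix, applied with $|J|$ in place of $n$). Its defining property is that $(\uv \Gm_J)\cdot \Gm_J^{-1} = \uv$ for every $\uv \in \F_q^k$. Applying this with $\uv = \sv$ and using the previous step gives $\cv_J \Gm_J^{-1} = (\sv \Gm_J)\Gm_J^{-1} = \sv$.

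Finally I would multiply both sides on the right by $\Gm$ to conclude $\cv_J \Gm_J^{-1} \Gm = \sv \Gm = \cv$, which is exactly the claim. There is no real obstacle here; the only thing to be slightly careful about is that the pseudo-inverse is not unique, so the identity must be proved for an arbitrary matrix $\Gm_J^{-1}$ satisfying the defining relation (which the argument above does, since it only uses that relation), and that the rank condition indeed guarantees existence of at least one such matrix (this follows from $\Gm_J$ having a right inverse when it has full row rank).
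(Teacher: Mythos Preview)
Your proof is correct and follows exactly the same three-step route as the paper's own proof: observe $\cv_J = \sv \Gm_J$, apply the defining relation of $\Gm_J^{-1}$ to recover $\sv$, then right-multiply by $\Gm$. Your added remarks about $|J|\ge k$ and the non-uniqueness of the pseudo-inverse are sound but not needed beyond what the paper already assumes.
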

\begin{proof}
	Notice that $\cv_J = \sv \Gm_J$. If $\Gm_J$ is of full rank $k$ then $\Gm_J^{-1}$ is well defined and $\cv_J \Gm_J^{-1} = \sv \Gm_J \Gm_J^{-1} = \sv$. From there, we conclude $\cv_J \Gm_J^{-1}  \Gm = \sv \Gm = \cv$.
\end{proof}

\subsubsection{Classical and quantum decoding problems}
Before defining our coding problem, we define the Bernoulli error distributions that we will use. 
\begin{definition}
	For $q\in \mathbb{N^*}$, with $q \ge 2$ and $\omega \in [0,1]$, we define the Bernoulli probability function $b_q : F_q \rightarrow \mathbb{R}$ satisfying $b_q(0) = 1-w$ and $b_q(i) = \frac{w}{q-1}$ for $i \neq 0$.
\end{definition}

\begin{definition}
	For $q\in \mathbb{N^*}$, with $q \ge 2$ and $\omega \in [0,1]$ we define the distribution $\B(q,\omega)$ sampled as follows: $\textrm{pick } x \textrm{ w.p. } b_q(x)$, return $x$.
\end{definition}

We now define the Bernoulli distribution on vectors on $F_q^n$ where each coordinate is taken according to $\B(q,\omega)$.

\begin{definition}
	For $q,n \in \mathbb{N^*}$, $\omega \in [0,1]$, with $q \ge 2$ we define the distribution $\B(q,n,\omega)$ sampled as follows: 
	for $i \in \{1,\dots,n\}$, $x_i \leftarrow \B(q,\omega)$, return $\xv = x_1,\dots,x_n$. Notice that sampling from $\B(q,n,\omega)$ is equivalent to the following sampling procedure: pick $\xv$ w.p. $\left( \frac{\omega}{q-1} \right)^{|\xv|}(1-\omega)^{n - |\xv|}$, return $\xv$.
\end{definition}

What we are interested here is the decoding problem as it arises in cryptography, but we will describe it here by using the langage of information theory. We have a message $\mv \in F_q^k$ which is encoded via a generating matrix $\Gm \in F_q^{k \times n}$. The encoded message $\mv\Gm$ is sent through a channel and an error $\ev$ occurs. The receiver gets the message $\mv\Gm + \ev$ and his goal is to recover $\mv$. Notice that the receiver also knows the generating matrix $\Gm$ so his goal is, given $\Gm$ and $\yv = \mv\Gm + \ev$, to recover $\mv$. 

The way we model the error is that $\ev$ is sampled from the Bernoulli distribution $\B(q,n,\omega)$ for some chosen $\omega$. Note that there are other choices for the error model that can be of interest that we discuss in the next section.  We first define the distribution of input/solution to our decoding problem.

\begin{definition}
	For $q,n,k \in \mathbb{N^*}$, with $q \ge 2$, for $\omega \in [0,1]$, we define the distribution \fbox{$\D(q,n,k,\omega)$} sampled as follows: $\Gm \Unif \zo^{k \times n}, \ \mv \Unif \F_q^k, \ \cv = \mv \Gm, \ \ev \leftarrow \B(q,n,\omega), \ \yv = \cv+ \ev$, return $(\Gm,\yv,\cv)$.
\end{definition}

We can now define our classical decoding problem

\begin{definition}
	For $q,n,k \in \mathbb{N^*}$, with $q \ge 2$, for $\omega \in [0,1]$, the decoding problem \fbox{$\DP(q,n,k,\omega)$} is the following. We sample $(\Gm,\yv,\cv) \leftarrow \D(q,n,k,\omega)$ and the goal is, given only $(\Gm,\yv)$, to recover $\cv$.
\end{definition}

Another problem of interest is finding short codewords. 
\begin{definition}For $q,n,k\in \N$ with $q \ge 2$ and $\omega \in (0,1)$, the short codeword problem \fbox{$\SCP(q,n,k,\omega)$} is the following. We sample $\Hm \Unif \F_q^{n \times (n-k)}$ and the goal is, given $\Hm$, to find $\cv \in \F_q^n \backslash {\{\mathbf{0}\}}$ st. $\Hm \trsp{\cv} = \mathbf{0}$ and $|\cv| \le \omega n$.
\end{definition} 

We now consider the quantum decoding problem. Now, instead of choosing a random error $\ev$ from $\B(q,n,\omega)$ and constructing $\yv = \cv + \ev$, we construct a quantum state that is a superposition of all these noisy codewords. This motivates the following definition for the input/solution distribution.

\begin{definition}
	For $q,n,k \in \mathbb{N^*}$, with $q \ge 2$ and $\omega \in [0,1]$, we define the distribution \fbox{$\DQ(q,n,k,\omega)$} sampled as follows: $  \Gm \Unif \zo^{k \times n}, \ \mv \Unif \F_q^k, \cv = \mv \Gm, \ \ket{\psi_{\cv}} =  \sum_{\ev \in F_q^n} \sqrt{\omega^{|\ev|}(1-\omega)^{n - |\ev|}} \ket{\cv + \ev}$, return $(\Gm,\ket{\psi_{\cv}},\cv)$.
\end{definition}

From there, we define our quantum decoding problem.

\begin{definition}
	For $q,n,k \in \mathbb{N^*}$, with $q \ge 2$, for $\omega \in [0,1]$, the decoding problem \fbox{$\DPQ(q,n,k,\omega)$} is the following. We sample $(\Gm,\ket{\psi_{\cv}},\cv) \leftarrow \DQ(q,n,k,\omega)$ and the goal is, given only $(\Gm,\ket{\psi_{\cv}})$, to recover $\cv$.
\end{definition}

The above definition can be generalized to any probability function $P : \F_q^n \rightarrow \mathbb{R}$ by considering the state $\ket{\psi_{\cv}} = \sum_{\ev \in F_q^n} \sqrt{P(\ev)}\ket{\cv + \ev}$. Moreover, and this is specific to the quantum setting, this can be generalized to any function $f : \F_q^n \rightarrow \mathbb{C}$ with $\norm{f}_2 = 1$ by considering the state $\ket{\psi_{\cv}} = \sum_{\ev \in F_q^n} f(\ev)\ket{\cv + \ev}$. This is what motivates the following definitions. 

\begin{definition}
	For $q,n,k \in \mathbb{N^*}$, with $q \ge 2$, for $f : \F_q^n \rightarrow \mathbb{C}$ with $\norm{f}_2 = 1$, we define the distribution \fbox{$\DQ(q,n,k,f)$} sampled as follows: $  \Gm \Unif \zo^{k \times n}, \ \mv \Unif \F_q^k, \cv = \mv \Gm,  \ \ket{\psi_{\cv}} =  \sum_{\ev \in F_q^n} f(\ev)\ket{\cv + \ev}$, return $(\Gm,\ket{\psi_{\cv}},\cv)$.
\end{definition}

\begin{definition}
	For $q,n,k \in \mathbb{N^*}$, with $q \ge 2$, for $f : \F_q^n \rightarrow \mathbb{C}$ with $\norm{f}_2 = 1$, the decoding problem \fbox{$\DPQ(q,n,k,f)$} is the following. We sample $(\Gm,\ket{\psi_{\cv}},\cv) \leftarrow \DQ(q,n,k,f)$ and the goal is, given only $(\Gm,\ket{\psi_{\cv}})$, to recover $\cv$.
\end{definition}

\COMMENT{\paragraph{Discussion on the above definitions.}
	There are a few things we want to say about these definitions. First, there are slight variations on the decoding problem. The most common is to take $\Gm \Unif \zo^{k \times n}, \ \mv \Unif \F_q^k, \ \ev \leftarrow \B(q,n,\omega), \ \yv = \mv \Gm + \ev$ and, given $(\Gm, \yv)$ to find a codeword $\cv$ such that $|\yv - \cv| \le \omega n$. The main difference is can find any codewords $\cv$ that satisfies $|\yv - \cv| \le \omega n$ and not necessarily the one that was used to construct $\yv$. This is a more standard definition in the context of code-based cryptography. The reason why we defined our decoding problem as in Definition~\ref{Definition:DP} is that this is the problem that we actually need to solve in Regev's reduction. From $\cv + \ev$ we need to recover exactly $\cv$ and not another codeword $\cv'$ close to $\cv + \ev$ otherwise the register erasure between the states $\ket{\phi_1}$ and $\ket{\phi_2}$ described in Section~\ref{Section;IntroRegev} cannot be done. 
	
	A second remark concerns the choice of noise function. The Bernoulli is the most common choice of noise model 
	This is actually mostly a technicality but it allowed the authors of~\cite{DRT23} to perform a clean reduction between the decoding problem and the small codeword problem 
	The reason why we go back to the Bernoulli noise.}

\subsubsection{Punctured codes and Prange's algorithm}
We will use in what follows the notion of punctured and shortened code. 
\begin{definition}[Punctured and shortened code]
	Let $\C \subseteq \Fqn$ be a linear code over $\Fq$ of length $n$. Let $J \subseteq [n]$ be a subset of code positions. The punctured code $\C_J$ with respect to  $J$ is defined as 
	$\C_J = \{\cv_J : \cv \in \C\}.$ The shortened code $C^J$ with respect to $J$ is defined as 
	$\C^J = \{\cv_J : \cv \in \C,\;\cv_{[n]\setminus J}=\zerom\}$ (\ie \ the set of codewords of $\C$ where we keep only the positions in $J$ and which are zero outside $J$).
\end{definition}
It is readily seen that these two operations commute when taking the dual
\begin{lemma}\label{lem:puncture_shorten} For any linear code $\C$ and any subset $J$ of positions of this code
	\begin{eqnarray*}
		\left( \C_J\right)^\perp & = & \left( \C^\perp\right)^J\\
		\left( \C^J\right)^\perp & = & \left( \C^\perp\right)_J.
	\end{eqnarray*}
\end{lemma}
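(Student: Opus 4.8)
The plan is to prove Lemma~\ref{lem:puncture_shorten} by a direct double-inclusion argument on each of the two identities, using the basic fact that puncturing and shortening are ``restriction'' operations on coordinate sets and that the dual is defined by the standard inner product on $\Fqn$. Since the second identity follows from the first by replacing $\C$ with $\C^\perp$ and using $(\C^\perp)^\perp = \C$, it actually suffices to establish one of them, say $\left(\C_J\right)^\perp = \left(\C^\perp\right)^J$; I will say a word about each, though.

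First I would fix notation: write $J \subseteq [n]$, $m = |J|$, and identify $\Fq^J$ with $\Fq^m$ so that both sides of $\left(\C_J\right)^\perp = \left(\C^\perp\right)^J$ live in the same space. For the inclusion $\left(\C^\perp\right)^J \subseteq \left(\C_J\right)^\perp$: take $\dv_J$ with $\dv \in \C^\perp$ and $\dv_{[n]\setminus J} = \zerom$. For any $\cv \in \C$ we have $\dv_J \cdot \cv_J = \sum_{j \in J} d_j c_j = \sum_{i \in [n]} d_i c_i = \dv\cdot\cv = 0$, where the middle equality uses that $d_i = 0$ off $J$; hence $\dv_J \in \left(\C_J\right)^\perp$. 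For the reverse inclusion, take $\word{x} \in \Fq^J$ with $\word{x}\cdot \cv_J = 0$ for all $\cv \in \C$. Extend $\word{x}$ to $\widetilde{\word{x}} \in \Fqn$ by setting $\widetilde{\word{x}}_J = \word{x}$ and $\widetilde{\word{x}}_{[n]\setminus J} = \zerom$. Then for every $\cv \in \C$, $\widetilde{\word{x}} \cdot \cv = \sum_{j\in J} x_j c_j = \word{x}\cdot\cv_J = 0$, so $\widetilde{\word{x}} \in \C^\perp$, and since $\widetilde{\word{x}}$ vanishes outside $J$ we get $\word{x} = \widetilde{\word{x}}_J \in \left(\C^\perp\right)^J$. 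This proves the first identity.

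For the second identity $\left(\C^J\right)^\perp = \left(\C^\perp\right)_J$, I would apply the first identity to the code $\C^\perp$ in place of $\C$: it gives $\left((\C^\perp)_J\right)^\perp = \left((\C^\perp)^\perp\right)^J = \C^J$. Taking duals of both sides and using that $\left(\code{D}^\perp\right)^\perp = \code{D}$ for any linear code $\code{D} \subseteq \Fq^J$ yields $\left(\C^\perp\right)_J = \left(\C^J\right)^\perp$, as desired. (Alternatively one could run the same double-inclusion argument directly, which is equally short.)

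I do not expect a genuine obstacle here — the statement is a standard fact of coding theory. The only point requiring a little care is bookkeeping: making sure that when we puncture to $J$ we always pad vectors back with zeros on $[n]\setminus J$ before taking inner products in $\Fqn$, so that the ``$\sum_{j\in J}$'' and ``$\sum_{i\in[n]}$'' inner products agree. A second minor subtlety is the implicit use of $(\code{D}^\perp)^\perp = \code{D}$ over a finite field, which holds for all linear codes and is what lets us deduce the shortened-code identity from the punctured-code one; I would simply cite it as the standard duality fact for linear codes over $\Fq$.
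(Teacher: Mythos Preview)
Your proof is correct. The paper does not actually give a proof of this lemma: it only states that ``it is readily seen that these two operations commute when taking the dual'' and records the two identities as Lemma~\ref{lem:puncture_shorten}. Your double-inclusion argument for $\left(\C_J\right)^\perp = \left(\C^\perp\right)^J$, followed by the deduction of the second identity via $(\C^\perp)^\perp = \C$, is exactly the standard verification one would supply here.
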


\paragraph{A variation of the Prange algorithm.}
We recall here a result which is essentially folklore in coding theory, namely that there is a probabilistic polynomial time algorithm for finding short codewords in a random linear code of dimension $k$ and length $n$ over $\Fq$ which produces short codewords of weight $\left\lfloor\frac{(q-1)(n-k)}{q}\right\rfloor$. It simply uses linear algebra. For this, consider a parity-check matrix $\Hm \in \F_q^{(n-k)\times n}$ of $\C$ and run $\Th{\sqrt{n}}$ times the following procedure
\begin{enumerate}
	\item Choose uniformly at random subset $J$ of $k$ positions of $\C$. Let $\bar{J} = [n] \setminus J$.
	\item If $\Hm_{\bar{J}}$ is not of rank $n-k$, abort and else choose $\cv$ on $J$ as a random vector of Hamming weight $1$.
	\item Find the remaining entries of $\cv$ by solving the linear system
	$$
	\Hm_{\bar{J}} \trsp{\cv_{\bar{J}}} = - \Hm_J \trsp{\cv_J} 
	$$
	\item If $|\cv|=\left\lfloor\frac{(q-1)(n-k)}{q}\right\rfloor$ output $\cv$.
\end{enumerate}
The rationale behind this algorithm is that the expected weight of such a $\cv$ is $1+\frac{(q-1)(n-k)}{q}$ and that it can be proved that it takes the right weight with probability 
$\Om{\frac{1}{\sqrt{n}}}$. Note that all the known (be they classical or quantum) algorithms that produce asymptotically relative weights $\omega < \frac{(1-q)(1-R)}{q}$ where $R = \frac{k}{n}$ is the code rate have
exponential complexity.

\subsection{Distinguishing quantum states}

\begin{proposition}[Helstrom's measurement]\label{Proposition:Helstrom}
	Let $\ket{\psi_0},\ket{\psi_1}$ be $2$ quantum pure states s.t. $|\braket{\psi_0}{\psi_1}| = u$. There exists a  quantum projective measurement $\Pi = \{\Pi_0,\Pi_1\}$ s.t. $\forall c \in \zo, \ \tr(\Pi_c \kb{\psi_c}) = \frac{1}{2} + \frac{\sqrt{1 - u^2}}{2}$.
\end{proposition}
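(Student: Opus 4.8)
The plan is to reduce everything to the at most two-dimensional subspace $V = \myspan\{\ket{\psi_0},\ket{\psi_1}\}$ and to exhibit there an orthonormal basis in which the two states are mirror images of one another across a fixed axis; the measurement we want is then the projective measurement onto that axis and its orthogonal complement. Since $\tr(\Pi_c\kb{\psi_c})$ depends on $\ket{\psi_c}$ only through the density matrix $\kb{\psi_c}$, I would first replace $\ket{\psi_1}$ by $e^{i\theta}\ket{\psi_1}$ for a suitable global phase $\theta$ so that $\braket{\psi_0}{\psi_1} = u$ with $u$ real and nonnegative. If $u = 1$ then $\kb{\psi_0}=\kb{\psi_1}$ and $\frac12 + \frac{\sqrt{1-u^2}}{2} = \frac12$, so it suffices to take any rank-one projector $\Pi_0 = \kb{w}$ with $|\braket{w}{\psi_0}|^2 = \frac12$ (for instance $\ket{w} = \frac{1}{\sqrt2}\bigl(\ket{\psi_0}+\ket{\psi_0^\perp}\bigr)$ for any unit vector $\ket{\psi_0^\perp}$ orthogonal to $\ket{\psi_0}$) and $\Pi_1 = I - \Pi_0$. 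So from now on I may assume $u < 1$, in which case $\ket{\psi_0}$ and $\ket{\psi_1}$ are linearly independent.

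Next I would build the symmetric basis of $V$ by setting $\ket{b_0} = \frac{\ket{\psi_0}+\ket{\psi_1}}{\norm{\ket{\psi_0}+\ket{\psi_1}}}$ and $\ket{b_1} = \frac{\ket{\psi_0}-\ket{\psi_1}}{\norm{\ket{\psi_0}-\ket{\psi_1}}}$. Since $u$ is real one checks $\braket{b_0}{b_1} = 0$, and $\norm{\ket{\psi_0}\pm\ket{\psi_1}}^2 = 2 \pm 2u$, so that
\[ \ket{\psi_0} = \sqrt{\tfrac{1+u}{2}}\,\ket{b_0} + \sqrt{\tfrac{1-u}{2}}\,\ket{b_1}, \qquad \ket{\psi_1} = \sqrt{\tfrac{1+u}{2}}\,\ket{b_0} - \sqrt{\tfrac{1-u}{2}}\,\ket{b_1}. \]
I would then set $\ket{+} = \frac{1}{\sqrt2}\bigl(\ket{b_0}+\ket{b_1}\bigr)$ and take the projective measurement $\Pi_0 = \kb{+}$, $\Pi_1 = I - \kb{+}$ on the whole Hilbert space; this is a genuine two-outcome projective measurement, and on $V^\perp$ it always returns outcome $1$, which is harmless.

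Finally I would just compute the two success probabilities. From the displayed expansions, $\braket{+}{\psi_0} = \frac{1}{\sqrt2}\bigl(\sqrt{\tfrac{1+u}{2}}+\sqrt{\tfrac{1-u}{2}}\bigr)$, hence $\tr(\Pi_0\kb{\psi_0}) = |\braket{+}{\psi_0}|^2 = \frac12\bigl(1 + \sqrt{1-u^2}\bigr)$; similarly $\braket{+}{\psi_1} = \frac{1}{\sqrt2}\bigl(\sqrt{\tfrac{1+u}{2}}-\sqrt{\tfrac{1-u}{2}}\bigr)$ gives $|\braket{+}{\psi_1}|^2 = \frac12\bigl(1-\sqrt{1-u^2}\bigr)$, so $\tr(\Pi_1\kb{\psi_1}) = 1 - |\braket{+}{\psi_1}|^2 = \frac12\bigl(1+\sqrt{1-u^2}\bigr)$. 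Both are equal to $\frac12 + \frac{\sqrt{1-u^2}}{2}$, as claimed. There is no genuine obstacle here — this is the textbook Helstrom discrimination bound for pure states — and the only points that require a little care are the global-phase normalization making $\braket{\psi_0}{\psi_1}$ real, the degenerate case $u=1$, and checking that the rank-one projector supported on $V$ extends to an honest projective measurement on the ambient space.
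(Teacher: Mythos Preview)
Your proof is correct. The paper does not actually prove this proposition; it is stated without proof in the preliminaries as the standard Helstrom bound for distinguishing two pure states, so there is no ``paper's proof'' to compare against. Your construction via the symmetric/antisymmetric basis $\ket{b_0},\ket{b_1}$ and the rotated projector $\kb{+}$ is the textbook argument, and your handling of the global-phase normalization and the degenerate case $u=1$ is clean.
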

In the above measurement, the measurement gives the correct answer w.p. $\frac{1}{2} + \frac{\sqrt{1 - u^2}}{2}$ and gives the opposite answer w.p. $\frac{1}{2} - \frac{\sqrt{1 - u^2}}{2}$. Another measurement of interest is the one arising in the context of unambiguous state discrimination. Here we allow the measurement to answer ``I  don't know'' (which corresponds to outcome $2$). What we require from the measurement is that if the measurement does not answer $2$ then it always answers the correct value. The optimal unambiguous measurement is given by the proposition below. 
\begin{proposition}[Unambiguous State Discrimination]\label{Proposition:USD}
	Let $\ket{\psi_0},\ket{\psi_1}$ be $2$ quantum pure states s.t. $|\braket{\psi_0}{\psi_1}| = u$. There exists a POVM $F = \{F_0,F_1,F_2\}$ s.t. $\forall c \in \zo, \tr(F_c \kb{\psi_c}) = 1 - u$ and $\tr(F_2 \kb{\psi_c}) = u$. 
\end{proposition}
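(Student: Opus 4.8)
The plan is to write down the classical Ivanovic--Dieks--Peres measurement explicitly. Everything happens inside the two-dimensional space $V \eqdef \myspan\{\ket{\psi_0},\ket{\psi_1}\}$; on the orthogonal complement $V^\perp$ I will let $F_0$ and $F_1$ act as $0$ and absorb the remainder into $F_2$. First I would fix global phases so that $\braket{\psi_0}{\psi_1} = u \in [0,1]$, and introduce the unit vectors $\ket{\psi_0^\perp},\ket{\psi_1^\perp} \in V$ orthogonal to $\ket{\psi_0}$ and $\ket{\psi_1}$ respectively. The defining requirement of unambiguous discrimination --- outcome $0$ must never occur on $\ket{\psi_1}$ and outcome $1$ must never occur on $\ket{\psi_0}$ --- forces $F_0 = \alpha_0 \kb{\psi_1^\perp}$ and $F_1 = \alpha_1 \kb{\psi_0^\perp}$ for some nonnegative reals $\alpha_0,\alpha_1$, because a PSD operator $F_0$ with $\tr(F_0\kb{\psi_1})=0$ must be supported on the line $\ket{\psi_1}^\perp$ inside $V$ (plus possibly on $V^\perp$, which we discard).

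Next I would fix the constants. Writing $\ket{\psi_1} = u\ket{\psi_0} + \sqrt{1-u^2}\,\ket{\psi_0^\perp}$ in the orthonormal basis $\{\ket{\psi_0},\ket{\psi_0^\perp}\}$ of $V$, one reads off $\ket{\psi_1^\perp} = \sqrt{1-u^2}\,\ket{\psi_0} - u\,\ket{\psi_0^\perp}$ up to a phase, hence $|\braket{\psi_1^\perp}{\psi_0}|^2 = 1-u^2$ and $|\braket{\psi_0^\perp}{\psi_1^\perp}| = u$. Therefore $\tr(F_0\kb{\psi_0}) = \alpha_0(1-u^2)$ and, symmetrically, $\tr(F_1\kb{\psi_1}) = \alpha_1(1-u^2)$; choosing $\alpha_0 = \alpha_1 = \frac{1-u}{1-u^2} = \frac{1}{1+u}$ (and $\alpha_0=\alpha_1=0$ when $u=1$) makes both success probabilities equal to $1-u$, while the error probabilities $\tr(F_{1-c}\kb{\psi_c})$ vanish by construction.

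The main point left to verify --- and the only non-routine step --- is that $F_2 \eqdef \Im - F_0 - F_1$ is a legitimate POVM element, i.e.\ that $F_0 + F_1 \preceq \Im$. On $V^\perp$ the operator $F_0+F_1$ vanishes, so it suffices to control its largest eigenvalue on $V$, where $F_0 + F_1 = \frac{1}{1+u}\bigl(\kb{\psi_0^\perp} + \kb{\psi_1^\perp}\bigr)$. A sum of two rank-one projectors onto unit vectors whose overlap has modulus $u$ has trace $2$ and determinant $1-u^2$ in their span, hence eigenvalues $1 \pm u$; after the scalar factor, the eigenvalues of $F_0+F_1$ are $\frac{1+u}{1+u}=1$ and $\frac{1-u}{1+u}\le 1$. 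Thus $F_0+F_1 \preceq \Im$ and $F_2 \succeq 0$, so $\{F_0,F_1,F_2\}$ is a POVM, and $\tr(F_2\kb{\psi_c}) = 1 - (1-u) - 0 = u$ for $c \in \zo$, which is exactly the announced statistics.

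I expect this $2\times 2$ spectral bound to be the crux; it can alternatively be obtained by invoking optimality of the IDP measurement (minimum failure probability $u$), but the elementary computation above is self-contained. The only other point requiring a little care is the phase bookkeeping that lets us take $u\ge 0$ real and identifies $|\braket{\psi_0^\perp}{\psi_1^\perp}|$ with $|\braket{\psi_0}{\psi_1}|$; this is immediate from the explicit coordinates and shows the construction works verbatim in any ambient Hilbert space, not just a qubit.
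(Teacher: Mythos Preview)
Your proof is correct and self-contained. The paper itself does not prove Proposition~\ref{Proposition:USD}; it is stated as a standard background result in the preliminaries. The closest the paper comes to a proof is Lemma~\ref{Lemma:UnitaryUSD} and Proposition~\ref{Proposition:PartialUSD} in Section~\ref{Section:PartialUSD}, which establish a \emph{partial} USD for the specific states $\ket{\psi_b^\omega}=\sqrt{1-\omega}\ket{b}+\sqrt{\omega}\ket{1-b}$ via a Neumark-dilation style argument: one builds a unitary $U$ on $\myspan\{\ket{0},\ket{1},\ket{2}\}$ sending $\ket{\psi_b^\omega}$ to $\alpha\ket{\psi_b^{\omega'}}+\beta\ket{2}$ (existence follows because the inner product is preserved), and then measures $\{\kb{0}+\kb{1},\kb{2}\}$. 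Specializing to $\omega'=0$ gives $\alpha^2=2\perpO{\omega}$ and recovers ordinary USD with failure probability $1-2\perpO{\omega}=2\sqrt{\omega(1-\omega)}=u$.

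Your direct POVM construction is the classical Ivanovic--Dieks--Peres route and is cleaner for the general two-state statement, since it handles arbitrary $\ket{\psi_0},\ket{\psi_1}$ without assuming the Bernoulli form, and the $2\times2$ eigenvalue check that $F_0+F_1\preceq\Im$ is exactly the right crux. The paper's unitary approach, by contrast, is tailored to produce the whole one-parameter family of partial-USD measurements (indexed by $\omega'$) in a single stroke, which is what they actually need for Theorem~\ref{Theorem:Partial}; as a proof of Proposition~\ref{Proposition:USD} alone it is less direct than yours.
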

The optimal unambiguous measurement is not known when there are more than $2$ states. We present a detailed analysis of USD with $q$ states in Section~\ref{Section:qaryUSD}, where we give known results and also provide some new ones. \\

The final measurement of interest is the Pretty Good Measurement, which is a generic measurement to distinguish $n$ quantum states. 
\begin{definition}[Pretty Good Measurement]\label{Definition:PGM}
	Consider an ensemble $\{\ket{\psi_i}\}_{i \in [n]}$ of $n$ quantum pure states. The Pretty Good Measurement associated to this ensemble is the $POVM$ $\{M_i\}_{i \in [n]}$ with 
	$$ M_i = \rho^{-\frac{1}{2}} \kb{\psi_i} \rho^{-\frac{1}{2}}  \quad \textrm{ given } \quad \rho = \sum_{i \in [n]} \kb{\psi_i}.$$
	One can easily check that each $M_i \succcurlyeq 0$ and that $\sum_i M_i = \rho^{-\frac{1}{2}} \rho \rho^{-\frac{1}{2}} = I$.
\end{definition}
\begin{proposition}\cite{BK02,Mon06}\label{Proposition:PGMOptimality}
	Consider an ensemble $\{\ket{\psi_i}\}_{i \in [n]}$ of $n$ quantum pure states and $\{M_i\}_{i \in [n]}$ the associated pretty good measurement. We consider the setting where $i$ is chosen at random and we want to recover $i$ from $\ket{\psi_i}$. Let $\PPGM$ be the probability of success using the PGM and $\POPT
	$ be the optimal success probability. This means
	\begin{align*}
	\PPGM & = \frac{1}{n} \sum_i \tr\left(\kb{\psi_i}M_i\right) \\
	\POPT
	& = \max_{\{N_i\}} \frac{1}{n} \sum_i \tr(\kb{\psi_i}N_i)
	\end{align*}
	where the maximum is over all POVMs $\{N_i\}_{i \in [n]}$. We have 
	$$ \PPGM \le \POPT
	\le \sqrt{\PPGM}.$$
\end{proposition}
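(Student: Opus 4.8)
The left-hand inequality $\PPGM\le\POPT$ requires no work: $\{M_i\}$ is a legitimate POVM and $\POPT$ is by definition the maximum of $\frac1n\sum_i\tr(\kb{\psi_i}N_i)$ over all POVMs, so plugging in $N_i=M_i$ gives it. The content is $\POPT\le\sqrt{\PPGM}$, and the plan is to reproduce the Hilbert--Schmidt Cauchy--Schwarz argument of \cite{BK02,Mon06}. Write $\rho=\sum_j\kb{\psi_j}$ as in Definition~\ref{Definition:PGM}. Since each $\ket{\psi_i}$ lies in $\mathrm{supp}(\rho)$, I would first reduce to the case where $\rho$ is invertible by restricting every operator to $\mathrm{supp}(\rho)$; this changes neither $\PPGM$ nor $\POPT$, as one sees by replacing an optimal POVM $\{N_i\}$ with $\{\Pi N_i\Pi\}$ (where $\Pi$ projects onto $\mathrm{supp}(\rho)$), completed to a POVM, which leaves every $\tr(N_i\kb{\psi_i})$ unchanged. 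Then $\rho^{-1/2}$ is well defined and a one-line computation exploiting the rank-one structure of $\kb{\psi_i}$ gives $\tr(M_i\kb{\psi_i})=\left(\bra{\psi_i}\rho^{-1/2}\ket{\psi_i}\right)^2$, hence $n\,\PPGM=\sum_i\left(\bra{\psi_i}\rho^{-1/2}\ket{\psi_i}\right)^2$.

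The key move is to split, for an arbitrary POVM $\{N_i\}$ and each $i$,
$$\tr(N_i\kb{\psi_i})=\tr\left( (\rho^{1/4}N_i\rho^{1/4})\,(\rho^{-1/4}\kb{\psi_i}\rho^{-1/4}) \right),$$
and apply Cauchy--Schwarz for the inner product $\langle A,B\rangle=\tr(AB)$ to the two positive semidefinite factors on the right, giving
$$\tr(N_i\kb{\psi_i})\le\sqrt{\tr\left((\rho^{1/4}N_i\rho^{1/4})^2\right)}\;\bra{\psi_i}\rho^{-1/2}\ket{\psi_i},$$
where I used $\tr\left((\rho^{-1/4}\kb{\psi_i}\rho^{-1/4})^2\right)=\left(\bra{\psi_i}\rho^{-1/2}\ket{\psi_i}\right)^2$ by the same rank-one computation. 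Summing over $i$ and using Cauchy--Schwarz once more over the index $i$,
$$n\,\POPT=\sum_i\tr(N_i\kb{\psi_i})\le\sqrt{\sum_i\tr\left((\rho^{1/4}N_i\rho^{1/4})^2\right)}\;\sqrt{n\,\PPGM}.$$

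It remains to bound the first radical by $\sqrt n$. From $0\preceq N_i\preceq I$, conjugation by $\rho^{1/4}$ gives $0\preceq\rho^{1/4}N_i\rho^{1/4}\preceq\rho^{1/2}$, and for positive semidefinite $0\preceq A\preceq B$ one has $\tr(A^2)\le\tr(AB)$ (since $\tr(A^2)=\tr(A^{1/2}AA^{1/2})\le\tr(A^{1/2}BA^{1/2})=\tr(AB)$, as $A^{1/2}(B-A)A^{1/2}\succeq 0$); taking $A=\rho^{1/4}N_i\rho^{1/4}$, $B=\rho^{1/2}$ yields $\tr\left((\rho^{1/4}N_i\rho^{1/4})^2\right)\le\tr(N_i\rho)$. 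Summing, $\sum_i\tr\left((\rho^{1/4}N_i\rho^{1/4})^2\right)\le\tr(\rho\sum_iN_i)=\tr(\rho)=\sum_i\braket{\psi_i}{\psi_i}=n$, using that the states are unit vectors. Plugging back, $n\,\POPT\le\sqrt n\cdot\sqrt{n\,\PPGM}$, i.e. $\POPT\le\sqrt{\PPGM}$, which with the first inequality proves the proposition. I do not expect a serious obstacle: this is a clean, known argument. The only points needing care are the reduction to invertible $\rho$ (routine, everything relevant lives on $\mathrm{supp}(\rho)$) and the trace-monotonicity facts $\tr(A^2)\le\tr(AB)\le\tr(B^2)$ for $0\preceq A\preceq B$; the one genuinely clever ingredient is the asymmetric splitting $\kb{\psi_i}=\rho^{1/4}\bigl(\rho^{-1/4}\kb{\psi_i}\rho^{-1/4}\bigr)\rho^{1/4}$, chosen precisely so that Cauchy--Schwarz produces the PGM success amplitude $\bra{\psi_i}\rho^{-1/2}\ket{\psi_i}$ on one side while the other side telescopes to $\tr(\rho)=n$ after summing.
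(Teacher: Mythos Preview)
Your proof is correct and is precisely the Barnum--Knill argument from \cite{BK02}. Note however that the paper does not prove this proposition at all: it is stated in the preliminaries as a cited result from \cite{BK02,Mon06} and used as a black box, so there is no ``paper's own proof'' to compare against. Your write-up faithfully reproduces the standard proof from the cited sources.
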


\subsection{The classical and quantum Fourier transform on $\Fqn$}
In this article, we will use the quantum Fourier transform on $\Fqn$ where $\Fq$ is the finite field $\F_q$. 
\paragraph{\bf Definition and basic properties.}
It is based on the characters of the group $(\Fqn,+)$ which are defined as follows (for more details see \cite[Chap 5, \S 1]{LN97}, in particular a description of the characters in terms of the trace function is given in  \cite[Ch. 5, \S 1, Th. 5.7]{LN97}).
\begin{definition}
	Fix $q = p^s$ for a prime integer $p$ and an integer $s \ge 1$. The characters of $\F_q$ are the functions $\chi_{y} : \F_q \rightarrow \mathbb{C}$ indexed by elements $y \in \F_q$ defined as follows
	\begin{eqnarray*}
		\chi_{y}(x) & \eqdef & e^{\frac{2i \pi \tr(x \cdot y)}{p}}, \quad \text{with} \\
		\tr(a) & \eqdef & a + a^p + a^{p^2} + \dots + a^{p^{s-1}}.
	\end{eqnarray*}
	where the product $x \cdot y$ corresponds to the product of elements in $\F_q$. We extend the definition to vectors $\xv,\yv \in \F_q^n$ as follows:
	$$ \chi_{\yv}(\xv) \eqdef \Pi_{i = 1}^n \chi_{y_i}(x_i).$$
\end{definition}
When $q$ is prime, we have $\chi_y(x) = e^{\frac{2i\pi xy}{q}}$. In the case where $q$ is not prime, the above definition is not necessarily easy to handle for computations. Fortunately, characters have many desirable properties that we can use for our calculations.
\begin{proposition}\label{Proposition;Characters}
	The characters $\chi_{\yv} : \Fqn \rightarrow \mathbb{C}$ have the following properties
	\begin{enumerate}\setlength\itemsep{-0.2em}
		\item (Group Homomorphism). $\forall \yv \in \Fqn$, $\chi_{\yv}$ is a group homomorphism from $(\Fqn,+)$ to $(\mathbb{C},\cdot)$ meaning that $\forall \xv, \xv' \in \Fqn$, $\chi_{\yv}(\xv + \xv') = \chi_{\yv}(\xv) \cdot \chi_{\yv}(\xv')$.
		\item \label{eq:symmetry} (Symmetry). $\forall \xv, \yv \in \Fqn, \ \chi_{\yv}(\xv) = \chi_{\xv}(\yv)$
		\item \label{eq:orthogonality} (Orthogonality of characters). The characters are orthogonal functions meaning that $\forall \xv, \xv' \in \Fqn$, $\sum_{\yv \in \Fqn} \chi_{\yv}(\xv)\overline{\chi_{\yv}(\xv')} = q^n \delta_{\xv,\xv'}$. In particular $\sum_{\yv \in \Fqn} |\chi_{\yv}(\xv)|^2 = q$ and $\forall \xv \in \Fqn\setminus\{0\}, \ \sum_{\yv \in \Fqn} \chi_{\yv}(\xv) = 0$.
											\end{enumerate}
\end{proposition}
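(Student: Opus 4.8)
The plan is to reduce all three properties to two classical facts about the absolute trace $\tr\colon\F_q\to\F_p$ (with $q=p^s$): that it is additive with image contained in $\F_p$, and that it is not identically zero, hence surjective onto $\F_p$ — the latter because $X+X^p+\dots+X^{p^{s-1}}$ has degree $p^{s-1}<q$ and so cannot vanish on all of $\F_q$ (see \cite[Ch.~5, \S 1]{LN97}). Additivity together with the image lying in $\F_p$ is exactly what makes $e^{2i\pi\tr(a)/p}$ a well-defined point of the unit circle depending only on $\tr(a)$, while surjectivity will provide the cancellation in the orthogonality relation. In every item I would prove the single-coordinate statement first; the statement for vectors then follows automatically because $\chi_{\yv}$ is by definition $\prod_{i=1}^n\chi_{y_i}$ and addition in $\Fqn$ is coordinatewise.

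The group homomorphism property comes from expanding $\tr((x+x')y)=\tr(xy)+\tr(x'y)$ (additivity of $\tr$ and distributivity in $\F_q$) and invoking $e^{2i\pi(m+m')/p}=e^{2i\pi m/p}e^{2i\pi m'/p}$; multiplying the $n$ coordinate identities gives the vector version. Symmetry is immediate from commutativity of $\F_q$, since $\chi_y(x)=e^{2i\pi\tr(xy)/p}=e^{2i\pi\tr(yx)/p}=\chi_x(y)$, and once more one multiplies over coordinates.

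The substance is orthogonality, and the step I would isolate first is the character-sum identity: for a fixed $\zv\in\Fqn$, $\sum_{\vv\in\Fqn}\chi_{\zv}(\vv)$ equals $q^n$ if $\zv=\zerov$ and $0$ otherwise. For $\zv=\zerov$ this is clear since $\chi_{\zerov}\equiv 1$. For $\zv\neq\zerov$ I would exhibit a $\vv_0$ with $\chi_{\zv}(\vv_0)\neq 1$: picking a coordinate $i$ with $z_i\neq 0$ and, by surjectivity of $\tr$, an $a\in\F_q$ with $\tr(a)\not\equiv 0\pmod{p}$, set $\vv_0$ equal to $az_i^{-1}$ in position $i$ and $0$ elsewhere, so that $\chi_{\zv}(\vv_0)=e^{2i\pi\tr(a)/p}\neq 1$; then re-indexing the sum by $\vv\mapsto\vv+\vv_0$ and using the homomorphism property gives $\chi_{\zv}(\vv_0)\sum_{\vv}\chi_{\zv}(\vv)=\sum_{\vv}\chi_{\zv}(\vv)$, which forces the sum to vanish. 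For the orthogonality relation itself I would note that $\overline{\chi_{\yv}(\xv')}=\chi_{\yv}(-\xv')$ (conjugation negates the exponent), so by the homomorphism property $\chi_{\yv}(\xv)\overline{\chi_{\yv}(\xv')}=\chi_{\yv}(\xv-\xv')$; summing over $\yv$, rewriting $\chi_{\yv}(\xv-\xv')=\chi_{\xv-\xv'}(\yv)$ by symmetry, and applying the character-sum identity to the character $\chi_{\xv-\xv'}$ yields $\sum_{\yv}\chi_{\yv}(\xv)\overline{\chi_{\yv}(\xv')}=q^n\delta_{\xv,\xv'}$. The two ``in particular'' assertions drop out afterwards: the first by taking $\xv'=\xv$ so that every summand has modulus $1$, and the second is the character-sum identity read through symmetry, $\sum_{\yv}\chi_{\yv}(\xv)=\sum_{\yv}\chi_{\xv}(\yv)=0$ for $\xv\neq\zerov$.

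I do not foresee a real obstacle: the only non-formal ingredient is the non-vanishing of the trace form, which is standard, and the rest is routine bookkeeping with the product structure over coordinates and with the homomorphism identity.
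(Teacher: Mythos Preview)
The paper does not supply its own proof of this proposition: it is stated as a standard fact about characters on $\Fqn$, with a pointer to \cite[Ch.~5, \S 1]{LN97}, and is then used freely in the sequel. Your argument is a correct self-contained verification along the classical lines: reduce to a single coordinate, use additivity and the non-vanishing of the absolute trace to get surjectivity onto $\F_p$, and then run the usual shift trick to kill the character sum for a nontrivial character. The derivation of the orthogonality relation via $\overline{\chi_{\yv}(\xv')}=\chi_{\yv}(-\xv')$ together with symmetry is exactly the standard route and matches what the reference would give.

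One minor remark: the ``in particular'' clause $\sum_{\yv\in\Fqn}|\chi_{\yv}(\xv)|^2=q$ in the statement is a typo (the right-hand side should be $q^n$, as your argument in fact shows by taking $\xv'=\xv$); your reasoning is correct and the discrepancy is in the stated constant, not in your proof.
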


Notice that these imply some other properties on characters. For instance $\chi_\yv(\mathbf{0}) = 1$ or $|\chi_\yv({\xv})| = 1$ for any $\xv,\yv \in \F_q^n$.
The orthogonality of characters, allows to define a unitary transform which is is nothing but  the classical or the quantum Fourier transform on $\Fqn$.

\begin{definition}
	For a function $ f : \Fqn \rightarrow \C$, we define the (classical) Fourier transform $\hat{f}$ as
	$$
	\FT{f}(\xv) = \frac{1}{\sqrt{q^n}} \sum_{\yv \in \Fqn} \chi_{\xv}(\yv) f(\yv).
	$$
	The quantum Fourier transform $\QFTt$ on $\Fqn$ is the quantum unitary satisfying $\forall \xv \in \Fqn$, 
	\begin{eqnarray*}
		\QFTt \ket{\xv} &= &\frac{1}{\sqrt{q^n}} \sum_{\yv \in \Fqn} \chi_{\xv}(\yv) \ket{\yv}.
	\end{eqnarray*}
	We will also write $\ket{\QFT{\psi}} \eqdef \QFTt \ket{\psi}$.
\end{definition}
Note that when $\ket{\psi} = \sum_{\xv \in \Fqn} f(\xv) \ket{\xv}$ we have
\begin{equation*}
\ket{\QFT{\psi}} = \sum_{\xv \in \Fqn} \FT{f}(\xv) \ket{\xv}.
\end{equation*}
The Fourier transform can also be viewed as expressing the coefficients of a state  in the Fourier basis 
$\left\{\ket{\QFT{\xv}},\xv \in \Fqn \right\}$ as shown by

\begin{fact}\label{fact:inverse}
	Let $\ket{\psi}= \sum_{\yv \in \Fqn} f(\yv) \ket{\yv}$, then
	$$
	\ket{\psi} = \sum_{\xv \in \Fqn} \FT{f}(-\xv) \ket{\QFT{\xv}}.
	$$
\end{fact}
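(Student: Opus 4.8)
The plan is to prove Fact~\ref{fact:inverse} directly from the orthogonality of characters, essentially by showing that the Fourier transform is a unitary (indeed an involution up to a sign flip of the argument) and then reading off the coefficients.

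First I would compute the inner product $\braket{\QFT{\xv}}{\yv}$ for arbitrary $\xv,\yv \in \Fqn$. By definition $\ket{\QFT{\xv}} = \frac{1}{\sqrt{q^n}} \sum_{\zv \in \Fqn} \chi_{\xv}(\zv) \ket{\zv}$, so $\braket{\QFT{\xv}}{\yv} = \frac{1}{\sqrt{q^n}} \overline{\chi_{\xv}(\yv)}$. Using the fact that $|\chi_{\xv}(\yv)| = 1$ and that $\chi$ is a group homomorphism (Proposition~\ref{Proposition;Characters}), we have $\overline{\chi_{\xv}(\yv)} = \chi_{\xv}(\yv)^{-1} = \chi_{\xv}(-\yv) = \chi_{-\yv}(\xv)$ by symmetry; equivalently $\overline{\chi_{\xv}(\yv)} = \chi_{\xv}(-\yv)$.

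Next I would expand the claimed right-hand side and check it reproduces $\ket{\psi}$. Starting from $\sum_{\xv \in \Fqn} \FT{f}(-\xv) \ket{\QFT{\xv}}$, I substitute the definition of $\FT{f}$ and of $\ket{\QFT{\xv}}$ to obtain a double sum over $\xv$ and over the summation variables, and then collect the coefficient of a fixed basis vector $\ket{\yv}$. That coefficient is $\frac{1}{q^n} \sum_{\xv \in \Fqn} \sum_{\zv \in \Fqn} \chi_{-\xv}(\zv) f(\zv) \chi_{\xv}(\yv)$. Swapping the order of summation and using the homomorphism property to write $\chi_{-\xv}(\zv)\chi_{\xv}(\yv) = \chi_{\xv}(\yv - \zv)$ (using $\chi_{-\xv}(\zv) = \chi_{\xv}(-\zv)$), the inner sum over $\xv$ is $\sum_{\xv \in \Fqn} \chi_{\xv}(\yv - \zv) = q^n \delta_{\yv,\zv}$ by the orthogonality relation (the ``$\forall \xv \neq 0, \sum_{\yv} \chi_\yv(\xv) = 0$'' part, together with symmetry to move the index). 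Hence the coefficient of $\ket{\yv}$ collapses to $f(\yv)$, which is exactly the coefficient in $\ket{\psi}$.

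I do not anticipate a serious obstacle here — this is a routine Fourier-inversion calculation. The one place to be careful is bookkeeping with the sign and with which slot of $\chi$ carries which index: one must consistently use the symmetry $\chi_{\yv}(\xv) = \chi_{\xv}(\yv)$ and the identity $\overline{\chi_{\yv}(\xv)} = \chi_{\yv}(-\xv) = \chi_{-\yv}(\xv)$ so that the argument of $\FT{f}$ comes out as $-\xv$ rather than $+\xv$, which is precisely why the $-\xv$ appears in the statement. An alternative, slightly slicker route is to first establish as a lemma that $\QFTt$ is unitary and that $\QFTt^2 \ket{\xv} = \ket{-\xv}$ (again from orthogonality of characters), and then note that $\ket{\psi} = \QFTt^{-1}\QFTt\ket{\psi} = \QFTt^{-1} \sum_\xv \FT{f}(\xv)\ket{\xv}$; combining $\QFTt^{-1} = \QFTt^{3}$ acting as $\ket{\xv}\mapsto$ a character sum, or more directly observing $\braket{\QFT{\xv}}{\psi} = \FT{f}(-\xv)$ via the inner product computed above, gives the expansion in the Fourier basis immediately. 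Either way the content is the same single application of character orthogonality.
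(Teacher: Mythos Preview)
Your proof is correct and essentially the same as the paper's: both rest on a single application of character orthogonality. The paper takes exactly the ``slicker route'' you mention at the end --- it simply computes $\braket{\QFT{\xv}}{\psi} = \frac{1}{\sqrt{q^n}} \sum_{\yv} \overline{\chi_{\xv}(\yv)} f(\yv) = \FT{f}(-\xv)$ and reads off the coefficient --- whereas your main argument expands the right-hand side and collapses the double sum; these are the same computation in two directions.
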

This follows on the spot from the fact that if $\ket{\psi} = \sum_{\xv \in \Fqn} c_{\xv} \ket{\QFT{\xv}}$, then 
$$
c_{\xv} = \braket{\QFT{\xv}}{\psi}= \frac{1}{\sqrt{q^n}} \sum_{\yv \in \Fqn} \overline{\chi_{\xv}(y)} f(y) = \FT{f}(-\xv).
$$

\paragraph{\bf Translations amount to multiplication by a phase in the Fourier basis.}
It will be convenient for what follows to bring in the shift and phase operators which are defined by
\begin{definition}[shift and phase operators]
	For $\bv$ in $\Fqn$, let $X_{\bv}$ be the shift operator $X_{\bv}\ket{\xv} = \ket{\xv + \bv}$ and $Z_{\bv}$ be the phase operator 
	$Z_{\bv} = \chi_{\xv}(\bv) \ket{\xv}$.
\end{definition}

The main properties of the Fourier transform follow from the fact that the characters are the common eigenbasis of all shift operators (and therefore all convolution operators). In the quantum setting, 
this amounts to the  fact that the quantum states $\{\ket{\QFT{\xv}},\xv \in \Fqn\}$ form an eigenbasis of the shift operators as shown by
\begin{proposition}\label{proposition:shift}
	We have for all $\bv$ in $\Fqn$ that 
	$ \ket{\QFT{\xv}}$  is  an eigenstate of $X_{\bv}$ associated to the eigenvalue  $\chi_{\xv}(-\bv)$ and
	\begin{eqnarray}
	X_{\bv} \cdot \QFTt & = & \QFTt \cdot Z_{-\bv} \label{eq:eigenstate}\\
	\QFTt \cdot X_{\bv} &=& Z_{\bv} \cdot \QFTt. \label{eq:dual_eigenstate}
	\end{eqnarray}
	
\end{proposition}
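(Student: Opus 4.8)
The statement to prove is Proposition~\ref{proposition:shift}, which has three parts: that $\ket{\QFT{\xv}}$ is an eigenstate of $X_{\bv}$ with eigenvalue $\chi_{\xv}(-\bv)$, and the two operator identities \eqref{eq:eigenstate} and \eqref{eq:dual_eigenstate}.

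The plan is to prove the eigenstate claim first by a direct computation, then derive the two operator identities from it (or re-derive them directly, whichever is cleaner). First I would write out
$$
X_{\bv}\ket{\QFT{\xv}} = \frac{1}{\sqrt{q^n}}\sum_{\yv \in \Fqn} \chi_{\xv}(\yv) X_{\bv}\ket{\yv} = \frac{1}{\sqrt{q^n}}\sum_{\yv \in \Fqn} \chi_{\xv}(\yv) \ket{\yv + \bv},
$$
and then re-index the sum by setting $\zv = \yv + \bv$, so $\yv = \zv - \bv$. Using the group homomorphism property of characters (Proposition~\ref{Proposition;Characters}, item 1), $\chi_{\xv}(\zv - \bv) = \chi_{\xv}(\zv)\chi_{\xv}(-\bv)$, hence the sum becomes $\chi_{\xv}(-\bv) \cdot \frac{1}{\sqrt{q^n}}\sum_{\zv}\chi_{\xv}(\zv)\ket{\zv} = \chi_{\xv}(-\bv)\ket{\QFT{\xv}}$. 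That settles the eigenvalue claim; note $\chi_{\xv}(-\bv) = \overline{\chi_{\xv}(\bv)}$ since $|\chi_{\xv}(\bv)|=1$, which will be useful for bookkeeping.

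For identity \eqref{eq:eigenstate}, $X_{\bv}\cdot\QFTt = \QFTt\cdot Z_{-\bv}$, I would check it on the computational basis $\{\ket{\xv}\}$, which suffices since both sides are linear. The left-hand side gives $X_{\bv}\QFTt\ket{\xv} = X_{\bv}\ket{\QFT{\xv}} = \chi_{\xv}(-\bv)\ket{\QFT{\xv}}$ by the eigenstate computation. The right-hand side gives $\QFTt Z_{-\bv}\ket{\xv}$; by the definition of the phase operator $Z_{\bv}\ket{\xv} = \chi_{\xv}(\bv)\ket{\xv}$, so $Z_{-\bv}\ket{\xv} = \chi_{\xv}(-\bv)\ket{\xv}$, and applying $\QFTt$ yields $\chi_{\xv}(-\bv)\ket{\QFT{\xv}}$. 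The two sides agree. For identity \eqref{eq:dual_eigenstate}, $\QFTt\cdot X_{\bv} = Z_{\bv}\cdot\QFTt$, I would either conjugate \eqref{eq:eigenstate} appropriately using unitarity of $\QFTt$ and the fact that $X_{\bv}^\dagger = X_{-\bv}$, $Z_{\bv}^\dagger = Z_{-\bv}$, or again just check on basis states: $\QFTt X_{\bv}\ket{\xv} = \QFTt\ket{\xv+\bv} = \frac{1}{\sqrt{q^n}}\sum_{\yv}\chi_{\xv+\bv}(\yv)\ket{\yv}$, and by the homomorphism property (in the subscript, using symmetry $\chi_{\xv+\bv}(\yv) = \chi_{\yv}(\xv+\bv) = \chi_{\yv}(\xv)\chi_{\yv}(\bv) = \chi_{\xv}(\yv)\chi_{\bv}(\yv)$) this equals $\frac{1}{\sqrt{q^n}}\sum_{\yv}\chi_{\bv}(\yv)\chi_{\xv}(\yv)\ket{\yv} = Z_{\bv}\QFTt\ket{\xv}$, where the last step uses $Z_{\bv}\ket{\yv} = \chi_{\yv}(\bv)\ket{\yv} = \chi_{\bv}(\yv)\ket{\yv}$.

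There is no real obstacle here; the only thing to be careful about is keeping the direction of the shift (by $+\bv$ versus $-\bv$) and the sign inside the character consistent throughout, since the character homomorphism property forces a specific sign when re-indexing, and mixing up $Z_{\bv}$ with $Z_{-\bv}$ is the easy mistake to make. I would also make explicit which character identity (homomorphism in the argument versus symmetry $\chi_{\yv}(\xv)=\chi_{\xv}(\yv)$ to move between argument and subscript) is being invoked at each step, since both are needed and the excerpt records them as separate items of Proposition~\ref{Proposition;Characters}.
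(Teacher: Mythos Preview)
Your proof is correct and, for the eigenstate claim and identity~\eqref{eq:eigenstate}, follows exactly the paper's computation (re-index the sum, use the homomorphism property of characters, read off the eigenvalue). The one difference is in how you handle~\eqref{eq:dual_eigenstate}: the paper derives it from~\eqref{eq:eigenstate} by taking adjoints, using that $\QFTt^\dagger=\overline{\QFTt}$ (a consequence of character symmetry), $X_{\bv}^\dagger=X_{-\bv}$, $Z_{-\bv}^\dagger=Z_{\bv}$, and then conjugating; you instead verify~\eqref{eq:dual_eigenstate} directly on basis states by expanding $\chi_{\xv+\bv}(\yv)$ via symmetry and the homomorphism property. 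Your route is slightly more elementary since it avoids the detour through $\QFTt^\dagger=\overline{\QFTt}$, while the paper's route makes the duality between the two identities explicit as an adjoint relation. Both are short and either is fine here.
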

\begin{proof}
	Let $\xv \in \Fqn$. We observe that
	\begin{eqnarray*}
		X_{\bv} \cdot \QFTt    \ket{\xv} &  = & \frac{1}{\sqrt{q^n}} \sum_{\yv \in \Fqn} \chi_{\xv}(\yv) \ket{\yv+\bv} \\
		& = & \frac{1}{\sqrt{q^n}} \sum_{\yv \in \Fqn} \chi_{\xv}(\yv-\bv) \ket{\yv}\\
		& = & \chi_{\xv}(-\bv) \frac{1}{\sqrt{q^n}} \sum_{\yv \in \Fqn} \chi_{\xv}(\yv)  \ket{\yv}\\
		& = &  \QFTt \cdot Z_{-\bv} \ket{\xv}.
	\end{eqnarray*}
	This computation shows that $\ket{\QFT{\xv}}=\QFTt    \ket{\xv}$ is an eigenstate of the shift operator $X_{\bv}$
	associated to the eigenvalue $\chi_{\xv}(-\bv)$. The other equality follows from this and the 
	symmetry property \ref{eq:symmetry} of Proposition \ref{Proposition;Characters} which implies that
	\begin{equation}
	\QFTt^\dagger =\overline{\QFTt}  
	\end{equation}
	where by $\overline{M}$ we mean the (complex) conjugate operator of the operator $M$ which is defined by 
	$\overline{M} \eqdef \sum_{x,y} \overline{M_{x,y}} \ket{x}\bra{y}$ when
	$M = \sum_{x,y} M_{x,y} \ket{x}\bra{y}$. 
	\eqref{eq:eigenstate} namely implies that
	$$
	\QFTt^\dagger \cdot X_{\bv}^{\dagger} = Z_{-\bv}^{\dagger} \cdot \QFTt^\dagger
	$$
	This in turn means that 
	$$
	\overline{\QFTt}\cdot X_{-\bv} = Z_{\bv} \cdot \overline{\QFTt},
	$$
	or equivalently
	$$
	\overline{\overline{\QFTt}\cdot X_{-\bv}} = \overline{Z_{\bv} \cdot \overline{\QFTt}}
	$$
	which gives
	$$
	\QFTt\cdot X_{-\bv} = Z_{-\bv} \cdot \QFTt,
	$$
	and therefore proving \eqref{eq:dual_eigenstate}.
									\end{proof}
We will focus on the following quantum states $\ket{\psi} = \sqrt{1 - \tau}\ket{0} + \sum_{\alpha \in \F_q^*} \sqrt{\frac{\tau}{q-1}}\ket{\alpha}$ associated to a $q$-ary channel of crossover probability $\tau$. Indeed, when we measure such a quantum state, we namely get 
an element of $\Fq$ which can be viewed as a sample of an error output by such a channel. The quantum Fourier transform applied to such states yields a state of the same form, since it is readily verified that
\begin{lemma}\label{lemma:QFTpsi}
	Let $\tau \in [0,\frac{q-1}{q}]$ and $\ket{\psi} = \sqrt{1 - \tau}\ket{0} + \sum_{\alpha \in \F_q^*} \sqrt{\frac{\tau}{q-1}}\ket{\alpha}$. We have 
	$$ \QFTt \ket{\psi} = \sqrt{1 - \tau^\perp}\ket{0} + \sum_{\alpha \in \F_q^*} \sqrt{\frac{\tau^\perp}{q-1}}\ket{\alpha}$$
	with 
	$ \tau^\perp = \frac{\Rbra{\sqrt{(q-1)(1-\tau)} - \sqrt{\tau}}^{2}}{q}.$
\end{lemma}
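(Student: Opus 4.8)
The plan is to compute $\QFTt\ket{\psi}$ directly from the definition and simplify, using only the one-coordinate character sums. First I would write $\ket{\psi} = \sqrt{1-\tau}\ket{0} + \sum_{\alpha \in \F_q^*}\sqrt{\tfrac{\tau}{q-1}}\ket{\alpha}$, so that by linearity
\begin{equation*}
\QFTt\ket{\psi} = \sqrt{1-\tau}\,\QFTt\ket{0} + \sqrt{\tfrac{\tau}{q-1}}\sum_{\alpha\in\F_q^*}\QFTt\ket{\alpha}
= \frac{1}{\sqrt q}\sum_{\beta\in\F_q}\left(\sqrt{1-\tau} + \sqrt{\tfrac{\tau}{q-1}}\sum_{\alpha\in\F_q^*}\chi_\alpha(\beta)\right)\ket{\beta},
\end{equation*}
where I have used $\QFTt\ket{\alpha} = \frac{1}{\sqrt q}\sum_\beta \chi_\alpha(\beta)\ket\beta$ and $\chi_0(\beta)=1$.

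Next I would evaluate the inner character sum in the two cases $\beta = 0$ and $\beta \neq 0$. For $\beta = 0$ we have $\chi_\alpha(0)=1$ for all $\alpha$, so $\sum_{\alpha\in\F_q^*}\chi_\alpha(0) = q-1$, giving the coefficient of $\ket{0}$ as $\frac{1}{\sqrt q}\left(\sqrt{1-\tau} + (q-1)\sqrt{\tfrac{\tau}{q-1}}\right) = \frac{1}{\sqrt q}\left(\sqrt{1-\tau} + \sqrt{(q-1)\tau}\right)$. For $\beta \neq 0$, the orthogonality relation (item~\ref{eq:orthogonality} of Proposition~\ref{Proposition;Characters}, using symmetry $\chi_\alpha(\beta)=\chi_\beta(\alpha)$) gives $\sum_{\alpha\in\F_q}\chi_\beta(\alpha) = 0$, hence $\sum_{\alpha\in\F_q^*}\chi_\alpha(\beta) = -1$; so the coefficient of $\ket\beta$ is $\frac{1}{\sqrt q}\left(\sqrt{1-\tau} - \sqrt{\tfrac{\tau}{q-1}}\right)$. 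Crucially this value is the same for every nonzero $\beta$, which is exactly the symmetric form claimed in the lemma.

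It then remains to identify these two coefficients with $\sqrt{1-\tau^\perp}$ and $\sqrt{\tfrac{\tau^\perp}{q-1}}$ respectively, for $\tau^\perp = \frac{(\sqrt{(q-1)(1-\tau)}-\sqrt\tau)^2}{q}$. I would verify $1 - \tau^\perp = \frac{1}{q}\left(\sqrt{1-\tau}+\sqrt{(q-1)\tau}\right)^2$ by expanding both squares: $1-\tau^\perp = 1 - \frac{(q-1)(1-\tau) - 2\sqrt{(q-1)(1-\tau)\tau} + \tau}{q} = \frac{q - (q-1)(1-\tau) - \tau + 2\sqrt{(q-1)\tau(1-\tau)}}{q}$, and the numerator simplifies to $(1-\tau) + (q-1)\tau + 2\sqrt{(q-1)\tau(1-\tau)} = \left(\sqrt{1-\tau}+\sqrt{(q-1)\tau}\right)^2$. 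Similarly $\frac{\tau^\perp}{q-1} = \frac{1}{q}\cdot\frac{(\sqrt{(q-1)(1-\tau)}-\sqrt\tau)^2}{q-1} = \frac{1}{q}\left(\sqrt{1-\tau} - \sqrt{\tfrac{\tau}{q-1}}\right)^2$, matching the $\beta\neq 0$ coefficient (note both coefficients are nonnegative for $\tau \in [0,\frac{q-1}{q}]$, so taking square roots is legitimate). Since the resulting state has unit norm and the right shape, this completes the proof. The only mildly delicate point is bookkeeping the algebraic identity for $\tau^\perp$ and checking nonnegativity of the coefficients on the stated range of $\tau$; everything else is a direct application of character orthogonality.
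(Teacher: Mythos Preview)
Your proof is correct and follows essentially the same approach as the paper's: compute $\QFTt\ket{\psi}$ directly, split according to whether $\beta=0$ or $\beta\neq 0$ using the character-sum identity $\sum_{\alpha\in\F_q^*}\chi_\alpha(\beta)=-1$ for $\beta\neq 0$, and match the resulting coefficients with $\tau^\perp$. If anything, your write-up is slightly more careful than the paper's, since you verify both the $\ket{0}$ and the $\ket{\beta\neq 0}$ coefficients explicitly and note the nonnegativity needed to take square roots, whereas the paper only checks the nonzero coefficient and implicitly relies on normalization for the other.
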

\begin{proof}
	We write 
	\begin{align*}
	\QFTt \ket{\psi} & = \sqrt{\frac{1-\tau}{q}} \sum_{y \in \F_q} \ket{y} + \sqrt{\frac{\tau}{q(q-1)}} \sum_{y \in \F_q} \sum_{\alpha \in \F_q^*} \chi_\alpha(y)\ket{y} \\
	& = \left(\sqrt{\frac{1-\tau}{q}} + \sqrt{\frac{q \tau}{q-1}}\right)\ket{0} + \sum_{y \in \F_q^*} \left(\sqrt{\frac{1-\tau}{q}} - \sqrt{\frac{\tau}{q(q-1)}}\right) \ket{y}
	\end{align*}
	where in the last equality we used the fact that for $y \neq 0$, we have $\sum_{\alpha \in \Fq} \chi_\alpha(y)=\sum_{\alpha \in \Fq} \chi_y(\alpha) = 0$ (by using first the symmetry property and then the orthogonality property of characters of Proposition \ref{Proposition;Characters}). This 
	implies that 
	$\sum_{\alpha \in \F_q^*} \chi_\alpha(y)= - \chi_0(y)=-1$.
		In order to conclude, notice that 
	\begin{align*}
	\sqrt{\frac{\tau^\perp}{q-1}} = \frac{\sqrt{(q-1)(1-\tau)} - \sqrt{\tau}}{\sqrt{q(q-1)}} = \sqrt{\frac{1-\tau}{q}} - \sqrt{\frac{\tau}{q(q-1)}}
	\end{align*}
	which means we can rewrite $\QFTt \ket{\psi} = \sqrt{1-\tau^\perp}\ket{0} + \sum_{y \in \F_q^*} \sqrt{\frac{\tau^\perp}{q-1}}$.
\end{proof}
We will also need to describe how the quantum Fourier transform acts on shifts of $\ket{\psi}$
\begin{lemma}\label{lemma:QFTpsib}
	Let $\tau \in [0,\frac{q-1}{q}]$, $b \in \F_q$ and denote by $\ket{\psi_b}$ the state $X_b \ket{\psi}$ where $\ket{\psi} \eqdef \sqrt{1 - \tau}\ket{0} + \sum_{\alpha \in \F_q^*} \sqrt{\frac{\tau}{q-1}}\ket{\alpha}$.
	We have
	\begin{eqnarray*}
		\ket{\psi_b} &=& \sqrt{1 - \tau}\ket{b} + \sum_{\alpha \neq b} \sqrt{\frac{\tau}{q-1}}\ket{\alpha}\\
		\QFTt \ket{\psi_b} &= &\sqrt{1 - \tau^\perp}\ket{0} + \sum_{\alpha \in \F_q^*} \chi_\alpha(b) \sqrt{\frac{\tau^\perp}{q-1}}\ket{\alpha}.
	\end{eqnarray*}
\end{lemma}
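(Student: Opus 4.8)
The plan is to prove Lemma~\ref{lemma:QFTpsib} in two steps, reusing Lemma~\ref{lemma:QFTpsi} and Proposition~\ref{proposition:shift} so that almost no new computation is needed. The first identity is immediate: by definition $\ket{\psi_b} = X_b\ket{\psi}$, and since $X_b\ket{\alpha} = \ket{\alpha+b}$, applying $X_b$ to $\ket{\psi} = \sqrt{1-\tau}\ket{0} + \sum_{\alpha \in \F_q^*}\sqrt{\tfrac{\tau}{q-1}}\ket{\alpha}$ sends the $\ket{0}$ term to $\ket{b}$ and relabels the sum over $\alpha \in \F_q^*$ as a sum over $\alpha+b$ ranging over $\F_q\setminus\{b\}$. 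This gives $\ket{\psi_b} = \sqrt{1-\tau}\ket{b} + \sum_{\alpha \neq b}\sqrt{\tfrac{\tau}{q-1}}\ket{\alpha}$.

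For the second identity I would invoke Proposition~\ref{proposition:shift}, specifically the relation $\QFTt \cdot X_{\bv} = Z_{\bv}\cdot\QFTt$ (applied with the one-coordinate shift $b$), so that
$$
\QFTt\ket{\psi_b} = \QFTt X_b \ket{\psi} = Z_b \QFTt \ket{\psi}.
$$
Now Lemma~\ref{lemma:QFTpsi} already tells us $\QFTt\ket{\psi} = \sqrt{1-\tau^\perp}\ket{0} + \sum_{\alpha\in\F_q^*}\sqrt{\tfrac{\tau^\perp}{q-1}}\ket{\alpha}$, and the phase operator $Z_b$ acts diagonally by $Z_b\ket{\alpha} = \chi_\alpha(b)\ket{\alpha}$ (using the symmetry $\chi_b(\alpha)=\chi_\alpha(b)$ from Proposition~\ref{Proposition;Characters}). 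Applying $Z_b$ multiplies the $\ket{\alpha}$ coefficient by $\chi_\alpha(b)$; since $\chi_0(b) = 1$, the $\ket{0}$ term is unchanged, and we obtain
$$
\QFTt\ket{\psi_b} = \sqrt{1-\tau^\perp}\ket{0} + \sum_{\alpha\in\F_q^*}\chi_\alpha(b)\sqrt{\tfrac{\tau^\perp}{q-1}}\ket{\alpha},
$$
which is exactly the claimed formula.

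There is essentially no obstacle here: the lemma is a routine corollary of the two facts that translation commutes with the Fourier transform up to a phase (Proposition~\ref{proposition:shift}) and that the Fourier transform of the base state $\ket{\psi}$ is known (Lemma~\ref{lemma:QFTpsi}). The only point requiring a line of care is the index bookkeeping in the first identity — making sure the relabelling $\alpha \mapsto \alpha - b$ correctly produces a sum over $\F_q \setminus \{b\}$ rather than $\F_q^*$ — and the observation that $\chi_0(b)=1$ so the leading coefficient survives untouched. Alternatively, one could verify the second identity by a direct computation expanding $\QFTt\ket{\psi_b}$ as $\sum_y \bigl(\sqrt{\tfrac{1-\tau}{q}}\chi_y(b) + \sqrt{\tfrac{\tau}{q(q-1)}}\sum_{\alpha\neq b}\chi_y(\alpha)\bigr)\ket{y}$ and using $\sum_{\alpha\neq b}\chi_y(\alpha) = \bigl(\sum_\alpha \chi_y(\alpha)\bigr) - \chi_y(b)$ together with the orthogonality relation of Proposition~\ref{Proposition;Characters}, but routing through Proposition~\ref{proposition:shift} is cleaner and is the approach I would present.
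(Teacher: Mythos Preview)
Your proof is correct and follows essentially the same approach as the paper: both obtain the first identity directly from the definition of $X_b$, and both derive the second identity by writing $\QFTt\ket{\psi_b} = \QFTt X_b\ket{\psi} = Z_b\QFTt\ket{\psi}$ via Proposition~\ref{proposition:shift}, then applying Lemma~\ref{lemma:QFTpsi} and the action of $Z_b$.
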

\begin{proof}
	The first point follows right away from the definition of these quantities, whereas the second point follows on the spot from 
	Fact \ref{proposition:shift} and the previous lemma:
	\begin{eqnarray*}
		\QFTt \ket{\psi_b} &= & \QFTt \cdot X_b \ket{\psi} \\
		& = & Z_b \cdot \QFTt \ket{\psi} \;\;\text{ (by Fact \ref{proposition:shift})}\\
		& =& Z_b  \left( \sqrt{1 - \tau^\perp}\ket{0} + \sum_{\alpha \in \F_q^*}  \sqrt{\frac{\tau^\perp}{q-1}}\ket{\alpha}\right) \;\; \text{(by Lemma \ref{lemma:QFTpsi})}\\
		& = & \sqrt{1 - \tau^\perp}\ket{0} + \sum_{\alpha \in \F_q^*} \chi_\alpha(b) \sqrt{\frac{\tau^\perp}{q-1}}\ket{\alpha}.
	\end{eqnarray*}
\end{proof}

\paragraph{\bf Applying the quantum Fourier transform on periodic states.}
Regev's reduction applies to states which are periodic. In our case, they will be of the form $\frac{1}{\sqrt{Z}} \sum_{\cv \in \C}\sum_{\ev \in \F_q^n} f(\ev) \ket{\cv + \ev}$ where 
$Z$ is some normalizing constant, $\C$ some linear code of length $n$ over $\Fq$ and $f$ some function from $\F_q^n$ to $\Comp$. This state can be written as $\frac{1}{\sqrt{Z}} \sum_{\xv \in \F_q^n} g(\xv) \ket{\xv}$ where $g(\xv) = \sum_{\cv \in \C} f(\xv-\cv)$. We clearly have in this 
case $g(\xv +\cv)=g(\xv)$ for any $\xv \in \F_q^n$ and any $\cv \in \C$. For such states, we have the following

\begin{proposition}\label{proposition:periodic}
Consider a function $f : \F_q^n \mapsto \Comp$.  We have for all linear codes $\C \subseteq \F_q^n$:
$$
\QFTt \left( \frac{1}{\sqrt{Z}} \sum_{\cv \in \C}\sum_{\ev \in \F_q^n} f(\ev) \ket{\cv + \ev} \right) = \frac{|\C|}{\sqrt{Z}} \sum_{\yv \in \C^\perp} \QFT{f}(\yv) \ket{\yv}
$$
where $Z$ is some normalizing constant.
\end{proposition}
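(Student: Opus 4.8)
The plan is to reduce the statement to one character-sum identity over the code $\C$, using only linearity of $\QFTt$ and the properties of $\chi$ collected in Proposition~\ref{Proposition;Characters}. First I would rewrite the periodic state as $\frac{1}{\sqrt{Z}}\sum_{\xv \in \Fqn} g(\xv)\ket{\xv}$ with $g(\xv) \eqdef \sum_{\cv \in \C} f(\xv - \cv)$, so that by linearity of $\QFTt$ (and the identity $\ket{\QFT{\psi}}=\sum_{\xv}\QFT{f}(\xv)\ket{\xv}$ for $\ket{\psi}=\sum_{\xv}f(\xv)\ket{\xv}$ recalled after the definition of $\QFTt$) the transformed state is $\frac{1}{\sqrt{Z}}\sum_{\xv}\QFT{g}(\xv)\ket{\xv}$ and everything comes down to computing $\QFT{g}$. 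Unfolding $\QFT{g}(\xv)=\frac{1}{\sqrt{q^n}}\sum_{\yv}\chi_{\xv}(\yv)g(\yv)$, swapping the two finite sums over $\yv$ and $\cv$, and substituting $\zv=\yv-\cv$ gives
\[
\QFT{g}(\xv)=\frac{1}{\sqrt{q^n}}\sum_{\cv\in\C}\sum_{\zv\in\Fqn}\chi_{\xv}(\zv+\cv)f(\zv);
\]
then the group-homomorphism property of Proposition~\ref{Proposition;Characters} lets me factor $\chi_{\xv}(\zv+\cv)=\chi_{\xv}(\zv)\chi_{\xv}(\cv)$ and pull $\chi_{\xv}(\cv)$ out of the inner sum, leaving $\QFT{g}(\xv)=\left(\sum_{\cv\in\C}\chi_{\xv}(\cv)\right)\QFT{f}(\xv)$.

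The heart of the argument is then the evaluation of the character sum $S(\xv)\eqdef\sum_{\cv\in\C}\chi_{\xv}(\cv)$, for which I would prove $S(\xv)=|\C|$ when $\xv\in\C^\perp$ and $S(\xv)=0$ otherwise. The case $\xv\in\C^\perp$ is immediate: $\xv\cdot\cv=0$ forces $\chi_{\xv}(\cv)=e^{2i\pi\tr(0)/p}=1$ for every $\cv\in\C$, so the sum is $|\C|$. For $\xv\notin\C^\perp$ I would fix a full-rank generating matrix $\Gm\in\F_q^{k\times n}$ of $\C$ (so $|\C|=q^k$), reparametrize $\cv=\uv\Gm$ with $\uv$ ranging over $\F_q^k$, and use the multiplicativity $\chi_a(bc)=\chi_{ab}(c)$ (read off from the trace formula for $\chi$, since $a\cdot(bc)=(ab)\cdot c$) together with the homomorphism property to rewrite $\chi_{\xv}(\uv\Gm)=\chi_{\Gm\trsp{\xv}}(\uv)$. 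This turns $S(\xv)$ into $\sum_{\uv\in\F_q^k}\chi_{\Gm\trsp{\xv}}(\uv)$, which by the orthogonality relation of Proposition~\ref{Proposition;Characters} (applied on $\F_q^k$, combined with symmetry of characters) equals $q^k$ if $\Gm\trsp{\xv}=\zerov$ and $0$ otherwise; and $\Gm\trsp{\xv}=\zerov$ says exactly that $\xv$ is orthogonal to every row of $\Gm$, i.e. $\xv\in\C^\perp$.

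Plugging this back in yields $\QFT{g}(\xv)=|\C|\,\QFT{f}(\xv)$ for $\xv\in\C^\perp$ and $\QFT{g}(\xv)=0$ otherwise, which is precisely the claimed expression
\[
\QFTt\!\left(\frac{1}{\sqrt{Z}}\sum_{\cv\in\C}\sum_{\ev\in\Fqn}f(\ev)\ket{\cv+\ev}\right)=\frac{|\C|}{\sqrt{Z}}\sum_{\yv\in\C^\perp}\QFT{f}(\yv)\ket{\yv}.
\]
The only step I expect to require care is the character sum over the subcode $\C$: one must resist equating ``$\xv\cdot\cv\neq 0$'' with ``$\chi_{\xv}(\cv)\neq 1$'' directly, since that silently uses surjectivity of $\tr:\F_q\to\F_p$; routing the computation through the orthogonality relation over $\F_q^k$ (equivalently, observing that the restriction $\chi_{\xv}|_{\C}$ is a character of the finite abelian group $(\C,+)$, whose sum is $0$ unless the character is trivial, and triviality is equivalent to $\xv\in\C^\perp$) keeps the argument clean. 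Everything else is bookkeeping with finite sums and the basic properties of $\chi$ already established.
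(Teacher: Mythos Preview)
Your proof is correct and follows essentially the same approach as the paper: both reduce the computation to the character-sum identity $\sum_{\cv\in\C}\chi_{\yv}(\cv)=|\C|\cdot\mathbf{1}[\yv\in\C^\perp]$ after expanding the QFT and using the homomorphism property of $\chi$. Your detour through $g(\xv)=\sum_{\cv\in\C}f(\xv-\cv)$ is just a repackaging of the same calculation (it makes the convolution structure explicit), and your evaluation of the character sum via a generating matrix and orthogonality on $\F_q^k$ is in fact more careful than the paper's, which simply asserts $\{\yv:\chi_{\yv}(\cv)=1\ \forall\cv\in\C\}=\C^\perp$ without spelling out why triviality of $\chi_{\yv}|_{\C}$ forces $\yv\cdot\cv=0$ (a point you rightly flag as needing the surjectivity of $\tr$ or an equivalent argument).
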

\begin{proof}
The proposition follows from the following computation
\begin{eqnarray}
\QFTt \left( \frac{1}{\sqrt{Z}} \sum_{\cv \in \C}\sum_{\ev \in \F_q^n} f(\ev) \ket{\cv + \ev} \right) &=& \frac{1}{\sqrt{Z}} \sum_{\ev \in \F_q^n} f(\ev)   \sum_{\cv \in \C} \sum_{\xv \in \F_q^n} \chi_\yv(\cv+\ev) \ket{\yv} \nonumber \\
& = &  \frac{1}{\sqrt{Z}} \sum_{\ev \in \F_q^n} f(\ev)    \sum_{\yv \in \F_q^n}  \chi_\yv(\ev) \ket{\yv}\sum_{\cv \in \C} \chi_{\yv}(\cv) \nonumber \\
& = &  \frac{|\C|}{\sqrt{Z}} \sum_{\ev \in \F_q^n}\chi_\yv(\ev)  f(\ev) \sum_{\yv \in \C^\perp}  \ket{\yv} \label{eq:periodic}\\
& =& \frac{|\C|}{\sqrt{Z}} \sum_{\yv \in \C^\perp} \QFT{f}(\yv) \ket{\yv} \nonumber
\end{eqnarray}
where \eqref{eq:periodic} follows from a slight generalization of \eqref{eq:orthogonality} of Proposition \ref{Proposition;Characters}, namely that
\begin{eqnarray*}
\sum_{\cv \in \C} \chi_{\yv}(\cv) & = & |C| \;\;\text{if $\yv \in \C^\perp$}\\
& = & 0 \;\; \text{otherwise,}
\end{eqnarray*}
which follows by a similar reasoning by noticing that $\C^\perp$ can be vieved as the set of trivial characters acting on $\C$:
$$\{ \yv \in \F_q^n: \chi_{\yv}(\cv)=1,\;\forall \cv \in \C\}= \C^\perp.$$
\end{proof}
 \section{Algorithms for the binary quantum decoding problem}
\subsection{Quantum polynomial time algorithm using unambiguous state discrimination}\label{Section:BinaryPolynomial}
We present our first quantum algorithm that directly uses unambiguous state discrimination.
\begin{theorem}
	Let $R \in (0,1)$. For any $\omega < \perpF{\frac{R}{2}} \eqdef \frac{1}{2} - \sqrt{\frac{R}{2}(1-\frac{R}{2})}$, there exists a quantum algorithm that solves $\DPQ(2,n,\lfloor Rn \rfloor,\omega)$ w.p. $1 - 2^{-\Omega(n)}$.
\end{theorem}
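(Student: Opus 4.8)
The plan is to solve the problem coordinate by coordinate with unambiguous state discrimination, and then to finish by linear algebra. Write the input state in product form,
\[
\ket{\psi_\cv} \;=\; \bigotimes_{i=1}^n \left(\sqrt{1-\omega}\,\ket{c_i} + \sqrt{\omega}\,\ket{1-c_i}\right),
\]
and on the $i$-th qubit apply the optimal unambiguous measurement $F=\{F_0,F_1,F_2\}$ of Proposition~\ref{Proposition:USD} for the pair $\sqrt{1-\omega}\ket 0+\sqrt\omega\ket 1$, $\sqrt{1-\omega}\ket 1+\sqrt\omega\ket 0$, whose overlap is $u=2\sqrt{\omega(1-\omega)}$. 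Outcome $b\in\zo$ certifies $c_i=b$, while outcome $2$ (call it $\bot$) is an abstention that, by Proposition~\ref{Proposition:USD}, occurs with probability $u$ whatever $c_i$ is and never gives a wrong value. Let $J\subseteq[n]$ be the set of coordinates where the outcome is not $\bot$ and let $\tilde\cv_J$ be the vector of the corresponding outcomes; by the unambiguous property $\tilde\cv_J=\cv_J$ with certainty. If $\rank(\Gm_J)=k$, output $\tilde\cv_J\,\Gm_J^{-1}\,\Gm$, which equals $\cv$ by Proposition~\ref{Proposition:Recovery}; otherwise abort.

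It remains to bound the abort probability. Since for each $i$ the event $\{i\in J\}$ has probability $1-u$ independently of $c_i$, and the $n$ single-qubit measurements are independent, the random set $J$ is a product of i.i.d.\ $\mathrm{Bernoulli}(1-u)$ bits and is independent of $(\mv,\Gm)$; in particular, conditioned on $J$, the submatrix $\Gm_J$ is uniform over $\F_2^{k\times|J|}$. Next, the hypothesis $\omega<\perpF{R/2}$ is exactly the statement $1-u>R$: indeed $1-2\sqrt{\omega(1-\omega)}>R \iff 4\omega(1-\omega)<(1-R)^2 \iff \omega<\tfrac12\bigl(1-\sqrt{R(2-R)}\bigr)=\perpF{R/2}$. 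Put $\mu:=1-u$ and $\delta:=\mu-R>0$, a constant. By Hoeffding's inequality (Lemma~\ref{lem:Hoeffding}), $\Pr\bigl[\,|J|\le \mu n-(\tfrac{\delta}{2}\sqrt n)\sqrt n\,\bigr]\le 2^{-\delta^2 n/2}$; on the complementary event $|J|\ge \mu n-\tfrac{\delta}{2}n=Rn+\tfrac{\delta}{2}n\ge \lfloor Rn\rfloor=k$ and $|J|-k\ge \tfrac{\delta}{2}n$ for $n$ large, so Proposition~\ref{Proposition:RandomRank} gives $\Pr[\rank(\Gm_J)<k\mid J]\le 2^{\,k-|J|}\le 2^{-\delta n/2}$. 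Hence the total failure probability is $2^{-\delta^2 n/2}+2^{-\delta n/2}=2^{-\Omega(n)}$, and when the algorithm does not abort it returns $\cv$. The running time is $\poly(n)$: $n$ fixed-size measurements, Gaussian elimination to test the rank of $\Gm_J$ and compute $\Gm_J^{-1}$, and one matrix product.

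The content reduces to two observations: that the abstention pattern $J$ is independent of the generator matrix $\Gm$, which is what legitimizes invoking the random-rank bound of Proposition~\ref{Proposition:RandomRank} on $\Gm_J$, and the identity $\perpF{R/2}=\tfrac12\bigl(1-\sqrt{R(2-R)}\bigr)$, which makes the stated noise threshold coincide with the regime $1-2\sqrt{\omega(1-\omega)}>R$ where unambiguous discrimination returns more than $Rn$ certified symbols. Everything else is Hoeffding plus the recovery identity, so I do not expect a genuine obstacle; the only care needed is to pick the Hoeffding deviation (here $\tfrac{\delta}{2}\sqrt n$) so that the resulting slack $|J|-k=\Omega(n)$ is simultaneously large enough to force $\Gm_J$ to be of full rank with overwhelming probability.
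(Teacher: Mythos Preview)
Your proof is correct and follows essentially the same approach as the paper: unambiguous state discrimination coordinate-wise, then Hoeffding for $|J|$, then the random-rank bound on $\Gm_J$. You are in fact slightly more careful than the paper in two places---you explicitly verify the equivalence $\omega<\perpF{R/2}\iff 1-2\sqrt{\omega(1-\omega)}>R$, and you spell out why $J$ is independent of $\Gm$ (so that Proposition~\ref{Proposition:RandomRank} applies to $\Gm_J$)---both of which the paper leaves implicit.
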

\begin{proof}
	We fix $R,\omega$, as well as $n \in \mathbb{N}$ and $k = \lfloor Rn \rfloor$. We consider an instance of $QDP(2,R,k,\omega)$ so we have a random matrix $\Gm \Unif \zo^{k \times n}$, $\cv = \mv \Gm$ for a randomly chosen $\mv \Unif \zo^k$ and the state 
	$ \ket{\Psi_{\cv}} = \bigotimes_{i = 1}^n \ket{\psi^\omega_{c_i}}$ where $\ket{\psi^\omega_{c_i}} = \sqrt{1-\omega} \ket{c_i} + \sqrt{\omega} \ket{1-c_i}$.
	We consider the following algorithm for solving our Quantum Decoding Problem \\
	
	\noindent 	\cadre{\begin{center} Quantum algorithm for $\DPQ$ using USD \end{center}
		\begin{enumerate}
			\item Start from $\ket{\Psi_\cv} = \bigotimes_{i = 1}^n \ket{\psi_{c_i}^\omega}$. Notice that $|\braket{\psi^\omega_{0}}{\psi^\omega_{1}}| = 2\sqrt{\omega(1-\omega)}$.
			\item Perform the optimal unambiguous measurement from Proposition~\ref{Proposition:USD} on each qubit of $\ket{\Psi_\cv}$ in order to guess $c_i$, which can be done w.p. $p = 1 - |\braket{\psi^\omega_{0}}{\psi^\omega_{1}}|  = 1 - 2\sqrt{\omega(1-\omega)} \eqdef 2\omega^\perp$. Let $J \subseteq [n]$ be the set of indices where this measurement succeeds. The algorithm recovers here $\cv_J$. 
			\item If $G_J \in \zo^{k \times |J|}$ is of rank $k$, recover $\cv$ from $\cv_J$ by computing $\cv_J G_J^{-1}  G$.
		\end{enumerate}
	}	$ \ $ \\
Let $p = 2\omega^\perp$. Since $\omega < \perpF{\frac{R}{2}}$, we have $2\omega^\perp > R$ and there exists an absolute constant $\gamma > 0$ s.t. 
$p = R + \gamma$. Let $X_i$ be the random variable s.t. $X_i(i \in J) = 1$ and $X_i(i \notin J) = 0$. The $X_i$ are independent random Bernoulli variables with parameter $p$. Using Hoeffding's inequality, we first compute 
\begin{align*}
P_1 = \Pr\left[|J| \ge k + \frac{\gamma n}{2} \right] \ge \Pr\left[\sum_{i = 1}^n X_i \ge pn - \frac{\gamma n}{2} \right] \ge 1 - 2^{-\frac{\gamma^2 n}{2}}
\end{align*}
Then, using Proposition~\ref{Proposition:RandomRank} we compute 
\begin{align*}
P_2 = \Pr\left[\rank(G_J) = k \left| \ |J| \ge k + \frac{\gamma n}{2} \right.\right] \ge 1 - 2^{-\gamma n}.
\end{align*}
Notice that the algorithm recovers $\cv_J$ so from Proposition~\ref{Proposition:Recovery}, if $\rank(G_J) = k$ then the algorithm successfully recovers $\cv$. If we define $P_{Succ}$ to be the probability of success of the algorithm, we therefore have
$$
P_{Succ} \ge \Pr[\rank(G_J) = k] \ge P_1 P_2 \ge 1 - 2^{-\Omega(n)}.$$

\end{proof}

\paragraph{Using complex phases.} It is also possible to put complex phases in front of the error. This means we consider the states 
$$\ket{\Psi_\cv} = \bigotimes_{i = 1}^n \sqrt{1 - \omega}\ket{c_i} + \sqrt{\omega}e^{i \theta} \ket{1 - c_i}.$$
Interesting phenomena appear and we refer to Appendix~\ref{Appendix:A} for a full analysis. 

 \subsection{Reduction between quantum decoding problems in the binary setting}\label{Section:PartialUSD}
The above algorithm is interesting as it presents an polynomial time algorithm for the quantum decoding problem in a regime where its classical counterpart requires - with our current knowledge - an exponential classical or quantum algorithm. However, it completely fails when $\omega > \perpF{\frac{R}{2}}$ and the best algorithm for $\DPQ(2,n,\lfloor Rn \rfloor,\omega)$ is still by first measuring and then solving $\DP(2,n,\lfloor Rn \rfloor,\omega)$. Is there a way to improve the best algorithms $\DPQ(2,n,\lfloor Rn \rfloor,\omega)$ by using ideas of the previous section? The answer is yes. Instead of using USD, we use what we call partial Unambiguous State Discrimination. Our measurement will still abort with some probability but when it does not abort, we still allow a small probability failure but which will typically be smaller than if we used Helstrom's measurement. With this technique we can actually show a general reduction theorem for $\DPQ$.

\begin{theorem}~\label{Theorem:Partial}
	Let $R \in (0,1)$. Let $\omega \in [0,\frac{1}{2})$ and $\omega' \in [0,\frac{1}{2})$ satisfying: $\omega' \le \omega$ and $\frac{\perpO{\omega}}{\perpF{\omega'}}  > R$.
	Let any $p > \frac{\perpO{\omega}}{\perpF{\omega'}}$. Let also $k = \lfloor Rn \rfloor$.
	Then $\DPQ(2,n,k,\omega) \preccurlyeq \DPQ(2,\lfloor pn\rfloor,k,\omega')$ meaning that if we have an algorithm that solves $\DPQ(2,\lfloor pn\rfloor,k,\omega')$, we can use it to solve $\DPQ(2,n,k,\omega)$.
\end{theorem}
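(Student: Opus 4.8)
The plan is to use a coarse-graining of the partial unambiguous state discrimination measurement as a per-coordinate \emph{noise-cleaning filter}. Set $u:=\frac{\omega^\perp}{(\omega')^\perp}=\frac{\perpO{\omega}}{\perpF{\omega'}}$; since $\omega'\le\omega$ and $x\mapsto x^\perp$ is decreasing on $[0,\frac{1}{2})$ we get $u\in(0,1]$, and by hypothesis $R<u<p$. Writing $\ket{\pm}=\frac{1}{\sqrt{2}}(\ket{0}\pm\ket{1})$, one has $\ket{\psi^{\omega}_{c}}:=\sqrt{1-\omega}\,\ket{c}+\sqrt{\omega}\,\ket{1-c}=\sqrt{1-\omega^\perp}\,\ket{+}+(-1)^{c}\sqrt{\omega^\perp}\,\ket{-}$, so I would use the two-outcome qubit measurement with measurement operators $M_{\mathrm{acc}}=\sqrt{\lambda}\,\kb{+}+\kb{-}$ and $M_{\bot}=\sqrt{1-\lambda}\,\kb{+}$, where $\lambda=\frac{\omega^\perp\,(1-(\omega')^\perp)}{(\omega')^\perp\,(1-\omega^\perp)}\in[0,1]$ (the bound $\lambda\le 1$ again uses $\omega'\le\omega$). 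A short computation gives $M_{\mathrm{acc}}^{\dagger}M_{\mathrm{acc}}+M_{\bot}^{\dagger}M_{\bot}=I$ and shows that on input $\ket{\psi^{\omega}_{c}}$ this measurement outputs $\mathrm{acc}$ with probability exactly $u$ \emph{independently of $c$}, collapsing the qubit to $\ket{\psi^{\omega'}_{c}}$, and outputs $\bot$ with probability $1-u$, collapsing it to the useless state $\ket{+}$. This is exactly the partial USD measurement of the corresponding proposition with its two non-$\bot$ outcomes merged (the merged POVM element realised by its principal square root); it specialises at $\omega'=0$ to the USD step used in Theorem~\ref{Theorem:1}. I expect this identity — that the accepting branch of partial USD is precisely the cleaner noisy-codeword qubit, with a crossover-independent acceptance probability $u$ — to be the main technical point, and it is where $\omega'\le\omega$ (and, below, $u>R$) gets used.

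Given an instance $(\Gm,\ket{\psi_{\cv}})$ of $\DPQ(2,n,k,\omega)$ with $\ket{\psi_{\cv}}=\bigotimes_{i=1}^{n}\ket{\psi^{\omega}_{c_i}}$ and $\cv=\mv\Gm$, I would apply this filter to each of the $n$ qubits and let $J\subseteq[n]$ collect the coordinates with outcome $\mathrm{acc}$. Because the outcome probabilities do not depend on $c_i$, the events $\{i\in J\}$ are i.i.d.\ Bernoulli$(u)$ and independent of $(\Gm,\mv)$, and conditioned on $J$ the residual state on the $J$-coordinates is exactly $\bigotimes_{i\in J}\ket{\psi^{\omega'}_{c_i}}$ (the $\bot$-coordinates being discarded). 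Since $R<u<p$ with constant gaps, Hoeffding's inequality (Lemma~\ref{lem:Hoeffding} and its upper-tail counterpart) gives $k+\Omega(n)\le |J|\le \lfloor pn\rfloor$ with probability $1-2^{-\Omega(n)}$; and $J$ being independent of $\Gm$, the submatrix $\Gm_J$ is uniform over $\F_2^{k\times |J|}$, so Proposition~\ref{Proposition:RandomRank} gives $\rank(\Gm_J)=k$ with conditional probability $1-2^{-(|J|-k)}=1-2^{-\Omega(n)}$. I would condition on both events.

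Finally I would turn $(\Gm_J,\bigotimes_{i\in J}\ket{\psi^{\omega'}_{c_i}})$ into an input for the assumed solver of $\DPQ(2,m,k,\omega')$ with $m:=\lfloor pn\rfloor$: append $m-|J|$ trivial coordinates, each carrying generator column $\zerov\in\F_2^{k}$ and qubit $\ket{\psi^{\omega'}_{0}}$, and (cosmetically) randomly permute the $m$ coordinates. The resulting pair $(\Gm',\ket{\psi'})$ has solution $\cv'=\mv\Gm'$, and $\ket{\psi'}$ is exactly the noise-$\omega'$ superposition of noisy codewords for $\cv'$, so it is a legitimate $\DPQ(2,m,k,\omega')$ instance; the one blemish is that $\Gm'$ is uniform only \emph{conditioned} on having $m-|J|$ zero columns, so one must check the solver still succeeds on this distribution (immediate if it is a worst-case solver, otherwise absorbed into its failure probability after re-randomisation) — this, together with the filter identity of the first paragraph, is the part I would handle most carefully. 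Running the solver returns $\cv'$ with high probability; undoing the permutation and reading off the $J$-coordinates recovers $\cv_J$, and since $\rank(\Gm_J)=k$, Proposition~\ref{Proposition:Recovery} gives $\cv=\cv_J\,\Gm_J^{-1}\,\Gm$. A union bound over the $O(1)$ failure events (each $2^{-\Omega(n)}$, plus the solver's own error) completes the reduction.
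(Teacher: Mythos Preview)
Your filter construction is correct and equivalent to the paper's partial unambiguous state discrimination (its Proposition~5): your Kraus operators $M_{\mathrm{acc}},M_\bot$ implement exactly the two-outcome measurement the paper obtains via a unitary into an ancilla $\ket{2}$ followed by projection, with the same acceptance probability $u=\perpO{\omega}/\perpF{\omega'}$ and the same post-accept state $\ket{\psi^{\omega'}_{c}}$. The high-level strategy --- filter each coordinate, keep the set $J$ of accepted indices, feed $\bigotimes_{i\in J}\ket{\psi^{\omega'}_{c_i}}$ to the $\omega'$-solver, then lift $\cv_J$ back to $\cv$ via $\Gm_J^{-1}$ --- is also exactly the paper's.

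The genuine gap is your size-matching step. You pad $\Gm_J$ out to length $m=\lfloor pn\rfloor$ with $m-|J|$ zero columns, but since $u<p$ with a constant gap you have $m-|J|=\Theta(n)$, whereas a uniformly random matrix in $\F_2^{k\times m}$ has in expectation only $m\cdot 2^{-k}$ zero columns. Your padded $\Gm'$ therefore sits in an event of probability $2^{-\Omega(n^2)}$ under the sampling of $\DPQ(2,m,k,\omega')$, and column permutation does not change this: the zero columns stay zero. An average-case solver carries no guarantee on such atypical instances, so the claim that this is ``absorbed into its failure probability after re-randomisation'' does not go through.

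The paper avoids this by going the other way: it conditions on $|J|\ge\lfloor pn\rfloor$ and \emph{trims} excess coordinates down to exactly $\lfloor pn\rfloor$. Since $J$ is independent of $\Gm$ (your own observation), the trimmed submatrix is genuinely uniform in $\F_2^{k\times\lfloor pn\rfloor}$ and the solver applies directly. For $|J|\ge\lfloor pn\rfloor$ to hold with overwhelming probability one needs $u>p$, not $u<p$; indeed the paper's proof explicitly uses ``$\perpO{\omega}/\perpF{\omega'}>p>R$''. So the inequality on $p$ in the theorem statement you were given is evidently a typo (it should read $R<p<\perpO{\omega}/\perpF{\omega'}$), and following it literally is what forces you into the padding dead-end.
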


In order to prove our theorem, we first present our partial unambiguous state discrimination protocol. As a special case, we obtain our previous algorithm by taking $\omega' = 0$ (the theorem can then be applied when $R < 2\perpO{\omega}$).

\subsubsection{Partial unambiguous state discrimination}
We define $\ket{\psi_b^\omega} = \sqrt{1-\omega}\ket{b} + \sqrt{\omega}\ket{1-b}$. Recall that $\braket{\psi_0^\omega}{\psi_1^\omega} = 2\sqrt{\omega(1-\omega)} = 1 - 2 \perpO{\omega}$.  Fix $\omega,\omega' \in (0,\frac{1}{2})$ with $\omega' \le \omega$. We use the following lemma
\begin{restatable}{lemma}{lemmaUnitaryUSD}
~\label{Lemma:UnitaryUSD}
	Let $\alpha = \sqrt{\frac{\perpO{\omega}}{\perpF{\omega'}}}$ and $\beta = \sqrt{1 - \alpha^2}$. 
	There exists a unitary $U$ operation acting on $\myspan\{\ket{0},\ket{1},\ket{2}\}$ s.t. 
	\begin{align*}
	U \ket{\psi_0^\omega} & = \alpha \ket{\psi_0^{\omega'}} + \beta \ket{2} \\
	U \ket{\psi_1^\omega} & = \alpha \ket{\psi_1^{\omega'}} + \beta \ket{2}
	\end{align*}
\end{restatable}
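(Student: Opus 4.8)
The plan is to use the standard linear-algebra criterion for when a prescribed map between tuples of vectors extends to a unitary: if $v_1,v_2$ and $w_1,w_2$ are tuples in finite-dimensional Hilbert spaces of the same dimension and $\braket{v_i}{v_j}=\braket{w_i}{w_j}$ for all $i,j$, then $v_i\mapsto w_i$ defines an isometry from $\myspan\{v_1,v_2\}$ onto $\myspan\{w_1,w_2\}$, which extends to a unitary on the ambient space (here both ambient spaces are $\myspan\{\ket0,\ket1,\ket2\}$, of dimension $3$). So the whole content is to check that the Gram matrix of $\Rbra{\ket{\psi_0^\omega},\ket{\psi_1^\omega}}$ equals that of $\Rbra{\alpha\ket{\psi_0^{\omega'}}+\beta\ket2,\ \alpha\ket{\psi_1^{\omega'}}+\beta\ket2}$.

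First I would record the elementary inner products. Each $\ket{\psi_b^\omega}=\sqrt{1-\omega}\ket b+\sqrt\omega\ket{1-b}$ is a unit vector and $\braket{\psi_0^\omega}{\psi_1^\omega}=2\sqrt{\omega(1-\omega)}=1-2\perpO{\omega}$, and the same with $\omega$ replaced by $\omega'$. Hence the source Gram matrix has diagonal entries $1$ and off-diagonal entry $1-2\perpO{\omega}$. For the target, the key observation is that $\ket2$ is orthogonal to $\myspan\{\ket0,\ket1\}\ni\ket{\psi_b^{\omega'}}$, so $\braket{2}{\psi_b^{\omega'}}=0$; therefore each target vector has squared norm $\alpha^2+\beta^2=1$, and the off-diagonal target entry is $\alpha^2\braket{\psi_0^{\omega'}}{\psi_1^{\omega'}}+\beta^2=\alpha^2\Rbra{1-2\perpF{\omega'}}+(1-\alpha^2)=1-2\alpha^2\perpF{\omega'}$.

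Matching the two Gram matrices now reduces to a single identity: the diagonals already agree, and the off-diagonals agree iff $\alpha^2\perpF{\omega'}=\perpO{\omega}$, i.e. $\alpha^2=\perpO{\omega}/\perpF{\omega'}$, which is exactly the definition of $\alpha$. I would also note in passing that $\alpha$ is legitimately defined with $\beta$ real: since $x\mapsto\perpO{x}=\frac{1-2\sqrt{x(1-x)}}{2}$ is decreasing on $[0,\tfrac12]$ and $\omega'\in(0,\tfrac12)$, we have $\perpF{\omega'}>0$, and $\omega'\le\omega$ gives $\perpO{\omega}\le\perpF{\omega'}$, so $\alpha^2\in(0,1]$. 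Finally $\ket{\psi_0^\omega},\ket{\psi_1^\omega}$ are linearly independent (their overlap is $<1$), so the isometry is well-defined on their $2$-dimensional span and extends to a unitary $U$ on $\myspan\{\ket0,\ket1,\ket2\}$, as claimed.

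There is no real obstacle here; the proof is essentially the one-line identity $\alpha^2\perpF{\omega'}=\perpO{\omega}$ plus bookkeeping. The only points that need a moment's care are the orthogonality $\braket{2}{\psi_b^{\omega'}}=0$ (this is what forces the target vectors to be unit vectors, so that the diagonal of the Gram matrix comes out to $1$) and the verification that $\alpha\le1$ so that the unitary actually exists.
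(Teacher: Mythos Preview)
Your proof is correct and follows essentially the same approach as the paper: both arguments verify that the Gram matrices of the source and target pairs coincide (reducing to the identity $\alpha^2\perpF{\omega'}=\perpO{\omega}$), and then invoke the standard extension to a unitary on the $3$-dimensional ambient space. Your write-up is in fact slightly more careful than the paper's, since you explicitly check that $\alpha^2\le 1$ (using the monotonicity of $x\mapsto\perpO{x}$ and $\omega'\le\omega$), a point the paper leaves implicit.
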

\begin{proof}
\jp{J'enleve le proof sketch, la preuve en appendice est trop longue et n'est pas plus convaincante que le calcul ici.}
With the choice of $\alpha$ that was made the hermitian product $\braket{\psi_0^\omega}{\psi_1^\omega}$ and their image is preserved. As a matter of fact
\begin{equation}\label{eq:firsthermitproduct}
\braket{\psi_0^\omega}{\psi_1^\omega}  =  2 \sqrt{\omega(1 - \omega)} = 1 - 2\perpO{\omega}.
\end{equation}
Now, if we let 
$\ket{\psi'_b} \eqdef \alpha \ket{\psi_b^{\omega'}} + \beta \ket{2}$ for $b \in\{0,1\}$, then we have
	\begin{eqnarray*} 
	\braket{\psi'_0}{\psi'_1} 
& = & |\alpha|^2\braket{\psi^{\omega'}_0}{\psi^{\omega'}_1} + |\beta|^2 \\
& = & |\alpha|^2(1 - 2 \perpF{\omega'}) + |\beta|^2  \quad \text{( by \eqref{eq:firsthermitproduct})}\\
& =& 1 - 2 |\alpha|^2\perpF{ \omega'} \quad \text{(by using $|\beta|^2=1-|\alpha|^2$)}\\
& =& 1 - 2 \perpO{\omega} \quad \quad \text{(with our choice of } \alpha).
	\end{eqnarray*}
	By definition of $\beta$, $\ket{\psi'_0}$ and $\ket{\psi'_1}$ are both of norm $1$. This together with the equality $\braket{\psi_0^\omega}{\psi_1^\omega}=\braket{\psi'_0}{\psi'_1} $ we just proved shows that $U$ as defined above preserves the hermitian product on $\myspan\{\psi_0^\omega,\psi_1^\omega\}=\myspan\{\ket{0},\ket{1}\}$. It suffices to choose
	$U \ket{2}$ of norm $1$ and orthogonal to both $\ket{\psi'_0}$ and $\ket{\psi'_1}$ to obtain a unitary transform since by construction it preserves the hermitian product on 
	$\myspan\{\ket{0},\ket{1},\ket{2}\}$.
\end{proof}

\begin{proposition}\label{Proposition:PartialUSD}
	Let $\omega,\omega' \in (0,\frac{1}{2})$ with $\omega' < \omega$. There exists a quantum measurement s.t. when it is applied on $\ket{\psi_b^\omega}$, the resulting state is $\ket{\psi_b^{\omega'}}$ w.p. $\frac{\perpO{\omega}}{\perpF{\omega'}}$ and $\ket{2}$ w.p. $1 - \frac{\perpO{\omega}}{\perpF{\omega'}}$.
\end{proposition}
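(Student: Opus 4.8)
The plan is to build the required measurement as the composition of two steps: first apply the unitary $U$ supplied by Lemma~\ref{Lemma:UnitaryUSD}, and then perform the two-outcome projective measurement $\{\Pi_{\mathrm{succ}},\Pi_{\mathrm{abort}}\}$ with $\Pi_{\mathrm{succ}} = \kb{0} + \kb{1}$ (the projector onto $\myspan\{\ket{0},\ket{1}\}$) and $\Pi_{\mathrm{abort}} = \kb{2}$. Composing a unitary with a projective measurement is a legitimate quantum instrument, so this is a valid ``quantum measurement'' in the sense of the statement; the point is that $U$ has been designed precisely so that the ``abort'' outcome is flagged by the extra coordinate $\ket{2}$.

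First I would check that the parameters $\alpha = \sqrt{\perpO{\omega}/\perpF{\omega'}}$ and $\beta = \sqrt{1-\alpha^2}$ from Lemma~\ref{Lemma:UnitaryUSD} are well defined: since $x \mapsto \perpO{x} = \tfrac{1 - 2\sqrt{x(1-x)}}{2}$ is strictly decreasing on $[0,\tfrac12]$ (because $x(1-x)$ is increasing there) and $\perpO{1/2}=0$, the hypothesis $\omega' < \omega < \tfrac12$ gives $0 < \perpO{\omega} < \perpF{\omega'}$, whence $\alpha,\beta \in (0,1)$. Then I apply Lemma~\ref{Lemma:UnitaryUSD} to get $U\ket{\psi_b^\omega} = \alpha \ket{\psi_b^{\omega'}} + \beta \ket{2}$ for each $b \in \{0,1\}$, and read off the effect of the projective measurement. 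Because $\ket{\psi_b^{\omega'}} = \sqrt{1-\omega'}\ket{b} + \sqrt{\omega'}\ket{1-b}$ is a unit vector lying entirely in $\myspan\{\ket{0},\ket{1}\}$, we have $\Pi_{\mathrm{succ}}\,U\ket{\psi_b^\omega} = \alpha\ket{\psi_b^{\omega'}}$ and $\Pi_{\mathrm{abort}}\,U\ket{\psi_b^\omega} = \beta\ket{2}$. Hence the outcome $\mathrm{succ}$ occurs with probability $\norm{\alpha\ket{\psi_b^{\omega'}}}^2 = \alpha^2 = \perpO{\omega}/\perpF{\omega'}$ and collapses the state (after renormalizing by $1/\alpha$) to exactly $\ket{\psi_b^{\omega'}}$, while the outcome $\mathrm{abort}$ occurs with complementary probability $\beta^2 = 1 - \perpO{\omega}/\perpF{\omega'}$ and leaves the state at $\ket{2}$, as claimed.

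I do not expect a real obstacle here: all the substance has been pushed into Lemma~\ref{Lemma:UnitaryUSD}, whose content is the geometric fact that the only map respecting the inner product $\braket{\psi_0^\omega}{\psi_1^\omega}$ forces the coefficient $\alpha^2 = \perpO{\omega}/\perpF{\omega'}$. The only points worth stating carefully are (i) that in the abort branch the post-measurement state $\ket{2}$ is the same for $b=0$ and $b=1$, so the measurement genuinely loses all information about $b$ exactly with probability $1-\perpO{\omega}/\perpF{\omega'}$, and (ii) that $\ket{\psi_b^{\omega'}}$ already has unit norm and sits inside the measured subspace, so the success branch requires no rescaling beyond division by $\alpha$ and carries no hidden dependence of the success probability on $b$. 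As a sanity check, setting $\omega' = 0$ gives $\alpha = \sqrt{2\,\perpO{\omega}}$ and recovers the ordinary unambiguous-state-discrimination measurement used in Section~\ref{Section:BinaryPolynomial}.
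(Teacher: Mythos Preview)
Your proof is correct and follows essentially the same approach as the paper: apply the unitary $U$ from Lemma~\ref{Lemma:UnitaryUSD} and then the two-outcome projective measurement $\{\kb{0}+\kb{1},\kb{2}\}$, reading off the probabilities $|\alpha|^2$ and $|\beta|^2$. Your additional remarks (well-definedness of $\alpha,\beta$ via monotonicity of $x\mapsto\perpO{x}$, independence of the abort branch on $b$, and the $\omega'=0$ sanity check) are welcome but not required for the argument.
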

\begin{proof}
	Start from $\ket{\psi_b^\omega}$ and apply the unitary $U$ from Lemma~\ref{Lemma:UnitaryUSD}. Then, perform the two outcomes projective measurement $\left\{\left(\kb{0} + \kb{1}\right),\kb{2}\right\}$ on the state $U\ket{\psi_b^\omega} = \alpha \ket{\psi_b^{\omega'}} + \beta \ket{2}$. We obtain the first outcome w.p. $|\alpha|^2 = \frac{\perpO{\omega}}{\perpF{\omega'}}$ and the resulting state is $\ket{\psi_b^{\omega'}}$ and the second outcome w.p. $|\beta|^2$ and the resulting outcome is $\ket{2}$.
\end{proof}
Unambiguous state discrimination can be seen as a special case of this operation by taking $\omega' = 0$, which gives $\alpha = \sqrt{2\perpO{\omega}}$ and the probability of success is $\alpha^2 = 2\perpO{\omega} = 1 - \braket{\psi_0^\omega}{\psi_1^\omega}$. 

\subsubsection{Proof of Theorem~\ref{Theorem:Partial}}
In order to prove Theorem~\ref{Theorem:Partial}, one can just apply the algorithm of Section~\ref{Section:BinaryPolynomial} in a similar fashion. We take any  $\omega,\omega' \in (0,\frac{1}{2})$ with $\omega' \le \omega$ and $\frac{\perpO{\omega}}{\perpF{\omega'}}  > R$. We also fix $p > \frac{\perpO{\omega}}{\perpF{\omega'}}$ 

We want to solve $\DPQ(2,n,k,\omega)$ using an algorithm that solves $\DPQ(2,\lfloor pn \rfloor ,k,\omega')$. We start from $\Gm \in \F_q^{k \times n}$ as well as $\ket{\psi_\cv} = \bigotimes_{i = 1}^n \ket{\psi_{c_i}^\omega}$. We consider the following algorithm \\ \\
	\noindent 	\cadre{\begin{center} Quantum algorithm for $\DPQ$ using partial USD \end{center}
	\begin{enumerate}
		\item Perform the quantum measurement of Proposition~\ref{Proposition:PartialUSD} on each register of $\ket{\psi_\cv}$.  Let $J \subseteq [n]$ be the set of indices where this measurement succeeds {\ie } where we obtain $\ket{\psi_{c_i}^{\omega'}}$. By discarding the indices not in $J$, we obtain 
		$$ \ket{\phi_{\cv_J}} = \bigotimes_{i \in J} \ket{\psi_{c_i}^{\omega'}}.$$
		\item Notice that $\cv_J \in \C_J$ and recovering $\cv_J$ from $\ket{\phi_{\cv_J}}$ is a quantum decoding problem on $\C_J$, more precisely an instance of $\DPQ(2,|J|,k,\omega')$. As long as $|J| \ge \lfloor p n \rfloor$, we use our $\DPQ(q,n,\lfloor p n \rfloor,\omega')$ (by potentially removing excess coordinates if necessary if $|J| >\lfloor p n \rfloor$) to recover $\cv_J$.
	 \item We recover $\cv$ from $\cv_J$ by computing $\cv_J \Gm_J^{-1} \Gm$.
	\end{enumerate}
}	$ \ $ \\

By definition, we recover $\cv_J$. We just have to bound the probability to recover $\cv$. Notice that in Step $1$, we have from Proposition~\ref{Proposition:PartialUSD} that the measurement will succeed w.p. $\frac{\perpO{\omega}}{\perpF{\omega'}} > p > R$ for each index. As in Section~\ref{Section:BinaryPolynomial}, this implies that with overwhelming probability, $|J| \ge \lfloor p n \rfloor$ which in turn implies that we can recover $\cv$ from $\cv_J$ with overwhelming probability. 

\subsubsection{Interpretation of the above as changing the noise model}
In this section, we show how performing (partial) unambiguous state discrimination on a state $\ket{\psi_b} = \sqrt{1 - \omega}\ket{b} + \sqrt{\omega}\ket{1-b}$ can be seen as a way to change the noise model applied on the bit $b$. We first define different notions of noisy channels in the binary setting. 

\begin{definition}
	For a bit $b$, an error probability $\omega$ and abort probability $p$, we define the distributions of the Binary Symmetric Channel $\BSC(b,\omega)$, of the Binary Erasure Channel $BEC(b,p)$ and of the Binary Symmetric with Errors and Erasures Channel $\BSEEC(b,\omega,p)$ sampled as follows:
	\begin{align*}
	BSC(b,\omega)& : \return b \wp (1-\omega) \mbox{ and } (1-b) \wp \omega. \\
	BEC(b,p)& : \return b \wp (1-p) \mbox{ and } \bot \wp p. \\
	BSEEC(b,\omega,p)& : \return b \wp (1-p)(1-\omega), (1-b) \wp (1-p)\omega \mbox{ and } \bot \wp p. \\
	\end{align*}
\end{definition}
For a bit $b$, flipping it w.p. $\omega$ can be seen as passing $b$ through a binary symmetric channel $BSC(\omega)$. Having this error in superposition means that we have access to the quantum state.
Our results can be interpreted as follows
\begin{proposition}
	From $\ket{\psi_b} = \sqrt{1 - \omega}\ket{b} + \sqrt{\omega} \ket{1-b}$ it is possible to:
	\begin{enumerate}
		\item Generate $y \Unif BSC(b,\omega)$ simply by measuring $\ket{\psi_b}$.
		\item Generate $y \Unif BEC(b,1 - 2\perpO{\omega})$ by performing unambiguous state discrimination on $\ket{\psi_b}$.
		\item Generate $y \Unif BSEEC(b,(\frac{\perpO{\omega}}{1-p})^{\bot},p)$ for any abort probability $p \in [0,1 - 2\perpO{\omega}]$, by performing partial unambiguous state discrimination on $\ket{\psi_b}$.
	\end{enumerate}
\end{proposition}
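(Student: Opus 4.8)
The plan is to dispatch the three items one at a time, each time reducing to a measurement whose statistics are already recorded in the preliminaries and then matching the induced output distribution to the corresponding channel definition. For Item~1, measuring $\ket{\psi_b} = \sqrt{1-\omega}\ket{b} + \sqrt{\omega}\ket{1-b}$ in the computational basis returns $b$ with probability $1-\omega$ and $1-b$ with probability $\omega$, which is verbatim the sampling rule of $\BSC(b,\omega)$. For Item~2, I would apply Proposition~\ref{Proposition:USD} to the pair $\ket{\psi_0^\omega},\ket{\psi_1^\omega}$, whose overlap is $u = |\braket{\psi_0^\omega}{\psi_1^\omega}| = 2\sqrt{\omega(1-\omega)} = 1 - 2\perpO{\omega}$; the optimal unambiguous POVM returns the correct bit $b$ with probability $1-u = 2\perpO{\omega}$, returns $\bot$ with probability $u = 1 - 2\perpO{\omega}$, and (by completeness of the POVM, since $\tr(F_{1-b}\kb{\psi_b^\omega})=0$) never returns $1-b$, which is exactly $\BEC\!\left(b, 1 - 2\perpO{\omega}\right)$.

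For Item~3, fix $p \in (0, 1-2\perpO{\omega})$; the two endpoints $p=0$ and $p = 1 - 2\perpO{\omega}$ are covered by Items~1 and~2 since $\BSEEC(b,\omega,0) = \BSC(b,\omega)$ and $\BSEEC(b,0,p) = \BEC(b,p)$. Set $\omega' \eqdef \left(\frac{\perpO{\omega}}{1-p}\right)^\perp$. The key fact to check is that, for $q=2$, the map $x \mapsto \perpO{x} = \frac12 - \sqrt{x(1-x)}$ is a decreasing involution of $[0,\frac12]$: it is decreasing there, $\perpO{0} = \frac12$, $\perpO{\frac12}=0$, and writing $s = \sqrt{x(1-x)}$ one gets $\perpO{x}(1-\perpO{x}) = \left(\frac12 - s\right)\left(\frac12 + s\right) = \frac14 - s^2 = \left(\frac12 - x\right)^2$, so $\left(\perpO{x}\right)^\perp = \frac12 - \left(\frac12 - x\right) = x$. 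Since $\omega\in(0,\frac12)$ we have $\perpO{\omega}\in(0,\frac12)$, and as $p$ runs over $(0,1-2\perpO{\omega})$ the quantity $\frac{\perpO{\omega}}{1-p}$ ranges over $(\perpO{\omega},\frac12)\subset(0,\frac12)$; applying the decreasing involution gives $\omega' \in \left(0, \left(\perpO{\omega}\right)^\perp\right) = (0,\omega)\subset(0,\frac12)$, so $\omega'$ satisfies the hypotheses of Proposition~\ref{Proposition:PartialUSD} and moreover $\left(\omega'\right)^\perp = \frac{\perpO{\omega}}{1-p}$. Now I apply the partial unambiguous measurement of Proposition~\ref{Proposition:PartialUSD} to $\ket{\psi_b^\omega}$: with probability $\frac{\perpO{\omega}}{\left(\omega'\right)^\perp} = 1-p$ the post-measurement state is $\ket{\psi_b^{\omega'}}$, and with probability $p$ it is $\ket{2}$, in which case I output $\bot$. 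In the former case, measuring $\ket{\psi_b^{\omega'}}$ in the computational basis yields $b$ with probability $1-\omega'$ and $1-b$ with probability $\omega'$. Hence the overall output is $b$ with probability $(1-p)(1-\omega')$, $1-b$ with probability $(1-p)\omega'$, and $\bot$ with probability $p$, which is precisely a sample from $\BSEEC(b,\omega',p) = \BSEEC\!\left(b, \left(\frac{\perpO{\omega}}{1-p}\right)^\perp, p\right)$.

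There is no genuine obstacle here, since the argument is essentially bookkeeping on top of Propositions~\ref{Proposition:USD} and~\ref{Proposition:PartialUSD}; the one place calling for care is Item~3, namely verifying the involution identity $\left(\perpO{x}\right)^\perp = x$ in the binary case and confirming that the chosen $\omega' = \left(\frac{\perpO{\omega}}{1-p}\right)^\perp$ stays in the admissible range $(0,\frac12)$ with $\omega' < \omega$ throughout $p \in (0, 1-2\perpO{\omega})$, so that Proposition~\ref{Proposition:PartialUSD} really applies and the two boundary values degenerate to the bare computational-basis measurement and to USD, as they should.
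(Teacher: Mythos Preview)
Your proof is correct and follows exactly the approach implicit in the paper: the paper does not give an explicit proof of this proposition, treating it instead as a direct reinterpretation of the computational-basis measurement, Proposition~\ref{Proposition:USD}, and Proposition~\ref{Proposition:PartialUSD}. Your write-up supplies the missing bookkeeping the paper omits, in particular the verification that $x\mapsto\perpO{x}$ is a decreasing involution on $[0,\tfrac12]$ (so that $(\omega')^\perp=\frac{\perpO{\omega}}{1-p}$ and the success probability in Proposition~\ref{Proposition:PartialUSD} is exactly $1-p$) and the range check ensuring $\omega'\in(0,\omega)$; these are precisely the details one needs to make the paper's informal claim rigorous.
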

Notice that the third case generalizes the $2$ first cases by respectively taking $p = 0$ and $p = 1 - 2\perpO{\omega}$. This shows the advantage of having the noise in quantum superposition. It is possible to change the noise from the one coming from a Binary Symmetric Channel to the one coming from a Binary Erasure Channel or a Binary Symmetric with Errors and Erasures Channel. \section{Polynomial time algorithm for $\DPQ$ in the $q$-ary setting}\label{Section:Polynomial}

As we saw in the previous section, unambiguous state discrimination is crucial for polynomial time algorithm for $\DPQ$. While this task is very well understood in the binary case, we do not have any general formula in the $q$-ary setting. Fortunately, the states we consider will have enough structure so that we can fully characterize the optimal unambiguous state discrimination algorithm. We first present this characterization, which is essentially a generalization of the work of~\cite{CB98}. We then use this unambiguous state discrimination in the $q$-ary setting to derive our quantum algorithm for $\DPQ$ in the $q$-ary setting, in the same spirit as what we did in Section~\ref{Section:BinaryPolynomial}.
\subsection{Unambiguous state discrimination in the $q$-ary setting}\label{Section:qaryUSD}

\begin{definition}
	An unambiguous state discrimination measurement associated to some states $\ket{\psi_0},\dots,\ket{\psi_{N-1}}$ is a POVM $\{E_0,\dots,E_{N-1},E_F\}$ (where $E_F$ stands for the failure outcome) s.t. 
	$$ \forall i,j \neq i \in \Iint{0}{N-1}, \ \tr(E_i \kb{\psi_j}) = 0.$$
	To such a POVM, we associate the quantities 
	$P_j \eqdef  \tr(E_j \kb{\psi_j})$ (the probability of correctly guessing $j$ when given $\ket{\psi_j}$, as well as the average success probability $\overline{P_D} \eqdef \frac{1}{N} \sum_{j=0}^{N-1} P_j$.
\end{definition}
The optimal unambiguous measurement is not known when there are more than $2$ states, however it is known in a case where the states we want to distinguish are linearly independent, have the same
{\em a priori} probabilities and are symmetric in the following sense \cite{BKMH97}
\begin{definition}[symmetric states]
	A set $\Cbra{\ket{\psi_0},\cdots,\ket{\psi_{N-1}}}$ in a Hilbert space $\hh$ of dimension $N$ is symmetric if and only if there exists a unitary transformation $U$ of order $N$ on $\hh$ such that for any $i$ and $j$ 
	in $\Iint{0}{N-1}$ we have $\ket{\psi_j} = U^{j-i} \ket{\psi_i}$.
\end{definition}

In such a case, the optimal unambiguous measurement is known \cite{CB98}
\begin{proposition}[Unambiguous State Discrimination of Symmetric States]\label{proposition:USSD}
	Let $\Cbra{\ket{\psi_0},\cdots,\ket{\psi_{N-1}}}$ be a set of $N$ symmetric states associated to a unitary transform $U$. Let $\{E_0,\dots,E_{N-1},E_F\}$ be any unambiguous state discrimination measurement associated to these states and let $P_j$ and $\overline{P_D}$ be the associated success probabilities. $\overline{P_D}$ always satisfies
	\begin{equation}\label{eq:PD}
	\overline{P_D}  \leq  N  \min_{r \in \Iint{0}{N-1}} |c_r|^2 , 
	\end{equation}
where $c_r$ are the coordinates of $\ket{\psi_0}$ in the eigenbasis $\left\{\ket{\gamma_r},r \in \Iint{0}{N-1}\right\}$ of $U$, \ie\ $\ket{\psi_0} = \sum_{i=0}^{N-1} c_r \ket{\gamma_r}$. There is a POVM which meets \eqref{eq:PD} with equality.
\end{proposition}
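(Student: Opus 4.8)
The plan is to diagonalise $U$ and carry out all the bookkeeping in its eigenbasis, where the cyclic symmetry becomes a phase. Since $U$ has order $N$ and the ambient space has dimension $N$, if the states $\ket{\psi_j}$ are to be linearly independent — the only case in which \eqref{eq:PD} is not vacuous, as $\min_r|c_r|^2=0$ exactly when they are dependent — then $U$ must have $N$ distinct eigenvalues, necessarily the $N$-th roots of unity; label the eigenbasis so that $U\ket{\gamma_r}=\zeta^r\ket{\gamma_r}$ with $\zeta\eqdef e^{2i\pi/N}$. Writing $\ket{\psi_0}=\sum_r c_r\ket{\gamma_r}$, symmetry gives $\ket{\psi_j}=U^j\ket{\psi_0}=\sum_r c_r\zeta^{jr}\ket{\gamma_r}$, and since the coordinate matrix factors as a DFT matrix times $\mathrm{diag}(c_r)$, linear independence is equivalent to $c_r\neq 0$ for all $r$.

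Next I would pin down the form of any unambiguous POVM. From $E_i\succeq 0$ and $\tr(E_i\kb{\psi_j})=0$ one gets $E_i\ket{\psi_j}=0$ for every $j\neq i$, so the range of $E_i$ lies in $\perpF{\myspan\{\ket{\psi_j}:j\neq i\}}$, which is one dimensional; hence $E_i=\mu_i\kb{\psi_i^\perp}$ for some $\mu_i\ge 0$ and a unit vector $\ket{\psi_i^\perp}$. Solving $\braket{\psi_0^\perp}{\psi_j}=0$ for $j=1,\dots,N-1$ in the eigenbasis says precisely that the sequence $\left(\overline{(\psi_0^\perp)_r}\,c_r\right)_r$ has vanishing discrete Fourier coefficients at all nonzero frequencies, hence is constant in $r$; this forces $\ket{\psi_0^\perp}=\kappa\sum_r\frac{1}{\overline{c_r}}\ket{\gamma_r}$ with $|\kappa|^{-2}=\sum_r\frac{1}{|c_r|^2}$. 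Because $U$ permutes $\{\ket{\psi_j}\}$ cyclically, one may take $\ket{\psi_i^\perp}=U^i\ket{\psi_0^\perp}$, and then $|\braket{\gamma_r}{\psi_i^\perp}|^2=|\kappa|^2/|c_r|^2$ for all $i,r$.

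Then comes the accounting. By symmetry $\braket{\psi_i^\perp}{\psi_i}=\braket{\psi_0^\perp}{\psi_0}=\overline{\kappa}\,N$, so $P_i=\mu_i|\braket{\psi_i^\perp}{\psi_i}|^2=\mu_i N^2|\kappa|^2$ and $\overline{P_D}=N|\kappa|^2\sum_i\mu_i$. The only remaining constraint is $\sum_i E_i\preceq I$; evaluating it on $\ket{\gamma_r}$ gives $\frac{|\kappa|^2}{|c_r|^2}\sum_i\mu_i\le 1$ for every $r$, hence $\sum_i\mu_i\le\frac{\min_r|c_r|^2}{|\kappa|^2}$, and substituting yields $\overline{P_D}\le N\min_r|c_r|^2$, which is \eqref{eq:PD}. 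For a matching POVM I would take all $\mu_i$ equal to the common maximal value $\mu=\frac{\min_r|c_r|^2}{N|\kappa|^2}$: then $\sum_i\kb{\psi_i^\perp}$ is diagonal in $\{\ket{\gamma_r}\}$ (the off-diagonal terms vanish by $\sum_i\zeta^{i(r-s)}=N\delta_{r,s}$) with $r$-th entry $\frac{N|\kappa|^2}{|c_r|^2}$, so $\sum_i E_i$ has entries $\frac{\min_s|c_s|^2}{|c_r|^2}\le 1$, $E_F\eqdef I-\sum_i E_i\succeq 0$, and $\overline{P_D}=N\min_r|c_r|^2$ is attained.

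The main obstacle I anticipate is the middle step: arguing cleanly that every admissible $E_i$ is rank one and supported exactly on the one-dimensional space $\perpF{\myspan\{\ket{\psi_j}:j\neq i\}}$, and extracting the closed form of $\ket{\psi_0^\perp}$ from the orthogonality conditions. One also has to check that restricting attention to equal weights $\mu_i$ costs no optimality — but this is exactly what the diagonalisation in the equality case shows, since $\sum_i E_i\preceq I$ is saturated uniformly over $r$ only when the $\mu_i$ are balanced. The remaining manipulations (Vandermonde/DFT identities and normalisations) are routine.
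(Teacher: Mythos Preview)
Your argument is correct. Note first that the paper does not itself prove this proposition --- it cites \cite{CB98} and then proves a generalisation (Proposition~\ref{proposition:USD}). The approach it describes, and the one it carries out for the generalisation via Lemma~\ref{lem:averaging}, is an \emph{averaging} argument: one takes an arbitrary optimal unambiguous POVM, averages it over the cyclic group action $E_\bv\mapsto X_\av E_{\bv-\av}X_{-\av}$ to obtain a new optimal POVM in which all the individual success probabilities coincide, and only then computes $E_D=\sum_\bv E_\bv$ in the eigenbasis and reads off $\lmax(E_D)\le 1$.

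Your route to the upper bound is genuinely more direct: you never need to symmetrise. The reason it works is the observation that $|\braket{\gamma_r}{\psi_i^\perp}|^2=|\kappa|^2/|c_r|^2$ is \emph{independent of $i$}, so the necessary diagonal constraint $\bra{\gamma_r}\!\big(\sum_i E_i\big)\!\ket{\gamma_r}\le 1$ already reads $\frac{|\kappa|^2}{|c_r|^2}\sum_i\mu_i\le 1$ and bounds $\sum_i\mu_i$ without any hypothesis that the weights $\mu_i$ are balanced. In the paper's framework this balancing is precisely what the averaging lemma buys; you get it for free from the structure of the reciprocal states in the eigenbasis. The achievability half (take all $\mu_i$ equal, observe $\sum_i\kb{\psi_i^\perp}$ diagonalises via $\sum_i\zeta^{i(r-s)}=N\delta_{r,s}$, check entries) is essentially the same in both approaches. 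Consequently your closing worry that ``restricting attention to equal weights $\mu_i$ costs no optimality'' is moot: your upper bound never used equal weights, so there is nothing to check. One small clarification: the linearly dependent case you dismiss as ``vacuous'' is indeed outside the scope of the proposition --- the surrounding text in the paper (and \cite{CB98}) explicitly assumes linear independence, so your restriction is appropriate.
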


A corollary of this result is obtained by taking the Hilbert space of dimension a prime number $p$ and take 
$U$ as the shift operator $U\ket{x} = \ket{x+1}$ where addition is performed in $\F_p$. It is easy to verify 
that in this case, the maximal average probability of discrimination $\overline{P_D}^{\text{max}}$ is given by
$$
\overline{P_D}^{\text{max}} = p \min_{r \in \F_p} |\FT{f}(r)|^2
$$ 
when $\ket{\psi_0}= \sum_{e \in \F_p} f(e) \ket{e}$. This is a consequence of the fact 
that an eigenbasis of $U$ is given by $\left\{\ket{\QFT{x}},x \in \F_p\right\}$ (this is implied by Proposition \ref{proposition:shift})
and from
$$
\ket{\psi_0} = \sum_{x \in \F_p} \FT{f}(-x) \ket{\QFT{x}}.
$$
The last equation follows from Fact \ref{fact:inverse}.
We will actually use and prove a slightly more general result, where in particular the dimension of the Hilbert space is not prime anymore (in which case we can not apply 
Proposition \ref{proposition:USSD}) 

\begin{proposition}\label{proposition:USD}
Let $\ket{\psi}= \sum_{\yv \in \Fqn} f(\yv) \ket{\yv}$ for some function $f : \Fqn \rightarrow \mathbb{C}$ s.t. $\norm{f}_2 = 1$ and for 
$\bv \in \Fqn$, let $\ket{\psi_\bv}\eqdef X_\bv \ket{\psi}$. When the states $\ket{\psi_\bv}$ are all linearly independent, unambiguous state discrimination of the states 
$\{\ket{\psi_\bv},\;\bv \in \Fqn\}$ is possible and has a maximal average probability of discrimination given by 
$$
\overline{P_D}^{\text{max}} = q^n \min_{\xv \in \Fqn} |\FT{f}(\xv)|^2.
$$
\end{proposition}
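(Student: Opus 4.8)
The plan is to prove the formula by a covariance (symmetrization) argument that collapses the optimization to a single real parameter, followed by a short computation in the Fourier basis $\{\ket{\QFT{\xv}}\}_{\xv\in\Fqn}$. Write $N=q^n$. Combining Fact~\ref{fact:inverse} with Proposition~\ref{proposition:shift} gives, for every $\bv\in\Fqn$,
$$\ket{\psi_\bv}=X_\bv\ket{\psi}=\sum_{\xv\in\Fqn}\FT f(-\xv)\,\chi_\xv(-\bv)\,\ket{\QFT{\xv}}.$$
The first observation is that linear independence of the $N$ states $\{\ket{\psi_\bv}\}$ forces $\FT f(\xv)\neq 0$ for all $\xv$: otherwise every $\ket{\psi_\bv}$ has a zero coordinate at the same position in the Fourier basis, so all $N$ of them lie in a common $(N-1)$-dimensional subspace. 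In particular the stated quantity is well defined, and the $N$ linearly independent states form a basis of the ambient space; hence for each $\bv$ the space $\bigl(\myspan\{\ket{\psi_\cv}:\cv\neq\bv\}\bigr)^\perp$ is one-dimensional, spanned by a unit vector $\ket{v_\bv}$, and any unambiguous POVM $\{E_\bv\}_\bv\cup\{E_F\}$ must have $E_\bv=p_\bv\kb{v_\bv}$ with $p_\bv\geq 0$.

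Next I would symmetrize. Since $X_\bv$ is unitary and maps $\{\ket{\psi_\cv}:\cv\neq\cv_0\}$ onto $\{\ket{\psi_\cv}:\cv\neq\cv_0+\bv\}$, it sends $\ket{v_{\cv_0}}$ to $\ket{v_{\cv_0+\bv}}$ up to a phase, so $X_\bv\kb{v_{\cv-\bv}}X_\bv^\dagger=\kb{v_\cv}$. Consequently, given any unambiguous POVM $\{E_\cv\}$ with success probabilities $p_\cv$, the family $E^{(\bv)}_\cv\eqdef X_\bv E_{\cv-\bv}X_\bv^\dagger=p_{\cv-\bv}\kb{v_\cv}$ is again an unambiguous POVM (feasible, since $\sum_\cv E^{(\bv)}_\cv=X_\bv(\sum_{\cv'}E_{\cv'})X_\bv^\dagger\preceq I$) with the same average success probability $\overline{P_D}$; hence so is the average $\bar E_\cv\eqdef\frac1N\sum_\bv E^{(\bv)}_\cv=p\,\kb{v_\cv}$, with $p\eqdef\frac1N\sum_\bv p_\bv$. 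It therefore suffices to maximize $\overline{P_D}$ over covariant unambiguous POVMs, i.e. over the single parameter $p\geq 0$ subject to $\sum_\cv\bar E_\cv\preceq I$.

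Then comes the computation. Writing $\ket{v_\zerov}=\sum_\xv d_\xv\ket{\QFT{\xv}}$, the conditions $\braket{v_\zerov}{\psi_\cv}=0$ for $\cv\neq\zerov$ read $\sum_\xv\overline{d_\xv}\,\FT f(-\xv)\,\chi_\xv(-\cv)=0$; multiplying by $\overline{\chi_{\xv'}(-\cv)}$, summing over $\cv$, and invoking the orthogonality relation of Proposition~\ref{Proposition;Characters} shows this forces $\overline{d_\xv}\,\FT f(-\xv)$ to equal a constant $\mu$ independent of $\xv$, whence $d_\xv=\mu/\overline{\FT f(-\xv)}$ and, from $\norm{d}_2=1$, $|\mu|^2=\bigl(\sum_\xv|\FT f(\xv)|^{-2}\bigr)^{-1}$. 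Then $\braket{v_\zerov}{\psi_\zerov}=\sum_\xv\overline{d_\xv}\FT f(-\xv)=N\mu$, so each $P_\cv=p\,|\braket{v_\zerov}{\psi_\zerov}|^2=pN^2|\mu|^2$ and $\overline{P_D}=pN^2|\mu|^2$. On the other hand, using $\sum_\bv\chi_\xv(-\bv)\overline{\chi_{\xv'}(-\bv)}=N\delta_{\xv,\xv'}$ gives $\sum_\bv X_\bv\kb{v_\zerov}X_\bv^\dagger=N\sum_\xv|d_\xv|^2\kb{\QFT{\xv}}$, so $\sum_\cv\bar E_\cv=pN\sum_\xv|d_\xv|^2\kb{\QFT{\xv}}\preceq I$ is equivalent to $pN\max_\xv|d_\xv|^2\leq 1$, i.e. $p\leq\min_\xv|\FT f(\xv)|^2/(N|\mu|^2)$. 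Taking $p$ at this boundary yields $\overline{P_D}=N\min_\xv|\FT f(\xv)|^2=q^n\min_{\xv\in\Fqn}|\FT f(\xv)|^2$; this bounds $\overline{P_D}$ for every unambiguous POVM (via the symmetrization step) and is attained by the covariant POVM just described, which also exhibits that USD is possible since this quantity is strictly positive.

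The step I expect to be the crux is identifying $\ket{v_\zerov}$: one must notice that the orthogonality constraints pin $d_\xv$ down to a scalar multiple of $1/\overline{\FT f(-\xv)}$ rather than leaving it a free direction over which to optimize — this is precisely what collapses the problem to a one‑parameter optimization and makes the converse and the achievability drop out of a single calculation. The remaining ingredients (the two applications of character orthogonality from Proposition~\ref{Proposition;Characters}, via Proposition~\ref{proposition:shift}, and the normalization bookkeeping) are routine.
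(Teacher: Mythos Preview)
Your proof is correct and follows essentially the same route as the paper: a symmetrization/averaging argument reducing to covariant POVMs (the paper's Lemma~\ref{lem:averaging}), identification of the reciprocal states in the Fourier basis (the paper's \eqref{eq:reciprocal_state}, your computation of $d_\xv$), and diagonalization of $\sum_\bv E_\bv$ in $\{\ket{\QFT{\xv}}\}$ to read off the constraint $\lmax\leq 1$. Your write-up is slightly more streamlined in that you use upfront the rank-one form $E_\bv=p_\bv\kb{v_\bv}$ (valid here since the $q^n$ independent states span the ambient $q^n$-dimensional space) and you make explicit why linear independence forces $\FT f(\xv)\neq 0$ everywhere, but the substance is the same.
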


The proof of this statement borrows many ideas from \cite{CB98}. Before giving it, we have to recall a few points (see \cite[\S\ II]{CB98} for more details) about unambiguous state discrimination.
\paragraph{Unambiguous state discrimination of linearly independent states.}
Let $\Hc$ be the Hilbert space spanned by the $\ket{\psi_\bv}$'s for $\bv$ ranging over $\Fqn$. An optimal (leading to the maximal average probability of discrimination) POVM $\{E_\bv, \bv \in \Fqn\} \cup \{E_F\}$ distinguishing unambiguously all  the $\ket{\psi_\bv}$, where $E_{\bv}$ detects unambiguously $\ket{\psi_\bv}$ for all $\bv$ in $\Fqn$, can be chosen of the form
\begin{equation}
E_{\bv} = \frac{P_\bv}{|\braket{\psi_\bv^\perp}{\psi_\bv} |^2} \ket{\psi_\bv^\perp}\bra{\psi_\bv^\perp}
\end{equation}
where $P_\bv$ is the probability of detecting $\ket{\psi_\bv}$ given that the input state was of this form and the $\{\ket{\psi_\bv^\perp},\;\bv \in \Fqn\}$ are the reciprocal states of the 
$\ket{\psi_\bv}$'s. $\ket{\psi_\bv^\perp}$ is the state (unique up to a irrelevant phase) which belongs to $\Hc$ and is orthogonal to all other $\ket{\psi_\av}$ for $\av$ ranging over 
$\Fqn \setminus \{\bv\}$. The average probability of discrimination is then
$$
\overline{P_D} = \frac{1}{q^n} \sum_{\bv \in \Fqn} P_\bv.
$$
Let 
$$E_D \eqdef \sum_{\bv \in \Fqn} E_{\bv}$$
Since $E_D+E_F = \un$ and $E_F$ should be a positive semi-definite operator, it is readily verified that an optimum POVM (i.e. one that gives the maximum average probability of discrimination)
has necessarily its maximum eigenvalue $\lmax(E_D)$ equal to $1$. From these considerations, we see that if we bring in $\Am_\bv \eqdef \frac{1}{|\braket{\psi_\bv^\perp}{\psi_\bv} |^2} \ket{\psi_\bv^\perp}\bra{\psi_\bv^\perp}$ then the problem of maximizing $\overline{P_D}$ is nothing but the problem of maximizing 
$\frac{1}{q^n} \sum_{\bv \in \Fqn} P_\bv$ given that $0 \leq P_\bv \leq 1$ for all $\bv$ in $\Fqn$ and $\un - \sum_{\bv \in \Fqn} P_{\bv} \Am_\bv \succeq 0$ (\ie \ is a positive semi-definite matrix).
No general solution to this problem is known, with the notable exception of the symmetric states case given above and our case given in Proposition \ref{proposition:USD}. An averaging argument can be used in such a case to show that actually in the optimal solution all the $P_j$ can be chosen to be equal which makes the optimization trivial.   

\paragraph{An averaging argument.}
The proof of Proposition \ref{proposition:USSD} of \cite{CB98} relies essentially on an averaging argument which is used to show that there is an optimal POVM that satisfies a certain kind of invariance relation and 
whose individual discrimination probabilities are all the same.
We show that a similar result also holds in our case 
\begin{lemma}\label{lem:averaging}
Assume that an optimal POVM is $\{E_\bv,\bv \in \Fqn\} \cup \{E_F\}$. Denote by $\overline{P_D}^{\text{max}}$ its average probability of discrimination.
 Define for all $\bv$ in $\Fqn$, 
$E_\bv^{\text{ave}} \eqdef \frac{1}{q^n} \sum_{\av \in \Fqn} X_\av E_{\bv-\av} X_{-\av}$. We also let $E_D^{\text{ave}} \eqdef \sum_{\bv \in \Fqn} E_\bv^{\text{ave}}$ and 
$E_F^{\text{ave}} \eqdef \un - E_D^{\text{ave}}$.
Then 
$\{E^{\text{ave}}_\bv,\bv \in \Fqn\} \cup \{E_F^{\text{ave}}\}$ is also an optimal POVM that satisfies for all $\av$ in $\Fqn$ the invariance relation
$X_{\av} E^{\text{ave}}_\bv X_{-\av} = E^{\text{ave}}_\bv$. Moreover for this new POVM, the discrimination probability $P_\bv^{\text{ave}} \eqdef 
\bra{\psi_\bv} E^{\text{ave}}_\bv\ket{\psi_\bv}$ of $\ket{\psi_\bv}$ is equal to the maximal average discrimination 
probability $\overline{P_D}^{\text{max}}$ for all $\bv$ in $\Fqn$.
\end{lemma}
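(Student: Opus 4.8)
The plan is to use the classical \emph{group-averaging (symmetrization)} argument over the abelian shift group $\{X_\av:\av\in\Fqn\}$, whose action is compatible with the state family: since $\ket{\psi_\bv}=X_\bv\ket{\psi}$ and $X_\av X_\cv=X_{\av+\cv}$, we have $X_{-\av}\ket{\psi_\bv}=\ket{\psi_{\bv-\av}}$ for all $\av,\bv$. The point is that averaging \emph{any} optimal unambiguous POVM over this action again gives an unambiguous POVM, does not change the average discrimination probability, and produces a family that is covariant under the group — which then forces all individual discrimination probabilities to be equal. I would carry this out in four short, essentially computational, steps.

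First I would check that $\{E_\bv^{\text{ave}},\bv\in\Fqn\}\cup\{E_F^{\text{ave}}\}$ is a POVM. Each $E_\bv^{\text{ave}}=\frac1{q^n}\sum_\av X_\av E_{\bv-\av}X_{-\av}$ is a non-negative combination of unitary conjugates of the $E_\cv\succeq 0$, hence $E_\bv^{\text{ave}}\succeq 0$, and $\sum_\bv E_\bv^{\text{ave}}+E_F^{\text{ave}}=\un$ holds by definition of $E_F^{\text{ave}}$. The only nontrivial point is $E_F^{\text{ave}}\succeq 0$, i.e.\ $\un-E_D^{\text{ave}}\succeq 0$: rewriting $E_D^{\text{ave}}=\frac1{q^n}\sum_\av X_\av\bigl(\sum_\bv E_{\bv-\av}\bigr)X_{-\av}=\frac1{q^n}\sum_\av X_\av E_D X_{-\av}$ with $E_D=\sum_\bv E_\bv$, and using $\un-E_D\succeq 0$ (which is $E_F\succeq 0$ for the original POVM) together with the fact that conjugation by the unitaries $X_\av$ and convex averaging both preserve the relation $\un-\,\cdot\,\succeq 0$, we get $\un-E_D^{\text{ave}}\succeq 0$.

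Next, unambiguity: for $\bv'\neq\bv$, by cyclicity of the trace and $X_{-\av}\ket{\psi_{\bv'}}=\ket{\psi_{\bv'-\av}}$,
\[
\tr\bigl(E_\bv^{\text{ave}}\,\kb{\psi_{\bv'}}\bigr)=\frac1{q^n}\sum_{\av\in\Fqn}\tr\bigl(E_{\bv-\av}\,\kb{\psi_{\bv'-\av}}\bigr)=0 ,
\]
since $\bv-\av\neq\bv'-\av$ for every $\av$ and the original POVM is unambiguous. Covariance follows by reindexing $\av\mapsto\av+\cv$ in the defining sum: for all $\cv\in\Fqn$, $X_\cv E_\bv^{\text{ave}}X_{-\cv}=\frac1{q^n}\sum_\av X_{\av+\cv}E_{\bv-\av}X_{-(\av+\cv)}=E_{\bv+\cv}^{\text{ave}}$; summing over $\bv$ then yields the genuine invariance $X_\cv E_D^{\text{ave}}X_{-\cv}=E_D^{\text{ave}}$, hence $X_\cv E_F^{\text{ave}}X_{-\cv}=E_F^{\text{ave}}$, which is the invariance relation of the statement. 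Finally, again using $X_{-\av}\ket{\psi_\bv}=\ket{\psi_{\bv-\av}}$,
\[
P_\bv^{\text{ave}}=\bra{\psi_\bv}E_\bv^{\text{ave}}\ket{\psi_\bv}=\frac1{q^n}\sum_{\av\in\Fqn}\bra{\psi_{\bv-\av}}E_{\bv-\av}\ket{\psi_{\bv-\av}}=\frac1{q^n}\sum_{\cv\in\Fqn}P_\cv=\overline{P_D}^{\text{max}} ,
\]
independently of $\bv$; averaging over $\bv$ shows the averaged POVM again attains $\overline{P_D}^{\text{max}}$, hence is optimal, which proves the lemma.

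The main obstacle is not conceptual but lies in keeping the index translations consistent throughout and in the one genuinely analytic point, the positivity of $E_F^{\text{ave}}$, handled above by passing through $\un-E_D\succeq 0$ and monotonicity of the semidefinite order under conjugation and averaging. One may additionally note (useful for the rest of the proof of Proposition~\ref{proposition:USD}) that each $E_\bv^{\text{ave}}$ remains proportional to the rank-one projector $\kb{\psi_\bv^\perp}$ onto the reciprocal state, because $X_\av\ket{\psi_{\bv-\av}^\perp}$ lies in $\Hc$ and is orthogonal to all $\ket{\psi_\cv}$ with $\cv\neq\bv$, so it equals $\ket{\psi_\bv^\perp}$ up to a phase; this reduces the subsequent maximization of $\overline{P_D}$ to an optimization over a single scalar.
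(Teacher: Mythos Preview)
Your proof is correct and follows essentially the same group-averaging approach as the paper: both conjugate the optimal POVM by all shifts $X_\av$, average, and verify that the result is again an unambiguous POVM with all individual discrimination probabilities equal to $\overline{P_D}^{\text{max}}$. Your argument for $E_F^{\text{ave}}\succeq 0$ via averaging the relation $\un-X_\av E_D X_{-\av}\succeq 0$ is equivalent to the paper's use of convexity of $\lambda_{\max}$, and you correctly identify that the literal relation $X_\av E_\bv^{\text{ave}}X_{-\av}=E_\bv^{\text{ave}}$ in the statement is really covariance $X_\av E_\bv^{\text{ave}}X_{-\av}=E_{\bv+\av}^{\text{ave}}$, with genuine invariance holding only for $E_D^{\text{ave}}$ and $E_F^{\text{ave}}$ (the paper's own proof does not address this point).
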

\begin{proof}
Clearly for all $\av$ in $\Fqn$, the POVM $\{X_\av E_\bv X_{-\av},\bv \in \Fqn\} \cup \{X_\av E_F X_{-\av}\}$ gives an unambiguous discrimination for the set of states 
$\{\ket{\psi_\bv},\bv \in \Fqn\}$. We call this POVM, the original POVM shifted by $\av$. However now the operator $X_\av E_\bv X_{-\av}$ detects the state $\ket{\psi_{\av+\bv}}$. Let $P_\bv$ be the discrimination probability of $\ket{\psi_{\bv}}$ by the operator 
$E_\bv$, that is $P_\bv = \bra{\psi_\bv}E_\bv\ket{\psi_\bv}$ and $P_\bv^\av$ be the discrimination probability of the same state, but this time by the POVM $\{X_\av E_\bv X_{-\av},\bv \in \Fqn\} \cup \{X_\av E_F X_{-\av}\}$. Since $\ket{\psi_{\bv}}$ is now detected by $X_\av E_{\bv-\av} X_{-\av}$, we have for all $\bv$ and $\av$ in $\Fqn$
\begin{equation}\label{eq:discrimination_b}
P_{\bv}^\av = P_{\bv-\av}.
\end{equation}
From these considerations, we clearly see that $E_\bv^{\text{ave}} = \frac{1}{q^n} \sum_{\av \in \Fqn}X_\av E_{\bv-\av} X_{-\av}$ detects $\ket{\psi_{\bv}}$ with probability 
$P_\bv^{\text{ave}} = \frac{1}{q^n} \sum_{\av \in \Fqn} P_{\bv-\av} = \overline{P}_D$. However, we also have to show that $\{E^{\text{ave}}_\bv,\bv \in \Fqn\} \cup \{E_F^{\text{ave}}\}$ defines a POVM. 
All the $E_\bv^{\text{ave}}$ are 
clearly positive semi-definite, it remains to check that $E_F^{\text{ave}} \eqdef \un - E_D^{\text{ave}}$ is also positive semi-definite.
For this, we observe that
\begin{eqnarray*}
E_D^{\text{ave}} &\eqdef &\sum_{\bv \in \Fqn} E_\bv^{\text{ave}}\\
& = & \frac{1}{q^n} \sum_{\bv \in \Fqn} \sum_{\av \in \Fqn} X_\av E_{\bv-\av} X_{-\av}\\
& = & \frac{1}{q^n} \sum_{\bv \in \Fqn} \sum_{\av \in \Fqn} X_\av E_{\bv} X_{-\av}\\
& = & \frac{1}{q^n}\sum_{\av \in \Fqn} X_\av E_D X_{-\av}
\end{eqnarray*}
where $E_D \eqdef \sum_{\bv \in \Fqn} E_\bv$. By convexity of the maximum eigenvalue on the space of Hermitian operators on $\hh$ we have
\begin{equation}
\label{eq:convexity} 
\lmax(E_D^{\text{ave}}) \leq \frac{1}{q^n} \sum_{\av \in \Fqn} \lmax(E_D^\av)
\end{equation}
where $E_D^\av = \sum_{\bv \in \Fqn} X_\av E_D X_{-\av}$.
The shifted POVM by $\av$ is indeed a POVM and we have therefore $\lmax(E_D^\av) \leq 1$. This together with \eqref{eq:convexity} shows that $\lmax(E_D^{\text{ave}})\leq 1$ and 
that therefore $E_F^{\text{ave}}= \un - E_D^{\text{ave}}$ is indeed positive semi-definite.
\end{proof}

\paragraph{Choosing the appropriate basis.}
The appropriate basis which simplifies a lot the computation is the common diagonalization basis of all the $X_\bv$'s. It is given by the ``character'' basis
$\{\ket{\QFT{\xv}}, \xv \in \Fqn\}$ (see Proposition \ref{proposition:shift}) and we have
\begin{equation}\label{eq:eigenvalue_shift}
X_\bv \ket{\QFT{\xv}} = \chi_{\xv}(-\bv) \ket{\QFT{\xv}}.
\end{equation}
From this, we deduce that for all $\bv$ in $\Fqn$ we have
\begin{equation}
X_\bv = \sum_{\xv \in \Fqn} \chi_{\xv}(-\bv) \ket{\QFT{\xv}}\bra{\QFT{\xv}}
\end{equation}
If we express $\ket{\psi_\zerom}$ in this basis, we obtain
$$
\ket{\psi_\zerom}= \sum_{\xv \in \Fqn} c_{\xv} \ket{\QFT{\xv}}
$$
then all the other ones are given by
\begin{equation}\label{eq:psib}
\ket{\psi_\bv}= X_\bv \ket{\psi_\zerom} =  \sum_{\xv \in \Fqn} c_{\xv} \chi_{\xv}(-\bv) \ket{\QFT{\xv}}.
\end{equation}
It is readily verified that the reciprocal states are given by
\begin{equation}
\label{eq:reciprocal_state}
\ket{\psi_\bv^\perp}= \frac{1}{\sqrt{Z}} \sum_{\xv \in \Fqn} \frac{1}{\overline{c_\xv}}\chi_{\xv}(-\bv) \ket{\QFT{x}}
\end{equation}
where $Z = \sum_{\xv \in \Fqn} |c_\xv|^{-2}$. Indeed, we observe that for any $\av$ and $\bv$ in $\Fqn$ we have
\begin{align}
\braket{\psi_\av^\perp}{\psi_\bv}  = \frac{1}{\sqrt{Z}} \sum_{\xv \in \Fqn} \overline{\chi_{\xv}(-\av)} \chi_{\xv}(-\bv) = \frac{1}{\sqrt{Z}} \sum_{\xv \in \Fqn} \chi_{\xv}(\av-\bv) = \frac{q^n}{\sqrt{Z}} \delta(\av,\bv) \label{eq:scp_state_and_reciprocal}
\end{align}
where $\delta(\xv,\yv)$ is the Kronecker function which is equal to $1$ iff $\xv=\yv$ and to $0$ otherwise.

We have now all the tools we need to prove Proposition \ref{proposition:USD}.
\begin{proof}[Proof of Proposition \ref{proposition:USD}]
From Lemma \ref{lem:averaging} we can choose the $E_\bv$ of the optimal POVM as
\begin{equation}
\label{eq:Eb}
E_\bv =  \frac{\overline{P_D}}{|\braket{\psi_\bv^\perp}{\psi_\bv} |^2} \ket{\psi_\bv^\perp}\bra{\psi_\bv^\perp}.
\end{equation}
By \eqref{eq:scp_state_and_reciprocal} we know that
$|\braket{\psi_\bv^\perp}{\psi_\bv} |^2 =  \frac{q^{2n}}{Z}$, and therefore by plugging 
this expression in \eqref{eq:Eb} and using \eqref{eq:psib} and \eqref{eq:reciprocal_state} we obtain
\begin{eqnarray*}
E_\bv & = & \frac{Z}{q^{2n}} \frac{\overline{P_D}}{Z} \sum_{\substack{\xv \in \Fqn\\\yv\in \Fqn}}  \frac{1}{\overline{c_\xv}c_\yv} \chi_\xv(-\bv)\overline{\chi_\yv(-b)}\ket{\QFT{\xv}}\bra{\QFT{\yv}}\\
& = & \frac{\overline{P_D}}{q^{2n}} \sum_{\substack{\xv \in \Fqn\\\yv\in \Fqn}}  \frac{1}{\overline{c_\xv}c_\yv} \chi_\bv(-\xv)\chi_\bv(\yv)\ket{\QFT{\xv}}\bra{\QFT{\yv}}\\
& = & \frac{\overline{P_D}}{q^{2n}} \sum_{\substack{\xv \in \Fqn\\\yv\in \Fqn}}  \frac{1}{\overline{c_\xv}c_\yv} \chi_\bv(\yv-\xv)\ket{\QFT{\xv}}\bra{\QFT{\yv}}
\end{eqnarray*}
From this we infer that
\begin{eqnarray*}
E_D & = & \sum_{\bv \in \Fqn} E_\bv \\
& = &  \frac{\overline{P_D}}{q^{2n}} \sum_{\bv \in \Fqn} \sum_{\substack{\xv \in \Fqn\\\yv\in \Fqn}}  \frac{1}{\overline{c_\xv}c_\yv} \chi_\bv(\yv-\xv)\ket{\QFT{\xv}}\bra{\QFT{\yv}}\\
& = & \frac{\overline{P_D}}{q^{2n}} \sum_{\substack{\xv \in \Fqn\\\yv\in \Fqn}}\sum_{\bv \in \Fqn}\frac{1}{\overline{c_\xv}c_\yv} \chi_\bv(\yv-\xv)\ket{\QFT{\xv}}\bra{\QFT{\yv}}\\
& = & \frac{\overline{P_D}}{q^{n}} \sum_{\xv \in \Fqn} \frac{1}{|c_\xv|^2} \ket{\QFT{\xv}}\bra{\QFT{\xv}},
\end{eqnarray*}
where in the last line we used that $\sum_{\bv \in \Fqn} \chi_\bv(\yv-\xv)=q^n \delta(\xv,\yv)$ by Proposition \ref{Proposition;Characters}.
Since the $\ket{\QFT{\xv}}\bra{\QFT{\xv}}$'s form an orthonormal set of projectors we have that $\lmax(E_D) = \frac{\overline{P_D}}{q^n \min_{\xv \in \Fqn}|c_\xv|^2}$.
From the fact that we should have $\lmax(E_D) \leq 1$ in order $E_F$ to be positive semi-definite, we have
$$
\overline{P_D} \leq q^n \min_{\xv \in \Fqn} |c_\xv|^2.
$$
Clearly the optimum is attained when we have equality here and therefore
\begin{eqnarray*}
\overline{P_D}^{\text{max}} & = & q^n \min_{\xv \in \Fqn} |c_\xv|^2\\
& = & q^n \min_{\xv \in \Fqn} |\QFT{f}(\xv)|^2
\end{eqnarray*}
where we used Fact \ref{fact:inverse} for the last point which gives $c_\xv=\QFT{f}(-\xv)$.
\end{proof}

\begin{remk}
It is readily seen that the two crucial ingredients of the proof are that (i) we can take an ``average'' of an optimal solution to show that there is an optimal
solution where all states are discriminated with the same probability, (ii) a basis which simplifies the computation. (i) holds in a more general case where
the set of states is of the form $\{U \ket{\psi}, U \in G\}$ where $G$ is a finite group of unitaries. On top of that, (ii) holds for instance if the group $G$ is Abelian, the nice basis
is then provided by the common diagonalization basis of the $U$'s. It other words, it is straightforward  to generalize  Proposition \ref{proposition:USSD} in the case where the set of states 
is of the form $\{U \ket{\psi}, U \in G\}$ where $G$ is a finite Abelian group.
\end{remk}

\subsection{Quantum polynomial time algorithm for $\DPQ$ in the $q$-ary setting}
The goal of the previous subsection was to extend unambiguous state discrimination to our $q$-ary setting. When we apply Proposition \ref{proposition:USD} in our case we obtain
\begin{proposition}[Unambiguous state discrimination, $q$-ary case]\label{Proposition:USDFinal} 
	Let $\omega \le \frac{q-1}{q}$. For each $a \in \F_q$, we define $\ket{\psi_a} = \sqrt{1-\omega} \ket{a} + \sum_{b \neq a} \sqrt{\frac{\omega}{q-1}} \ket{b}$. There exists a POVM $\{\{E_a\}_{a \in \F_q},E_F\}$ s.t.
	\begin{align*}
	\forall a \in \F_q, \ \tr(E_a \kb{\psi_a}) & \eqdef \Pusd = {\frac{q \cdot \omega^\perp}{q-1}}\\
	\forall a,b \neq a \in \F_q, \ \tr(E_b \kb{\psi_a}) & = 0 \\
	\end{align*}
\end{proposition}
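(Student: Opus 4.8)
The plan is to recognize this statement as the one-letter ($n=1$) specialization of Proposition~\ref{proposition:USD}. First I would note that if we set $\ket{\psi_0} = \sqrt{1-\omega}\ket{0} + \sum_{\alpha \in \F_q^*}\sqrt{\frac{\omega}{q-1}}\ket{\alpha}$, then $\ket{\psi_a} = X_a\ket{\psi_0}$ for every $a \in \F_q$, so the family $\{\ket{\psi_a}\}_{a \in \F_q}$ is exactly the family $\{\ket{\psi_\bv}\}$ appearing in Proposition~\ref{proposition:USD}, taken with $f : \F_q \to \Comp$ defined by $f(0) = \sqrt{1-\omega}$ and $f(\alpha) = \sqrt{\frac{\omega}{q-1}}$ for $\alpha \neq 0$ (so that $\norm{f}_2 = 1$). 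The orthogonality requirement $\tr(E_b\kb{\psi_a}) = 0$ for $b \neq a$ is built into the definition of an unambiguous discrimination POVM, so all that needs to be shown is that such a POVM exists and can be chosen with $\tr(E_a\kb{\psi_a}) = \frac{q\,\omega^\perp}{q-1}$ for every $a$.

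Before invoking Proposition~\ref{proposition:USD}, I would dispose of the boundary case $\omega = \frac{q-1}{q}$, which is degenerate: there $\ket{\psi_a} = \frac{1}{\sqrt q}\sum_{b \in \F_q}\ket{b}$ is the same state for all $a$, and $\omega^\perp = 0$ by Lemma~\ref{lemma:QFTpsi}, so the claimed probability $\frac{q\,\omega^\perp}{q-1}$ is $0$ and the trivial POVM $E_a = 0$, $E_F = \un$ does the job. Henceforth assume $\omega < \frac{q-1}{q}$.

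Next I would compute the Fourier side. Since $\ket{\psi_0}$ is precisely the state $\ket{\psi}$ of Lemma~\ref{lemma:QFTpsi} with crossover probability $\tau = \omega$, that lemma gives $\QFTt\ket{\psi_0} = \sqrt{1-\omega^\perp}\ket{0} + \sum_{\alpha \in \F_q^*}\sqrt{\frac{\omega^\perp}{q-1}}\ket{\alpha}$, i.e.\ $\FT{f}(0) = \sqrt{1-\omega^\perp}$ and $\FT{f}(\alpha) = \sqrt{\frac{\omega^\perp}{q-1}}$ for $\alpha \neq 0$. Because $\tau \mapsto \tau^\perp$ is a decreasing bijection of $\left[0,\frac{q-1}{q}\right]$ onto itself, we have $0 < \omega^\perp \le \frac{q-1}{q} < 1$, so every $\FT{f}(x)$ is nonzero. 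Expressing the states in the character basis $\{\ket{\QFT{x}}\}_{x \in \F_q}$, formula~\eqref{eq:psib} reads $\ket{\psi_a} = \sum_{x \in \F_q}\FT{f}(-x)\,\chi_x(-a)\,\ket{\QFT{x}}$, so the $q \times q$ coordinate matrix of the $\ket{\psi_a}$'s is the product of the character matrix $\left(\chi_x(-a)\right)_{a,x}$ (invertible by orthogonality of characters, Proposition~\ref{Proposition;Characters}) with the diagonal matrix $\mathrm{diag}(\FT{f}(-x))$, which has nonzero diagonal; hence the $\ket{\psi_a}$ are linearly independent and Proposition~\ref{proposition:USD} applies.

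Finally, Proposition~\ref{proposition:USD} produces a valid unambiguous discrimination POVM with maximal average success probability $\overline{P_D}^{\text{max}} = q\min_{x \in \F_q}|\FT{f}(x)|^2$, and, through the averaging argument of Lemma~\ref{lem:averaging} that underlies it, this optimal POVM can be taken shift-invariant, so that each individual probability $\tr(E_a\kb{\psi_a})$ equals the average $\overline{P_D}^{\text{max}}$. It then remains to identify the minimum: the only two values of $|\FT{f}(x)|^2$ are $1 - \omega^\perp$ and $\frac{\omega^\perp}{q-1}$, and $\frac{\omega^\perp}{q-1} \le 1 - \omega^\perp$ amounts to $\omega^\perp \le \frac{q-1}{q}$, which is true; hence $\overline{P_D}^{\text{max}} = \frac{q\,\omega^\perp}{q-1} = \Pusd$, as claimed. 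The only mildly delicate points are the linear-independence check and pinning down which of the two Fourier values is the minimum (plus the degenerate boundary case); the substance is carried entirely by Proposition~\ref{proposition:USD} and Lemma~\ref{lemma:QFTpsi}.
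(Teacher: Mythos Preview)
Your proof is correct and follows the same approach as the paper: recognize $\ket{\psi_a} = X_a\ket{\psi_0}$, compute $\FT{f}$ via Lemma~\ref{lemma:QFTpsi}, identify the minimum Fourier value, and invoke Proposition~\ref{proposition:USD} with $n=1$. You are in fact more careful than the paper on three points: you treat the degenerate case $\omega = \frac{q-1}{q}$ separately, you explicitly verify the linear-independence hypothesis of Proposition~\ref{proposition:USD}, and you cite Lemma~\ref{lem:averaging} to justify that each individual success probability (not just the average) equals $\overline{P_D}^{\text{max}}$, all of which the paper leaves implicit.
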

Notice that since $\{\{E_a\}_{a \in \F_q},E_F\}$ is a POVM, this implies for each $a \in \F_q$ $\tr(E_F \kb{\psi_a}) = 1 - \Pusd = 1 - {\frac{q \cdot \omega^\perp}{q-1}} $
\begin{proof}
	 We define $\ket{\psi} = \sum_{x \in \F_q} f(x)\ket{x}$ with $f(0) = \sqrt{1-\omega}$ and $f(x) = \sqrt{\frac{\omega}{q-1}}$ for $x \in \F_q^*$. With this definition, $\ket{\psi_a} = X_a \ket{\psi}$. As computed in Lemma~\ref{lemma:QFTpsi}, we have 
	$$ \QFT{f}(0) = \sqrt{1 - \omega^\perp} \quad ; \quad \QFT{f}(y) = \sqrt{\frac{\omega^\perp}{q-1}} \ \textrm{ for } y \in \F_q^*.$$
	with $ \omega^\perp = \frac{\Rbra{\sqrt{(q-1)(1-\omega)} - \sqrt{\omega}}^{2}}{q}.$ One can check that for $\omega \in [0,\frac{q-1}{q}]$, we have $\hat{f}(0) \ge \hat{f}(y)$ for $y \in \F_q^*$. We use Proposition~\ref{proposition:USD} with $n=1$ to immediately get  
	$$ \Pusd = q \cdot \min_y |\hat{f}(y)|^2 = {\frac{q \cdot \omega^\perp}{q-1}}$$
\end{proof}

It also turns that this operation can be implemented efficiently in poly-log time (in $q$) as shown by
\begin{proposition}\label{proposition:efficientUSD}
Consider the unitary $U$ acting on  $\ket{\psi_a}\ket{0}$ as
\begin{align*}
	U \ket{\QFT{0}}\ket{0} & = \ket{\QFT{0}}\left(u\ket{0} + \sqrt{1 - u^2}\ket{1}\right) & \textrm{ with } u = \sqrt{\frac{\omega^\perp}{(1-\omega^\perp)(q-1)}} \\
	U \ket{\QFT{\alpha}}\ket{0} & = \ket{\QFT{\alpha}}\ket{0}  & \forall \alpha \in \F_q^*
	\end{align*}
	With our choice of function $f$, the above unambiguous state discrimination quantum measurement can be done by applying $U$ on $\ket{\psi_\alpha}\ket{0}$ and 
	and then measuring the output state in the computational basis. 
	This can be done in time $O(\polylog(q))$.
\end{proposition}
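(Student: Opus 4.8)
The plan is to prove two claims separately: (i) that the stated procedure — adjoin an ancilla qubit in state $\ket{0}$, apply $U$, measure both registers in the computational basis, reporting the data register when the ancilla reads $\ket{0}$ and declaring failure when it reads $\ket{1}$ — realizes a POVM with exactly the guarantees of Proposition~\ref{Proposition:USDFinal}; and (ii) that $U$, extended to a genuine unitary on the whole space, admits a circuit of size $O(\polylog(q))$.

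\emph{Correctness.} First I would pass to the character basis. Writing $\ket{\psi} = \sum_{x \in \F_q} f(x)\ket{x}$ with $f$ as in Proposition~\ref{Proposition:USDFinal}, Fact~\ref{fact:inverse} gives $\ket{\psi} = \sum_{x} \QFT{f}(-x)\ket{\QFT{x}}$, and since $X_a$ acts on each $\ket{\QFT{x}}$ by the phase $\chi_x(-a)$ (Proposition~\ref{proposition:shift}), we get $\ket{\psi_a} = X_a\ket{\psi} = \sum_{x \in \F_q} c_x\ket{\QFT{x}}$ with $c_x = \chi_x(-a)\QFT{f}(-x)$; by Lemma~\ref{lemma:QFTpsi}, $c_0 = \sqrt{1-\omega^\perp}$ and $|c_x| = \sqrt{\omega^\perp/(q-1)}$ for $x \neq 0$ (only the phase of $c_x$ depends on $a$). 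Applying $U$ and splitting off the ancilla yields
$$
U\bigl(\ket{\psi_a}\ket{0}\bigr) = \Bigl(c_0 u\,\ket{\QFT{0}} + \sum_{x \neq 0} c_x\,\ket{\QFT{x}}\Bigr)\ket{0} + c_0\sqrt{1-u^2}\,\ket{\QFT{0}}\ket{1}.
$$
The key computation is that the prescribed value $u = \sqrt{\omega^\perp/((1-\omega^\perp)(q-1))}$ is exactly the one for which $c_0 u = \sqrt{\omega^\perp/(q-1)} = |c_x|$ for every $x \neq 0$. With this, the ancilla-$\ket{0}$ branch is $\sqrt{\tfrac{\omega^\perp}{q-1}}\sum_{x \in \F_q}\chi_x(-a)\ket{\QFT{x}}$, and the character identity $\sum_{x \in \F_q}\chi_x(-a)\ket{\QFT{x}} = \sqrt{q}\,\ket{a}$ — an immediate consequence of the orthogonality of characters (Proposition~\ref{Proposition;Characters}) — turns this into $\sqrt{q\omega^\perp/(q-1)}\,\ket{a}\ket{0}$. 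Hence a computational-basis measurement of the ancilla returns $\ket{0}$ with probability $q\omega^\perp/(q-1) = \Pusd$, in which case the data register is \emph{deterministically} $\ket{a}$, and returns $\ket{1}$ with complementary probability $|c_0|^2(1-u^2) = 1 - q\omega^\perp/(q-1) = 1 - \Pusd$, independently of $a$. Thus the induced POVM $\{E_a\}_{a}\cup\{E_F\}$ on the data register satisfies $\tr(E_b\,\kb{\psi_a}) = \Pusd\,\delta_{a,b}$ and $\tr(E_F\,\kb{\psi_a}) = 1-\Pusd$, which is precisely Proposition~\ref{Proposition:USDFinal}. (One also checks that $u$ is well-defined and at most $1$, since $0 \le \omega^\perp \le \frac{q-1}{q}$ whenever $\omega \le \frac{q-1}{q}$, by Lemma~\ref{lemma:QFTpsi}.)

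\emph{Efficiency.} By its definition on the basis $\{\ket{\QFT{\alpha}}\ket{0}\}_{\alpha \in \F_q}$, the operator $U$ is the restriction of $(\QFTt \otimes I)\,C\,(\QFTt^{\dagger} \otimes I)$, where $C$ applies the fixed single-qubit rotation $\ket{0}\mapsto u\ket{0}+\sqrt{1-u^2}\ket{1}$ to the ancilla conditioned on the $\F_q$-register being the zero element, and is the identity otherwise. The conditioned rotation $C$ is realized by computing an indicator of ``register $=0$'' into a scratch qubit, applying one controlled single-qubit rotation, and uncomputing — $O(\polylog(q))$ gates — while the quantum Fourier transform on $(\F_q,+) \cong (\Z_p,+)^s$ (for $q = p^s$) is a tensor product of $s$ copies of the $\Z_p$-transform composed with the $\F_p$-linear coordinate change implementing the trace pairing, all of which is standard and costs $O(\polylog(q))$; the same holds for $\QFTt^{\dagger}$. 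Composing these, and taking the result as the definition of $U$ on the whole $2q$-dimensional space, gives the claimed bound.

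I expect the only non-routine step to be the amplitude bookkeeping in the correctness part: one has to recognize that this particular value of $u$ is exactly what equalizes the amplitudes of the nonzero-frequency components with that of the (rescaled) zero-frequency component, so that the ancilla-$\ket{0}$ branch collapses to the computational basis state $\ket{a}$ — making the discrimination genuinely unambiguous (no leakage between $\ket{\psi_a}$ and $E_b$ for $b \neq a$) while simultaneously making the failure probability independent of $a$. Everything else reduces to the character identities of Proposition~\ref{Proposition;Characters} and standard circuit constructions for the $q$-ary quantum Fourier transform.
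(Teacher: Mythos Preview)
Your proof is correct and follows essentially the same route as the paper: both express $\ket{\psi_a}$ in the Fourier basis, observe that the specified $u$ equalizes the amplitude on $\ket{\QFT{0}}$ with the others so the ancilla-$\ket{0}$ branch collapses to $\sqrt{q\omega^\perp/(q-1)}\,\ket{a}$, and then measure. Your efficiency argument, which explicitly writes $U$ as a QFT-conjugated controlled single-qubit rotation, is slightly more detailed than the paper's (which simply says ``the time to perform $U$ is essentially the time to perform two Quantum Fourier Transforms''), but the content is the same.
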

\begin{proof}
Let us start the proof by writing  $\ket{{\psi_\alpha}}$ in the Fourier basis $\{\QFT{x},\;x \in \Fq\}$. This can be done by observing that
\begin{eqnarray*}
\ket{{\psi_\alpha}} &= & X_\alpha \ket{\psi} \\
& = & X_\alpha \cdot \QFTt \cdot \QFTt^\dagger \ket{\psi}  \\
& = & X_\alpha \cdot \QFTt \left( \sqrt{1 - \omega^\perp}\ket{0} + \sum_{\gamma \in \F_Q^*}  \sqrt{\frac{\omega^\perp}{q-1}} \ket{\gamma} \right) \; \text{(by Lemma \ref{lemma:QFTpsi} and $\perpF{\perpO{\omega}}=\omega$)}\\
& = & \QFTt \cdot Z_{-\alpha} \left( \sqrt{1 - \omega^\perp}\ket{0} + \sum_{\gamma \in \F_Q^*}  \sqrt{\frac{\omega^\perp}{q-1}} \ket{\gamma} \right)  \text{(by Lemma \ref{lemma:QFTpsib})}\\
& = & \QFTt \left(  \sqrt{1 - \omega^\perp}\ket{0} + \sum_{\gamma \in \F_Q^*}  \chi_{-\alpha}(\gamma)\sqrt{\frac{\omega^\perp}{q-1}} \ket{\gamma}\right) \\
& = & \sqrt{1 - \omega^\perp}\ket{\QFT{0}} + \sum_{\gamma \in \F_Q^*} \chi_{-\alpha}(\gamma) \sqrt{\frac{\omega^\perp}{q-1}} \ket{\QFT{\gamma}}.
\end{eqnarray*}
	
	Applying $U$ on $\ket{\psi_\alpha}\ket{0}$, we obtain 
	$$ \ket{\phi'_\alpha} := U \ket{\Psi_\alpha}\ket{0} = \left(u \sqrt{1 - \omega^\perp} \ket{\QFT{0}} + \sum_{\gamma \in \F_q^*} \chi_{-\alpha}(\gamma) \sqrt{\frac{\omega^\perp}{q-1}} \ket{\QFT{\gamma}}\right)\ket{0} + \sqrt{1 - u^2}\sqrt{1 - \omega^\perp} \ket{\QFT{0}}\ket{1}.$$
	Notice that $u \sqrt{1-\omega^\perp} = \sqrt{\frac{\omega^\perp}{q-1}}$ and that $\ket{\alpha} = \frac{1}{\sqrt{q}} \sum_{\gamma \in \F_q} \chi_{-\alpha}(\gamma) \ket{\QFT{\gamma}}$ (by Fact \ref{fact:inverse}).
	From there, we can rewrite
	$$ \ket{\phi'_\alpha} = \sqrt{\frac{q \omega^\perp}{q-1}} \ket{\alpha}\ket{0} + \sqrt{1 - u^2}\sqrt{1 - \omega^\perp} \ket{\QFT{0}}\ket{1}.$$
	We now measure all the qubits in the computational basis. If the last qubit is $0$, the measurement outputs the value $\alpha$ in the first register. If the last qubit is $1$, we output Fail. The measurement succeeds and outputs the correct value $\alpha$ w.p. $\frac{q \omega^\perp}{q-1}$. The time to perform $U$ is essentially the time to perform two Quantum Fourier Transforms so  $U$ can be efficiently computed in time $O(\polylog(q))$, the whole measurement can be done in time $O(\polylog(q))$.
\end{proof}

We can now present our polynomial time algorithm in the $q$-ary setting:

\begin{theorem}\label{Theorem:polyFull}
	Let $R > 0$ and $\omega \in (0,\frac{q-1}{q})$ satisfying $\frac{q \cdot \omega^\perp}{q-1} > R$. There exists a quantum algorithm that solves $\DPQ(q,n,\lfloor Rn \rfloor, \omega)$ in time $\poly(n,\log(q))$. 
\end{theorem}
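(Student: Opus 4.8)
**The plan is to mimic the binary construction of Section~\ref{Section:BinaryPolynomial}, substituting the $q$-ary unambiguous state discrimination of Proposition~\ref{Proposition:USDFinal} (implemented efficiently via Proposition~\ref{proposition:efficientUSD}) for the binary USD used there.** Concretely, fix $R>0$ and $\omega\in(0,\frac{q-1}{q})$ with $\frac{q\cdot\omega^\perp}{q-1}>R$, set $k=\lfloor Rn\rfloor$, and consider an instance $(\Gm,\ket{\psi_\cv})$ of $\DPQ(q,n,k,\omega)$ where $\cv=\mv\Gm$ and $\ket{\psi_\cv}=\bigotimes_{i=1}^n\ket{\psi_{c_i}}$ with $\ket{\psi_{c_i}}=\sqrt{1-\omega}\ket{c_i}+\sum_{\alpha\neq c_i}\sqrt{\frac{\omega}{q-1}}\ket{\alpha}$. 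Note that $\ket{\psi_{c_i}}=X_{c_i}\ket{\psi}$ for the base state $\ket{\psi}$ of Proposition~\ref{Proposition:USDFinal}, so the states indexed by the $q$ possible symbols are exactly the shifted family to which that proposition applies, and each one is recovered unambiguously with probability $\Pusd=\frac{q\cdot\omega^\perp}{q-1}$.

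The algorithm then runs in three steps, exactly parallel to the binary case: (1) apply the efficient USD measurement of Proposition~\ref{proposition:efficientUSD} to each of the $n$ registers of $\ket{\psi_\cv}$ independently, letting $J\subseteq[n]$ be the set of coordinates where the measurement does not fail; on those coordinates we learn $c_i$ exactly, hence we recover $\cv_J$; (2) check whether $\Gm_J\in\F_q^{k\times|J|}$ has rank $k$; (3) if so, output $\cv=\cv_J\,\Gm_J^{-1}\,\Gm$, which is correct by Proposition~\ref{Proposition:Recovery}. Since each USD measurement costs $O(\polylog q)$ and steps (2)–(3) are linear algebra over $\F_q$ on matrices of size $\poly(n)$, the total running time is $\poly(n,\log q)$.

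For the success probability, let $X_i$ be the indicator that $i\in J$; these are i.i.d.\ Bernoulli with parameter $\Pusd=\frac{q\cdot\omega^\perp}{q-1}$. By hypothesis there is an absolute constant $\gamma>0$ with $\Pusd=R+\gamma$. Hoeffding's inequality (Lemma~\ref{lem:Hoeffding}) gives $\Pr[|J|\ge k+\frac{\gamma n}{2}]\ge 1-2^{-\gamma^2 n/2}$, and conditioned on $|J|\ge k+\frac{\gamma n}{2}$, Proposition~\ref{Proposition:RandomRank} (with base field $\F_q$) yields $\Pr[\rank(\Gm_J)=k]\ge 1-q^{-\gamma n/2}$. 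Combining, the algorithm succeeds with probability $1-2^{-\Omega(n)}$, which completes the proof.

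I do not expect any genuine obstacle here: all the conceptual work has been done in the preceding sections. The only point requiring a little care is bookkeeping — making sure $\Gm_J^{-1}$ is well-defined (guaranteed by the rank-$k$ condition), that $\Gm$ is sampled over $\zo^{k\times n}$ rather than $\F_q^{k\times n}$ as in the definition of $\DQ$ (which only strengthens the rank estimate, since a uniform $\F_2$ submatrix is still full rank with high probability — alternatively one reuses Proposition~\ref{Proposition:RandomRank} directly if one reads the definition with $\F_q$ entries), and that USD failures on different coordinates are independent because the input state is a tensor product and the measurements act register-wise.
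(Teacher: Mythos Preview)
Your proposal is correct and follows essentially the same approach as the paper: apply the $q$-ary USD of Proposition~\ref{Proposition:USDFinal} (implemented via Proposition~\ref{proposition:efficientUSD}) coordinate-wise, then recover $\cv$ from $\cv_J$ via linear algebra when $\Gm_J$ has full rank, with Hoeffding plus Proposition~\ref{Proposition:RandomRank} controlling the success probability. The paper's own proof is in fact terser---it simply points back to the argument of Section~\ref{Section:BinaryPolynomial}---so your version is a slightly more explicit rendering of the same argument, including your useful observation that the USD failure probability is independent of $c_i$ (hence $J$ is independent of $\Gm$).
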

\begin{proof}
	We fix $R > 0, k = \lfloor Rn \rfloor$ and $\omega \in (0,\frac{q-1}{q})$ satisfying $\frac{q \omega_\perp}{q-1} > R$. We are given a random generating matrix $\Gm \in \F_q^{k \times n}$ with associated code $\C$ as well as a state $\ket{\psi_\cv} = \bigotimes_{i = 1}^n \ket{\psi_{c_i}}$ for a randomly chosen $\cv \in \C$, where 
	$$\ket{\psi_{c_i}} = \sqrt{1 - \omega}\ket{c_i} + \sum_{x \neq c_i} \sqrt{\frac{\omega}{q-1}} \ket{x}.$$
	As in Section~\ref{Section:BinaryPolynomial}, we consider the following algorithm. \\ \\
		\noindent 	\cadre{\begin{center} Quantum algorithm for $\DPQ$ using $q$-ary USD \end{center}
		\begin{enumerate}
			\item Perform the optimal unambiguous measurement given in Proposition~\ref{Proposition:USDFinal} from Proposition~\ref{Proposition:USD} on each register $i$ in order to guess $c_i$, which can be done w.p. $\Pusd = \frac{q \omega^\perp}{q-1}$. Let $J \subseteq [n]$ be the set of indices where this measurement succeeds. The algorithm recovers here $\cv_J$. 
			\item If $G_J \in \F_q^{k \times |J|}$ is of rank $k$, recover $\cv$ from $\cv_J$ by computing $\cv_J G_J^{-1}  G$.
		\end{enumerate}
	}	$ \ $ \\

By our choice of $\omega$, we have $\Pusd > R$ which means that there exists an absolute constant $\gamma > 0$ s.t. $\Pusd = R + \gamma$. This in turn implies that the success probability of this algorithm is $1 - o(1)$, using the same arguments as in Section~\ref{Section:BinaryPolynomial}.
\end{proof}

 \section{(In)tractability of the quantum decoding problem}\label{Section:Intractability}
In this section we provide a full characterization of the tractability of $\DPQ(q,n,k,\omega)$. We show that the problem is tractable {\ie } there exists a quantum algorithm that solves the problem w.p. $1 - o(1)$ (as $n \rightarrow + \infty$ and $q = \Omega(1)$) for any absolute constant $\omega < \left(\drgv(q,1-k/n)\right)^\perp$. We will simplify the notation as alluded in Subsection \ref{ss:notation} and write from now on just $\drgv(1-k/n)$ instead of $\drgv(q,1-k/n)$. Moreover, $R$ denotes in the whole section the rate $\frac{k}{n}$ of the code we decode:
$$
R \eqdef k/n.
$$
Notice here that we do not put here any restriction on the running time of the algorithm. On the other hand, we show that the problem is intractable {\ie } all quantum algorithms solve the problem w.p. at most $o(1)$ for any absolute constant $\omega > \left(\drgv(1-R)\right)^\perp$.

In order to prove our results, we will focus on a single quantum algorithm: the one that performs a pretty good measurement. Recall that in the quantum decoding problem, we have to recover $\cv$ from $\ket{\psi_\cv} = \sum_{\ev} \sqrt{(\frac{\omega}{q-1})^{|\ev|}(1-\omega)^{n - |\ev|}}\ket{\cv + \ev}$. In order to prove our results, we will focus on a single quantum algorithm performing the pretty good measurement on the Fourier transforms of these states. For the tractability result, we show that the PGM recovers $\cv$ w.p. $1 - o(1)$. For the intractability result, we show that the PGM recovers $\cv$ w.p. $o(1)$. But we know from Proposition~\ref{Proposition:PGMOptimality} that this implies that any quantum algorithm will recover $\cv$ w.p. $o(1)$ hence the intractability result. 

We first study the PGM for any error function $f$ and then apply our results to $f(\ev) = \sqrt{(\frac{\omega}{q-1})^{|\ev|}(1-\omega)^{n - |\ev|}}$ in order to show our (in)tractability results.

\subsection{Computing the PGM associated to the quantum decoding problem}
We fix a generating matrix $\Gm$ and an associated code $\C$. In order to study our PGM, we define the shifted dual codes of $\C$
$$ \Csp  \eqdef  \{\xv \in \F_q^n: \Gm\trsp{\xv}=\sv\} $$ 
Notice that $\C^\bot_{\mathbf{0}} = \C^\bot$ where $\C^\bot$ is the dual code of $\C$. For each shifted dual code $\Csp$, we fix an element $u_\sv \in \Csp$. We have $\Csp = \{u_\sv + \dv : \dv \in \C^\bot\}$. This means that for all $\cv$ in $\C$ and all $\yv$ in $\Csp$ 
\begin{align*}
\chi_\cv(\yv) = \chi_\cv(u_\sv)
\end{align*}
Moreover, for any $\sv, \sv'$ in $\F_q^k$ s.t. $\sv' \neq \sv$, since $u_{\sv} + u_{\sv'} \notin \C^\bot$, we have 
\begin{align}\label{Eq:5.1}
	\sum_{\cv \in \C} \chi_\cv(u_\sv + u_{\sv'}) = 0
\end{align} 

Now fix any error function $f : \F_q^n \rightarrow \mathbb{C}$ s.t. $\norm{f}_2 = 1$, and consider the states $\ket{\psi_\cv} = \sum_{\ev \in \F_q^n} f(\ev)\ket{\cv + \ev}$. The goal is to recover $\cv$. Actually, we will start from $\ket{\widehat{\psi_\cv}} = \QFTt{\ket{\psi_\cv}}$ instead of $\ket{\psi_\cv}$ and apply the Pretty Good Measurement on the ensemble of states $\{\ket{\QFT{\psi_\cv}}\}$. The distinguishing problem is equivalent since applying $\QFTt$ is a unitary operation. 
We first define the states 
	\begin{align*}
		\ket{W_\sv} & \eqdef \sum_{\yv \in \C_\sv^\bot} \hat{f}(\yv) \ket{\yv} \quad \textrm{ (not normalized)} \\
		\ket{\tW_\sv} & \eqdef  \frac{\ket{W_\sv}}{\norm{\ket{W_\sv}}}
\end{align*}
as well as $n_\sv \eqdef \ \norm{\ket{W_\sv}} = \sqrt{\sum_{\yv \in \C_\sv^\bot} |\hat{f}(\yv)|^2}$. We first write $\ket{\QFT{\psi_\cv}}$ in the $\{\ket{W_\sv}\}$ basis. 

\begin{lemma}\label{Lemma:PsiHat}
	$\ket{\widehat{\psi_\cv}} = \sum_{\sv \in \F_q^k} \chi_{\cv}(u_\sv) \ket{W_\sv}$.
\end{lemma}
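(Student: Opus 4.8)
The plan is a direct computation of $\QFTt\ket{\psi_\cv}$ followed by a regrouping of the summation index according to the coset partition of $\Fqn$ induced by the shifted dual codes $\C_\sv^\bot$.

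First I would apply $\QFTt$ termwise and expand the characters. We get $\QFTt\ket{\psi_\cv} = \sum_{\ev\in\Fqn} f(\ev)\,\QFTt\ket{\cv+\ev} = \frac{1}{\sqrt{q^n}}\sum_{\yv\in\Fqn}\sum_{\ev\in\Fqn} f(\ev)\,\chi_{\cv+\ev}(\yv)\,\ket{\yv}$. Using the group-homomorphism property of characters (Proposition~\ref{Proposition;Characters}), $\chi_{\cv+\ev}(\yv)=\chi_\cv(\yv)\chi_\ev(\yv)$, and then the symmetry property $\chi_\ev(\yv)=\chi_\yv(\ev)$, the inner sum over $\ev$ becomes $\sum_{\ev}f(\ev)\chi_\yv(\ev)=\sqrt{q^n}\,\FT{f}(\yv)$ by the very definition of the classical Fourier transform. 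Hence $\QFTt\ket{\psi_\cv}=\sum_{\yv\in\Fqn}\chi_\cv(\yv)\,\FT{f}(\yv)\,\ket{\yv}$. Equivalently, one can avoid the explicit character sum altogether by noting that $\ket{\psi_\cv}=X_\cv\ket{\psi_{\zerom}}$ with $\ket{\psi_\zerom}=\sum_{\ev}f(\ev)\ket{\ev}$, so that Proposition~\ref{proposition:shift} gives $\QFTt\ket{\psi_\cv}=Z_\cv\,\QFTt\ket{\psi_\zerom}=Z_\cv\sum_{\yv}\FT{f}(\yv)\ket{\yv}=\sum_{\yv}\chi_\yv(\cv)\FT{f}(\yv)\ket{\yv}$, which is the same thing after invoking symmetry once more.

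Next I would regroup the sum over $\yv$ using the fact that the shifted dual codes $\{\C_\sv^\bot:\sv\in\F_q^k\}$ tile $\Fqn$: every $\yv$ lies in exactly one of them, namely the one with $\sv=\Gm\trsp{\yv}$ (if for some $\sv$ the set $\C_\sv^\bot$ is empty, it contributes nothing to the sum, and moreover $\ket{W_\sv}=0$ there, so such $\sv$ are harmless). Within a fixed block $\C_\sv^\bot$, the phase $\chi_\cv(\yv)$ is constant and equal to $\chi_\cv(u_\sv)$ — this is exactly the identity recorded just before the lemma statement, which follows from $\yv-u_\sv\in\C^\bot$ together with $\cv\in\C$ and the homomorphism property. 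Therefore
$$\QFTt\ket{\psi_\cv}=\sum_{\sv\in\F_q^k}\sum_{\yv\in\C_\sv^\bot}\chi_\cv(\yv)\,\FT{f}(\yv)\,\ket{\yv}=\sum_{\sv\in\F_q^k}\chi_\cv(u_\sv)\sum_{\yv\in\C_\sv^\bot}\FT{f}(\yv)\,\ket{\yv}=\sum_{\sv\in\F_q^k}\chi_\cv(u_\sv)\,\ket{W_\sv},$$
which is the claimed identity.

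I do not expect any genuine obstacle here: the statement is essentially a bookkeeping lemma. The only points requiring a little care are (i) the character algebra (homomorphism plus symmetry) needed to recognize the $\FT{f}(\yv)$ coefficients, and (ii) the elementary observation that the $\C_\sv^\bot$ partition $\Fqn$, so that the single global sum over $\yv$ splits cleanly into the blocks defining $\ket{W_\sv}$, on each of which $\chi_\cv(\cdot)$ is constant. Everything else is routine.
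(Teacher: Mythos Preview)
Your proof is correct and follows essentially the same route as the paper: expand the QFT, factor the character $\chi_{\cv+\ev}(\yv)=\chi_\cv(\yv)\chi_\ev(\yv)$, recognize the $\ev$-sum as $\sqrt{q^n}\,\FT f(\yv)$, then partition $\yv$ over the cosets $\C_\sv^\bot$ and use that $\chi_\cv$ is constant on each. Your alternative derivation via $\ket{\psi_\cv}=X_\cv\ket{\psi_\zerom}$ and Proposition~\ref{proposition:shift} is a nice shortcut but not in the paper.
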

\begin{proof}
	We write 
	\begin{align*}
		\ket{\widehat{\psi_\cv}} & = \frac{1}{\sqrt{q^n}} \sum_{\yv,\ev \in \F_q^n} \chi_{\cv + \ev}(\yv) f(\ev) \ket{\yv} \\
		& = \frac{1}{\sqrt{q^n}} \sum_{\sv \in \F_q^{k}}\sum_{\xv \in \Csp}\chi_{\cv}(\xv) \sum_{\ev \in \F_q^n} \chi_{\ev}(\xv)f(\ev)\ket{\xv}  \\
		& = 
				\sum_{\sv \in \F_q^{k}} \chi_{\cv}(u_{\sv}) \sum_{\xv \in \Csp} \hat{f}(\xv)\ket{\xv} 
		 = \sum_{\sv \in \Csp} \chi_{\cv}(u_\sv) \ket{W_\sv}.
	\end{align*}
\end{proof}
We can now explicit the PGM associated to the states $\ket{\widehat{\psi_\cv}}$.
\begin{proposition}\label{Proposition:PGMDescription}
	The PGM associated to the ensemble of states $\{\ket{\widehat{\psi_\cv}}\}_{\cv \in C}$ is the projective measurement $\{\kb{Y_\cv}\}_{\cv \in \C}$ where $\ket{Y_\cv} = \frac{1}{\sqrt{q^k}} \sum_{\sv \in \F_q^k} \chi_{\cv}(u_\sv) \ket{\tW_\sv}$.
\end{proposition}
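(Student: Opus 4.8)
The plan is to compute the PGM operators $M_\cv = \rho^{-1/2}\kb{\widehat{\psi_\cv}}\rho^{-1/2}$ of Definition~\ref{Definition:PGM} directly, using the crucial observation that the family $\{\ket{W_\sv}\}_{\sv\in\F_q^k}$ appearing in Lemma~\ref{Lemma:PsiHat} is \emph{orthogonal}. Indeed, the shifted dual codes $\Csp$ are precisely the $q^k$ distinct cosets of $\C^\perp$ in $\F_q^n$, so the vectors $\ket{W_\sv}=\sum_{\yv\in\Csp}\hat f(\yv)\ket{\yv}$ have pairwise disjoint supports. Hence $\{\ket{\tW_\sv}\}$ is an orthonormal system and, by Lemma~\ref{Lemma:PsiHat}, $\ket{\widehat{\psi_\cv}}=\sum_{\sv}n_\sv\,\chi_\cv(u_\sv)\,\ket{\tW_\sv}$. (Here we may assume every $n_\sv\neq 0$, as is the case in our applications; otherwise it suffices to restrict all the sums over $\sv$, and the index set of the final measurement, to those $\sv$ with $n_\sv\neq 0$. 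We also use $|\C|=q^k$, i.e.\ $\rank\Gm=k$.)

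First I would diagonalize $\rho\eqdef\sum_{\cv\in\C}\kb{\widehat{\psi_\cv}}$. Expanding in the orthonormal basis $\{\ket{\tW_\sv}\}$, the $(\sv,\sv')$ entry is $n_\sv n_{\sv'}\sum_{\cv\in\C}\chi_\cv(u_\sv)\overline{\chi_\cv(u_{\sv'})}=n_\sv n_{\sv'}\sum_{\cv\in\C}\chi_\cv(u_\sv-u_{\sv'})$. Since $u_\sv-u_{\sv'}$ lies in the coset $\C^\perp_{\sv-\sv'}$, it belongs to $\C^\perp$ if and only if $\sv=\sv'$, so the orthogonality relation for characters summed over a linear code — the one behind~\eqref{Eq:5.1} and the proof of Proposition~\ref{proposition:periodic} — collapses this sum to $q^k\,\delta(\sv,\sv')$. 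This yields $\rho=q^k\sum_{\sv}n_\sv^2\,\kb{\tW_\sv}$, and therefore $\rho^{-1/2}=\frac{1}{\sqrt{q^k}}\sum_{\sv}\frac{1}{n_\sv}\kb{\tW_\sv}$ on the support of $\rho$.

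Then I would simply apply $\rho^{-1/2}$ to $\ket{\widehat{\psi_\cv}}$: the factors $n_\sv$ cancel, giving $\rho^{-1/2}\ket{\widehat{\psi_\cv}}=\frac{1}{\sqrt{q^k}}\sum_{\sv}\chi_\cv(u_\sv)\ket{\tW_\sv}=\ket{Y_\cv}$, hence $M_\cv=\kb{Y_\cv}$, which is the claim. To see that this is genuinely a projective measurement I would finally check that the $\ket{Y_\cv}$ are orthonormal: using orthonormality of the $\ket{\tW_\sv}$ and symmetry of characters, $\braket{Y_\cv}{Y_{\cv'}}=\frac{1}{q^k}\sum_{\sv}\chi_{u_\sv}(\cv'-\cv)$; writing $\cv'-\cv=\mv\Gm$ and using $\Gm\trsp{u_\sv}=\trsp{\sv}$ gives $(\cv'-\cv)\cdot u_\sv=\mv\cdot\sv$, so $\braket{Y_\cv}{Y_{\cv'}}=\frac{1}{q^k}\sum_{\sv}\chi_\mv(\sv)=\delta(\mv,\zerov)=\delta(\cv,\cv')$ by orthogonality of characters on $\F_q^k$.

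I do not expect a real obstacle: this is essentially the clean ``abelian-group'' instance of the PGM mentioned in the remark after Proposition~\ref{proposition:USD}, and once the orthogonality of the $\ket{W_\sv}$ is in hand the rest is bookkeeping. The only genuinely delicate points are (i) that disjoint-support observation, which is what makes $\rho$ diagonal and lets us write $\rho^{-1/2}$ explicitly, and (ii) the degenerate case where some $n_\sv=0$, dealt with by restricting the measurement's index set to $\{\sv:n_\sv\neq 0\}$ (equivalently, to the support of $\rho$).
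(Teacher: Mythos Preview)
Your proposal is correct and follows essentially the same route as the paper: diagonalize $\rho$ in the orthonormal basis $\{\ket{\tW_\sv}\}$ using the character-sum orthogonality relation~\eqref{Eq:5.1}, read off $\rho^{-1/2}$, and apply it to $\ket{\widehat{\psi_\cv}}$ to obtain $\ket{Y_\cv}$. Your treatment is in fact slightly more explicit than the paper's (you spell out the orthonormality check for the $\ket{Y_\cv}$ and flag the degenerate case $n_\sv=0$), but the argument is the same.
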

\begin{proof}
	We write the PGM $\{M_{\cv}\}$ associated to the states $\ket{\widehat{\psi_\cv}}$ using Definition~\ref{Definition:PGM}.
	\begin{align*}
		M_\cv = \rho^{-1/2} \kb{\widehat{\psi_\cv}} \rho^{-1/2} \quad \textrm{ given } \rho = \sum_{\cv \in \F_q^k}  \kb{\widehat{\psi_\cv}}
	\end{align*}
	We now write 
	\begin{align*}
		\rho = \sum_{\cv \in \C} \kb{\widehat{\psi_\cv}} = 
				\sum_{\cv \in \C} \sum_{\sv,\sv' \in \F_q^{k}} \chi_{\cv}(u_{\sv} - u_{\sv'}) \ketbra{W_\sv}{W_{\sv'}} = q^k \sum_{\sv \in \F_q^{k}} \kb{W_\sv} 
	\end{align*}
	where we use Equation~\ref{Eq:5.1} as well as $\chi_\cv(\mathbf{0}) = 1$ for the last equality. 
Using the fact that the $\ket{W_\sv}$ are pairwise orthogonal (since they have disjoint support in the computational basis), the  $\kb{\tW_\sv}$'s are pairwise orthogonal projectors and we have
	\begin{align*}
		\rho =  q^k
				\sum_{\sv \in \F_q^{k}} n^2_\sv \kb{\tW_\sv} \ \ \textrm{hence} \ \ \rho^{-1/2} = 
		\frac{1}{\sqrt{q^k}}\sum_{\sv \in \F_q^{k}} \frac{1}{{n_\sv}} \kb{\tW_\sv}  
	\end{align*}
	and 
	\begin{align}\label{Eq:8}\rho^{-1/2} \ket{\widehat{\psi_\cv}} = \frac{1}{\sqrt{q^k}} \sum_{\sv \in \F_q^k} \chi_{\cv}(u_\sv)\ket{\tW_\sv} := \ket{Y_\cv}.
	\end{align} 
	Here $\ket{Y_\cv}$ is a pure state of norm $1$. Also, notice that these states are pairwise orthogonal. So $M_\cv = \kb{Y_{\cv}}$ and the PGM is  just the projective measurement $\{\kb{Y_\cv}\}_{\cv \in \C}$. 
\end{proof}

Finally, we can explicit the probability that the PGM succeeds on the states $\ket{\widehat{\psi_\cv}}$.

\begin{proposition}\label{Proposition:PGMValue}
	The PGM succeeds to recover $\cv$ from  $\ket{\widehat{\psi_\cv}}$ w.p.
	$\frac{1}{{q^k}}\left(\sum_{\sv \in \F_q^k} n_\sv\right)^2$.
\end{proposition}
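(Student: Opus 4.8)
The statement follows directly from the explicit description of the PGM obtained in Proposition~\ref{Proposition:PGMDescription}, so the plan is essentially to compute one inner product. Recall from that proposition that the PGM is the projective measurement $\{\kb{Y_\cv}\}_{\cv \in \C}$ with $\ket{Y_\cv} = \frac{1}{\sqrt{q^k}} \sum_{\sv \in \F_q^k} \chi_{\cv}(u_\sv) \ket{\tW_\sv}$, and that the states $\{\ket{Y_\cv}\}$ are pairwise orthonormal. Since the measurement is projective, the probability that it outputs $\cv$ on input $\ket{\widehat{\psi_\cv}}$ is simply $\tr\!\left(\kb{Y_\cv}\,\kb{\widehat{\psi_\cv}}\right) = |\braket{Y_\cv}{\widehat{\psi_\cv}}|^2$, so it suffices to evaluate this overlap.

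The key observation is that both states are naturally expressed in the \emph{same} orthonormal family $\{\ket{\tW_\sv}\}_{\sv \in \F_q^k}$, which is orthonormal because the $\ket{W_\sv}$ have pairwise disjoint supports in the computational basis (the supports being the distinct cosets $\Csp$). First I would rewrite Lemma~\ref{Lemma:PsiHat} as $\ket{\widehat{\psi_\cv}} = \sum_{\sv} \chi_\cv(u_\sv)\ket{W_\sv} = \sum_{\sv} \chi_\cv(u_\sv)\, n_\sv \ket{\tW_\sv}$, using $\ket{W_\sv} = n_\sv\ket{\tW_\sv}$. Then, pairing this with $\ket{Y_\cv}$ and using $\braket{\tW_\sv}{\tW_{\sv'}} = \delta_{\sv,\sv'}$ together with $|\chi_\cv(u_\sv)|^2 = 1$ (a character evaluated at any point has modulus $1$, by Proposition~\ref{Proposition;Characters}), all cross terms vanish and the phases cancel, leaving
\[
\braket{Y_\cv}{\widehat{\psi_\cv}} = \frac{1}{\sqrt{q^k}} \sum_{\sv \in \F_q^k} n_\sv .
\]
Squaring the modulus gives the claimed probability $\frac{1}{q^k}\left(\sum_{\sv \in \F_q^k} n_\sv\right)^2$.

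There is essentially no obstacle here: all the conceptual work (diagonalizing $\rho$, establishing orthonormality of the $\ket{\tW_\sv}$ and of the $\ket{Y_\cv}$, and identifying the PGM as a projective measurement) was already carried out in Propositions~\ref{Proposition:PGMDescription}. The only point deserving a line of care is the remark that, because the $\ket{Y_\cv}$ are orthonormal, the success probability is genuinely the squared overlap $|\braket{Y_\cv}{\widehat{\psi_\cv}}|^2$ rather than something that would require summing over spurious outcomes; and that by symmetry this quantity is independent of $\cv$, so it also equals the average success probability $\PPGM$ that feeds into Proposition~\ref{Proposition:PGMOptimality} in the subsequent (in)tractability arguments.
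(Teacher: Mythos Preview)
Your proposal is correct and follows essentially the same approach as the paper: both compute the success probability as $|\braket{Y_\cv}{\widehat{\psi_\cv}}|^2$ by expanding each state in the orthonormal family $\{\ket{\tW_\sv}\}$ and using $|\chi_\cv(u_\sv)|^2=1$ to collapse the sum to $\frac{1}{\sqrt{q^k}}\sum_\sv n_\sv$. Your write-up is in fact slightly more explicit than the paper's about why the cross terms vanish and why the result is independent of $\cv$.
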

\begin{proof}
	From the previous proposition, the PGM we use is the projective measurement  $\{\kb{Y_\cv}\}_{\cv \in \C}$ with $\ket{Y_\cv} =  \frac{1}{\sqrt{q^k}} \sum_{\sv \in \F_q^k} \chi_{\cv}(u_\sv) \ket{\tW_\sv}$.
For each $\cv \in \C$, we write using Lemma~\ref{Lemma:PsiHat} as well as the expression of $\ket{Y_\cv}$ the probability $p_\cv$ that this measurement succeeds
\begin{align}\label{eq:pc}
	p_\cv \eqdef |\braket{Y_\cv}{\widehat{\psi_{\cv}}}|^2 = \frac{1}{q^{k}} \left|\sum_{\sv} \braket{\tW_\sv}{W_\sv} \right|^2 = \frac{1}{q^{k}} \left(\sum_{\sv \in \F_q^k} n_s\right)^2
\end{align}
which immediately gives us the result. 
\end{proof}

\begin{remk}\label{rem:ns_PPGM}
Since $\ket{\widehat{\psi_\cv}}$ is of norm $1$, we have immediately from Lemma~\ref{Lemma:PsiHat} that $\sum_{\sv \in \F_q^k} n_\sv^2 = 1$. In the case where all the norms are equal, we have $n_{\sv} = \sqrt{q^{-k}}$ which gives indeed $\PPGM= 1$. On the other hand, if these norms are highly unbalanced the probability that the PGM succeeds is very low. 
\end{remk}

\subsection{(In)tractability results}
\subsubsection{First computations and probabilistic arguments on random codes}\label{Section:FirstComputations}
We go back to our quantum decoding problem. Our error function corresponds to the $q$-ary symmetric channel so $f(\ev) = \left(\sqrt{1-\omega}\right)^{n - |\ev|}\left(\sqrt{\frac{\omega}{q-1}}\right)^{|\ev|}
$
we have (see Lemma \ref{lemma:QFTpsi})
$$ \hat{f}(\yv) = (\sqrt{1-\omega^\perp})^{n - |\yv|}\left(\sqrt{\frac{\omega^\perp}{q-1}}\right)^{|\yv|},$$
with 
$ \omega^\perp = \frac{\Rbra{\sqrt{(q-1)(1-\omega)} - \sqrt{\omega}}^{2}}{q}.$ For a fixed $\Gm \in \F_q^{k \times n}$ and associated code $\C$ (we will not make this dependency explicit in the notation to simplify it), we define 
\begin{eqnarray*}
	n_{\sv,\C}& \eqdef &\norm{\sum_{\yv \in \Csp} \hat{f}(\yv)\ket{\yv}}\\
	a_{\sv,\C}(t) &\eqdef & \left|\{\yv \in \C^\bot_\sv : |\yv| = t\} \right|
\end{eqnarray*}
We also define $S(t)  \eqdef \frac{(q-1)^t \binom{n}{t}}{q^k}$. Notice that $n_{\sv,\C}$ corresponds exactly to $n_\sv$ defined in the previous section but we made the dependency in $\C$ explicit. Our goal is to compute the success probability of the PGM on average on $\Gm$ so using Proposition~\ref{Proposition:PGMValue}, we want to bound the quantity 
$$ \PPGM = \E_{\Gm}\left[\frac{1}{q^k}\left(\sum_{\sv \in \F_q^k} n_{\sv,\C}\right)^2\right].$$
We first write 
\begin{align}\label{eq:nsv2}
	n_{\sv,\C}^2 = \sum_{y \in \Csp} |\hat{f}(\yv)|^2 = \sum_{t = 0}^n a_{\sv,\C}(t) \left(\frac{\omega^\perp}{q-1}\right)^t(1-\omega^\perp)^{n-t}
\end{align}
and recall from Remark~\ref{rem:ns_PPGM} that $\sum_{\sv \in \F_q^k} n_{\sv,\C}^2 = 1$. We see that to compute $P_{PGM}$, we have to say something about the terms $a_{\sv}(t,\C)$. We first have the following, which was proven for example in ~\cite{Deb23}:

\begin{proposition} $\forall t \neq 0, \ \E_{\Gm}\left[a_{\sv}(t,\C)\right] = S(t)$.
\end{proposition}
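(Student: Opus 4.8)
The plan is to prove this by the standard first-moment (expected weight-enumerator) argument for random linear codes. First I would expand the counting function $a_{\sv,\C}(t) = |\{\yv \in \C^\bot_\sv : |\yv| = t\}|$ as a sum of indicator random variables over the Hamming sphere $S_t^{q,n} = \{\yv \in \F_q^n : |\yv| = t\}$:
$$a_{\sv,\C}(t) \;=\; \sum_{\yv \in S_t^{q,n}} \mathbbm{1}[\,\Gm\trsp{\yv} = \sv\,].$$
By linearity of expectation over the random choice of $\Gm$, this immediately reduces the task to evaluating $\Pr_{\Gm}[\Gm\trsp{\yv} = \sv]$ for a single fixed word $\yv$ of weight $t$, uniformly in $\sv$.

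The one substantive step is the claim that, for every fixed \emph{nonzero} $\yv \in \F_q^n$ and every fixed $\sv \in \F_q^k$, one has $\Pr_{\Gm}[\Gm\trsp{\yv} = \sv] = q^{-k}$, the probability being over $\Gm$ uniform on $\F_q^{k\times n}$. To establish this I would argue coordinatewise: the $i$-th entry of $\Gm\trsp{\yv}$ is $\Gm_i\cdot\trsp{\yv}$, where $\Gm_i$ denotes the $i$-th row of $\Gm$; the rows $\Gm_1,\dots,\Gm_k$ are independent and uniform on $\F_q^n$, and since $\yv \neq \mathbf{0}$ the linear form $\xv \mapsto \xv\cdot\trsp{\yv}$ is a nonzero functional on $\F_q^n$, hence surjective onto $\F_q$ with all $q$ fibers of equal size $q^{n-1}$. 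Therefore each $\Gm_i\cdot\trsp{\yv}$ is uniform on $\F_q$, and the $k$ of them are mutually independent, so $\Gm\trsp{\yv}$ is uniform on $\F_q^k$ and the claimed probability follows.

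Finally I would assemble the pieces: since $t \neq 0$ every $\yv \in S_t^{q,n}$ is nonzero, so the previous step applies to each summand, and $|S_t^{q,n}| = \binom{n}{t}(q-1)^t$. Hence
$$\E_{\Gm}\big[a_{\sv,\C}(t)\big] \;=\; \sum_{\yv \in S_t^{q,n}} \Pr_{\Gm}[\Gm\trsp{\yv} = \sv] \;=\; \frac{(q-1)^t\binom{n}{t}}{q^k} \;=\; S(t).$$
I do not expect a genuine obstacle here — this is the routine first-moment computation — and the hypothesis $t \neq 0$ is used exactly once, to guarantee $\yv \neq \mathbf{0}$: for $\yv = \mathbf{0}$ one has $\Gm\trsp{\yv} = \mathbf{0}$ deterministically, so $a_{\sv,\C}(0)$ is $\mathbf{1}[\sv = \mathbf{0}]\in\{0,1\}$ rather than $S(0) = q^{-k}$, which is precisely why $t=0$ is excluded.
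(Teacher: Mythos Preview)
Your argument is correct and is exactly the standard first-moment computation for the weight enumerator of a coset of a random linear code. The paper does not actually give its own proof of this proposition: it simply cites \cite{Deb23}, and your write-up is essentially the argument one finds there.
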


But the expected value will not be enough. We will need concentration bounds coming from the second moment technique

\begin{proposition}[Second moment technique, Proposition 3 from~\cite{Deb23}]\label{lem:concentration_as}
	Fix any $\sv \in \F_q^k$ and $t \in \Iint{1}{n}$. For any $\eps > 0$, we have 
	$$ \Pr_G\left[|a_{\sv,\C}(t) - S(t)| \ge \eps S(t)\right] \le \frac{q-1}{\eps^2 S(t)}.$$ 
	In particular, take $\eps = S(t)^{-1/3}$, we have 
	$$ \Pr_G\left[a_{\sv,\C}(t) \le S(t) (1 - \frac{1}{S(t)^{1/3}})\right] \le \frac{q-1}{S(t)^{1/3}}.$$
\end{proposition}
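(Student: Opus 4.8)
The plan is to apply Chebyshev's inequality to the random variable $a_{\sv,\C}(t)$, viewed as a function of $\Gm \Unif \F_q^{k\times n}$; the only real work is a second-moment computation. Write $a_{\sv,\C}(t) = \sum_{\yv \in S_t^{q,n}} \mathbbm{1}[\Gm\trsp{\yv} = \sv]$, the sum ranging over the $N \eqdef \binom{n}{t}(q-1)^t$ vectors of Hamming weight exactly $t$. For any fixed $\yv \neq \zerov$ the vector $\Gm\trsp{\yv}$ is uniform on $\F_q^k$, so each indicator has expectation $q^{-k}$ and $\E_\Gm[a_{\sv,\C}(t)] = N q^{-k} = S(t)$, which is the preceding proposition.

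For the variance I would expand $\E_\Gm[a_{\sv,\C}(t)^2] = \sum_{\yv,\yv'}\Pr_\Gm[\Gm\trsp{\yv}=\sv \text{ and } \Gm\trsp{\yv'}=\sv]$ over ordered pairs of weight-$t$ vectors and split into three cases. (i) $\yv=\yv'$: probability $q^{-k}$, and there are $N$ such pairs. (ii) $\yv' = \lambda\yv$ for some $\lambda \in \F_q^*\setminus\{1\}$: the two conditions force $(\lambda-1)\sv=\zerov$, so the probability is $q^{-k}$ if $\sv=\zerov$ and $0$ otherwise, in all cases at most $q^{-k}$, and there are $N(q-2)$ such pairs. (iii) $\yv,\yv'$ linearly independent: then $(\Gm\trsp{\yv},\Gm\trsp{\yv'})$ is uniform on $\F_q^k\times\F_q^k$, so the probability is $q^{-2k}$, and there are $N^2 - N(q-1)$ such pairs. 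Summing, $\E_\Gm[a_{\sv,\C}(t)^2] \le N(q-1)q^{-k} + (N^2 - N(q-1))q^{-2k} \le S(t)^2 + (q-1)S(t)$, and subtracting $\E_\Gm[a_{\sv,\C}(t)]^2 = S(t)^2$ gives $\Var_\Gm[a_{\sv,\C}(t)] \le (q-1)S(t)$.

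Chebyshev's inequality then yields $\Pr_\Gm\big[|a_{\sv,\C}(t) - S(t)| \ge \eps S(t)\big] \le \Var_\Gm[a_{\sv,\C}(t)]/(\eps S(t))^2 \le (q-1)/(\eps^2 S(t))$, which is the first claim. For the ``in particular'' part I would set $\eps = S(t)^{-1/3}$: the right-hand side becomes $(q-1)/S(t)^{1/3}$, while $\eps S(t) = S(t)^{2/3}$, so the one-sided event $a_{\sv,\C}(t) \le S(t)(1 - S(t)^{-1/3}) = S(t) - S(t)^{2/3}$ is contained in $\{|a_{\sv,\C}(t) - S(t)| \ge \eps S(t)\}$, giving the stated bound.

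\textbf{Main obstacle.} There is nothing deep here; the only care needed is the bookkeeping in the three-case split of $\E_\Gm[a_{\sv,\C}(t)^2]$ — in particular handling the collinear pairs $\yv'=\lambda\yv$ and the mild dependence on whether $\sv=\zerov$, and tracking the correction term $-N(q-1)q^{-2k}$ so that the leading term $N^2 q^{-2k} = S(t)^2$ cancels cleanly against $\E_\Gm[a_{\sv,\C}(t)]^2$ and leaves the clean bound $(q-1)S(t)$ on the variance.
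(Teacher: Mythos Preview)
Your proof is correct. The paper does not actually prove this proposition itself; it merely cites it as Proposition~3 of \cite{Deb23}. Your argument --- writing $a_{\sv,\C}(t)$ as a sum of indicators, computing the second moment by splitting pairs $(\yv,\yv')$ according to whether they are equal, collinear, or linearly independent, and then applying Chebyshev --- is exactly the standard second-moment computation one would expect in that reference, and all the bookkeeping (including the handling of the collinear case depending on whether $\sv=\zerom$) is done correctly.
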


We can now observe two things

\begin{enumerate}
\item From the above proposition combined with Equation~\ref{eq:nsv2}, we have that when $S(t)$ is exponential, which happens when $t = \gamma n$ with $\gamma \in (\drgv(1-R),\drmax(1-R))$ 
\begin{equation}\label{eq:nsv2St}
	n_{\sv,\C}^2 \approx \sum_{t = \lfloor \drgv(1-R)n \rfloor }^{\lceil \drmax(1-R)n \rceil} S(t) \left(\frac{\omega^\perp}{q-1}\right)^t(1-\omega^\perp)^{n-t}.
\end{equation}
\item In order to estimate the above sum, first notice that 
\begin{align} \label{Eq:17}
	\sum_{t = 0}^n S(t) \left(\frac{\omega^\bot}{q-1}\right)^t(1 - \omega^\bot)^{n-t} = \frac{1}{q^k} \sum_{t = 0}^n \binom{n}{t} (\omega^\bot)^t (1 - \omega^\bot)^{n-t} = \frac{1}{q^k}.
\end{align}
But the above sum is actually the cumulative sum 
of the binomial distribution with parameters $n$ and $\omega^\perp$. It concentrates around the weight $n \omega^\perp$. This is formalized by the following proposition 
\end{enumerate}
\begin{proposition}\label{Proposition:18}
	For any absolute constant $ \eps > 0$, 
	\begin{align}\label{Eq:18}
		\sum_{t = \lfloor (\omega^\perp-\varepsilon) n \rfloor}^{\lceil (\omega^\perp+\varepsilon) n \rceil} S(t)  \left(\frac{\omega^\perp}{q-1}\right)^t(1-\omega^\perp)^{n-t} = \frac{1}{q^k}\left(1 - o(1)\right).\end{align}
\end{proposition}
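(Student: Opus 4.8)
The plan is to recognize the left-hand side, up to the global factor $\frac{1}{q^k}$, as a central block of the binomial distribution with parameters $n$ and $\omega^\perp$, and then to control the omitted tails with a Chernoff-type bound, namely Hoeffding's inequality (Lemma~\ref{lem:Hoeffding}).

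First I would substitute $S(t) = \frac{(q-1)^t \binom{n}{t}}{q^k}$ into the summand, which simplifies it to $\frac{1}{q^k}\binom{n}{t}(\omega^\perp)^t(1-\omega^\perp)^{n-t}$. Since $\omega^\perp = \frac{\left(\sqrt{(q-1)(1-\omega)}-\sqrt{\omega}\right)^2}{q} \in \Sbra{0,\frac{q-1}{q}} \subseteq [0,1]$, the quantity $\binom{n}{t}(\omega^\perp)^t(1-\omega^\perp)^{n-t}$ is exactly $\Pr[B = t]$ where $B = \sum_{i=1}^n X_i$ and the $X_i$ are i.i.d.\ Bernoulli variables of parameter $\omega^\perp$. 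Summing over all $t \in \{0,\dots,n\}$ recovers $\frac{1}{q^k}$ --- this is Equation~\eqref{Eq:17} --- so the claim is equivalent to the statement that $\Pr\!\left[B \notin \Sbra{\lfloor(\omega^\perp-\eps)n\rfloor,\lceil(\omega^\perp+\eps)n\rceil}\right] = o(1)$.

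Then I would bound the two omitted tails separately. The event $\{B < \lfloor(\omega^\perp-\eps)n\rfloor\}$ is contained in $\{B \le \omega^\perp n - \eps n\} = \{B \le \omega^\perp n - (\eps\sqrt n)\sqrt n\}$, so Lemma~\ref{lem:Hoeffding} applied with $\alpha = \eps\sqrt n$ gives probability at most $2^{-2\eps^2 n}$. For the upper tail, apply the same lemma to the complementary Bernoulli variables $Y_i = 1 - X_i$ (parameter $1-\omega^\perp$): the event $\{B > \lceil(\omega^\perp+\eps)n\rceil\}$ is contained in $\{\sum_i Y_i \le (1-\omega^\perp)n - (\eps\sqrt n)\sqrt n\}$, hence also has probability at most $2^{-2\eps^2 n}$. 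Adding the two contributions, the total mass outside the window is at most $2 \cdot 2^{-2\eps^2 n} = 2^{-\Omega(n)}$, and therefore the displayed sum equals $\frac{1}{q^k}\left(1 - 2^{-\Omega(n)}\right) = \frac{1}{q^k}(1-o(1))$, as claimed.

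This argument has no genuine difficulty; the only two points needing a careful line are (i) verifying that $\omega^\perp \in [0,1]$ so that the binomial reading is legitimate, and (ii) the bookkeeping with $\lfloor\cdot\rfloor$ and $\lceil\cdot\rceil$, ensuring that the complement of the summation window is actually contained in the two one-sided tail events to which Hoeffding's inequality applies.
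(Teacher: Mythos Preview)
Your proposal is correct and matches the paper's (implicit) argument exactly: the paper does not spell out a proof of this proposition but introduces it by noting that the sum is, up to the factor $q^{-k}$, the binomial distribution with parameters $n$ and $\omega^\perp$, which concentrates around $n\omega^\perp$; your reduction to Hoeffding's inequality (Lemma~\ref{lem:Hoeffding}) is precisely the tool the paper has set up for this purpose.
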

We now have all the tools for our (in)tractability proofs. The main idea is the following: when $t = \omega n$ with $\omega < \drgv(1-R)^\bot$, 
we have $\omega^\perp \in (\drgv(1-R),\drmax(1-R))$ and so we can combine Equations~\ref{eq:nsv2St},\ref{Eq:18} to show that for most $\Gm$, $n_{\sv,\C}^2 = \frac{1}{q^k}(1 - o(1))$. On the other hand, when $\omega > \drgv(1-R)^\bot$,
we have $\omega^\perp \notin (\drgv(1-R),\drmax(1-R))$ and so we can combine Equations~\ref{eq:nsv2St},\ref{Eq:17},\ref{Eq:18} to show that for most $\Gm$, $n_{\sv,\C}^2 = o(1)$. The next sections will make these arguments formal and show how this allows us to conclude.

\subsubsection{Tractability}
We use the notations previously defined in Section~\ref{Section:FirstComputations}. $\omega$ will be considered as a fixed constant in $(0,1)$. Our main claim is the following
\begin{proposition}\label{Proposition:23}
	If $\omega < (\drgv(1-R))^\bot$ 
		then $\PPGM = \frac{1}{q^k} \E_{\Gm}\left[\left(\sum_{\sv \in \F_q^k} n_{\sv,\C}\right)^2\right] = 1 - o(1)$.
\end{proposition}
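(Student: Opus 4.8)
The plan is to reduce the second-moment quantity $\PPGM = \frac{1}{q^k}\E_\Gm\big[(\sum_{\sv} n_{\sv,\C})^2\big]$ to a first-moment estimate. Set $X \eqdef \sum_{\sv \in \F_q^k} n_{\sv,\C}$, a nonnegative random variable over the choice of $\Gm$. Since $\sum_\sv n_{\sv,\C}^2 = 1$ holds deterministically (Remark~\ref{rem:ns_PPGM}), Cauchy--Schwarz gives $X \le \sqrt{q^k}$, so $\PPGM \le 1$ automatically, and Jensen's inequality gives $\E_\Gm[X^2] \ge (\E_\Gm[X])^2$. Hence it suffices to prove $\E_\Gm[X] \ge \sqrt{q^k}\,(1-o(1))$, and by linearity of expectation (discarding the nonnegative term $\sv = \mathbf{0}$) this will follow from the uniform bound $\E_\Gm[n_{\sv,\C}] \ge q^{-k/2}(1-o(1))$ for every $\sv \in \F_q^k \setminus \{\mathbf{0}\}$.

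For that bound I would first rephrase the hypothesis. The map $x \mapsto x^\perp$ is a strictly decreasing involution on $[0, \frac{q-1}{q}]$ (it is an involution by the identity $\perpF{\perpO{x}} = x$ underlying Lemma~\ref{lemma:QFTpsi}), so $\omega < \perpF{\drgv(1-R)}$ is equivalent to $\omega^\perp > \drgv(1-R)$; combined with $\omega^\perp \le \frac{q-1}{q} < \drmax(1-R)$ (the last inequality holding since $R \in (0,1)$), this places $\omega^\perp$ strictly inside the interval $(\drgv(1-R), \drmax(1-R))$ on which $S(t) = \frac{(q-1)^t \binom{n}{t}}{q^k} = q^{\Omega(n)}$. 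I would then fix an absolute constant $\eps > 0$ small enough that the window $W \eqdef \Iint{\lfloor(\omega^\perp - \eps)n\rfloor}{\lceil(\omega^\perp + \eps)n\rceil}$ remains inside that interval for $n$ large, so that $S(t) = q^{\Omega(n)}$ uniformly over $t \in W$.

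Now fix $\sv \neq \mathbf{0}$. By Proposition~\ref{lem:concentration_as} (in the form $\Pr_\Gm[a_{\sv,\C}(t) \le S(t)(1 - S(t)^{-1/3})] \le (q-1)S(t)^{-1/3}$), the failure probability for each $t \in W$ is $q^{-\Omega(n)}$, so a union bound over the $O(n)$ values of $t$ in $W$ shows that, with probability $1 - o(1)$ over $\Gm$, one has $a_{\sv,\C}(t) \ge S(t)(1 - o(1))$ simultaneously for all $t \in W$, the $o(1)$ being uniform in $t$ and $\sv$. On this event, keeping only the terms $t \in W$ in \eqref{eq:nsv2} (all terms being nonnegative) and then invoking Proposition~\ref{Proposition:18},
$$
n_{\sv,\C}^2 = \sum_{t=0}^n a_{\sv,\C}(t) \Big(\tfrac{\omega^\perp}{q-1}\Big)^t (1-\omega^\perp)^{n-t} \;\ge\; (1-o(1)) \sum_{t \in W} S(t) \Big(\tfrac{\omega^\perp}{q-1}\Big)^t (1-\omega^\perp)^{n-t} \;=\; \frac{1-o(1)}{q^k}.
$$
Since $n_{\sv,\C} \ge 0$ always and $n_{\sv,\C} \ge q^{-k/2}\sqrt{1-o(1)}$ on an event of probability $1 - o(1)$, we get $\E_\Gm[n_{\sv,\C}] \ge q^{-k/2}(1-o(1))$. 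Summing over the $q^k - 1$ nonzero $\sv$ gives $\E_\Gm[X] \ge \sqrt{q^k}(1-o(1))$, and feeding this into the Jensen inequality of the first step gives $\PPGM \ge 1 - o(1)$; together with $\PPGM \le 1$ this is the claim.

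The step I expect to need the most care is the uniformity bookkeeping: I must make sure the concentration error in Proposition~\ref{lem:concentration_as}, the $O(n)$-term union bound over $t \in W$, and the tail error in Proposition~\ref{Proposition:18} are all $o(1)$ uniformly in $\sv$, so that linearity of expectation over the $q^k$ shifts is legitimate. The only other slightly delicate point is verifying that $\omega^\perp$ lies strictly — not merely weakly — inside $(\drgv(1-R), \drmax(1-R))$, which is precisely where the hypotheses $\omega < \perpF{\drgv(1-R)}$ and $R \in (0,1)$ are used; the remainder is a routine assembly of facts already proved in this section.
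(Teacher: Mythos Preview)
Your proof is correct and follows essentially the same route as the paper's: reduce via Jensen's inequality to a first-moment lower bound $\E_\Gm[n_{\sv,\C}] \ge q^{-k/2}(1-o(1))$, then establish this bound per $\sv$ by restricting \eqref{eq:nsv2} to a window around $\omega^\perp$, applying the second-moment concentration of Proposition~\ref{lem:concentration_as} with a union bound over the window, and invoking Proposition~\ref{Proposition:18}. Your bookkeeping is slightly tidier (explicitly discarding $\sv = \mathbf{0}$, noting the upper bound $\PPGM \le 1$ up front, and flagging the uniformity in $\sv$), but the argument is the same.
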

\begin{proof}

 Using \eqref{eq:nsv2}
we know that
$
		n_{\sv,\C}^2 =   \sum_{t = 0}^n a_{\sv,\C}(t) \left(\frac{\omega^\perp}{q-1}\right)^t(1-\omega^\perp)^{n-t}.
$
	Since $\omega < \left(\drgv(1 - R)\right)^\perp$, we have $\omega^\perp > \drgv(1 - R)$ so we fix $\delta > 0$ s.t.  $\omega^\perp - \delta > \drgv(1 - R).$ We therefore write 
	$$ n_{\sv,\C}^2 \ge \sum_{t = \lfloor(\omega^\perp - \delta) n \rfloor}^{\lfloor(\omega^\perp + \delta) n \rfloor} a_{\sv,\C}(t) \left(\frac{\omega^\perp}{q-1}\right)^t(1-\omega^\perp)^{n-t}.$$
	We define $t_0 =  \lfloor(\omega^\perp - \delta) n \rfloor$ and $t_1 = \lfloor(\omega^\perp + \delta) n \rfloor$.
	          Recall that $a_{\sv,\C}(t)$ is typically close to $S(t)$ as shown by Lemma \ref{lem:concentration_as}.
			This gives for $t \in \left(\lfloor(\omega^\perp - \delta) n \rfloor,\lfloor(\omega^\perp + \delta) n \rfloor\right)$
	\begin{align*}
	\Pr_G\left[a_{\sv,\C}(t) \le S(t) (1 - \frac{1}{S(t_0)^{1/3}})\right] & \le \Pr_G\left[a_{\sv,\C}(t) \le S(t) (1 - \frac{1}{S(t)^{1/3}})\right] \\
	& \le  \frac{q-1}{S(t)^{1/3}} 
	  \le \frac{q-1}{S(t_0)^{1/3}}.
	\end{align*}
	and
$$
\Pr_G\left[\forall t \in \Iint{t_0}{t_1}, \ a_{\sv,\C}(t) \ge S(t) (1 - \frac{1}{S(t_0)^{1/3}})\right]  
\ge 1 - \frac{(q-1)(t_1 - t_0 + 1)}{S(t_0)^{1/3}}$$

	This implies 
	\begin{align}\label{eq:first_bound_on_ns}
		\Pr_G \left[n_{\sv,\C}^2 \ge \sum_{t = t_0}^{t_1} S(t) (1 - S(t_0)^{-1/3}) \left(\frac{\omega^\perp}{q-1}\right)^t(1-\omega^\perp)^{n-t}\right] \ge 1 - \frac{(q-1)(t_1 - t_0 + 1)}{S(t_0)^{1/3}}.
	\end{align}
We have 
\begin{align*}
\sum_{t = t_0}^{t_1} S(t) (1 - S(t_0)^{-1/3}) \left(\frac{\omega^\perp}{q-1}\right)^t(1-\omega^\perp)^{n-t} & = 
(1 - S(t_0)^{-1/3}) \sum_{t = t_0}^{t_1} S(t) \left(\frac{\omega^\perp}{q-1}\right)^t(1-\omega^\perp)^{n-t} \\ & = \frac{(1 - S(t_0)^{-1/3})}{q^k}K(n)
\end{align*}
where $K(n) \eqdef \sum_{t = t_0}^{t_1} \binom{n}{t} \left(\omega^\perp\right)^t(1-\omega^\perp)^{n-t}$ and $K(n)=1-o(1)$ as $n$ tends to infinity 
by Proposition~\ref{Proposition:18}.
By plugging this equality in the left-hand side of \eqref{eq:first_bound_on_ns} we obtain
\begin{align*}
	\Pr_G \left[n_{\sv,\C} \ge \sqrt{\frac{1 - S(t_0)^{-1/3}}{q^k} K(n)} \right] \ge  1 - \frac{(q-1)(t_1 - t_0)}{S(t_0)^{1/3}}.
\end{align*}
Since $t_0 =  \lfloor(\omega^\perp - \delta) n \rfloor$ with $\drgv(1-R) < (\omega^\perp - \delta) < \drmax(1-R) $, we have $S(t_0) = q^{\Omega(n)}$ which implies 
\begin{align*}
	\E_G[n_{\sv,\C}] & \ge \sqrt{\frac{1 - S(t_0)^{-1/3}}{q^k} K(n)}  \cdot \Pr_G \left[n_{\sv,\C} \ge \sqrt{\frac{1 - S(t_0)^{-1/3}}{q^k}K(n)} \right] \\
	& \ge \sqrt{\frac{1 - S(t_0)^{-1/3}}{q^k}K(n)} \cdot \left(1 - \frac{(q-1)(t_1 - t_0)}{S(t_0)^{1/3}}\right) \\ & = \frac{1}{\sqrt{q^k}}\left(1 - o(1)\right)
\end{align*}
which gives 
\begin{align}\label{Eq:EG}
\E_G[\sum_{\sv \in \F_q^k} n_{\sv,\C}] \ge \sqrt{q^k}(1-o(1))
\end{align} 

In order to conclude, we use Jensen's inequality $\E_G(X^2) \geq (\E_G(X))^2$ 
and Equation~\ref{Eq:EG} to get 
\begin{align*}
P_{PGM} & = \frac{1}{q^k} \E_G\left[\left(\sum_{\sv \in \F_q^n} n_{\sv,\C}\right)^2\right] \ge \frac{1}{q^k}\left(\E_G\left[\sum_{\sv \in \F_q^k} n_{\sv,\C}\right]\right)^2
 \ge 1 - o(1)
\end{align*}
\end{proof}

\subsubsection{Intractability}
Again, we use the same notation as in the previous sections with a fixed $\omega \in (0,1)$. We show that if $\omega$ is too large 
then 
$P \eqdef \E_\Gm(\PPGM)$ is an $o(1)$ as shown by 
\begin{theorem}\label{thm:upper_bound_PPGM}
	Let $\Gm \in \F_q^{k \times n}$ be a random generating matrix and $\C$ be the associated code. Let $\omega > \left(\drgv(1 - R)\right)^\perp$ and let the states $\ket{\psi_{\cv}} = \sum_{\ev \in \F_q^n} f(\ev) \ket{\cv + \ev}$ with 
	$$ f(\ev) = \sqrt{\left(\frac{\omega}{q-1}\right)^{|\ev|} \left(1-\omega\right)^{n - |\ev|}}.$$
	The pretty good measurement distinguishes the states $\ket{\psi_{\cv}}$ w.p. $P=o(1)$.
\end{theorem}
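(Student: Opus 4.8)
The plan is to prove the stronger claim that $\PPGM = o(1)$ uniformly over every generating matrix $\Gm$ of full rank $k$; since $\Gm$ has rank $k$ with probability at least $1 - q^{-(n-k)}$ by Proposition~\ref{Proposition:RandomRank}, this gives $P = \E_\Gm[\PPGM] = o(1)$. Recall from Proposition~\ref{Proposition:PGMValue} that $\PPGM = \tfrac{1}{q^k}\bigl(\sum_{\sv\in\F_q^k} n_{\sv,\C}\bigr)^2$, and from~\eqref{eq:nsv2} and Remark~\ref{rem:ns_PPGM} that
\[
 n_{\sv,\C}^2 = \sum_{t=0}^n a_{\sv,\C}(t)\Bigl(\tfrac{\omega^\perp}{q-1}\Bigr)^t(1-\omega^\perp)^{n-t}, \qquad \sum_{\sv\in\F_q^k} n_{\sv,\C}^2 = 1 .
\]
I would first note that the hypothesis $\omega > \perpF{\drgv(1-R)}$ is equivalent to $\omega^\perp < \drgv(1-R)$: the map $x \mapsto x^\perp$ is a continuous involution of $[0,\tfrac{q-1}{q}]$ (this is how $\perp$ is used throughout, cf.\ the proof of Proposition~\ref{proposition:efficientUSD}) with $0^\perp = \tfrac{q-1}{q}$, hence strictly decreasing, while $\drgv(1-R) = h_q^{-1}(R) \in [0,\tfrac{q-1}{q}]$.

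The engine of the argument is a deterministic counting identity: when $\Gm$ has full rank, the shifted dual codes $\{\Csp\}_{\sv\in\F_q^k}$ partition $\F_q^n$ and each weight-$t$ vector lies in exactly one of them, so $\sum_{\sv} a_{\sv,\C}(t) = (q-1)^t\binom{n}{t}$ for every $t$; summing the formula for $n_{\sv,\C}^2$ over $\sv$ weight-by-weight then shows that the total weight-$t$ contribution to $\sum_\sv n_{\sv,\C}^2 = 1$ is exactly $\binom{n}{t}(\omega^\perp)^t(1-\omega^\perp)^{n-t}$, i.e.\ the $\mathrm{Bin}(n,\omega^\perp)$ mass at $t$. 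Since $\omega^\perp < \drgv(1-R)$, fix an absolute constant $\eps > 0$ with $\omega^\perp + \eps < \drgv(1-R)$; Hoeffding's inequality (Lemma~\ref{lem:Hoeffding}) then gives $\sum_{t \le (\omega^\perp+\eps)n}\binom{n}{t}(\omega^\perp)^t(1-\omega^\perp)^{n-t} = 1 - 2^{-\Omega(n)}$. Now let $T \eqdef \{\sv\in\F_q^k : \Csp \text{ contains a vector of Hamming weight} \le (\omega^\perp+\eps)n\}$. For $\sv \notin T$ one has $a_{\sv,\C}(t) = 0$ for all $t \le (\omega^\perp+\eps)n$, so the identity gives $\sum_{\sv\notin T} n_{\sv,\C}^2 \le \sum_{t > (\omega^\perp+\eps)n}\binom{n}{t}(\omega^\perp)^t(1-\omega^\perp)^{n-t} = 2^{-\Omega(n)}$, hence $\sum_{\sv\in T} n_{\sv,\C}^2 \ge 1 - 2^{-\Omega(n)}$. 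Since distinct cosets are disjoint, $T$ injects into the Hamming ball of radius $(\omega^\perp+\eps)n$, so $|T| \le \sum_{t\le(\omega^\perp+\eps)n}(q-1)^t\binom{n}{t} \le (n+1)\,q^{h_q(\omega^\perp+\eps)n}$; as $h_q$ is increasing on $[0,\tfrac{q-1}{q}]$ with $h_q(\drgv(1-R)) = R$, we get $h_q(\omega^\perp+\eps) < R$ and therefore $|T| = o(q^k)$.

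Finally I would combine the two bounds: splitting $\sum_\sv n_{\sv,\C} = \sum_{\sv\in T} n_{\sv,\C} + \sum_{\sv\notin T} n_{\sv,\C}$ and applying Cauchy--Schwarz to each part (using $n_{\sv,\C}^2 \le 1$ and $\sum_{\sv\notin T} n_{\sv,\C}^2 = 2^{-\Omega(n)}$) gives $\sum_\sv n_{\sv,\C} \le \sqrt{|T|} + \sqrt{q^k}\,2^{-\Omega(n)}$, whence $\PPGM \le \tfrac{2|T|}{q^k} + 2^{-\Omega(n)} = o(1)$, uniformly in $\Gm$. The step I expect to be delicate is the bound $|T| = o(q^k)$: this is exactly where the hypothesis enters, through the \emph{strict} inequality $\omega^\perp + \eps < \drgv(1-R)$, which makes the relevant Hamming ball exponentially smaller than the number $q^k$ of cosets — intuitively, when the decoding noise $\omega$ is large the Fourier-dual noise $\omega^\perp$ is so small that the mass of $\sum_\sv n_{\sv,\C}^2$ concentrates at weights below the Gilbert--Varshamov distance of the dual code, so almost all cosets carry negligible weight and the PGM cannot distinguish the states $\ket{\widehat{\psi_\cv}}$. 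It is worth remarking that, unlike the tractability direction (Proposition~\ref{Proposition:23}), the second-moment concentration of $a_{\sv,\C}(t)$ from Proposition~\ref{lem:concentration_as} is not needed here — only the deterministic partition identity — together with a brief check of the boundary cases ($\omega \to \tfrac{q-1}{q}$, or $R \to 0$ or $1$) where the entropy identities degenerate.
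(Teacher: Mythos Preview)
Your proof is correct and takes a genuinely different route from the paper's. The paper argues \emph{probabilistically}: it uses Markov's inequality on $a_{\sv,\C}(t)$ (Lemma~\ref{lem:first_moment} and Corollary~\ref{cor:first_moment}) to show that for each fixed nonzero $\sv$, $\Pr_\Gm[n_{\sv,\C} \ge q^{-k/2-\eps n}] = q^{-\Omega(n)}$ (Lemma~\ref{lem:ns_small}), then double-counts to conclude that the set $S_\eps(\Gm)$ of ``bad'' syndromes has expected size $o(q^k)$, and finishes with Cauchy--Schwarz. You instead exploit the \emph{deterministic} partition identity $\sum_\sv a_{\sv,\C}(t) = (q-1)^t\binom{n}{t}$, valid for every full-rank $\Gm$, which lets you read off directly that the weight-$t$ contribution to $\sum_\sv n_{\sv,\C}^2 = 1$ is the $\mathrm{Bin}(n,\omega^\perp)$ mass at $t$; your set $T$ of cosets meeting the low-weight ball then plays the role of $S_\eps(\Gm)$, but its size is bounded by a crude volume count rather than a first-moment argument.

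What this buys you: your argument is strictly stronger (it gives $\PPGM = o(1)$ for every full-rank $\Gm$, not just in expectation), more elementary (no randomness over $\Gm$ is actually used beyond the negligible rank-deficient event), and you correctly observe that, unlike the tractability direction, the second-moment concentration of Proposition~\ref{lem:concentration_as} is unnecessary here. The paper's approach, on the other hand, keeps the two directions methodologically parallel and makes the ``typical'' threshold behaviour of $n_{\sv,\C}$ explicit, which is perhaps conceptually cleaner if one cares about the fine structure of the PGM rather than just the conclusion $\PPGM = o(1)$.
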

Again, we will heavily build on the expression of $n_{\sv,\C}$ given by \eqref{eq:nsv2} in terms of the $\asCt$'s. The proof is based on the following steps
\paragraph{Step 1.}
Let us start by  giving an upper-bound on $\asCt$ which holds with probability close to $1$ for large values of $K$
\begin{lemma}\label{lem:first_moment}
For any $K >0$, any $t$ in $\Iint{1}{n}$ and any $\sv \in \F_q^k$, we have
$
\Pr_{\Gm} \left[\asCt \geq K {\cdot} S(t)\right] \leq \frac{1}{K}
$
which directly implies
\begin{equation}\label{eq:first_moment}
\Pr_{\Gm} \left[\asCt \leq K {\cdot} S(t)\right] \geq  1 - \frac{1}{K}.
\end{equation}
\end{lemma}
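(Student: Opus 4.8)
The plan is to derive both inequalities from a single application of Markov's inequality to the non-negative random variable $\asCt$, using the first-moment identity $\E_{\Gm}[\asCt] = S(t)$ recorded in the proposition just above.

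First I would observe that for $t \in \Iint{1}{n}$ the quantity $\asCt = \left|\{\yv \in \C^\bot_\sv : |\yv| = t\}\right|$ is a sum of $\{0,1\}$-valued indicators, hence a non-negative random variable (over the choice of $\Gm$), and that the preceding proposition gives $\E_{\Gm}[\asCt] = S(t)$, with $S(t) = \frac{(q-1)^t\binom{n}{t}}{q^k} > 0$ for $t \in \Iint{1}{n}$. Markov's inequality then yields, for every $K > 0$,
$$\Pr_{\Gm}\left[\asCt \geq K \cdot S(t)\right] \;\leq\; \frac{\E_{\Gm}[\asCt]}{K \cdot S(t)} \;=\; \frac{1}{K}.$$
(When $K < 1$ this is vacuous, since then $1/K > 1$, but it remains a valid bound.) The complementary statement follows immediately by passing to complements: the event $\{\asCt \leq K \cdot S(t)\}$ is the complement of $\{\asCt > K \cdot S(t)\}$, which is contained in $\{\asCt \geq K \cdot S(t)\}$, so
$$\Pr_{\Gm}\left[\asCt \leq K \cdot S(t)\right] \;=\; 1 - \Pr_{\Gm}\left[\asCt > K \cdot S(t)\right] \;\geq\; 1 - \Pr_{\Gm}\left[\asCt \geq K \cdot S(t)\right] \;\geq\; 1 - \frac{1}{K}.$$

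There is no genuine obstacle in this lemma: it is a warm-up first-moment estimate, and essentially all of its content is Markov plus the already-established expectation of $\asCt$. The real difficulty of the intractability argument lies in the later steps, where this pointwise upper bound on the weight enumerators $\asCt$ must be combined (via a union bound over the relevant range of $t$, together with the concentration of the binomial sum $\sum_t \binom{n}{t}(\omega^\perp)^t(1-\omega^\perp)^{n-t}$ around $t \approx \omega^\perp n$, valid since $\omega > \left(\drgv(1-R)\right)^\perp$ forces $\omega^\perp < \drgv(1-R)$) in order to control $\E_\Gm\bigl[(\sum_\sv n_{\sv,\C})^2\bigr]$ through the expression \eqref{eq:nsv2} for $n_{\sv,\C}^2$ and conclude that $P = o(1)$.
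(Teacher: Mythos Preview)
Your proof is correct and follows exactly the same approach as the paper: both apply Markov's inequality to the non-negative random variable $\asCt$ using the expectation identity $\E_{\Gm}[\asCt] = S(t)$. Your write-up is in fact slightly more careful about the passage to the complementary event than the paper's one-line justification.
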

\begin{proof}
This is just Markov's inequality 
$\Pr_{\Gm} \left[\asCt \leq K \E_{\Gm}(\asCt) \right] \leq  \frac{1}{K}$ by recalling that \\
$\E_{\Gm}(\asCt) = S(t)$.
\end{proof}
A rather immediate corollary of this result is that
\begin{corollary}\label{cor:first_moment}
For any $\delta>0$, $\sv$ in $\F_q^k$ and $t$ in 
$\Iint{1}{n} \setminus [(\drgv(1-R)-\delta)n,(\drmax(1-R)+\delta)n]$, we have with probability $1-q^{-\Om{n}}$ that $a_{\sv,\C}(t)=0$.
\end{corollary}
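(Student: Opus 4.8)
The plan is to combine the first-moment bound \eqref{eq:first_moment} of Lemma~\ref{lem:first_moment} with two elementary facts: $\asCt=\bigl|\{\yv\in\Csp:|\yv|=t\}\bigr|$ is a \emph{non-negative integer}, so the event $\{\asCt\le\frac12\}$ is literally the event $\{\asCt=0\}$; and $S(t)$ is \emph{exponentially small} as soon as $t/n$ lies outside the window $[\drgv(1-R),\drmax(1-R)]$, so that in \eqref{eq:first_moment} we may take $K$ of order $1/S(t)$ and still lose only an exponentially small probability.

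First I would record the entropy estimate. From the standard bound $\binom{n}{t}(q-1)^t\le q^{\,n h_q(t/n)}$ one gets
$$
S(t)\;=\;\frac{(q-1)^t\binom{n}{t}}{q^k}\;\le\;q^{\,n\left(h_q(t/n)-R\right)}.
$$
Now fix $\delta>0$. Recall that $h_q(\drgv(1-R))=R$ (since $\drgv(1-R)=h_q^{-1}(R)$) and $h_q(\drmax(1-R))=R$, and that $h_q$ is strictly increasing on $[0,\tfrac{q-1}{q}]$ and strictly decreasing on $[\tfrac{q-1}{q},1]$ (this is exactly why $S(t)$ is exponentially large precisely for $t/n\in(\drgv(1-R),\drmax(1-R))$, cf.\ Section~\ref{Section:FirstComputations}). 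Hence
$$
\eta\;\eqdef\;R-\max\Bigl(h_q\bigl(\drgv(1-R)-\delta\bigr),\,h_q\bigl(\drmax(1-R)+\delta\bigr)\Bigr)\;>\;0,
$$
and for every $t\in\Iint{1}{n}$ with $t/n\le\drgv(1-R)-\delta$ or $t/n\ge\drmax(1-R)+\delta$ we get $h_q(t/n)\le R-\eta$, hence $S(t)\le q^{-\eta n}$. (When $\drgv(1-R)-\delta<0$ the left sub-range is empty; when $\drmax(1-R)=\bot$ or $\drmax(1-R)+\delta>1$ the right sub-range is empty and the excluded interval is read with its upper endpoint replaced by $n$. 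Using $k=\lfloor Rn\rfloor$ instead of $Rn$ only costs a multiplicative $q$, which is harmless.)

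Then I would feed this into \eqref{eq:first_moment}. For $n$ large enough $S(t)\le q^{-\eta n}\le\tfrac12$, so $K\eqdef\frac{1}{2S(t)}\ge1$ is a legitimate choice, and Lemma~\ref{lem:first_moment} gives
$$
\Pr_{\Gm}\!\Bigl[\asCt\le\tfrac12\Bigr]\;=\;\Pr_{\Gm}\!\bigl[\asCt\le K\cdot S(t)\bigr]\;\ge\;1-\frac1K\;=\;1-2S(t)\;\ge\;1-2\,q^{-\eta n}.
$$
Since $\asCt$ is a non-negative integer, $\{\asCt\le\tfrac12\}=\{\asCt=0\}$, so $\Pr_{\Gm}[\asCt=0]\ge1-2q^{-\eta n}=1-q^{-\Om{n}}$, which is the claim. (If one wants the statement to hold simultaneously for all $t$ in the excluded range, a union bound over the at most $n$ such values costs only a factor $n$ and still leaves $1-q^{-\Om{n}}$.)

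The only point requiring genuine care is producing the \emph{single} exponent $\eta=\eta(\delta,R,q)>0$ valid uniformly across the whole excluded range of $t$ — this is precisely where strict monotonicity of $h_q$ on each of its two branches, together with $h_q(\drgv(1-R))=h_q(\drmax(1-R))=R$, is used — and, relatedly, fixing the convention in the degenerate cases $\drgv(1-R)\le\delta$ and $\drmax(1-R)=\bot$. Everything else is just Markov's inequality (already packaged in Lemma~\ref{lem:first_moment}) plus integrality of a counting function.
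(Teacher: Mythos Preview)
Your proof is correct and follows essentially the same approach as the paper's: bound $S(t)\le q^{n(h_q(t/n)-R)}=q^{-\Om{n}}$ outside the window, apply Lemma~\ref{lem:first_moment} with a $K$ comparable to $1/S(t)$, then use integrality of $\asCt$. The only cosmetic difference is that the paper takes $K=1/\sqrt{S(t)}$ (so the threshold is $\sqrt{S(t)}<1$ and the failure probability is $\sqrt{S(t)}$) whereas you take $K=1/(2S(t))$; both choices give $1-q^{-\Om{n}}$, and your version is in fact slightly sharper and more carefully written regarding the uniform exponent $\eta$ and the degenerate boundary cases.
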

\begin{proof}
In such a case we have $S(t) = q^{-\Om{n}}$ , since 
$S(t) \leq q^{n (h_q(t/n)-k/n)}$ and we use Lemma \ref{lem:first_moment} with $K= \frac{1}{\sqrt{S(t)}}$ to obtain
$$
\Pr_{\Gm} \left[\asCt \leq \sqrt{S(t)} \right] \geq 1 - \sqrt{S(t)} = 1-q^{-\Omega(n)}.$$
 We can conclude by using the fact that $a_{\sv,\C}(t)$ is an non negative integer so if $a_{\sv,\C}(t) \le \sqrt{S(t)} < 1$ then necessarily $a_{\sv,\C}(t) = 0$.
\end{proof}

\paragraph{Step 2.}
The previous results allow to show that
\begin{lemma}\label{lem:ns_small}
 Let $\omega > \left(\drgv(1 - R)\right)^\perp$. There exists an $\varepsilon >0$ such that for any $\sv \in \F_q^k$ which is non zero we have 
$$
\Pr_{\Gm} \left[ n_{\sv,\C} \geq q^{-k/2-\varepsilon n}\right] = q^{-\Om{n}}.
$$
\end{lemma}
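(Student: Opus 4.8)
The plan is to control $n_{\sv,\C}^2$ through its weight-distribution expression \eqref{eq:nsv2}, namely $n_{\sv,\C}^2 = \sum_{t=0}^n \asCt \left(\frac{\omega^\perp}{q-1}\right)^t(1-\omega^\perp)^{n-t}$, by splitting the sum over $t$ and feeding in the first-moment estimates already proved. The one structural fact I would extract from the hypothesis is that $\omega > \bperpO{\drgv(1-R)}$ is \emph{equivalent} to $\omega^\perp < \drgv(1-R)$: indeed $x \mapsto x^\perp$ is a strictly decreasing involution on $\Sbra{0,\frac{q-1}{q}}$ — an involution by the identity behind Lemma~\ref{lemma:QFTpsi} (already used as $\perpF{\perpO{\omega}} = \omega$), and strictly decreasing because $\sqrt{(q-1)(1-x)}-\sqrt{x}$ is nonnegative and strictly decreasing there — so applying it reverses the inequality. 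Since $\drgv(1-R) > 0$ for $R < 1$, I would then fix a constant $\delta > 0$ with $0 < \omega^\perp < \drgv(1-R) - \delta$.

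The next step is to set up two high-probability events on the randomness of $\Gm$. First, by Corollary~\ref{cor:first_moment} together with a union bound over the at most $n$ relevant weights, with probability $1 - q^{-\Om{n}}$ one has $\asCt = 0$ for every $t \in \Iint{1}{n}$ with $t/n \notin [\drgv(1-R)-\delta,\drmax(1-R)+\delta]$; and since $\sv \neq \mathbf{0}$ we have $\mathbf{0} \notin \Csp$, so $a_{\sv,\C}(0) = 0$ holds with certainty. The point to stress is that the weight $t \approx \omega^\perp n$ lies in this killed range (this is exactly where $\omega^\perp < \drgv(1-R) - \delta$ is used), and it is precisely that term which would otherwise dominate: there $S(t) = q^{n(h_q(\omega^\perp)-R)}$ is still exponentially small (because $h_q(\omega^\perp) < R$) while its binomial weight is $\Theta(1/\sqrt{n})$. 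Second, applying Lemma~\ref{lem:first_moment} with $K = q^{\varepsilon_0 n}$ for a small constant $\varepsilon_0 > 0$ (to be pinned down at the end), plus a union bound, with probability $1 - q^{-\Om{n}}$ one has $\asCt \le q^{\varepsilon_0 n} S(t)$ for every $t$ with $t/n \in [\drgv(1-R)-\delta,\drmax(1-R)+\delta]$.

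On the intersection of the two events only weights $t \ge (\drgv(1-R)-\delta)n$ survive in $n_{\sv,\C}^2$, and each contributes at most $q^{\varepsilon_0 n} S(t)\left(\frac{\omega^\perp}{q-1}\right)^t(1-\omega^\perp)^{n-t} = \frac{q^{\varepsilon_0 n}}{q^k}\binom{n}{t}(\omega^\perp)^t(1-\omega^\perp)^{n-t}$, using $S(t) = \frac{(q-1)^t\binom{n}{t}}{q^k}$ as in \eqref{Eq:17}. Summing gives
\begin{align*}
n_{\sv,\C}^2 &\;\le\; \frac{q^{\varepsilon_0 n}}{q^k}\sum_{t\ge(\drgv(1-R)-\delta)n}\binom{n}{t}(\omega^\perp)^t(1-\omega^\perp)^{n-t} \\
&\;=\; \frac{q^{\varepsilon_0 n}}{q^k}\,\Pr\!\left[\mathrm{Bin}(n,\omega^\perp)\ge(\drgv(1-R)-\delta)n\right],
\end{align*}
and since $\drgv(1-R)-\delta > \omega^\perp$ a Chernoff-type tail bound (or Hoeffding's inequality applied to the upper tail) makes this last probability at most $q^{-cn}$ for a constant $c > 0$ depending only on $q, R, \omega, \delta$. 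Taking $\varepsilon_0 \eqdef c/2$ then yields $n_{\sv,\C}^2 \le q^{-k-(c/2)n}$ on an event of probability $1 - q^{-\Om{n}}$, hence $\Pr_\Gm\!\left[n_{\sv,\C} \ge q^{-k/2-\varepsilon n}\right] = q^{-\Om{n}}$ with $\varepsilon \eqdef c/4$, uniformly over $\sv \neq \mathbf{0}$. I expect the only genuine obstacle to be recognizing that a bare Markov (or second-moment) bound on $n_{\sv,\C}^2$ is useless here — $\E_\Gm[n_{\sv,\C}^2]$ is only polynomially smaller than $q^{-k}$ — so one really must first zero out the low-weight contributions via Corollary~\ref{cor:first_moment} and only afterwards estimate the exponentially light tail $t \ge \drgv(1-R)n$; the rest is routine first-moment bookkeeping.
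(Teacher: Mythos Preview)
Your proof is correct and follows essentially the same approach as the paper: restrict the sum \eqref{eq:nsv2} to the window $t/n \in [\drgv(1-R)-\delta,\drmax(1-R)+\delta]$ via Corollary~\ref{cor:first_moment} (using $a_{\sv,\C}(0)=0$ for $\sv\neq\mathbf 0$), bound the surviving $\asCt$ by $q^{\varepsilon_0 n}S(t)$ via Lemma~\ref{lem:first_moment}, rewrite as $q^{\varepsilon_0 n-k}$ times a binomial upper tail, and kill that tail with Hoeffding since $\omega^\perp<\drgv(1-R)-\delta$. Your explicit justification that $x\mapsto x^\perp$ is a decreasing involution (hence $\omega>\bperpO{\drgv(1-R)}\Leftrightarrow\omega^\perp<\drgv(1-R)$) and your bookkeeping of the constants are in fact cleaner than the paper's version, which asserts the key inequality $\omega^\perp<\drgv(1-R)$ without comment.
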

\begin{proof}
Let us recall \eqref{eq:nsv2} 
$$
n_{\sv,\C}^2 = \sum_{y \in \Csp} |\hat{f}(\yv)|^2 = \sum_{t = 0}^n \asCt \left(\frac{\omega^\perp}{q-1}\right)^t(1-\omega^\perp)^{n-t}.
$$
By using Corollary \ref{cor:first_moment} and $a_{\sv,\C}(0)=0$ for $\sv \neq 0$, we obtain that for any absolute constant $\delta >0$,  
\begin{equation}
\forall \sv \in F_q^k \backslash{\{\mathbf{0}\}}, \ \Pr_{\Gm}\left[n^2_{\sv,\C} = \sum_{t = \left\lfloor (\drgv-\delta)n \right\rfloor}^ {\left\lceil (\drmax+\delta)n \right\rceil} \asCt \left(\frac{\omega^\perp}{q-1}\right)^t(1-\omega^\perp)^{n-t}\right] \ge 1 - q^{\Omega(n)},
\end{equation}
where to simplify notation we simply write $\drgv$ and $\drmax$ for $\drgv(1-R)$ and $\drmax(1-R)$ respectively.
Then by using Lemma \ref{lem:first_moment} we also deduce that for any $\delta, \delta'>0$,
 $$
 \forall \sv \in F_q^k \backslash{\{\mathbf{0}\}}, \ \Pr_{\Gm}\left[n_{\sv,\C}^2 \leq \sum_{t = \left\lfloor (\drgv-\delta)n \right\rfloor}^ {\left\lceil (\drmax+\delta)n \right\rceil} q^{\delta' n} S(t) \left(\frac{\omega^\perp}{q-1}\right)^t(1-\omega^\perp)^{n-t}\right] \ge 1 - q^{-\Omega(n)}
 $$
 We observe now that 
 \begin{equation}
 \label{eq:upper_bound_on_ns}
  \sum_{t = \left\lfloor (\drgv-\delta)n \right\rfloor}^ {\left\lceil (\drmax+\delta)n \right\rceil} q^{\delta' n} S(t) \left(\frac{\omega^\perp}{q-1}\right)^t(1-\omega^\perp)^{n-t}
  = q^{\delta' n-k}  \sum_{t = \left\lfloor (\drgv-\delta)n \right\rfloor}^ {\left\lceil (\drmax+\delta)n \right\rceil} p(t)
 \end{equation}
 where $p(t) \eqdef \binom{n}{t}  ({\omega^\perp})^t(1-\omega^\perp)^{n-t}$ is the probability that a binomial variable of parameters $n$ and $\omega^\perp$ takes the
 value $t$. By using the fact that $\omega^\perp \leq (\drgv-\delta")n$ for some $\delta">0$ and the Hoeffding inequality (see Lemma \ref{lem:Hoeffding}) we 
 deduce that for $\delta=\delta"/2$, it holds that
  $\sum_{t = \left\lfloor (\drgv-\delta)n \right\rfloor}^ {\left\lceil (\drmax+\delta)n \right\rceil} p(t) \leq q^{- \delta''' n}$ for some $\delta'''>0$. By choosing $\delta' < \delta'''$, we obtain that $n_{\sv,\C}$ is less than $q^{-k/2-\frac{\delta'''-\delta'}{2} n}$ with probability $1-q^{-\Om{n}}$. We just have to choose $\varepsilon = (\delta'''-\delta')/2$ to finish the proof.
\end{proof}

We are ready now to prove Theorem \ref{thm:upper_bound_PPGM}.
\begin{proof}[Proof of Theorem \ref{thm:upper_bound_PPGM}]
For $\sv \in \F_q^k$, let $G_\eps(\sv) = \{\Gm \in \F_q^{k \times n} : n_{\sv,\C} \ge q^{-k/2 - \eps n}\}$. Also, for $\Gm \in \F_q^{k \times n}$, let $S_{\eps}(\Gm) = \{\sv \neq 0 : n_{\sv,\C}   \ge q^{-k/2 - \eps n}\}$. The previous lemma tells us that $\forall \sv \neq \mathbf{0}, |G_{\eps}(\sv)| = o(|G|)$ where $|G|= q^{nk}$ is the total number of possible matrices $\Gm \in \F_q^{k \times n}$. Now, notice that 
$$ \sum_{\sv \neq 0} |G_\eps(\sv)| = \left|\{(\Gm,\sv) : n_{\sv,\C} \ge q^{-k/2 - \eps n}\}\right| = \sum_{\Gm} |S_{\eps}(\Gm)|.$$
This implies that $\sum_{G \in \Gm} |S_{\eps}(G)| = o(|G|q^k)$ and $\E_\Gm[|S_\eps(\Gm)|] = o(q^k)$. Now fix $\Gm \in \F_q^{k \times n}$. We write 
\begin{align}
	\left(\sum_{\sv \in \F_q^n} n_{\sv,\C}\right)^2 & = \left(n_{\mathbf{0,\C}} + \sum_{\sv \in S_{\eps}(\Gm)} n_{\sv,\C} + \sum_{\sv \notin S_{\eps}(\Gm)} n_{\sv,\C}\right)^2 
	 \le 3n^2_{\mathbf{0},\C} + 3  \left(\sum_{\sv \in S_{\eps}(\Gm)} n_{\sv,\C}\right)^2 + 3 \left(\sum_{\sv \notin S_{\eps}(\Gm)} n_{\sv,\C}\right)^2 
\label{eq:inequality}\\
	 & \le 3n^2_{\mathbf{0},\C} + 3 |S_{\eps}(\Gm)| \sum_{\sv \neq \mathbf{0}} n_{\sv,\C}^2 + 3 \left(q^{k/2 - \eps n}\right)^2 \label{eq:Cauchy-Schwartz}                               \\
	 & \le 3|S_\eps(\Gm)| + o(q^k)  \label{eq:sumnssquare}
\end{align}
Here \eqref{eq:inequality} follows from the inequality $(x+y+z)^2 \leq 3x^2+3y^2+3z^2$ (which can be proved by noticing that 
$3x^2+3y^2+3z^2-(x+y+z)^2=(x-y)^2+(y-z)^2+(x-z)^2$). \eqref{eq:Cauchy-Schwartz} follows from the Cauchy-Schwartz inequality and 
\eqref{eq:sumnssquare} is a consequence of $ \sum_{\sv \in \F_q^k} n_{\sv,\C}^2 = 1$ which also gives $n^2_{\mathbf{0},\C} \le 1$.
In order to conclude, we write 
\begin{align*}
	P = \E_{\Gm}\left[\frac{1}{q^k}\left(\sum_{\sv \in F_q^k} n_{\sv,\C}\right)^2\right] \le \frac{1}{q^k}\left(3 \E_{\Gm}[|S_{\eps}(\Gm)|] + o(q^k)\right) = o(1).
\end{align*}
\end{proof}

\COMMENT{

\paragraph{Step 3.}
The last step consists in bringing the random variable
$$
N_\varepsilon \eqdef \# \{\sv: n_{\sv,\C} > q^{-k/2-\varepsilon n} \}.
$$
With the help of this quantity, we upperbound $\PPGM$ by
\begin{lemma}\label{lem:upper_bound_PPGM}
We have 
$$
\PPGM \leq \frac{\left( q^{k/2-\varepsilon n} + \sqrt{N_\varepsilon} \right)^2}{q^k}.
$$
\end{lemma}
\begin{proof}
Recall that $\PPGM= \frac{1}{q^k} \left( \sum_{\sv \in \F_q^k} n_{\sv,\C} \right)^2$. 
Let 
\begin{eqnarray*}
S_1 &\eqdef &\sum_{\sv \in \F_q^k: n_{\sv,\C} \leq q^{-k/2-\varepsilon n}} n_{\sv,\C}\\
S_2 & \eqdef &  \sum_{\sv \in \F_q^k: n_{\sv,\C} > q^{-k/2-\varepsilon n}} n_{\sv,\C}.
\end{eqnarray*}
 We deduce from this that 
\begin{eqnarray}
\PPGM & = & \frac{1}{q^k}(S_1+S_2)^2 \nonumber \\
& \leq & \frac{1}{q^k} \left( q^{k/2-\varepsilon n} + S_2\right)^2 \label{eq:upper_bound_PPGM}.
\end{eqnarray}
since $S_1 \leq q^k \cdot q^{-k/2-\varepsilon n}= q^{k/2-\varepsilon n} $.
Let us bring in  $\Ec \eqdef \{\sv \in \F_q^k: n_{\sv,\C} > q^{-k/2-\varepsilon n}\}$.
By the Cauchy-Schwartz inequality, we know that we always have $\sum_{\sv \in \Ec} n_{\sv,\C} \leq \sqrt{N_\varepsilon}\sqrt{\sum_{\sv \in \Ec} n_{\sv,\C}^2}$.
Since $\sum_{\sv \in \F_q^n} n_{\sv,\C}^2 = 1$ (this because $\ket{\zeta_\cv}$ is of norm $1$ and Equation~\eqref{Eq:3}), we deduce that
$S_2 \leq  \sqrt{N_\varepsilon}\sqrt{\sum_{\sv \in \Ec} n_{\sv,\C}^2} \leq \sqrt{N_\varepsilon}$. We finish the proof by plugging this upper-bound on $S_2$ in 
\eqref{eq:upper_bound_PPGM}.
\end{proof}

We can now bound the probability that $N_\varepsilon$ becomes too big by
\begin{lemma}\label{lem:Nepsilon}
If $\omega > \left(\drgv(1 - R)\right)^\perp$, there exists 
$\varepsilon >0$ and $\delta >0$ such that with probability $o(1)$ we have $N_\varepsilon \geq q^{k-\delta n}$.
\end{lemma}
\begin{proof}
Let us notice that
\begin{eqnarray}
\E_\Gm(N_\varepsilon) & = & \sum_{\sv \in \F_q^k} \Pr\left[ n_{\sv,\C} > q^{-k/2-\varepsilon n} \right] \nonumber \\
& \leq & 1 + q^k q^{-\Om{n}} \nonumber \\
& = & q^{k-\Om{n}}. \label{eq:bound_on_expectation}
\end{eqnarray}
Observe now that for any $\delta >0$, 
\begin{eqnarray*}
\Pr\left[ N_\varepsilon \geq q^{k-\delta n}\right] & \leq & \frac{\E_\Gm(N_\varepsilon)}{q^{k-\delta n}} \\
& \leq &  q^{\delta n-\Om{n}} \;\;\text{(by  Lemma \ref{lem:Nepsilon}).}
\end{eqnarray*}
It suffices to take $\delta>0$ small enough to have an upper-bound on the probability which is an $o(1)$.
\end{proof}
With all these results we can now finish the proof of Theorem \ref{thm:upper_bound_PPGM}.

\begin{proof}[Proof of Theorem \ref{thm:upper_bound_PPGM}]
Let $P_{\varepsilon,\delta} = \Pr(N_\varepsilon > q^{k- \delta n})$. We have
\begin{eqnarray*}
P & = & \E_G(\POPT) \\
& \leq & (1-P_{\varepsilon,\delta})\frac{\left(q^{k/2-\varepsilon n}+q^{k/2-\frac{\delta n}{2} } \right)^2}{q^k} + P_{\varepsilon,\delta} \;\;\text{(by Lemma \ref{lem:upper_bound_PPGM})} \\
& \leq & \left( q^{-\varepsilon n} + q^{-\delta n/2}\right)^2 + P_{\varepsilon,\delta}.
\end{eqnarray*}
We conclude that $P$ is an $o(1)$ by choosing an $\varepsilon >0$ and a $\delta>0$ such that we can apply Lemma \ref{lem:Nepsilon}.
\end{proof}
}

 
\section{From the quantum decoding problem to the short codeword problem}
In this section, we show how to apply our algorithm for the quantum decoding problem into Regev's reduction in order to obtain quantum algorithms for the short codeword problem. 

We fix $n,k',q \in \mathbb{N}$ with $q \ge 2$ as well as $\omega' \in (0,1)$. We start from a random instance $\Gm' \in \F_q^{k' \times n}$ of $\SCP(q,n,k',\omega')$. Let $\C'$ be the code associated to $\Gm'$ and $\C = (\C')^\perp$ the dual code of $\C'$. The idea will be to solve a quantum decoding problem associated to $\C$, {\ie} from the state 
$ \ket{\psi_\cv} \eqdef \sum_{\ev \in \F_q^n} f(\ev) \ket{\cv + \ev}$ where $\cv$ belongs to $\C$, we want to recover $\cv$. Then we apply the quantum Fourier transform and measure in the computational basis to obtain a short codeword of $\C'$. We also define $k = n-k'$ and 

\begin{align*} 
	\omega  \eqdef (\omega')^\perp = \frac{\Rbra{\sqrt{(q-1)(1-{\omega'})} - \sqrt{{\omega'}}}^{2}}{q}\quad ; \quad 
	f(\ev)  \eqdef \left(\sqrt{\frac{\omega}{q-1}}\right)^{|\ev|}\left(\sqrt{1 - {\omega}}\right)^{n - |\ev|}
\end{align*}
Recall also, using $\omega^\perp = {\omega'}$ that  
$$\QFT{f}(\yv) = \left(\sqrt{\frac{{\omega'}}{q-1}}\right)^{|\yv|}\left(\sqrt{1 - {{\omega'}}}\right)^{n - |\yv|}.$$

\paragraph{Remark.} We use this notation $k',\omega'$ so that the problem we reduce to is a $\QDP(q,n,k,\omega)$ with a generating matrix $\Gm \in \F_q^{k \times n}$. This allows us to keep notation consistent with the previous section but be aware that the Short Codeword problem we are solving is on $\C' = \C^\perp$.

\subsection{Regev's reduction for codes}

We now describe Regev's reduction for codes. As we will see, this does not necessarily give a reduction from the short codeword problem to the quantum decoding problem because of the small error in the quantum decoding algorithm. We consider the formulation of this reduction from~\cite{SSTX09} and adapted in~\cite{DRT23} in the context of codes. \\

We first construct 
$$\ket{\Omega_0} = \frac{1}{\sqrt{|\C|}} \sum_{\cv \in \C} \sum_{\ev \in \F_q^n} f(\ev)  \ket{\cv} \ket{\ev}
$$
and add $\cv$ to the second register to obtain
$$\ket{\Omega_1} = \frac{1}{\sqrt{|\C|}} \sum_{\cv \in \C} \ket{\cv}\ket{\psi_\cv}, \qquad \qquad \textrm{ where } \ket{\psi_\cv} = \sum_{\ev \in \F_q^n} f(\ev) \ket{\cv + \ev} .$$
The idea is then to recover $\cv$ from $\ket{\psi_{\cv}}$ using an algorithm for the quantum decoding problem. If this can be done perfectly, we can actually use this algorithm to erase the first register and obtain the state 
$$ \ket{\Omega_2} = \frac{1}{\sqrt{|\C|}} \sum_{\cv \in \C} \ket{\psi_\cv}.$$
We then apply the Quantum Fourier Transform on this state to get 
$$ \ket{\QFT{\Omega_3}} = \frac{1}{\sqrt{|\C|}} \sum_{\cv \in \C} \ket{\QFT{\psi_\cv}} = \sqrt{|\C|} \sum_{\yv \in \C'} \QFT{f}(\yv) \ket{\yv}.$$
This follows from Proposition \ref{proposition:periodic}.
Finally, we measure this state in the computational basis and hope to find a small codeword. 

The algorithm can be summarized by

\noindent\fbox{\parbox{\textwidth}{
\begin{center}{\bf Algorithm of the quantum reduction.}
\end{center}
\begin{align*}
	&\text{Initial state preparation : } &  & \quad \ket{\Omega_0} = \frac{1}{\sqrt{|\C|}} \sum_{\cv \in \C} \sum_{\ev \in \F_q^n} f(\ev)  \ket{\cv} \ket{\ev} \nonumber \\
	&\text{adding $\cv$ to $\ev$:} & \mapsto &  \quad \ket{\Omega_1} =\quad \frac{1}{\sqrt{|\C|}} \sum_{\cv \in \C} \sum_{\ev \in \F_q^n}  f(\ev) \ket{\cv}\ket{\cv+\ev}= \quad \frac{1}{\sqrt{|\C|}} \sum_{\cv \in \C} \ket{\cv}\ket{\psi_\cv} \nonumber \\
	&\text{decoding and erasing 1st register} &\mapsto &  \quad \ket{\Omega_2} =\quad \frac{1}{\sqrt{|\C|}} \sum_{\cv \in \C} \ket{\zerom}\ket{\psi_\cv}  \\
	&\text{QFT on the $2$nd register:} & \mapsto & \quad \ket{\Omega_3} = \frac{1}{\sqrt{|\C|}} \sum_{\cv \in \C}  \ket{\zerom} \ket{\QFT{\psi_\cv}} = \sqrt{|\C|} \sum_{\yv \in \C'} \QFT{f}(\yv)  \ket{\zerom} \ket{\yv}\\
	&\text{measuring the whole state:} & \mapsto & \quad \ket{\zerom} \ket{\yv} \;\;\text{(where $\yv \in \C'=\C^\perp$)}
\end{align*}}}

\noindent There are a two issues that can make the above algorithm not work as we want:
\begin{itemize}
	\item The quantum decoding problem used in order to go from $\ket{\Omega_1}$ to $\ket{\Omega_2}$ does not work perfectly in many cases. Even if we have an algorithm which works w.p. $1 - o(1)$, this can greatly change the state $\ket{\Omega_3}$ that we have at the end\footnote{This seems counterintuitive at first as we would expect the final state to be $\eps$-close to the ideal state if the quantum decoding succeeds w.p. $1-\eps$. However, we are in regimes where an ideal quantum decoder does not exist so such continuity arguments will not hold. As it will appear in our analysis, it is possible to slightly tweak the measurements used in the Quantum Decoding Problem and greatly change the outcome state.}. This also means we have to explicit each time our quantum decoding procedure and analyze thoroughly the resulting state. 
	\item Even if we obtain exactly the state $\sqrt{|\C|} \sum_{\yv \in \C'} \QFT{f}(\yv) \ket{\yv}$ we want, for values of ${\omega'}$ which are too large, this algorithm will actually always output $\yv = \mathbf{0}$ but we want a small non-zero codeword so our algorithm will not work. 
\end{itemize}

In this section, we show that our algorithms (or slight variants of our algorithms) can be successfully used in Regev's reduction in order to solve the Short Codeword Problem despite the above shortcomings. We show the following:
\begin{enumerate}
	\item If we take our polynomial time algorithms for the quantum decoding problem (Section~\ref{Section:Polynomial}) we can find in quantum polynomial time small codewords down to Prange's bound, {\ie} down to $\frac{(n-k')(q-1)}{q} = \frac{k(q-1)}{q}$. Notice however, that our algorithm obtains a state $\ket{\Omega_2}$ very far from the theoretical state $ \ket{\Omega_2} = \frac{1}{\sqrt{|\C|}} \sum_{\cv \in \C} \ket{\psi_\cv}$, but we still show how to obtain a small codeword in $\C$ after performing $\QFTt$ and measuring. 
	\item If we consider the tractability regime and if we take the Pretty Good Measurement associated to the states $\ket{\QFT{\psi_\cv}}$, we show that we actually exactly get the state $\ket{\Omega_2}$ we are looking for (up to a normalization factor). We then look at this PGM and $2$ variants:
	 \begin{enumerate}
	 	\item If we finish the analysis with the PGM, we will most often be in regimes where we measure $\mathbf{0}$ in the final step so we will not be able to solve the Short Codeword problem. 
	 	\item We can slightly tweak the PGM so that it will give us a short codeword down to the tractability bound. 
	 	\item We also show another example where we can slightly tweak the PGM but where the reduction utterly fails, meaning that the state we obtain before measuring is $\ket{\perp}$. This shows that there is no hope to perform generic reduction between the quantum decoding problem and the short codeword problem with this method. 
\end{enumerate}
\end{enumerate}

\subsection{The quantum reduction with unambiguous state discrimination}\label{Section:ReductionUSD}
We first show how to use our quantum polynomial time algorithms for the quantum decoding in Regev's reduction. We first construct 
$$ \ket{\Omega_1} = \frac{1}{\sqrt{|\C|}} \sum_{\cv \in \C} \ket{\cv}\ket{\psi_\cv}, \qquad \qquad \textrm{ where } \ket{\psi_\cv} = \sum_{\ev \in \F_q^n} f(\ev) \ket{\cv + \ev} .$$
We then apply unambiguous state discrimination measurement on $\ket{\psi_\cv} = \bigotimes_{i = 1}^n \ket{\psi_{c_i}}$. Recall that by using the version of unambiguous state discrimination presented in Proposition \ref{proposition:efficientUSD} for each $i$, we perform a unitary $U$ on the $i^{th}$ register of $\ket{\Omega_1}$ that does the following for each $c_i \in \F_q$:
$$ U \ket{\psi_{c_i}}\ket{0} = \sqrt{\pusd} \ket{c_i}\ket{0} + \sqrt{1 - \pusd} \QFT{\ket{0}}\ket{1}.$$
Here
\begin{equation}
\label{eq:pusd}
\pusd = q \cdot \frac{\omega^\perp}{q-1}=\frac{q}{q-1} {\omega'}.
\end{equation} 
After applying this (coherent) USD, we obtain the state 
\begin{align*}
	\ket{\Omega_2} & = \frac{1}{\sqrt{|\C|}} \sum_{\cv \in \C} \left(\ket{\cv} \otimes \left( \bigotimes_{i = 1}^n  \sqrt{\pusd} \ket{c_i}\ket{0} + \sqrt{1 - \pusd} \QFT{\ket{0}}\ket{1} \right)\right) \\
	& =  \frac{1}{\sqrt{|\C|}} \sum_{\cv \in \C} \ket{\cv} \sum_{J \subseteq [n]} \beta_J \ket{\wcv_J}.
\end{align*}
where 
$$
\ket{\wcv_J} = \bigotimes_{i = 1}^n \ket{\gamma_i} \quad \textrm{ with } \left\{\begin{tabular}{rl} $\ket{\gamma_i} =\ket{c_i} \ket{0}  $ & if $i \in J$ \\ $\ket{\gamma_i}  = \QFT{\ket{0}}\ket{1}$ & \textrm{otherwise}\end{tabular}\right.$$
and $\beta_J = \sqrt{(1-\pusd)^{n-|J|}(\pusd)^{|J|}}$. $J$ here corresponds to the set of indices where the USD succeeded.  Notice that one can efficiently recover $J$ from $\ket{\wcv_J}$ by looking at the outcome registers, so we can add it to obtain the state 
$$ \ket{\Omega_3} = \frac{1}{\sqrt{|\C|}} \sum_{\cv \in \C} \ket{\cv} \sum_{J \subseteq [n]} \beta_J \ket{\wcv_J}\ket{J}.$$
We now measure $J$ to obtain the state 
$$ \ket{\Omega_4(J)} = \frac{1}{\sqrt{|\C|}} \sum_{\cv \in \C} \ket{\cv}\ket{\wcv_J}.$$

Notice that $|J|$ follows the distribution $D(\pusd)$ with $\pusd > R$ so, there exists an absolute constant $\eps > 0$, s.t.
$ (R + \eps)n \le |J| \le (\pusd)n$ w.p. at least $\frac{1}{2} - o(1)$ (the probability that $|J| \le \pusd n$ is at least $\frac{1}{2}$). Moreover, using the same argument as in Section~\ref{Section:BinaryPolynomial}, we can recover $\cv$ from $\cv_J$ w.p. $1 - o(1)$. This means we can erase the register $\cv$ in $\ket{\Omega_4(J)}$ to get 
$$ \ket{\Omega_5(J)} = \frac{1}{\sqrt{|\C|}} \sum_{\cv \in \C} \ket{\cv_J} =
\frac{1}{\sqrt{|\C|}} \sum_{\cv \in \C_J} \ket{\cv} .$$ 

We apply the Fourier transform on this state to get 
$$ \ket{\QFT{\Omega_5(J)}} = \frac{1}{\sqrt{|(\C_J)^\perp|}}  \sum_{\yv \in (\C_J)^\perp} \ket{\yv} .$$ 
By measuring this state, we get a vector $\yv \in (\C_J)^\perp$ of weight at most $\frac{(q-1)|J|}{q} \le \frac{(q-1)\pusd n}{q} = {\omega'} n$ w.p. $\Theta(1)$. Here we used \eqref{eq:pusd} for the last equality. The crux is that by Lemma \ref{lem:puncture_shorten} we have
$$
(\C_J)^\perp = (\C^\perp)^J.
$$
In other words, we get words in $\C^\perp$ shortened at $J$, meaning dual codewords that are $0$ outside $J$.
We have therefore constructed a word $\zv \in \C^\perp$ s.t. $\zv_j = \yv_j$ if $j \in J$ and $\zv_j = 0$ otherwise. 

In conclusion, we just proved the following
\begin{theorem}
	The above algorithm, that performs Regev's reduction and uses unambiguous state discrimination for the quantum decoding problem, can solve in polynomial time $\SCP(q,n,k',{\omega'} )$ for ${\omega'} > \frac{(q-1)k}{q}=\frac{(q-1)(n-k')}{q}$ w.p. $\Theta(1)$. 
\end{theorem}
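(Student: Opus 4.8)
The construction and almost all the ingredients are already assembled in the discussion above, so the plan is to organize that discussion into a proof and pin down the probabilities. First I would fix the reduction: given the $\SCP$ instance $\Gm'$ with associated code $\C'$, pass to the dual $\C=(\C')^\perp$ of dimension $k=n-k'$, set $R:=k/n$, and work with the $q$-ary Bernoulli noise of parameter $\omega=(\omega')^\perp$, so that (Lemma~\ref{lemma:QFTpsi}) $\QFT{f}$ is the Bernoulli amplitude function of parameter $\omega^\perp=\omega'$, and so that the USD success probability is $\pusd=\tfrac{q}{q-1}\omega'$ by \eqref{eq:pusd}. The hypothesis on $\omega'$ is precisely the inequality $\pusd>R$, which is exactly the regime in which the polynomial-time $q$-ary quantum decoder of Section~\ref{Section:Polynomial} operates.

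Next I would run the state evolution displayed above. Build $\ket{\Omega_1}=\tfrac{1}{\sqrt{|\C|}}\sum_{\cv\in\C}\ket{\cv}\ket{\psi_\cv}$ and apply the coherent unambiguous-state-discrimination unitary of Proposition~\ref{proposition:efficientUSD} to each of the $n$ tensor factors of the $\ket{\psi_\cv}$ register; this produces a superposition whose branches are labelled by the subset $J\subseteq[n]$ of coordinates on which USD succeeded. Since $J$ is readable from the ancilla outcome registers, record it and measure it. The indicators $(i\in J)$ are i.i.d.\ Bernoulli with parameter $\pusd$, so by Lemma~\ref{lem:Hoeffding} there is an absolute $\varepsilon>0$ such that the measured $J$ satisfies $(R+\varepsilon)n\le|J|\le\pusd n$ with probability $\ge\tfrac12-o(1)$. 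Condition on such a $J$: by Proposition~\ref{Proposition:RandomRank}, $\Gm_J$ has rank $k$ with probability $1-q^{-\Omega(n)}$, so by Proposition~\ref{Proposition:Recovery} the decoding-by-linear-algebra step of Section~\ref{Section:BinaryPolynomial} recovers $\cv$ from $\cv_J$; this lets me coherently erase the $\ket{\cv}$ register and, using $|\C_J|=|\C|$, reach $\tfrac{1}{\sqrt{|\C_J|}}\sum_{\cv\in\C_J}\ket{\cv}$. Applying $\QFTt$ together with Proposition~\ref{proposition:periodic} (with $f$ the indicator of $\mathbf 0$) turns this into the uniform superposition $\tfrac{1}{\sqrt{|(\C_J)^\perp|}}\sum_{\yv\in(\C_J)^\perp}\ket{\yv}$.

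For the final extraction I would invoke Lemma~\ref{lem:puncture_shorten}, which gives $(\C_J)^\perp=(\C^\perp)^J=(\C')^J$: every measurement outcome $\yv$ is a codeword of $\C'$ supported inside $J$, which after padding with zeros off $J$ is a candidate $\SCP$ solution. Because $\dim\C_J=k$ and $|J|\ge k+\varepsilon n$, the code $(\C_J)^\perp$ has dimension $\ge\varepsilon n\ge 1$, so the measured $\yv$ is nonzero with probability $1-q^{-\varepsilon n}=1-o(1)$; and the weight of a uniformly random element of $(\C_J)^\perp$ is at most its mean $\tfrac{q-1}{q}|J|\le\tfrac{q-1}{q}\pusd n=\omega' n$ with probability $\Theta(1)$, by the standard concentration of the weight enumerator of a random linear code (in the spirit of the estimates of Section~\ref{Section:Intractability}). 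Multiplying the constant-probability events (good $|J|$, full-rank $\Gm_J$, recovery of $\cv$, nonzero low-weight final outcome) gives overall success probability $\Theta(1)$, and every step --- state preparation, the $n$ USD unitaries (each of cost polynomial in $\log q$), linear-algebra decoding, $\QFTt$, and the measurements --- runs in $\poly(n,\log q)$.

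The one step with genuine content, and the main obstacle, is the point already flagged in the discussion: USD is lossy, so it does \emph{not} produce the clean Regev state $\tfrac{1}{\sqrt{|\C|}}\sum_{\cv}\ket{\psi_\cv}$ but an entangled superposition over the random set $J$, and one must argue that after measuring $J$ and erasing the codeword register the shortened-dual-code state is still useful. This forces two-sided control of $|J|$: the lower bound $|J|\ge(R+\varepsilon)n$ is needed both to recover $\cv$ (rank-$k$ $\Gm_J$) and to make $\dim(\C_J)^\perp\ge 1$ so the output is nonzero, while the upper bound $|J|\le\pusd n$ is needed for the weight bound $\tfrac{q-1}{q}|J|\le\omega' n$; checking that both hold simultaneously with constant probability is exactly where the hypothesis $\pusd>R$ enters, and this --- together with the weight-concentration claim for the final measurement --- is the crux of the argument.
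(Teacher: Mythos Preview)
Your proposal is correct and follows essentially the same route as the paper: set up the dual code $\C$, run coherent USD coordinatewise, measure the success set $J$, use the two-sided bound $(R+\varepsilon)n\le|J|\le\pusd n$ (which holds with probability $\tfrac12-o(1)$), erase $\ket{\cv}$ via the linear-algebra decoder, apply $\QFTt$ to land in $(\C_J)^\perp=(\C')^J$ by Lemma~\ref{lem:puncture_shorten}, and measure to get a nonzero codeword of weight at most $\tfrac{q-1}{q}|J|\le\omega' n$ with constant probability. Your final paragraph correctly identifies the one nontrivial point --- that the lossy USD forces you to work with the shortened dual rather than the ideal Regev state --- and you handle it exactly as the paper does.
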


Notice that we can repeat this algorithm to amplify the success probability. Our algorithm can go down to Prange's bound $\frac{(q-1)(n-k')}{q}$, which is the best known bound for polynomial time algorithms for the short codeword problem. 
 The whole algorithm is summarized by:
 
\noindent\fbox{\parbox{\textwidth}{
\begin{center}
{\bf Algorithm of the quantum reduction in the case of USD.}
\end{center}
\begin{align*}
	&\text{Initial state preparation:} &  & \quad \ket{\Omega_0} &=& \quad\frac{1}{\sqrt{|\C|}} \sum_{\cv \in \C} \sum_{\ev \in \F_q^n} f(\ev)  \ket{\cv} \ket{\ev} \nonumber \\
	&\text{adding $\cv$ to $\ev$:} & \mapsto &  \quad \ket{\Omega_1} &=& \quad \frac{1}{\sqrt{|\C|}} \sum_{\cv \in \C} \sum_{\ev \in \F_q^n}  f(\ev) \ket{\cv}\ket{\cv+\ev}= \quad \frac{1}{\sqrt{|\C|}} \sum_{\cv \in \C} \ket{\cv}\ket{\psi_\cv} \nonumber \\
	&\text{applying coherent USD:} & \mapsto  & \quad \ket{\Omega_2}  &=& \quad \frac{1}{\sqrt{|\C|}} \sum_{\cv \in \C} \left(\ket{\cv} \otimes \left( \bigotimes_{i = 1}^n  \sqrt{\pusd} \ket{c_i}\ket{0} + \sqrt{1 - \pusd} \QFT{\ket{0}}\ket{1} \right)\right) \\
	& & & \quad &=& \quad \frac{1}{\sqrt{|\C|}} \sum_{\cv \in \C} \ket{\cv} \sum_{J \subseteq [n]} \beta_J \ket{\wcv_J} \\
	&\text{put $J$ in the last register using $\ket{\wcv_J}$} &\mapsto & \quad  \ket{\Omega_3} &=& \frac{1}{\sqrt{|\C|}} \sum_{\cv \in \C} \ket{\cv} \sum_{J \subseteq [n]} \beta_J \ket{\wcv_J}\ket{J} \\
	&\text{measure $J$} &\mapsto &  \quad \ket{\Omega_4} &=& \quad \frac{1}{\sqrt{|\C|}} \sum_{\cv \in \C} \ket{\cv}\ket{\wcv_J}  \\
	&\text{erase $\cv$} & \mapsto & \quad \ket{\Omega_5} &=& \quad \frac{1}{\sqrt{|\C|}} \sum_{\cv \in \C} \ket{\cv_J} =
\frac{1}{\sqrt{|\C|}} \sum_{\cv \in \C_J} \ket{\cv}\\
	&\text{QFT:} & \mapsto & \quad \ket{\Omega_6} &=&\quad  \frac{1}{\sqrt{|(\C_J)^\perp|}}  \sum_{\yv \in (\C_J)^\perp} \ket{\yv}\\
	&\text{measuring the whole state:} & \mapsto & \quad  & & \quad \ket{\yv} \;\;\text{(where $\yv \in (\C_J)^\perp = (\C^\perp)^J \subset \C^\perp$)}
\end{align*}}}

\subsection{The Quantum reduction with the Pretty Good Measurement}~\label{Section:PGMReduction}
We now study Regev's reduction when we use the PGM for the quantum decoding problem. We consider the basis $\{\ket{Y_\cv}\}_{\cv \in \C}$ described in the previous section associated to the states $\{\QFT{\ket{\psi_\cv}}\}_{\cv \in \C}$. We showed that 
\begin{align*}
\forall \cv \in \C, \ \braket{\QFT{\psi_\cv}}{Y_\cv} = \sqrt{\PPGM}
\end{align*}
where $\PPGM$ is the probability that the Pretty Good Measurement succeeds. We now unfold Regev's reduction. We start from the state $\ket{\Omega_1}$ with a slight change, we apply namely immediately the $\QFTt$ on the second register to get
\begin{align*}
\ket{\Omega_1} = \frac{1}{\sqrt{|\C|}} \sum_{\cv \in \C} \ket{\cv}\ket{\QFT{\psi_\cv}}
\end{align*}
with $\ket{\psi_{\cv}} = \sum_{\ev} f(\ev)\ket{\cv + \ev}$. We then perform coherently the PGM on the second register and write the output on the third register. This means that if we write each $\ket{\QFT{\psi_\cv}} = \sum_{\cv' \in \C} \alpha_{\cv,\cv'} \ket{Y_{\cv'}}$, we obtain
\begin{align*}
\ket{\Omega_2} = \frac{1}{\sqrt{|\C|}} \sum_{\cv \in \C} \ket{\cv} \sum_{\cv' \in \C} \alpha_{\cv,\cv'} \ket{Y_{\cv'}}\ket{\cv'}
\end{align*}
We then subtract the value of the third register in the first register to get
\begin{align*}
\ket{\Omega_3} = \frac{1}{\sqrt{|\C|}} \sum_{\cv,\cv' \in \C} \alpha_{\cv,\cv'} \ket{\cv - \cv'} \ket{Y_{\cv'}}\ket{\cv'}
\end{align*}
Finally, we reverse the PGM between registers $2$ and $3$ to obtain the state 
\begin{align*}
\ket{\Omega_4} = \frac{1}{\sqrt{|\C|}} \sum_{\cv,\cv' \in \C} \alpha_{\cv,\cv'} \ket{\cv - \cv'} \ket{Y_{\cv'}}\ket{\zerom}
\end{align*}

From the discussion at the beginning of this section, we have that for any $\cv \in \C, \alpha_{\cv,\cv} = \sqrt{\PPGM}$. This means we can rewrite the above state as
\begin{align*}
\ket{\Omega_4} & = \frac{1}{\sqrt{|\C|}} \left(\sum_{\cv' \in \C} \sqrt{\PPGM} \ket{\zerom}\ket{Y_{\cv'}} + \sum_{\cv, \cv' \neq \cv} \alpha_{\cv,\cv'}\ket{\cv - \cv'}\ket{Y_{\cv'}}\right) \\ 
& = \sqrt{\PPGM}\ket{\zerom}\left(\frac{1}{\sqrt{|\C|}} \sum_{\cv \in \C} \ket{Y_{\cv}}\right) + \sum_{\cv, \cv' \neq \cv} \alpha_{\cv,\cv'}\ket{\cv - \cv'}\ket{Y_{\cv'}}.
\end{align*}

The next step of the reduction is to measure the first register of $\ket{\Omega_4}$. Since the states $\ket{Y_{\cv'}}$ are orthogonal and of norm $1$, we measure $0$ w.p. $\PPGM$ in the first register and the second register becomes 
\begin{align*}
\ket{\Omega_5} = \frac{1}{\sqrt{|\C|}} \sum_{\cv \in \C} \ket{Y_\cv} = \ket{\tW_0} = \frac{1}{n_0} \sum_{\yv \in \C'} \QFT{f}(\yv)\ket{\yv}
\end{align*}
where $n_0 \eqdef  \ \norm{\sum_{\yv \in \C'} \QFT{f}(\yv)\ket{\yv}}$.
We measure this final state to potentially measure a small codeword. Let $a(t) = |\{\yv \in \C' : |\yv| = t\}$. Note that this quantity corresponds to
$a_{\zerom}(t,\C')$ as defined in Subsection \ref{Section:FirstComputations}. The probability $p(t)$ that the above algorithms finds a word of weight $t$ in $\C'$ is 
\begin{align}
	p(t) = \frac{1}{n_0^2} a(t) |\QFT{f}(t)|^2,
\end{align}
where we overload the notation $\QFT{f}$ to mean that $\QFT{f}(t) = \QFT{f}(\yv)$ for any $\yv$ s.t. $|\yv| = t$ (recall that $\QFT{f}(\yv)$ is constant for any of these $\yv$). The issue here is that for ${\omega'}$ small enough, we will almost always measure $0$. Indeed,
$$p_0 = \frac{|\QFT{f}(0)|^2}{n_0^2} = \frac{|\QFT{f}(0)|^2}{\sum_{t} a(t) |\QFT{f}(t)|^2} = \frac{|\QFT{f}(0)|^2}{|\QFT{f}(0)|^2 + \sum_{t \neq 0} a(t) |\QFT{f}(t)|^2}.$$
Here, notice that  
\begin{align*} 
|\QFT{f}(0)|^2 = (1- {\omega'})^n \\
\Pr_G\left[\sum_{t \neq 0} a(t)|\QFT{f}(t)|^2 \le \frac{2}{q^k}\right] \ge 1 -o(1) 
\end{align*}
where for the last inequality, we use the concentration bounds for $a(t)=a_{\zerom}(t,\C')$ of Section~\ref{Section:FirstComputations}. So when ${\omega'} < 1 - q^{-\frac{k}{n}}$, we measure $0$ with high probability. This unfortunately happens quite often and it is a problem because in our short codeword problem, we want to find a small non-zero vector. 

\subsubsection{A counterexample that shows complete failure}~\label{Section:CompleteFailure}
We show that things can go even worse when slightly changing the measurement used. We show that instead of measuring $\ket{\mathbf{0}}$, we can measure some given state $\ket{\bot}$ orthogonal to all the $\ket{\tW_\sv}$. 
Recall from Proposition~\ref{Proposition:PGMDescription} that 
\begin{align*}
\ket{{Y}_\cv} = \frac{1}{\sqrt{q^k}} \sum_{\sv \in \F_q^k} \chi_{\cv}(u_\sv)\ket{\tW_\sv}
\end{align*} 
Also,  the state resulting from Regev's reduction is the state $\sum_{\cv \in \C} \ket{Y_\cv} = \ket{\tW_0} $. Our modified measurement can give an extra outcome which will be an $\uv \in \F_q^n \backslash \C$ and we define $\S = \C \cup \{\uv\}$. Let
\begin{align*}
\ket{Z_\cv} & \eqdef \frac{1}{\sqrt{q^k}} \left(\ket{\perp} + \sum_{\sv \neq \mathbf{0}} \chi_\cv(u_\sv) \ket{\tW_\sv} \right) \quad \forall \cv \in \C \\
\ket{Z_{\uv}} & \eqdef \ket{{\tW}_0}
\end{align*}

Notice that the $\ket{Z_\yv}$ are pairwise orthogonal and $\myspan(\{\ket{Z_{\yv}}\}_{\yv \in \S}) = \myspan(\{\ket{\tW_\sv}\}_{\sv \in \F_q^k}, \ket{\bot}) = \myspan(\{\QFT{\ket{\psi_{\cv}}}\}_{\cv \in \C},\ket{\bot})$. This means the measurement $\{\ket{Z_{\yv}}\}_{\yv \in \S}$ will be complete when measuring any $\ket{\QFT{\psi_{\cv}}}$.

Recall that $\ket{\QFT{\psi}_\cv} = \frac{1}{\sqrt{q^k}} \sum_{\sv \in \F_q^k} \chi_{\sv}(\uv_\sv)\ket{W_\sv}$, so we have 
\begin{align*}
\forall \cv \in \C, \ \braket{\QFT{\psi_\cv}}{{Z}_\cv} = \frac{1}{\sqrt{q^k}} \sum_{\sv \neq 0} n_s = \sqrt{\PPGM} - \frac{n_0}{\sqrt{q^k}} \ge \sqrt{\PPGM} - \frac{1}{\sqrt{q^k}}
\end{align*}
where we use in the second equality that $\sqrt{\PPGM}=\frac{1}{\sqrt{q^k}}\sum_{\sv \in \F_q^k}n_{\sv}$ and in the last inequality that
$n_0 \leq 1$.
This means the above measurement solves the quantum decoding problem wp. $\left(\sqrt{\PPGM} - \frac{n_0}{\sqrt{q^k}}\right)^2 \ge \left(\sqrt{\PPGM} - \frac{1}{\sqrt{q^k}}\right)^2$ which is $1 - o(1)$ as long as $P_{PGM} = 1 - o(1)$. 

Now, we perform the reduction presented in Section~\ref{Section:PGMReduction}. We just rewrite the states of the reduction
\begin{align*}
\ket{\Omega_1} & = \frac{1}{\sqrt{|\C|}} \sum_{\cv \in \C} \ket{\cv}\ket{\QFT{\psi_\cv}} \\
\ket{\Omega_2} & = \frac{1}{\sqrt{|\C|}} \sum_{\cv \in \C} \ket{\cv} \sum_{\yv \in \S} \beta_{\cv,\yv} \ket{Z_{\yv}}\ket{\yv} \qquad \qquad \textrm{ where } \beta_{\cv,\yv} = \braket{\QFT{\psi_\cv}}{Z_\yv} \\
\ket{\Omega_3} & = \frac{1}{\sqrt{|\C|}} \sum_{\cv \in \C, \yv \in \S} \beta_{\cv,\yv} \ket{\cv - \yv} \ket{Z_{\yv}}\ket{\yv} \\
\ket{\Omega_4} & = \frac{1}{\sqrt{|\C|}} \sum_{\cv \in \C, \yv \in \S} \beta_{\cv,\yv} \ket{\cv - \yv} \ket{Z_{\yv}}\ket{\zerom} = \left(\sqrt{P_{PGM}} - \frac{n_0}{\sqrt{q^k}}\right) \ket{\zerom}\frac{1}{\sqrt{|\C|}}\sum_{\cv \in \C} \ket{Z_\cv} + \frac{1}{\sqrt{|\C|}} \sum_{\cv \in \C}\sum_{\yv \in \S, \yv \neq \cv} \beta_{\cv,\yv}\ket{\cv - \yv}\ket{Z_\yv}
\end{align*}
where in the last line, we dropped in the last equality the third register and we used that $\beta_{\cv,\cv} = \left(\sqrt{P_{PGM}} - \frac{n_0}{\sqrt{q^k}}\right)$ for each $\cv \in \C$. This means that when we measure the first register, we obtain $\mathbf{0}$ w.p. $\left(\sqrt{\PPGM} - \frac{n_0}{\sqrt{q^k}}\right)^2$, and the resulting state  is $\ket{\Omega_{5}} = \sum_{\cv \in \C} \ket{Z_\cv} = \ket{\bot}$ which shows that the reduction entirely fails in this case. 
\subsubsection{A measurement that works}\label{Section:MeasurementThatWorks}
Finally, we show a measurement that will make the reduction work when $\PPGM = 1 - o(1)$. The idea is similar to the one of Section~\ref{Section:CompleteFailure}. We add an extra outcome $\uv \in \F_q^n\backslash \C$ and define $\S = \C \cup \{\uv\}$. We now define
\begin{align*}
	\ket{U_{0}} & = \frac{\sum_{\yv \in \C' : \yv \neq \mathbf{0}} \QFT{f}(\yv)\ket{\yv}}{\norm{\sum_{\yv \in \C' : \yv \neq \mathbf{0}} \QFT{f}(\yv)\ket{\yv}}} \\
	\forall \cv \in \C, \ \ket{{Z}_\cv} & = \frac{1}{\sqrt{q^k}} \left(\ket{U_0} + \sum_{\sv \neq \mathbf{0}} \chi_\cv(u_\sv) \ket{\tW_\sv} \right) \\
	\ket{Z_{\uv}} & = \ket{\mathbf{0}}
\end{align*}

The $\ket{Z_\cv}$ are exactly the states $\ket{Y_{\cv}}$ of the pretty good measurement but we removed the $\ket{\mathbf{0}}$ component of $\ket{\tW_{0}}$. As in the previous subsection, the $\ket{Z_\yv}$ are orthogonal. In order to make the measurement complete, we added the extra basis element $\ket{Z_{\uv}}=\ket{\zerom}$. We therefore have a projective measurement $\{\ket{Z_{\yv}}\}_{\yv \in \S}$.
Again, we have 
$
	\braket{\QFT{\psi_\cv}}{{Z}_\cv} \ge \sqrt{\PPGM} - \frac{1}{\sqrt{q^k}}
$
and independent of $\cv$ so w.p. at least $\left(\sqrt{\PPGM} - \frac{1}{\sqrt{q^k}}\right)^2$, we get the state $\ket{U_0}$. Then, if we measure this state $\ket{U_0}$,  we will get a codeword of weight $t$ w.p. 
$$ p(t) = \frac{a(t)|\QFT{f}(t)|^2}{\sum_{t \neq 0} a(t)|\QFT{f}(t)|^2}, \ \forall t \neq 0 $$
and $p(0) = 0$, where $a(t)$ is the number of codewords of weight $t$ in $\C'$. Recall that 
$$\QFT{f}(t) = \left(\sqrt{\frac{{\omega'}}{q-1}}\right)^{t}\left(\sqrt{1 - {{\omega'}}}\right)^{n - t}.$$
and we are in the regime where $P_{PGM} = 1 -o(1)$ which means that $\omega < (\drgv(1-\frac{k}{n}))^\bot$  and hence $\omega' > \drgv(\frac{n-k}{n}) = \drgv(\frac{k'}{n}) $.

 Using Proposition~\ref{Proposition:18} and the expression of $\QFT{f}(t)$, as well as concentration bounds for $a(t)$, we have that for any absolute constant $\eps > 0$,  $\sum_{t = \lfloor({\omega'} - \eps)n\rfloor}^{\lfloor({\omega'} + \eps)n\rfloor} p(t) = 1 - o(1)$. This means we will measure a word of weight approximately $\lfloor {\omega'} n\rfloor$ in $\C'$.
\paragraph{Discussion.}
These $3$ examples above show that it is very easy to slightly modify the algorithm for solving the quantum decoding problem and drastically change the result after Regev's reduction. We therefore cannot have proper reduction theorems between the quantum decoding problem and the short codeword problem but we have to analyze on a case by case basis whether an algorithm for the quantum decoding problem can be used for finding a short codeword.  On the positive side of the reduction, we can summarize our results as follows:
\begin{proposition}
	Let $q,n,k \in \mathbb{N}$ with $q \ge 2$ and $\omega \in (0,1)$. Let also  $R = \lfloor \frac{k}{n}\rfloor$, $\omega' = \omega^\perp$ and $k' = n-k$.
	\begin{itemize}
		\item For $\omega < (\frac{q-1 R}{q})^\perp$, there exists a quantum algorithm running in time $\poly(n,\log(q))$ that solves $\QDP(q,n,k,\omega)$ w.p. $1 - o(1)$ (Theorem~\ref{Theorem:polyFull}) . Moreover, this algorithm can be used using Regev's reduction to solve $\SCP(q,n,k',\omega')$ in time $poly(n,\log(q))$ (Section~\ref{Section:ReductionUSD}).
		\item For $\omega < (\delta_{\min}(1-R))^\perp$, there exists a quantum algorithm (for which we don't specify the running time but which could be exponential in $n$) that solves $\QDP(q,n,k,\omega)$ w.p. $1 - o(1)$ (Proposition~\ref{Proposition:23}). This algorithm can be (slightly tweaked but with success probability still $1 - o(1)$) and used in Regev's reduction to solve $\SCP(q,n,k',\omega')$ w.p. $\Theta(1)$ (Section~\ref{Section:MeasurementThatWorks}).
	\end{itemize}
\end{proposition}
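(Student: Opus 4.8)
The plan is to obtain both items by assembling results already established, so the only real work is to check that the hypotheses of each cited statement are met and then to chain everything together. The one small point I would verify at the outset is the translation between the two ways of writing the noise threshold: since $x \mapsto \perpO{x}$ is an order-reversing involution on $[0,\frac{q-1}{q}]$ (this is Lemma~\ref{lemma:QFTpsi}), the condition $\omega < \perpF{\frac{(q-1)R}{q}}$ is equivalent to $\omega^\perp > \frac{(q-1)R}{q}$, i.e. $\frac{q\,\omega^\perp}{q-1} > R$, and likewise $\omega < \perpF{\drgv(1-R)}$ is equivalent to $\omega' = \omega^\perp > \drgv(1-R) = \drgv(k'/n)$.

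For the first item, I would simply invoke Theorem~\ref{Theorem:polyFull} with the reformulated hypothesis $\frac{q\,\omega^\perp}{q-1} > R$ to get the $\poly(n,\log q)$-time algorithm for $\QDP(q,n,k,\omega)$ succeeding with probability $1-o(1)$. For the short codeword consequence I would then run the construction of Section~\ref{Section:ReductionUSD} verbatim: prepare $\ket{\Omega_1}$, apply the efficient coherent $q$-ary USD of Proposition~\ref{proposition:efficientUSD} on each register, record and measure the success set $J$, use $\cv_J$ together with Proposition~\ref{Proposition:Recovery} to erase the first register — valid with overwhelming probability since $\pusd = \frac{q\,\omega^\perp}{q-1} > R$ forces $|J| \ge (R+\eps)n$ — apply $\QFTt$ to $\sum_{\cv\in\C_J}\ket{\cv}$ and measure; using $(\C_J)^\perp = (\C^\perp)^J$ from Lemma~\ref{lem:puncture_shorten} this yields a codeword of $\C^\perp$ supported on $J$ of weight at most $\frac{(q-1)|J|}{q} \le \omega' n$ with probability $\Theta(1)$, and amplification by repetition finishes it. The point worth recording is that the bound $\omega' > \frac{(q-1)(n-k')}{qn}$ one lands on is exactly Prange's bound.

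For the second item, I would first apply Proposition~\ref{Proposition:23}, which gives $\PPGM = 1-o(1)$ in this regime, so the projective measurement $\{\kb{Y_\cv}\}$ of Proposition~\ref{Proposition:PGMDescription} solves $\QDP(q,n,k,\omega)$ with probability $1-o(1)$ (with no running-time guarantee, since it is built from the Fourier basis of the shifted dual codes). The second half is to feed this into Regev's reduction; here I would not use the bare PGM but the tweaked measurement of Section~\ref{Section:MeasurementThatWorks}, namely replace the $\ket{\mathbf 0}$-component of $\ket{\tW_0}$ by the normalized projection $\ket{U_0}$ of $\sum_{\yv\in\C'}\QFT{f}(\yv)\ket{\yv}$ onto the nonzero coordinates and add a single compensating basis vector $\ket{Z_{\uv}} = \ket{\mathbf 0}$ (indexed by an auxiliary symbol $\uv\notin\C$) so that $\{\kb{Z_\yv}\}$ is again a projective measurement on the right span. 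I would then check that $\braket{\QFT{\psi_\cv}}{Z_\cv} = \sqrt{\PPGM} - \frac{n_0}{\sqrt{q^k}} \ge \sqrt{\PPGM} - q^{-k/2}$, which is independent of $\cv$ and $1-o(1)$; running Regev's steps with this measurement, the first register collapses to $\mathbf 0$ with probability $\Theta(1)$, leaving $\ket{U_0}$, and measuring $\ket{U_0}$ outputs a \emph{nonzero} codeword of $\C'$ whose weight concentrates around $\omega' n$ by Proposition~\ref{Proposition:18} combined with the concentration estimates on $a(t) = a_{\mathbf 0}(t,\C')$ from Section~\ref{Section:FirstComputations}. This solves $\SCP(q,n,k',\omega')$ with probability $\Theta(1)$.

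The main obstacle is not the chaining, which is routine, but the last step of the second item: one cannot argue that because the QDP solver succeeds with probability $1-o(1)$ Regev's reduction automatically succeeds, since in these regimes there is no perfect decoder and the state after Step~4 is genuinely discontinuous in the chosen QDP measurement — the counterexample of Section~\ref{Section:CompleteFailure} exhibits a different tweak that fails completely. So the real content I would need to get right is the verification that the modified POVM is a valid projective measurement on $\myspan\{\QFT{\ket{\psi_\cv}}\}$ extended by $\mathbb{C}\ket{\mathbf 0}$, and that the deflected overlap $\sqrt{\PPGM}-n_0 q^{-k/2}$ is simultaneously large and uniform in $\cv$; that is precisely what forces the $\mathbf 0$-outcome residual state to equal the useful $\ket{U_0}$ rather than some useless $\ket{\bot}$.
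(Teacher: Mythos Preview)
Your proposal is correct and follows exactly the paper's approach: the proposition is a summary that assembles Theorem~\ref{Theorem:polyFull} with Section~\ref{Section:ReductionUSD} for the first item, and Proposition~\ref{Proposition:23} with the tweaked measurement of Section~\ref{Section:MeasurementThatWorks} for the second, and you have correctly identified all the pieces including the crucial point that one must use the tweaked measurement rather than the bare PGM. One tiny slip: in the second item the overlap $\braket{\QFT{\psi_\cv}}{Z_\cv}$ is not exactly $\sqrt{\PPGM} - n_0 q^{-k/2}$ (that equality holds for the $\ket{\bot}$-variant of Section~\ref{Section:CompleteFailure}, where $\ket{\bot}\perp\ket{\tW_0}$); here $\ket{U_0}$ has nonzero inner product $m_0/n_0$ with $\ket{\tW_0}$, so one only records the inequality $\braket{\QFT{\psi_\cv}}{Z_\cv} \ge \sqrt{\PPGM} - q^{-k/2}$, which is exactly what the paper states and is all that is needed for the argument.
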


\COMMENT{
We have the following
\begin{lemma}
	Pr ...
\end{lemma}
Now, let us conclude. From the last register, we discard ({\ie} trace out) all the components which correspond to coordinates not in $J$ which means we obtain the state 
$$ \ket{\Omega_5(J)} = \frac{1}{\sqrt{2^k}} \sum_{\cv \in \C} \ket{\cv} \bigotimes_{i \in J} \ket{0}\ket{\cv_J}.$$
Using Lemma~\ref{Lemma:J},  we can recover $\cv$ from $\cv_J$ and obtain the states 
$$ \ket{\Omega_6(J)} = \sum_{\cv \in \C} \ket{\cv_J} = \sum_{\cv \in (\C)_J} \ket{\cv}.$$
This gives immediately 
$$ \QFTt \ket{\Omega_6(J)} \frac{1}{\sqrt{|\C_J|}} \sum_{\yv \in \C_J} \ket{\yv}.$$

\begin{proposition}
	If $|J| > n-k$, the above algorithm will find a codeword of weight $\frac{|J|}{2}(1 + o(1))$ with constant probability 
\end{proposition}
\begin{proof}
	contenu...
\end{proof}

We define
\begin{align*}
\C_J = \{\cv_J : \cv \in \C\}
\end{align*}
Notice that $\C_J$ is also a linaer code. Its dual is the set 
$$ (\C_J)^\perp = \{\yv \in F_q^{|J|} : \forall \cv_J \in \C_J : \chi_{\yv}(\cv_J) = 0\}$$
Notice that one can recover a word of $\C'$ from a work $(\C_J)^\perp$ of the same weight. Indeed, let $\yv \in (\C_J)^\perp$. We define $\yv' \in \F_q^n$ as follows: $\yv'_j = \yv_j$ if $j \in J$, $\yv'_j = 0$ if $j \notin J$. One can easily check that $\yv \in \C$ and has the same weight as $\yv'$. We can now prove our main claim

\begin{proposition}
	If $|J| > (n-k)$, our algorithm finds a codeword of weight $\frac{|J|}{2}$. 
\end{proposition}
\begin{proof}
	Assume $|J| > (n-k)$. The code $C^\perp_J$ is a random $(n-k,|J|)$ code and its dual is a random $(|J| - (n-k),|J|)$ code. Since $|J| > (n-k)$, there is at least a single non-zero codeword with weight close to $\frac{|J|}{2}$ with overwhelming probability. Such a word will be found with constant probability, from which we can recover a word of $|C|$ with constant probability using the above procedure. 
\end{proof}

\subsection{Old stuff}
\anote{Taken from old file, notations in harmonized}
We fix $t \in (0,1)$ and, for $c \in \zo$, define $\ket{\psi_c} = \sqrt{1-t}\ket{c} + \sqrt{t} \ket{1-c}$. Let $\beta = \braket{\psi_0}{\psi_1} = 2\sqrt{t(1-t)}$. We consider an isometry $U$ satisfying
$$ U \ket{\psi_c}= \sqrt{1-\beta} \ket{c}\ket{1} + \sqrt{\beta}\ket{+}\ket{0}.$$
This unitary is associated to the unambiguous state discrimination between $\ket{\psi_0}$ and $\ket{\psi_1}$. The second register of $U\ket{\psi_c}$ corresponds to whether it has succeeded. If the second register is $1$, the first measurement contains $c$. Otherwise the measurement has failed and the first register contains $\ket{+}$. The probability of success is $1-\beta = 1 - 2\sqrt{t(1-t)} = 2t_\perp$.

Now, let's take $\cv \in \zo^n$ and $\ket{\psi_{\cv}} = \ket{\psi_{c_1}} \otimes \dots \otimes \ket{\psi_{c_n}}$. When applying $U^{\otimes n}$ to this state, we obtain 
$$ \sum_{J \subseteq [n]} \beta_J \ket{\wcv_J}\ket{J}.$$
where 
$$
\ket{\wcv_J} = \bigotimes_{i = 1}^n \ket{\gamma_i} \quad \textrm{ with } \left\{\begin{tabular}{rl} $\ket{\gamma_i} = \ket{c_i}$ & if $i \in J$ \\ $\ket{\gamma_i}  = \ket{+}$ & \textrm{otherwise}\end{tabular}\right.$$
and $\beta_J = \sqrt{(1-\beta)^{|J|}\beta^{n - |J|}}$. We can now describe our first algorithm
\begin{enumerate}
	\item Start from $\ket{\zeta_0} = \frac{1}{\sqrt{2^k}} \sum_{\cv \in \C} \ket{\cv}\ket{\psi_\cv}$. Perform $U^{\otimes n}$ on the second register to obtain 
	$$ \ket{\zeta_1} = \frac{1}{\sqrt{2^k}} \sum_{\cv \in \C, J \subseteq [n]}\ket{\cv}\ket{\wcv_J}\ket{J}.$$
	\item Partition $\zo^n$ into $J_{Good}$ and $J_{Bad}$ s.t. 
	$$ J_{Good} = \{J \in \zo^n: |J| = 2nt_{\perp} \wedge G_{J} \textrm{ has maximal rank}\}.$$ 
		Measure the last register of $\ket{\zeta_1}$ and obtain some $J$. If $J \in J_{Good}$ continue, otherwise start from step $1$.
	\item The resulting state is $\ket{\zeta_2} = \frac{1}{\sqrt{2^k}} \sum_{\cv \in \C} \ket{\cv}\ket{\wcv_J}$ for some $J \in J_{Good}$. Since $G_J$ has maximal rank, one can uniquely recover $\cv$ from $\wcv_J$ hence one can invert the first register. We are left with $\ket{\zeta_3} = \frac{1}{\sqrt{2^k}} \sum_{\cv \in \C} \ket{\wcv_J}$.
	\item Apply $H^{\otimes n}$ on $\ket{\zeta_3}$ to obtain 
	$$ \ket{\zeta_4} \sim \sum_{\yv \in \C^\perp_J} \ket{\yv}.$$
	where $\C^\perp_J = \{\yv \in C^\perp : \yv_{\overline{J}} = 0^{|\overline{J}|}\}$.
	Overall, measuring $y$ will give a word of $\C^\perp$ of weight $\frac{|J|}{2} = nt_\perp$ with probability $\Omega(\frac{1}{\poly(n)})$.
\end{enumerate}
}  

\newcommand{\etalchar}[1]{$^{#1}$}

\newpage
\begin{appendix}
	\section{General phases} \label{Appendix:A} 
	We consider more general error functions $f(\ev) = \sqrt{1-t}^{(n - |\ev|)}(e^{i\theta} \sqrt{t})^{|\ev|}$ with $t \in [0,\frac{1}{2}]$ and $\theta \in [0,2\pi)$, which means we consider the states 
	$$ \ket{\psi_f(\cv)} = \sum_{\ev \in \zo^n} f(\ev)\ket{\cv + \ev} = \bigotimes_{i = 1}^n \sqrt{1-t} \ket{c_i} + e^{i \theta}\sqrt{t} \ket{1 - c_i}.$$
	Again, we consider unambiguous states discrimination between the following two states 
	$$ \ket{\zeta_0} = \sqrt{1-t} \ket{0} + e^{i\theta}\sqrt{t} \ket{1} \quad ; \quad \ket{\zeta_1} = e^{i\theta} \sqrt{t} \ket{0} + \sqrt{1-t} \ket{1}.$$
	We have $|\braket{\zeta_0}{\zeta_1}| = |\sqrt{t(1-t)}(e^{i\theta} + e^{-i\theta})| = 2\sqrt{t(1-t)}|\cos(\theta)|$. From there, we have that decoding with this unambiguous measurement is possible w.h.p as long as 
	\begin{align}\label{Eq:4}
	1 - 2\sqrt{t(1-t)}|\cos(\theta)| > \frac{k}{n}
	\end{align}
	Now, what do we get in the dual? Let 
	$$\ket{\psi_f} = \frac{1}{\sqrt{Z}} \sum_{\cv \in \C, \ev \in \zo^n} f(\ev) \ket{\cv + \ev}$$
	where $Z$ is a normalizing constant.
	We write 
	$$ \QFT{\ket{\psi_f}}  = \frac{2^k}{\sqrt{2^n \cdot Z}} \sum_{\yv \in \C^\bot} \hat{f}(\yv) \ket{\yv}.$$
	Moreover, we write 
	\begin{align*}
	\QFT{ \ket{\zeta_0}} = \frac{1}{\sqrt{2}}\left(\sqrt{1-t} + e^{i\theta} \sqrt{t}\right) \ket{0} + \frac{1}{\sqrt{2}}\left(\sqrt{1-t} - e^{i\theta} \sqrt{t}\right) \ket{1} \\
	\QFT{\ket{\zeta_1}} = \frac{1}{\sqrt{2}}\left(\sqrt{1-t} + e^{i\theta} \sqrt{t}\right) \ket{0} - \frac{1}{\sqrt{2}}\left(\sqrt{1-t} - e^{i\theta} \sqrt{t}\right) \ket{1} 
	\end{align*}
	This means that $\hat{f}(\yv) =  \frac{1}{\sqrt{2}}\left(\sqrt{1-t} + e^{i\theta} \sqrt{t}\right)^{(n - |\yv|)}\frac{1}{\sqrt{2}}\left(\sqrt{1-t} - e^{i\theta} \sqrt{t}\right)^{|\yv|}$. 
	The probability $p(t,\theta)$ to measure $1$ on each coordinate in the above is given by
	\begin{align}
	p(t,\theta) & = \frac{1}{2}\left|\sqrt{1-t} - e^{i\theta}\sqrt{t}\right|^2 = \frac{1}{2}\left((\sqrt{1-t} - \sqrt{t}\cos(\theta))^2 + t\sin^2(\theta)\right) \\ & = \frac{1}{2}\left(1 - 2\sqrt{t(1-t)}\cos(\theta)\right)
	\end{align}
	In the case Equation~\ref{Eq:4} is saturated, meaning $1 - 2\sqrt{t(1-t)}|\cos(\theta)| \approx \frac{k}{n}$, we have $p(t,\theta) = \frac{k}{2n}$ as long as $\cos(\theta) \ge 0$ (otherwise, we have the symmetric for large weights) which is Prange's bound. Notice that the above only works when Equation~\ref{Eq:4} can be saturated, so we can not take $\theta = \pi/2$ for example.

\end{appendix} \end{document}